\documentclass[11pt]{article}
\usepackage[margin=1in]{geometry}
\usepackage[T1]{fontenc}
\usepackage{longtable}
\usepackage{lmodern}
\usepackage{amsmath,amssymb,amsthm,mathtools,booktabs,microtype,mathrsfs}
\usepackage[colorlinks=true,linkcolor=blue,citecolor=blue,urlcolor=blue]{hyperref}
\newtheorem{theorem}{Theorem}[section]
\newtheorem{lemma}[theorem]{Lemma}
\newtheorem{proposition}[theorem]{Proposition}
\newtheorem{corollary}[theorem]{Corollary}
\theoremstyle{remark}

\newcommand{\Z}{\mathbb Z}
\newcommand{\ssupp}{\mathcal S}
\newcommand{\MomentU}{\mathsf U}
\newcommand{\MomentT}{\mathsf T}
\title{The Golomb--Welch conjecture for lattice tilings}
\author{Bocong Chen$^1$,  Jing Huang$^2$ and Gaojun Luo$^3$\footnote{E-mail addresses: {\it
bocongchen@foxmail.com (B. Chen),
jhuangmath@foxmail.com (J. Huang),
gaojun\_luo@nuaa.edu.cn (G. Luo).
}}
}

\date{\small
$1.$ School of Mathematics, South China University of Technology, Guangzhou 510641, China\\
$2.$
School of Mathematics and Information Science,
Guangzhou University, Guangzhou 510006, China\\
$3.$ School of Mathematics, Nanjing University of Aeronautics and Astronautics, Nanjing 210016, China\\
}
\begin{document}
\maketitle
\begin{abstract}
We prove the Golomb--Welch conjecture for lattice tilings: for every $n\ge3$ and $r\ge2$, the integer lattice $\Z^n$ admits no lattice tiling by Lee balls $B_n(r)$ of radius $r$. Together with the classical constructions for $n\le2$ and for $r=1$, this determines all parameters for which a lattice Lee tiling exists.
Equivalently, we show that every abelian Cayley graph of even degree $2n\ge6$ and diameter $r\ge2$ has strictly fewer than $|B_n(r)|$ vertices.

\vspace{0.3cm}
\noindent\textbf{Keywords:} Golomb--Welch conjecture; lattice tiling;
Lee ball; perfect Lee code; abelian Cayley graph; degree--diameter problem.

\medskip
\noindent\textbf{2020 Mathematics Subject Classification:}
Primary 52C22; Secondary 11H31, 94B60.
\end{abstract}

\section{Introduction}\label{sec:introduction}

For $\mathbf x,\mathbf y\in\Z^n$, the \emph{Lee distance} is
$d_L(\mathbf x,\mathbf y)=\sum_{i=1}^n|x_i-y_i|$.
The Lee ball of radius $r$ centered at the origin is
\[
 B_n(r)=\{\mathbf x\in\Z^n:\|\mathbf x\|_1\le r\},
 \qquad
 |B_n(r)|=\sum_{j=0}^{\min(n,r)}2^j\binom nj\binom rj.
\]
A set $C\subseteq\Z^n$ is a \emph{perfect $r$-error-correcting Lee
code} if every point of $\Z^n$ lies at distance at most $r$ from
exactly one element of $C$. Equivalently,
$\Z^n=\bigsqcup_{\mathbf c\in C}(\mathbf c+B_n(r))$.
When $C=L$ is a subgroup of $\Z^n$, this partition is a
\emph{lattice tiling}, and the corresponding code is called linear.
The finite ball is then a complete set of representatives for the
finite quotient $\Z^n/L$.

The Lee metric arose in the study of nonbinary error-correcting
codes~\cite{Lee}. Its perfect codes ask when disjoint error balls can
fill the ambient space without leaving unused points. Golomb and
Welch connected this coding problem with tilings by unions of unit
cubes in their work of 1968 and 1970~\cite{GW1968,GW}. They constructed
lattice Lee tilings for every radius in dimensions one and two, and
for radius one in every dimension. They conjectured that these are
the only possibilities: for $n\ge3$ and $r\ge2$, there should be no
tiling of $\Z^n$ by translates of $B_n(r)$, even with an arbitrary
set of centers. 
In coding and graph-theoretic terms, the lattice case is of independent interest: it characterizes linear perfect Lee codes and is directly equivalent to the attainability of the Moore bound for even-degree abelian Cayley graphs in the degree--diameter problem. Moreover, the finite abelian quotient $\Z^n/L$ allows the tiling condition to be studied through group homomorphisms, character sums, and group-ring identities.

Early progress came from geometry. Golomb and Welch~\cite{GW}
proved nonexistence for all sufficiently large radii in each fixed
dimension $n=3$ or $n\ge5$, without specifying a threshold.
The same argument extends to $n=4$ when only translations are allowed;
see Horak and Kim~\cite[Remark~8]{HK}. Thus every fixed $n\ge3$
has a radius beyond which no Lee tiling exists. For
periodic tilings, Post~\cite{Post} obtained explicit bounds by
counting low-dimensional sections; in dimensions $n\ge6$, his
condition is
$r\ge n/\sqrt2-3/(2\sqrt2)-1/2$.
Lepist\"o~\cite{Lepisto} subsequently excluded periodic tilings when
$r\ge285$ and $n<(r+2)^2/2.1$.
Horak and Kim~\cite{HK} extended both bounds to arbitrary center
sets and developed a discrete linear-programming obstruction,
including nonexistence for
$r\ge18$ and $3r+21\le n\le r^2/2-20$.
These results control substantial regions in which the radius is
large relative to the dimension.

A complementary line of work treated small dimensions for all
radii. Gravier, Mollard, and Payan~\cite{GMP} settled dimension
three, and \v Spacapan~\cite{Spacapan} proved nonexistence of
face-to-face four-dimensional Lee tilings in a setting that allows
different radii. Horak~\cite{HorakTilings} established the conjecture
for $3\le n\le5$ and all $r\ge2$.
These geometric results apply beyond lattice tilings. Their scope
and the development of the conjecture are discussed in detail
in~\cite{HK}.

For lattice tilings, the finite quotient provides another route.
The standard homomorphism criterion states that a finite set
$B\subseteq\Z^n$ tiles by a lattice if and only if there are a
finite abelian group $A$ of order $|B|$ and a homomorphism
$\pi:\Z^n\to A$ whose restriction to $B$ is bijective;
see~\cite[Theorem~6]{HA}. Horak and Gro\v sek~\cite{HG} used an invariant of
finite abelian groups to reformulate the problem and to exclude
linear perfect Lee codes of radius two in dimensions
$3\le n\le12$. Kim~\cite{Kim} introduced a symmetric-polynomial
argument for radius two in certain dimensions with prime ball
cardinality. Zhang and Ge~\cite{ZG} obtained arithmetic criteria
excluding lattice tilings in infinitely many dimensions at radii
three and four. Qureshi~\cite{Qureshi} proved nonexistence in
infinitely many dimensions for each radius in a set of natural
density one, and developed a more general criterion based on
$Q$-polynomials.

Group-ring methods made the quotient structure more explicit.
Zhang and Zhou~\cite{ZZ} derived group-ring equations for radii
two and three and used characters and algebraic number theory to
exclude further infinite families of dimensions. Leung and
Zhou~\cite{LZ} then settled the lattice case at radius two for
every $n\ge3$. Their result also established nonattainment of the
even-degree abelian Cayley Moore bound at diameter two.
For general radii, Xiao and Zhou~\cite{XZ} determined all
coefficients of the associated $Q$-polynomials and obtained
necessary and sufficient group-ring conditions for lattice Lee
tilings. In particular, they proved positive-density
nonexistence results in the dimension for each $3\le r\le50$.
Their computation treats the $1{,}358$ pairs satisfying
$n\ge3$, $3\le r\le12$, and $|B_n(r)|<2^{30}$, and excludes
$1{,}356$ of them. Combined with the known results outside this
range, their Theorem~5.1 excludes every pair with $n\ge3$,
$r\ge2$, and $|B_n(r)|<2^{30}$ except $(35,3)$ and $(805,3)$. 
Bhattacharjee and Bhattacharya~\cite[Theorem~1.1]{BB} subsequently
proved nonexistence for those two pairs. We identify the finite cases already covered by Xiao and Zhou and
include direct proofs of all such cases used below.
Within this algebraic approach, the remaining difficulty is to rule out all radii $r\ge3$ and dimensions $n\ge3$ uniformly, 
without imposing arithmetic hypotheses on $|B_n(r)|$. We resolve this problem as follows.

\begin{theorem}\label{thm:main}
For all integers $n\ge3$ and $r\ge2$, there is no lattice tiling
of $\Z^n$ by translates of $B_n(r)$.
\end{theorem}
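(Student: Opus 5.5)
We will work in the finite quotient. By the homomorphism criterion \cite[Theorem~6]{HA}, a lattice tiling gives a finite abelian group $A$ with $|A|=|B_n(r)|$ and a homomorphism $\pi:\Z^n\to A$ that is bijective on $B_n(r)$. Put $a_i=\pi(\mathbf e_i)$ and $S=\{\pm a_1,\dots,\pm a_n\}$. Then $A$ is covered by sums of at most $r$ elements of $S$, and the only coincidences allowed are those forced by cancellations $a_i-a_i=0$. In the group ring $\Z[A]$ we set $X=\sum_i (a_i+a_i^{-1})$. Bijectivity becomes one identity: the generating series $\sum_{\mathbf x\in B_n(r)}\pi(\mathbf x)$ equals the full sum $A$.

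Next we apply characters. For a nontrivial character $\chi$, write $\theta_i=\arg\chi(a_i)$. The condition then reads
\[
\sum_{\mathbf x\in B_n(r)}\prod_i e^{\mathrm i x_i\theta_i}=0 .
\]
This is a polynomial identity in the values $c_i=\cos\theta_i$: the restriction of the Lee-ball generating function. Following the $Q$-polynomial formalism of Xiao--Zhou \cite{XZ}, we will express it through the power sums $p_k=\sum_i (2c_i)^k$, or equivalently through the moments of $\chi(X)$. Summing $\chi(\cdot)$ over all characters gives orthogonality relations. Concretely, for every $m\le 2r$ the number of representations of $0$ as an ordered sum of $m$ elements of $S$ must equal its value in the free abelian setting, with corrections only when $m>2r$. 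We will compute the moments $\sum_{g\in A}N_m(g)^2$ and the third and fourth moments, where $N_m(g)$ counts the paths of length $m$ from $0$ to $g$. Since $B_n(r)$ is injective, the path counts up to length $r$ coincide with those in $\Z^n$. Paths of length between $r+1$ and $2r$ produce collisions whose total number is determined exactly by $|A|$.

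The key step is to pick a moment that $\Z^n$ gives one value for and that positivity constrains to a conflicting value. We take the count of closed walks of length $2r+2$ (or $2r+1$ in the odd case) in the Cayley graph. Using the tiling, we rewrite this count as a sum over $g\in A$ of $N_{r+1}(g)^2$. Every $g$ has a unique ball representative $\mathbf y$ with $\|\mathbf y\|_1\le r$. We then count the pairs of $\Z^n$-points of norm $r+1$ that map to the same $g$. Such a pair differs by a lattice vector of norm at most $2r+2$, and the only lattice vectors available have norm $2r+1$ or $2r+2$ (perfect packing). Counting gives an exact identity between $|A|$, the number $M$ of minimal lattice vectors, and their local configuration. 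The boundary sphere $\partial B_n(r+1)$ has $|B_n(r+1)|-|B_n(r)|$ points, and each one must lie in exactly one translate of the ball. Counting boundary points per minimal vector gives a second formula for $M$. Comparing the two formulas, together with a Cauchy--Schwarz or variance bound on the second moment $\sum_g (N_{r+1}(g))^2$, should yield a polynomial inequality in $n,r$ that fails for all $n\ge3$, $r\ge3$ outside a finite set.

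The main obstacle is making this inequality strict uniformly. A naive packing count only matches the Moore bound, which the ball itself attains. The slack must come from the rigid local structure: the minimal lattice vectors of norm $2r+1$ must be arranged so that the sets $\partial B(r+1)$ overlap consistently. We expect to need a fourth-order moment identity, which is where the notation $\MomentU,\MomentT$ and the support $\ssupp$ enter. The remaining finite range of small $(n,r)$ will be handled by citing Xiao--Zhou \cite{XZ} and \cite{BB}, and radius $2$ by Leung--Zhou \cite{LZ}; we will also re-derive these cases directly as far as possible.
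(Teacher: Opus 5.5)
Your proposal stops where the proof has to begin. The contradiction is announced (``should yield'') but never derived, and the moment you single out does not produce one as described.

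A closed walk of length $2r+2$ in the Cayley graph lifts to a walk in $\Z^n$ ending at $\mathbf 0$ or at a lattice vector of norm exactly $2r+2$; norm $2r+1$ is excluded by parity. So $\sum_g N_{r+1}(g)^2$ equals the $\Z^n$ count plus a nonnegative weighted count of the norm-$(2r+2)$ vectors of $L$. These are the same vectors that record the collisions $\sum_{\mathbf b}f_{\mathbf b}(f_{\mathbf b}-1)$ inside boundary fibers, and the tiling does not fix this quantity. A Cauchy--Schwarz \emph{lower} bound on $\sum_g N_{r+1}(g)^2$ therefore only forces such vectors to exist and contradicts nothing. You would need an upper bound, and none is offered.

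Your ``two formulas for $M$'' are not independent either. Each point of $S_{r+1}$ lies in exactly one $\boldsymbol\ell+B_r$ with $\|\boldsymbol\ell\|_1=2r+1$, namely in the shadow $\mathscr K_{r+1}(\boldsymbol\ell)$, whose size depends on the coordinate pattern of $\boldsymbol\ell$. This gives one identity in an unknown distribution of shapes. The odd closed-walk count is just another weighting of that distribution, and its character expansion contains an uncontrolled nonprincipal sum. (Also, $N_{r+1}(g)$ counts walks, not points of norm $r+1$.)

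What the paper actually exploits is the exact fiber partition of Lemma~\ref{lem:exact-fiber}. The signed supports of the preimages of $\mathbf b\in S_r$ in $S_{r+1}$ partition $S_1\setminus\ssupp(\mathbf b)$. This gives the defect $D_{\mathbf b}=(r+1)f_{\mathbf b}-2n+s(\mathbf b)\ge0$ and the scalar relation $(f_{\mathbf b}+1)\pi(\mathbf b)=\pi(\mathbf b-\boldsymbol\varepsilon(\mathbf b)+\boldsymbol\delta_{\mathbf b})$. The remainder $c$ in $2n=(r+1)q+c$ then dictates genuinely different arguments:
\begin{itemize}
\item For $1\le c\le r-1$: the two-shell defect against sign-restricted bounds for $\MomentU,\MomentT$ (second moments of heavy-coordinate counts in fibers, not walk moments), collisions in the kernel $A[q+1]$, and odd-tail occupancy.
\item For $c=0$: injectivity of multiplication by $q+1$, a root-graph deficit when $(2q+1)A=\{0\}$, and Hessenberg-type spectral bounds on $\Lambda$ otherwise.
\item For $c=r$: a coefficient recurrence forcing $(q+1)A=\{0\}$.
\item At $r=3$ with $c=0$: a cubic character identity.
\end{itemize}
None of this arithmetic structure appears in your plan.

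Second, your hope that a single inequality fails outside a finite set, which you would then hand to Xiao--Zhou and Bhattacharjee--Bhattacharya, has no support. The paper's counting estimates are proved and used only under hypotheses like $2n\ge r^2$ or $n\ge r^2/3$, because they rely on ratios such as $V_{r-1}/V_r\le r/(2n+1)$ being small. The complementary regime contains infinitely many pairs with $|B_n(r)|\ge2^{30}$, for instance $n=6$ with $r$ large, far outside the Xiao--Zhou computation. The paper covers this regime with entirely different tools:
\begin{itemize}
\item Horak's theorem for $n\le5$;
\item the Horak--Kim extension of Post's bound for $n\lesssim\sqrt2\,r$ (Lemma~\ref{lem:known-post});
\item for the rest of $n\le N_r$, the weighted convolution over four signed-permutation orbits (Lemma~\ref{lem:four-orbit-duality} and Proposition~\ref{thm:uniform-triangular-orbits}).
\end{itemize}
As written, your proposal is a heuristic program. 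It supplies neither the decisive inequality in large dimensions nor any mechanism for dimensions small relative to $r^2$.
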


Combined with the classical
constructions~\cite[Theorems~1 and~3]{GW}, Theorem~\ref{thm:main} gives the exact
existence criterion for positive integers $n,r$: a lattice tiling
by $B_n(r)$ exists if and only if $n\le2$ or $r=1$.
It also excludes linear perfect Lee codes over every alphabet
$\Z/m\Z$ with $m\ge2r+1$ in the conjectured range: the inverse
image of such a code under $\Z^n\to(\Z/m\Z)^n$ would be a lattice
tiling by $B_n(r)$.

The graph-theoretic consequence can be stated precisely.
Let $AC(2n,r)$ denote the largest order of a finite simple
undirected abelian Cayley graph of degree $2n$ and diameter at
most $r$. The abelian Cayley Moore bound of Dougherty and
Faber~\cite{DF} is $AC(2n,r)\le |B_n(r)|$.
Its equality case is equivalent to a lattice Lee tiling;
see~\cite[Section~2.1]{ZZ} and the discussion preceding
\cite[Corollary~1.3]{LZ}. For the broader degree--diameter problem,
see~\cite{MS}.

\begin{corollary}\label{cor:abelian-moore}
For all integers $n\ge3$ and $r\ge2$,
\[
 AC(2n,r)\le
 \sum_{j=0}^{\min(n,r)}2^j\binom nj\binom rj-1.
\]
For positive integers $n$ and $r$, the even-degree abelian Cayley
Moore bound is attained if and only if $n\le2$ or $r=1$.
\end{corollary}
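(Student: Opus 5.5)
The corollary is essentially a repackaging of Theorem~\ref{thm:main} through the Dougherty--Faber equivalence, so the plan is to derive it in two short steps and not prove anything new about tilings.

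\textbf{The inequality.} Fix $n\ge3$, $r\ge2$ and let $\Gamma=\mathrm{Cay}(A,S)$ be a finite simple undirected abelian Cayley graph of degree $2n$ and diameter at most $r$. Since $\Gamma$ is simple and undirected, $S=-S$ and $0\notin S$.

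First I would dispose of involutions. If $S$ contains an element $s$ with $2s=0$, then I reduce to the generic case by the standard device of \cite{DF}; alternatively I would note that the counting below still gives $|A|\le|B_n(r)|$ with strict inequality, because two distinct points of $B_n(r)$ (for instance $\pm re_i$) then map to the same group element. So assume $S=\{\pm g_1,\dots,\pm g_n\}$ with the $2n$ elements distinct.

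Define $\pi:\Z^n\to A$ by $e_i\mapsto g_i$. Diameter at most $r$ means every element of $A$ is a sum of at most $r$ generators with signs, that is, $\pi(B_n(r))=A$. Hence $|A|\le|B_n(r)|$; this is the bound $AC(2n,r)\le|B_n(r)|$.

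Now suppose equality holds. Then $\pi|_{B_n(r)}$ is a bijection onto $A$. By the homomorphism criterion \cite[Theorem~6]{HA}, $\ker\pi$ is a lattice $L$ with $\Z^n=\bigsqcup_{\mathbf c\in L}(\mathbf c+B_n(r))$, a lattice tiling. This contradicts Theorem~\ref{thm:main}. Therefore $|A|\le|B_n(r)|-1$. Substituting the displayed formula for $|B_n(r)|$ gives the stated bound.

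\textbf{The characterization.} Let $n,r$ be positive integers.
\begin{itemize}
\item If $n\ge3$ and $r\ge2$, the bound is not attained, by the inequality just proved.
\item If $n\le2$ or $r=1$, then \cite[Theorems~1 and~3]{GW} give a lattice $L$ tiling $\Z^n$ by $B_n(r)$.
\end{itemize}
In the second case, set $A=\Z^n/L$ and $S=\{\pm e_i+L\}$. Because $B_n(r)$ is a complete residue system for $A$ and $r\ge1$, the $2n$ elements $\pm e_i+L$ are distinct and nonzero, so the Cayley graph is simple of degree $2n$. Every class has a representative of $\ell_1$-norm at most $r$, so the diameter is at most $r$. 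The graph therefore has exactly $|B_n(r)|$ vertices and attains the Moore bound.

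\textbf{Main obstacle.} The only delicate point is checking that the Cayley graph really has degree $2n$, with no coincidences among the $\pm g_i$, in both directions of the argument. For the attainment direction this follows from the injectivity of $\pi$ on $B_n(r)$, since $\pm e_i\in B_n(r)$.
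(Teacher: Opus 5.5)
Your treatment of the involution-free case and of the attainment direction matches the paper's proof. The involution case, however, has a genuine gap.

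Your fallback says that ``the counting below'' still gives $|A|\le|B_n(r)|$, with strict inequality because $\pm r\mathbf e_i$ collide. That counting rests on a homomorphism $\pi:\Z^n\to A$ with $\pi(B_n(r))=A$, and this requires $S=\{\pm g_1,\dots,\pm g_n\}$. If $S$ contains an involution $t$, this is impossible. The set $\{\pm t\}$ has only one element, so $n$ elements $g_i$ cannot produce the $2n$ distinct elements of $S$. The generators left uncovered need not be reachable within $\pi(B_n(r))$, so surjectivity fails; collisions are beside the point. For example, take $A=(\Z/2\Z)^2$ and $S=\{(1,0),(0,1)\}$, which has degree $2$ and diameter $2$. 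No homomorphism $\Z\to A$ maps $B_1(2)$ onto $A$.

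The appeal to ``the standard device of \cite{DF}'' does not name an argument either. The Dougherty--Faber bound alone gives only $|A|\le|B_n(r)|$. It does not give the strict inequality you need to rule out equality when involutions are present.

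The missing step is a count of words rather than a lattice map, as in the paper. Write $S$ as $m$ inverse pairs and $2n-2m$ involutions. In a shortest word each involution occurs at most once, so
$|A|\le[z^r]\frac{(1+z)^{2n-2m}}{1-z}\bigl(\frac{1+z}{1-z}\bigr)^m$.
Since $\frac{1+z}{1-z}-(1+z)^2=z^2\frac{1+z}{1-z}$, replace each pair of involution factors by an inverse-pair factor. Together with the common factor $(1-z)^{-1}$, this strictly increases every coefficient of degree at least $2$. Hence $|A|<V_r(n)$ whenever $m<n$ and $r\ge2$.

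If you prefer to keep a collision argument, note that the number of involutions is even, so you can pair them. For each pair $t,t'$, send the coordinate values $0$, $1$, $-1$, and $|x|\ge2$ to $0$, $t$, $t'$, and $t+t'$ respectively. This is a surjection $B_n(r)\to A$, though not a homomorphism, and $\pm2\mathbf e_j$ collide.

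Only after this step does the remaining case $m=n$ reduce to Theorem~\ref{thm:main} as you describe.
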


Corollary~\ref{cor:abelian-moore} determines when the even-degree abelian Cayley Moore bound is attained. Determining the exact maximum orders below this bound is a separate problem.
We briefly outline the proof of Theorem~\ref{thm:main}. 
After applying the known results for $n\le5$, $r=2$, and large
radii relative to $n$, we use a weighted convolution inequality on
four signed-permutation orbits to exclude $n\le N_r$, where
$N_3=7$ and $N_r=r(r+1)/2+2$ for $r\ge4$.
For $n>N_r$, we work with a hypothetical quotient homomorphism
$\pi:\Z^n\to A$ that is bijective on $B_n(r)$.
For each boundary representative
$\mathbf b\in B_n(r)\setminus B_n(r-1)$, its preimages in the
next shell induce an exact partition of the signed coordinate
directions outside the signed support of $\mathbf b$.
The resulting identities retain both repeated coordinates and
cancellation. They yield integral moment bounds and weighted
collision estimates for the kernels of scalar multiplication
on $A$.
Dividing $2n$ by $r+1$, we treat the resulting remainder classes---including the zero and terminal remainders---by combining these boundary and collision estimates with character bounds for scalar images and with prime-quotient reductions at small radii. The finite cases are treated by explicit polynomial identities, recurrences, and modular calculations; at radius three, we also use a known nonexistence criterion.

The paper is organized as follows.
Section~\ref{sec:preliminaries} records the known inputs and fixes
notation. Section~\ref{sec:boundary-methods} establishes the
boundary identities, moment inequalities, collision bounds, and
character estimates. Section~\ref{sec:uniform-exclusion} gives
the signed-orbit obstruction and the exclusions for the intermediate,
zero, and terminal remainders. Section~\ref{sec:small-radii} treats
the remaining zero-remainder case at radius three and completes the proof.

\section{Preliminaries}\label{sec:preliminaries}
We recall the known inputs and fix notation for a hypothetical lattice tiling.
Horak~\cite[Theorem~6]{HorakTilings} excludes dimensions three through five.
\begin{lemma}\label{lem:known-small}
For $3\le n\le5$ and $r\ge2$, there is no tiling of $\Z^n$
by translates of $B_n(r)$.
\end{lemma}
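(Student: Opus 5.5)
This lemma is a cited result, Horak's Theorem~6 in~\cite{HorakTilings}, so the proof here will simply invoke it. It still helps to sketch how one would establish it, since the statement concerns arbitrary tilings and not only lattice ones. The plan has three layers: reduce to the boundary between dimension three and higher dimensions, then exploit the local structure of how Lee balls must meet along their ``tips''. The starting point is the known three-dimensional result of Gravier, Mollard and Payan~\cite{GMP}. Intersecting a hypothetical tiling of $\Z^n$ by a suitable coordinate subspace does not directly give a tiling of lower dimension. I would therefore not reduce by slicing. Instead, I would run a local argument in each of $n=3,4,5$, organized around a single tile $\mathbf c+B_n(r)$ and the tiles covering the points just outside its boundary.

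\textbf{Step 1 (neighbors of a tip).} Fix a tile centered at the origin and consider the points $(r+1)\mathbf e_i$ and the nearby points $r\mathbf e_i\pm\mathbf e_j$ outside the ball. These must be covered by other tiles. The key combinatorial observation (Golomb--Welch, Horak) is that the tile covering $(r+1)\mathbf e_1$ is forced to have its center in a small set of positions. The boundary shell of $B_n(r)$ near the tip $r\mathbf e_1$ is a ``cross-polytope corner'', and only a few translates of $B_n(r)$ can fill the $2(n-1)$ directions adjacent to it without overlapping the origin ball.

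\textbf{Step 2 (casework).} Enumerate the possible configurations of the $2n$ tiles adjacent to the $2n$ tips, using the shell-counting identity $|B_n(r)|$ versus points at distance $r+1$. Then show that every configuration forces either an overlap or an uncovered point at distance $r+1$ or $r+2$ from the origin. For $n=3$ this recovers~\cite{GMP}. For $n=4,5$ the extra dimensions give more freedom, and the casework grows.

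\textbf{Main obstacle.} The main obstacle is keeping this case analysis uniform in $r$. I would handle it by tracking only the ``corner'' geometry: relative positions at distance $O(1)$ from the tips, which are independent of $r$ once $r\ge2$. I would then check the finitely many local patterns, with $r=2$ treated separately if needed. Since this is exactly what~\cite[Theorem~6]{HorakTilings} carries out, the proof in the paper reduces to citing that theorem.
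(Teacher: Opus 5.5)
Your proof is the same as the paper's: the lemma is quoted directly from Horak's Theorem~6, which covers arbitrary (not only lattice) tilings, so nothing further is needed. Your heuristic sketch is not part of the argument and is inaccurate as stated: disjointness from the origin tile only forces the tile covering $(r+1)\mathbf e_1$ to have center $(r+1)\mathbf e_1+\mathbf y$ with $\|\mathbf y\|_1=r$ and $y_1\ge0$, a set whose size grows with $r$, so the claimed $r$-independent ``small set of positions'' does not hold.
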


Leung and Zhou~\cite[Theorem~1.1]{LZ} settle the lattice case at radius two.
\begin{lemma}\label{lem:known-radius-two}
For every $n\ge3$, the ball $B_n(2)$ admits no lattice tiling
of $\Z^n$.
\end{lemma}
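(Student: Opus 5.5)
This lemma is Theorem~1.1 of Leung and Zhou~\cite{LZ}, and in the paper it is simply imported. If I had to reprove it, I would follow the group-ring route described in the introduction.

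\textbf{Setup.} First, by Lemma~\ref{lem:known-small} we may assume $n\ge6$, although the algebraic argument should not need this. Suppose a lattice tiling exists. By the homomorphism criterion~\cite[Theorem~6]{HA} there is an abelian group $A$ of order $m=|B_n(2)|=2n^2+2n+1$ and a homomorphism $\pi:\Z^n\to A$ that is bijective on $B_n(2)$. Let $a_i=\pi(\mathbf e_i)$, and put
\[
T=\sum_{i=1}^n (a_i+a_i^{-1}) \in \Z[A], \qquad T^{(2)}=\sum_{i=1}^n (a_i^2+a_i^{-2}) \in \Z[A].
\]
Bijectivity on the ball says that $1+T+\tfrac12\bigl(T^2-T^{(2)}-2n\bigr)+T^{(2)}=A$ in $\Z[A]$, which is equivalent to
\[
T^2+2T+T^{(2)}+2-2n=2A .
\]

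\textbf{Character equation.} Apply a nontrivial character $\chi$ and write $f(\chi)=\chi(T)$. This value is real, because $T$ is symmetric. Since $m$ is odd, squaring is an automorphism of $\widehat A$, and $\chi(T^{(2)})=f(\chi^2)$. The identity therefore becomes
\[
f(\chi^2)=g\bigl(f(\chi)\bigr),\qquad g(x)=2n-1-(x+1)^2,
\]
for every $\chi\neq1$. This is supplemented by the trivial bounds $|f|\le 2n$ and the orthogonality relations
\[
\sum_{\chi\ne1}f(\chi)=-2n,\qquad \sum_{\chi}f(\chi)^2=2nm .
\]
The orthogonality relations hold because the elements $a_i^{\pm1}$ are distinct and nonzero in $A$; they are distinct images of ball points.

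\textbf{Reduction to periodic orbits.} Each orbit $\chi,\chi^2,\chi^4,\dots$ is a finite cycle in $\widehat A$. So $f$ restricted to such an orbit is a periodic orbit of the real quadratic map $g$, and it is confined to $[-2n,2n]$. The values are also algebraic integers in a cyclotomic field, and $\mathrm{Gal}$ acts on them by $\chi\mapsto\chi^k$. The whole multiset of values is therefore a Galois-stable union of periodic cycles of $g$. After the affine change $x+1=u$, the map $g$ becomes $u\mapsto 2n-u^2$, a real quadratic in the escaping regime. Its bounded periodic points form a Cantor-like set concentrated near $\pm\sqrt{2n}$ and near the fixed points.

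\textbf{Where the contradiction should come from.} I would play two facts against each other. The second-moment identity forces the average of $f^2$ to be about $2n$. The cycle structure forces each value $u=f+1$ to satisfy $|u|\approx\sqrt{2n}$ up to small corrections, except at exceptional orbits. Then I would use integrality: the product of $f(\chi)+1\pm\sqrt{2n}$-type quantities over a Galois orbit is a nonzero rational integer. I would also use a norm computation, taking norms of $g(x)-x$ along fixed and period-two points, to pin the orbits down to fixed points of $g$ or $g\circ g$. This reduces the problem to the cases $f(\chi)^2+3f(\chi)+2-2n=0$ and its period-two analogue.

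\textbf{Main obstacle.} The last and hardest step is to exclude these few algebraic possibilities. The fixed-point case needs $9-4(2-2n)=8n+1$ to be compatible with values in $\Q(\zeta_{\exp A})$ of the prescribed multiplicity. I would feed the resulting multiplicities back into $\sum_{\chi\ne1} f(\chi)=-2n$ and $\sum_{\chi} f(\chi)^2=2nm$ to obtain a Diophantine contradiction in $n$. Making that counting uniform over all $n\ge3$, rather than only for special arithmetic residues of $m$, is exactly the difficulty that Leung--Zhou overcame. For that reason, in the paper I would cite~\cite[Theorem~1.1]{LZ} rather than reprove it.
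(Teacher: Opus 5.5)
Your conclusion matches the paper exactly: Lemma~\ref{lem:known-radius-two} is not proved there but simply imported from \cite[Theorem~1.1]{LZ}, so the citation is the entire argument. Your sketched reproof should not be mistaken for a proof, though. The identity $T^2+2T+T^{(2)}+2-2n=2A$, the relation $f(\chi^2)=2n-1-(f(\chi)+1)^2$, and the moment identities are all correct, but nothing you give pins the $g$-orbits down to periods one and two (the squaring cycle of a character of order $d$ has length $\mathrm{ord}_d(2)$, which can be large), and that missing step is where all the difficulty lies.
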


For larger radii, we use the following extension of Post's bound~\cite{Post} from periodic tilings to arbitrary center sets, established by Horak and Kim~\cite[Theorem~13]{HK}.
\begin{lemma}\label{lem:known-post}
If $n\ge6$ and
$r\ge\frac n{\sqrt2}-\frac{3}{2\sqrt2}-\frac12$,
then there is no tiling of $\Z^n$ by translates of $B_n(r)$.
\end{lemma}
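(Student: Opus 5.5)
Since this is the result of Horak and Kim~\cite[Theorem~13]{HK}, extending Post~\cite{Post}, in the paper I would simply invoke it. A self-contained proof would follow Post's local counting argument near a vertex of the cross-polytope, in four steps.

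\textbf{Step 1: normalise.} Fix a tiling $\Z^n=\bigsqcup_{\mathbf c\in C}(\mathbf c+B_n(r))$ and translate so that $\mathbf 0\in C$. Disjointness then forces $\|\mathbf c\|_1\ge 2r+1$ for every $\mathbf c\in C\setminus\{\mathbf 0\}$.

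\textbf{Step 2: choose a witness set near a vertex.} Use the signed-permutation symmetry to work near the vertex $r\mathbf e_1$ of the tile at $\mathbf 0$. Take the points just outside this tile in a low-dimensional section around that vertex:
\[
 (r+1)\mathbf e_1,\qquad r\mathbf e_1\pm\mathbf e_j,\qquad (r-1)\mathbf e_1\pm\mathbf e_i\pm\mathbf e_j\quad(i\ne j,\ i,j\ge2).
\]
Each of these points has $\ell^1$-norm $r+1$, so it lies in some other tile $\mathbf c+B_n(r)$. The two constraints $\|\mathbf c\|_1\ge 2r+1$ and $\|\mathbf c-\mathbf x\|_1\le r$ pin down $\mathbf c$ almost completely. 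Writing $\mathbf c=\mathbf x+\mathbf y$ with $\|\mathbf y\|_1\le r$, the triangle inequality forces $\|\mathbf y\|_1=r$ and requires $\mathbf y$ to be sign-compatible with $\mathbf x$ on every coordinate it touches.

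\textbf{Step 3: double count.} This is the main obstacle. Suppose one neighbouring tile covers a large family of witness points, say $r\mathbf e_1+\mathbf e_j$ for many $j$. Then its center must have many nonnegative coordinates in directions $\mathbf e_j$. Its $\ell^1$ budget then constrains how much of the first coordinate it can use. I would bound the number of witness points per neighbouring tile by roughly $r$ in the first layer, and by $O(r^2)$ after including the quadratic layer $(r-1)\mathbf e_1\pm\mathbf e_i\pm\mathbf e_j$. Distinct neighbouring tiles must also be mutually disjoint, which restricts how their centers can share directions. Comparing the number of witness points (of order $n^2$ in the second layer) with the maximum number of tiles that fit near the vertex times their capacity gives a quadratic inequality in $n$ and $r$. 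This inequality is violated exactly when $r\ge \frac n{\sqrt2}-\frac3{2\sqrt2}-\frac12$, and the constants would come out of tracking this count precisely.

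\textbf{Step 4: why arbitrary center sets are allowed.} The argument only looks at a bounded neighbourhood of one tile, so periodicity is never used. This is exactly why the bound holds for arbitrary center sets, as in~\cite{HK}, and not only for the periodic tilings treated by Post. The hypothesis $n\ge6$ is what makes the second-layer count dominate the first-layer and boundary effects.
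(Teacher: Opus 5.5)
This matches the paper: it records the lemma as a known input, citing Horak and Kim~\cite[Theorem~13]{HK} as the extension of Post's bound~\cite{Post} to arbitrary center sets, and gives no proof of its own, so invoking that theorem is the entire argument. Your self-contained outline, however, is not a proof and should not be presented as one:
\begin{itemize}
\item Step~3 asserts the per-tile capacities and the final quadratic inequality without deriving them.
\item The upper bound that Step~3 needs on the number of tiles meeting your witness set is not supplied by the disjointness constraints you list. Those constraints only restrict how two such tiles share signed directions, so they never rule out using more tiles that each cover fewer witness points.
\item Step~4's claim that the argument is purely local sits uneasily with the fact that Post's periodic result needed a separate extension by Horak and Kim at all.
\end{itemize}
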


Bold lowercase letters denote coordinate vectors; elements of the quotient group are not bold.
For a hypothetical lattice tiling, the quotient map $\pi:\Z^n\longrightarrow A=\Z^n/L$
restricts to a bijection on $B_n(r)$. Thus $B_n(r)$ is a complete
set of coset representatives, also called a \emph{transversal}. Conversely,
any such homomorphism defines a lattice tiling by its kernel. 
Write $g_i=\pi(\mathbf e_i)$, where
$\mathbf e_1,\ldots,\mathbf e_n$ are the standard basis vectors.
Every nonzero vector in $L$ has norm at least $2r+1$, since
$B_n(r)-B_n(r)=B_n(2r)$.
For a real number $t$, write $t_+=\max\{t,0\}$.

For nonnegative integers $j$, write $B_j=B_n(j)$ and
$S_j=B_j\setminus B_{j-1}$. Their cardinalities are
$V_j(n)=|B_n(j)|$ and $\sigma_j(n)=|S_j|$.
For a formal series $F(z)$, write $[z^j]F$ for its coefficient of $z^j$.
We use $\binom ab=0$ for integers $a\ge0$ and $b<0$ or $b>a$.
We omit $n$ from $V_j(n)$ and $\sigma_j(n)$ when the dimension is
fixed. Balls and shells of negative index are empty, their cardinalities are
zero, and $\sigma_0=1$. Choosing support, signs, and positive
coordinate magnitudes gives
\begin{align*}
 V_j(n)&=\sum_{a=0}^{\min(n,j)}2^a\binom na\binom ja,
 &\sum_{j\ge0}V_j(n)z^j&=\frac{(1+z)^n}{(1-z)^{n+1}},
 \\
 \sigma_j(n)&=\sum_{a=1}^{\min(n,j)}2^a\binom na\binom{j-1}{a-1}
 \quad(j\ge1),
 &\sum_{j\ge0}\sigma_j(n)z^j&=\left(\frac{1+z}{1-z}\right)^n.
\end{align*}
In particular $|A|=V_r(n)$ is odd. We use $V_j(0)=1$ for $j\ge0$.
Logarithmic differentiation yields, for $j\ge1$,
$
 j\sigma_j=2n\sigma_{j-1}+(j-2)\sigma_{j-2}
$
and 
$
 jV_j=(2n+1)V_{j-1}+(j-1)V_{j-2}.
$
Throughout the boundary and exclusion arguments, $q$ and $c$ are the
quotient and remainder determined by
\[
 2n=(r+1)q+c,\qquad q\ge0,\qquad 0\le c\le r.
\]
For $\mathbf x\in\Z^n$, its signed support is
$\ssupp(\mathbf x)=\{\operatorname{sgn}(x_i)\mathbf e_i:x_i\ne0\}$,
with size $s(\mathbf x)=|\ssupp(\mathbf x)|$. We write
$\boldsymbol\varepsilon(\mathbf x)
=\sum_{\mathbf a\in\ssupp(\mathbf x)}\mathbf a$ for the sum of its
signed support directions.
Thus $\ssupp(\mathbf x)\subseteq S_1=\{\pm\mathbf e_i:1\le i\le n\}$.
For $\mathbf a=\pm\mathbf e_i\in S_1$, write $x_{\mathbf a}=\pm x_i$
for the corresponding signed coordinate.

We also recall the Fourier conventions on $A$.
We write $\mathbf1_E$ for the indicator of a set or condition $E$.
For an integer $j$ and a character $\chi$, write
$\chi^j(a)=\chi(ja)$ for $a\in A$.
For $X=\sum_{a\in A}c_a[a]\in\mathbb C[A]$, put
$X^{(k)}=\sum_a c_a[ka]$ and denote its identity coefficient by $[0]X$.
We identify $X$ with its coefficient function $a\mapsto c_a$ on $A$.
In group-algebra expressions and character evaluations, a finite
centrally symmetric set $E\subseteq\Z^n$ denotes its formal image sum
$\sum_{\mathbf x\in E}[\pi(\mathbf x)]$. Multiplicities are retained
under scalar multiplication.
If $\widehat A$ is the character group and $\mathbf1$ its trivial
character, orthogonality gives
\[
 [0]X=\frac{1}{V_r(n)}\sum_{\chi\in\widehat A}\chi(X),\qquad
 \sum_{a\in A}|c_a|^2=\frac{1}{V_r(n)}\sum_{\chi\in\widehat A}|\chi(X)|^2.
\]
For a function $u:A\to\mathbb C$, define its Fourier transform by
$\widehat u(\chi)=\sum_{a\in A}u(a)\chi(a)$.
For functions $u,v:A\to\mathbb C$, use the unnormalized inner product
$\langle u,v\rangle=\sum_{a\in A}u(a)\overline{v(a)}$ and write
$\|u\|_2^2=\langle u,u\rangle$.
The same conventions apply to subgroups and quotient groups of $A$.
For functions on an additive group, convolution is $(u*v)(a)=\sum_bu(b)v(a-b)$ whenever the sum is defined.
For the character values of the radius-one ball, put
$\lambda_\chi=\chi(B_1)=1+\sum_i(\chi(g_i)+\chi(-g_i))$ and
$\Lambda=\max_{\chi\ne\mathbf1}|\lambda_\chi|$.
For an integer $a$ and $E\subseteq A$, put $aE=\{ax:x\in E\}$ and
$A[a]=\{x\in A:ax=0\}$. Then $|A[a]|\,|aA|=V_r(n)$.

For completeness, we record the elementary prime-characteristic
identities used in the scalar arguments.
For a prime $p$ and a nonzero integer $a$, $v_p(a)$ denotes its
$p$-adic valuation. If $a=\sum_i a_i p^i$ and $b=\sum_i b_i p^i$
are nonnegative integers in base $p$, then Lucas's congruence is
$\binom ab\equiv\prod_i\binom{a_i}{b_i}\pmod p$.
Lucas's congruence also factors a ball volume into its digit contributions.
If $p$ is prime and $u=\sum_i u_ip^i$, $v=\sum_i v_ip^i$ are their
base-$p$ expansions, then
$V_v(u)\equiv\prod_iV_{v_i}(u_i)\pmod p$.
Indeed, $2^{p^i}\equiv2\pmod p$ and Lucas's congruence give
\[
 V_v(u)\equiv\sum_{(j_i)}\prod_i
       2^{j_i}\binom{u_i}{j_i}\binom{v_i}{j_i}
 =\prod_i\left(\sum_{j_i=0}^{p-1}
       2^{j_i}\binom{u_i}{j_i}\binom{v_i}{j_i}\right)
 =\prod_iV_{v_i}(u_i)\pmod p.
\]
For a finite set $Y$ in a real vector space, write
$tY=\{\mathbf y_1+\cdots+\mathbf y_t:\mathbf y_i\in Y\}$.
The following bounds are consequences of
Matolcsi and Ruzsa~\cite[Theorem~1.5]{MR}.
\begin{lemma}\label{lem:planar-sumset}
If $Y$ has affine dimension at least two, then
$|2Y|\ge3(|Y|-1)$ and $|3Y|\ge4(|Y|-1)$.
\end{lemma}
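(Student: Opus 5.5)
The plan is to reduce both inequalities to the plane, get the first from the doubling bound in Matolcsi--Ruzsa~\cite[Theorem~1.5]{MR} (Freiman's lemma in dimension two), and then derive the second from the first by an elementary ordering argument.

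\emph{Reduction to the plane.} Let $Y$ lie in a real vector space $W$ and have affine dimension $d\ge2$. Choose a linear map $\phi:W\to\mathbb R^2$ such that:
\begin{itemize}
\item $\phi$ is injective on the finite set $Y$;
\item $\phi(Y)$ still has affine dimension two.
\end{itemize}
A generic $\phi$ works, since each condition excludes only finitely many proper linear conditions on $\phi$. Because $\phi$ is linear, $\phi(tY)=t\,\phi(Y)$, so $|tY|\ge|t\,\phi(Y)|$. Injectivity gives $|\phi(Y)|=|Y|$. Hence it suffices to prove both bounds for a finite, non-collinear $Y\subseteq\mathbb R^2$.

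\emph{The bound for $2Y$.} For a finite non-collinear planar set, Matolcsi--Ruzsa~\cite[Theorem~1.5]{MR} gives
\[
|Y+Y|\ge 3|Y|-3 = 3(|Y|-1).
\]
This is the planar case $d=2$ of Freiman's estimate $|Y+Y|\ge(d+1)|Y|-d(d+1)/2$. I would only check that the cited theorem specializes to this form when both summands equal $Y$.

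\emph{The bound for $3Y$.} Write $3Y=2Y+Y$ and use the one-dimensional bound $|X+Y|\ge|X|+|Y|-1$ for finite nonempty $X,Y\subseteq\mathbb R^2$. To prove it, pick a linear functional $\ell$ that is injective on $X\cup Y\cup(X+Y)$. Order $X=\{x_1,\dots,x_m\}$ and $Y=\{y_1,\dots,y_k\}$ so that $\ell$ is increasing along each list. Then the chain
\[
x_1+y_1,\ x_2+y_1,\ \dots,\ x_m+y_1,\ x_m+y_2,\ \dots,\ x_m+y_k
\]
has strictly increasing $\ell$-values, so it consists of $m+k-1$ distinct elements of $X+Y$. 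Taking $X=2Y$ and applying the first bound gives
\[
|3Y|\ge|2Y|+|Y|-1\ge 3(|Y|-1)+|Y|-1=4(|Y|-1).
\]

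The only step needing care is the first: making sure that \cite[Theorem~1.5]{MR} gives exactly the constant $3|Y|-3$ under the hypothesis of affine dimension at least two, rather than a variant with a different normalization. If it does not, I would fall back on Freiman's classical planar lemma, which can be proved by removing a vertex of the convex hull and inducting on $|Y|$. The removal step uses the fact that deleting a hull vertex $v$ loses at least three sums, namely $2v$ and $v+w$ for the two hull neighbours $w$ of $v$. When the remaining set becomes collinear, a separate count is needed: $|Y|-1$ collinear points together with one point off their line give at least $2(|Y|-1)-1+|Y|$ sums.
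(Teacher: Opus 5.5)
Your proof is correct. It differs from the paper's only in how you obtain the second inequality.

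\textbf{The paper's route.} After the same generic planar projection, the paper applies \cite[Theorem~1.5]{MR} twice, with $A=B=Y$ and $k=1,2$. This gives $|2Y|\ge3|Y|-3$ and the stronger bound $|3Y|\ge6|Y|-8$, which it then weakens to $4|Y|-4$ using $|Y|\ge3$.

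\textbf{Your route.} You use the cited theorem (equivalently, Freiman's planar lemma) only for $k=1$. You then get $|3Y|=|2Y+Y|\ge|2Y|+|Y|-1$ from the ordering chain along a generic linear functional. This lands exactly on $4(|Y|-1)$, which is the only constant the paper actually uses, in Lemma~\ref{lem:small-tail-two}.

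\textbf{What each buys.} Your argument is more elementary. Together with your convex-hull induction for the $k=1$ bound, it is essentially self-contained. In that sketch, read ``hull neighbours'' as the points of $Y$ nearest to $v$ on its two hull edges; then the three sums $2v$, $v+w_1$, $v+w_2$ really are absent from $2(Y\setminus\{v\})$. The paper's argument is a one-line citation and gives a quantitatively stronger bound for $3Y$, but that extra strength is not needed.

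\textbf{One small imprecision.} Noncollinearity of $\phi(Y)$ is not a linear condition on $\phi$; it is the nonvanishing of a determinant. That determinant is still a nonzero polynomial in the entries of $\phi$, so a generic $\phi$ works as you claim.
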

Indeed, apply the cited theorem with $A=B$ and $k=1,2$
after a generic planar projection that is injective on $Y$
and has a noncollinear image. It gives $|2Y|\ge3|Y|-3$ and
$|3Y|\ge6|Y|-8\ge4|Y|-4$, since $|Y|\ge3$.

\section{Boundary fibers, moment inequalities, and spectral bounds}
\label{sec:boundary-methods}
Assume $r\ge3$. We establish the intersection, boundary, and character estimates used in the remainder exclusions.

\subsection{Mixed-radius intersections}
For integer radii $a,b$ and $\mathbf v\in\Z^n$,  put $C_{a,b}(\mathbf v)=|B_n(a)\cap(\mathbf v+B_n(b))|$, with $B_n(a)=\varnothing$ for $a<0$. Reflection about $\mathbf v/2$ gives $C_{a,b}(\mathbf v)=C_{b,a}(\mathbf v)$. The next lemma compares these intersections at adjacent radii.

For $\|\mathbf v\|_1\ge3$, put
\[
 \kappa(\mathbf v)=
 \begin{cases}3/2,&3\le\|\mathbf v\|_1\le4,\\4/3,&\|\mathbf v\|_1\ge5.
 \end{cases}
\]
\begin{lemma}\label{thm:lens-three-halves}

For $\|\mathbf v\|_1\ge3$ and $k\ge0$,
$C_{k,k}(\mathbf v)\le\kappa(\mathbf v)C_{k-1,k+1}(\mathbf v)$.
If $\mathbf v$ has at least two nonzero coordinates, then
\begin{equation}\label{eq:lens-curvature}
 2C_{a,b}(\mathbf v)\le\kappa(\mathbf v)
       \bigl(C_{a-1,b+1}(\mathbf v)+C_{a+1,b-1}(\mathbf v)\bigr)
       \qquad(a,b\ge0).
\end{equation}
If $\mathbf v$ has one nonzero coordinate, the two intersections in the
first assertion have equal sizes.
\end{lemma}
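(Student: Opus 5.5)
We will reduce every count $C_{a,b}(\mathbf v)$ to a one-dimensional convolution. Write $m=\|\mathbf v\|_1$. By symmetry we may assume $\mathbf v\ge0$ coordinatewise. For $\mathbf x\in\Z^n$, set $s=\|\mathbf x\|_1$ and $t=\|\mathbf x-\mathbf v\|_1$. In each coordinate $i$ with $v_i>0$ we have $|x_i|+|x_i-v_i|\ge v_i$, with equality exactly when $0\le x_i\le v_i$. Hence $s+t=m+2e$, where the \emph{excess} $e\ge0$ collects the overshoot coordinatewise together with the coordinates outside the support of $\mathbf v$.

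This gives a joint generating function $G(y,z)=\sum_{\mathbf x}y^{s}z^{t}$ that factors over coordinates. A coordinate with $v_i=0$ contributes $\frac{1+yz}{1-yz}$. A coordinate with $v_i=w>0$ contributes
\[
\sum_{j=0}^{w}y^jz^{w-j}+\frac{yz}{1-yz}\,(y^{w}+z^{w}).
\]
Then $C_{a,b}(\mathbf v)=\sum_{s\le a,\,t\le b}[y^sz^t]G$. It is natural to parametrize by $u=s-t$ and the excess $e$. Shifting $(a,b)\to(a-1,b+1)$ leaves $a+b$ fixed and moves the allowed window in $u$. So $C_{a,b}$ is a sum over $e$ of window sums of a distribution $f_e(u)$ on $u\in[-(m+2e),m+2e]$, with $u\equiv m\pmod 2$, restricted to $|u-(a-b)|$ within a range set by $a+b-m-2e$.

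The one-coordinate claim comes first. If $\mathbf v=m\mathbf e_1$, the reflection $x_1\mapsto m-x_1$ fixed on the other coordinates swaps $s$ and $t$ only in the first coordinate. Instead, we should use the explicit description: the lens consists of points whose first coordinate lies between $0$ and $m$, plus tails. Every fiber over the remaining coordinates (norm $\rho$) is an interval intersection $[-(a-\rho),a-\rho]\cap[m-(b-\rho),m+b-\rho]$. Its length is $\min$-type and depends only on $a+b-2\rho-m$ when neither interval contains the other. Checking that containment cannot occur for $(k,k)$ versus $(k-1,k+1)$ when $m\ge1$, and summing, gives equality of the two counts.

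For the inequalities, the key point is that each $f_e$ is a convolution of coordinate distributions that are symmetric and unimodal in $u$. These are the "flat" segment $\{j-(w-j)\}$ of length $w+1$, and the symmetric tails. For $\ge2$ nonzero coordinates, the convolution of at least two flat segments of lengths $v_1+1$ and $v_2+1$ is \emph{log-concave with a trapezoidal profile of slope controlled by $m$}. Inequality \eqref{eq:lens-curvature} asserts a discrete concavity $2F(c)\le\kappa(F(c-2)+F(c+2))$ with $\kappa>1$, where $F(c)$ denotes the window sum centered at $c=a-b$.

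The plan is to prove this termwise in $e$, so that it suffices for the window sums of each $f_e$. A window sum of a symmetric unimodal log-concave sequence changes, when shifted by $2$, by the difference of two end values. The needed estimate is that the central mass does not exceed $\kappa/2$ times the average of its two shifts. The worst case is a narrow window near the mode, when $m$ is small. I expect the bound to reduce to ratios of adjacent values of $f_e$. For $m\in\{3,4\}$ with two nonzero coordinates the extreme ratio is $3/2$, attained by the window of a single point, e.g.\ profile $(1,2,1)$ or $(1,2,2,1)$-type. For $m\ge5$ the profile has ratio at most $4/3$. The first assertion is the special case $a=b=k$, using the symmetry $C_{k+1,k-1}=C_{k-1,k+1}$.

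The main obstacle is this last step. We must verify that mixing over $e$, including the tail contributions $\frac{yz}{1-yz}(y^w+z^w)$ which are not flat, preserves the ratio bound uniformly. A small window with large $e$ could break termwise monotonicity. If it does, I would instead compare the full sums directly by an injection: map each lens point of $(a,b)$ into one of the two shifted lenses by moving one unit of mass between two distinct support coordinates of $\mathbf v$. Then I would count the fibers of this map, with multiplicity at most $\kappa$ on average.
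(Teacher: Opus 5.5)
Your coordinatewise generating function and the single-coordinate fiber count are sound. The two fiber intervals avoid containment only for $m\ge2$, not $m\ge1$, but this is harmless since $m\ge3$. The genuine gap is the main inequality \eqref{eq:lens-curvature}. You leave it open, and the reduction you propose for it is false.

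The excess slices $f_e$ are not convolutions of symmetric unimodal profiles. An overshoot in coordinate $i$ contributes only the two values $\pm v_i$ to $u$, so for $e\ge1$ the slices can be bimodal, and the inequality then fails slice by slice. Take $\mathbf v=(10,1)$, so $m=11$ and $\kappa=4/3$. One finds $(f_1(-11),f_1(-9),f_1(-7))=(2,3,2)$. At $(a,b)=(2,11)$ the $e=1$ window is the single point $u=-9$, and $2\cdot3>\frac43(2+2)$. The full inequality $2C_{2,11}\le\frac43(C_{1,12}+C_{3,10})$, that is $16\le\frac43(5+8)$, holds only because the $e=0$ slice has slack.

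Log-concavity is also the wrong property. It bounds a value from below by its neighbours, whereas \eqref{eq:lens-curvature} needs an upper bound; your own example $(1,2,1)$, with a one-point window at its centre, would force $\kappa=2$. Your fallback injection is unspecified, and the claim that its fibers have average multiplicity at most $\kappa$ is exactly what has to be proved.

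What is missing is a decomposition that the inequality survives. The paper uses $G_{\mathbf v}=\prod_iH_{v_i}(s,t)/((1-s)(1-t))$ together with $K_3=3(s^2+t^2)-4st$ and $K_5=2(s^2+t^2)-3st$. The coefficient of $s^{a+1}t^{b+1}$ in $K_3G_{\mathbf v}$ or $K_5G_{\mathbf v}$ is a positive multiple of $\kappa(C_{a-1,b+1}+C_{a+1,b-1})-2C_{a,b}$. Coefficients with a zero exponent are trivially nonnegative.

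Every coordinate factor $H_{v_i}$ has nonnegative coefficients. Hence nonnegativity for the partial product over a few coordinates propagates to the whole vector, zero coordinates included. This reduces the lemma to base cases:
\begin{itemize}
\item two coordinates, handled by an exact formula for $C_{a,M-a}(\alpha,\beta)$ obtained from the rotated coordinates $x_1\pm x_2$, which turn Lee balls into squares;
\item the patterns containing no suitable pair, namely $(1,1,1)$ for $3/2$, and $(3,1,1),(2,2,2),(2,2,1),(2,1,1,1),(1,1,1,1,1)$ for $4/3$, each needing an explicit certificate.
\end{itemize}
The second family is needed because $\kappa$ depends on the full norm while subvectors may have smaller norm.

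This also corrects your extremal heuristics. At $m=3$, two nonzero coordinates already satisfy the bound with constant $4/3$. The constant $3/2$ is forced by $(1,1,1)$, whose profile $(1,3,3,1)$ gives $2\cdot3=\frac32(1+3)$.
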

\begin{proof}
Sign changes permit $v_i\ge0$. If only $v_1=m\ge3$ is nonzero,
the one-dimensional intersection counts satisfy, for every integer $j$,
$C_{j,j}(m)=C_{j-1,j+1}(m)=\max\{0,2j-m+1\}$.
For each fixed $\mathbf y\in\Z^{n-1}$, the residual radii are obtained
by setting $j=k-\|\mathbf y\|_1$; when $n=1$, the set
$\Z^0$ consists of its single zero vector. Summing this equality over
$\mathbf y$ proves the assertion for support one.
In dimension two write $\mathbf v=(\alpha,\beta)$, $\alpha\ge \beta\ge1$, and put
\[
 M=a+b,\quad m=\alpha+\beta\ge3,\quad
 P=M-m+1,\quad Q=M-(\alpha-\beta)+1=P+2\beta.
\]
If $M<m$, all three intersections in~\eqref{eq:lens-curvature}
are empty. Suppose $M\ge m$. The coordinates
$u=x_1+x_2$, $w=x_1-x_2$ identify the integer lattice with
$\{(u,w)\in\Z^2:u\equiv w\pmod2\}$ and transform a Lee ball
into a square. By symmetry assume $a\le b$. The two intersection
intervals have lower endpoints $\max\{-a,m-b\}$ and
$\max\{-a,\alpha-\beta-b\}$, upper endpoint $a$, and respective integer lengths
$\min\{2a+1,P\}$ and $\min\{2a+1,Q\}$.
If one length is even, exactly half the rectangle is admissible.
If both lengths are odd, their lower endpoints have the same parity:
this follows from $m\equiv\alpha-\beta\pmod2$ when the same kind of endpoint
is selected; in the mixed case the endpoints are $m-b,-a$, and
oddness of $P=a+b-m+1$ gives $m-b\equiv-a\pmod2$.
Thus, with $j=\min(a,M-a)$,
\begin{equation*}
 C_{a,M-a}(\alpha,\beta)=f_j,
 \qquad f_j=\left\lceil
       \frac{\min\{2j+1,P\}\min\{2j+1,Q\}}{2}\right\rceil.
\end{equation*}

We first prove the stronger inequality
$3f_j\le2(f_{j-1}+f_{j+1})$ away from a single even central
index; at an odd central pair the right neighbor means $f_j$.
For $M=3$ the entire sequence is $(1,2,2,1)$, which satisfies
this inequality, with outside values zero. Let $M\ge4$.
At an endpoint, $f_0=1$ and $f_1\ge2$. At a noncentral index
$j=1$, the cases $P=1,2,\ge3$ give
\[
 (f_1,f_2)=(2,\ge2),\ (3,\ge4),\ (5,\ge8),
 \qquad 3f_1\le2(1+f_2).
\]
For $j\ge2$, if $P\le2j-1$, the first
factor is capped, and
\[
 \frac{f_{j-1}}{f_j}\ge\frac{2j-1}{2j+1}\ge\frac{3}{5},
 \qquad f_{j+1}\ge f_j,
 \qquad 2(f_{j-1}+f_{j+1})\ge\frac{16}{5}f_j>3f_j.
\]
For the first ratio, if $Q\le 2j-1$ it equals one; if $Q=2j$,
then $P\equiv Q\pmod2$ is even and the ratio is
$(2j-1)/(2j)$; if $Q\ge2j+1$, the products $P(2j-1)$ and $P(2j+1)$
have the same parity. For even products the ratio is $(2j-1)/(2j+1)$;
for odd products it is
$(P(2j-1)+1)/(P(2j+1)+1)\ge(2j-1)/(2j+1)$.

For $P\ge2j$, one has $2f_{j-1}=4j^2-4j+2$ and the
following exact values or lower bounds:
\[
\begin{array}{c|c|c|c}
 P&2f_j&2f_{j+1}\text{ at least}
       &2(f_{j-1}+f_{j+1})-3f_j\text{ at least}\\\hline
 2j&2j(2j+1)&4j(j+1)&2j^2-3j+2\\
 2j+1&4j^2+4j+2&4(j+1)^2&2j^2-2j+3\\
 2j+2&4j^2+4j+2&(2j+2)(2j+3)&2j^2+5\\
 P\ge2j+3&4j^2+4j+2&4j^2+12j+10&2j^2+2j+9
\end{array}
\]
The last column is obtained by adding the preceding two neighbor
values and subtracting $3f_j$. Its entries are positive for $j\ge2$:
$2j^2-3j+2=j(2j-3)+2>0$, $2j^2-2j+3=2j(j-1)+3>0$,
and the last two expressions have positive coefficients.
At either odd central index, $2j+1=M$ and
$P=M-m+1\le 2j-1$, so the earlier capped case applies.
Reflection handles the descending indices.

At the single central index $M=2\ell$, one has $P\le M-2$,
so the first factor is capped for both $f_{\ell-1}$ and $f_\ell$.
If $\alpha-\beta\ge2$, both factors are capped and their ratio is one.
If $\alpha-\beta=1$, $P$ is even and
$\frac{f_\ell}{f_{\ell-1}}=\frac{M}{M-1}\le\frac{4}{3}$
for $M\ge4$.
If $\alpha=\beta$, then $m\ge4$ is even, $P$ is odd, and
$\frac{f_\ell}{f_{\ell-1}}=\frac{P(M+1)+1}{P(M-1)+1}$.
For $m=4$, the bound $f_\ell/f_{\ell-1}\le3/2$ is equivalent
to $P(M-5)+1\ge0$: at $M=4$, $P=1$ gives zero, and for
$M\ge6$ both summands are positive. For $m\ge6$, the bound
$f_\ell/f_{\ell-1}\le4/3$ is equivalent to $P(M-7)+1\ge0$:
at $M=6$, $P=1$ gives zero, and for $M\ge8$ it is positive.
These are exactly the central inequalities
$2f_\ell\le2\kappa(\mathbf v)f_{\ell-1}$.
The noncentral inequality already proved has constant $4/3$
and hence also holds with $3/2$. This establishes
\eqref{eq:lens-curvature} in dimension two.
To pass to arbitrary support, put
\begin{align*}
 &H_j(s,t)=\sum_{x\in\Z}s^{|x|}t^{|x-j|},\qquad
 G_{\mathbf v}(s,t)=\frac{\prod_iH_{v_i}(s,t)}{(1-s)(1-t)},\\
 &K_3=3(s^2+t^2)-4st,\qquad ~~K_5=2(s^2+t^2)-3st.
\end{align*}
Then  $[s^at^b]G_{\mathbf v}=C_{a,b}(\mathbf v)$, and
\begin{align*}
&[s^{a+1}t^{b+1}]K_3G_{\mathbf v}
=3(C_{a-1,b+1}+C_{a+1,b-1})-4C_{a,b},\\
& [s^{a+1}t^{b+1}]K_5G_{\mathbf v}
=2(C_{a-1,b+1}+C_{a+1,b-1})-3C_{a,b}.
\end{align*}
Coefficients with either exponent zero are nonnegative directly.
All $H_j$ have nonnegative coefficients, so a two-coordinate
inequality remains valid after multiplication by the other $H_j$.

For the constant $3/2$, any positive coordinate at least two
and any other positive coordinate supply the required pair.
Otherwise there are at least three coordinates equal to one, and
\[
 H_1=\frac{s+t}{1-st},\qquad
 K_3(s+t)^3=3s^5+5s^4t+5st^4+3t^5
\]
proves coefficientwise nonnegativity after multiplication by the
remaining nonnegative factors.
For the constant $4/3$, a pair of positive coordinates of sum at
least five supplies the two-dimensional inequality. In the remaining
case, the positive coordinates contain one of the five patterns
\[
 (3,1,1),\quad(2,2,2),\quad(2,2,1),\quad(2,1,1,1),\quad(1,1,1,1,1).
\]
Indeed, without a pair of positive coordinates of sum at least five,
the maximum coordinate is at most three. A coordinate three has at
least two accompanying ones; three coordinates two give $(2,2,2)$;
two coordinates two require an accompanying one; one coordinate two
requires at least three ones; and otherwise at least five ones occur.
We give explicit algebraic certificates for all five.
Only $H_1,H_2,H_3$ occur in these patterns. Write $H_j=Q_j/(1-st)$, where
\[
 \begin{aligned}
 Q_1&=s+t,\qquad Q_2=s^2+st+t^2-s^2t^2,\\
 Q_3&=s^3+s^2t+st^2+t^3-s^2t^2(s+t).
 \end{aligned}
\]
For a pattern $\mathbf w$, put $R_{\mathbf w}=K_5\prod_iQ_{w_i}$.
The following identities suffice, since each quotient
$(1-s^at^b)/((1-s)(1-t))$ is coefficientwise nonnegative for
nonnegative $a,b$: its coefficient at $s^it^j$ is
$1-\mathbf1_{i\ge a,\,j\ge b}\ge0$.
\[
\begin{aligned}
	R_{311}={}&s^6(2s+3t)(1-t^2)+t^6(3s+2t)(1-s^2)
	+s^2t^2(s+t)(s^2+st+t^2+s^2t^2),\\
	R_{2111}={}&2s^7(1-t^2)+2t^7(1-s^2)
	+s^6t(5-3t^2)+st^6(5-3s^2)\\
	&+4s^2t^2(s^3+t^3)+s^3t^3(s+t)(1+st),\\
	R_{221}={}&2s^7(1-t^2)+2t^7(1-s^2)
	+st(s+t)(s^4+t^4)(3-2st)\\
	&+s^3t^3(s+t)(1-s^2t^2)+2s^4t^4(s^3+t^3),\\
	R_{11111}={}&2(s^7+t^7)+7st(s+t)(s^4+t^4).
\end{aligned}
\]
For $\mathbf w=(2,2,2)$, use the following nonnegative matrix,
with rows and columns indexed by $0,\ldots,8$:
\[
(c_{ij})_{0\le i,j\le8}=\begin{pmatrix}
0&0&0&0&0&0&0&0&2\\
0&0&0&0&0&0&0&3&5\\
0&0&0&0&0&0&5&8&4\\
0&0&0&0&0&2&7&7&3\\
0&0&0&0&3&5&4&4&6\\
0&0&0&2&5&10&9&6&8\\
0&0&5&7&4&9&11&8&8\\
0&3&8&7&4&6&8&8&8\\
2&5&4&3&6&8&8&8&8
\end{pmatrix}.
\]
Direct multiplication by $(1-s)(1-t)$ verifies the identity
\[
\begin{split}
\frac{K_5Q_2^3}{(1-s)(1-t)}={}&
 \sum_{i,j=0}^7 c_{ij}s^it^j
 +\frac{t^8}{1-t}\sum_{i=0}^7 c_{i8}s^i+\frac{s^8}{1-s}\sum_{j=0}^7 c_{8j}t^j
 +\frac{8s^8t^8}{(1-s)(1-t)}.
\end{split}
\]
Thus $R_{\mathbf w}/((1-s)(1-t))$ is coefficientwise nonnegative for all five
patterns. Multiplying by one factor $(1-st)^{-1}$ for each
coordinate of the pattern, and by the remaining $H_{v_i}$,
establishes the general curvature inequality.

Both constants now hold for every displacement of the corresponding
norm. Setting $a=b=k$ in~\eqref{eq:lens-curvature} and using
$C_{k+1,k-1}=C_{k-1,k+1}$ proves the equal-radius assertions.
\end{proof}

\subsection{The boundary partition and its scalar representative}
For $\mathbf b\in S_r$, let
$F_{\mathbf b}=\{\mathbf x\in S_{r+1}:\pi(\mathbf x)=\pi(\mathbf b)\}$
and $f_{\mathbf b}=|F_{\mathbf b}|$. Define the defect and residual sum by
\[
 D_{\mathbf b}=(r+1)f_{\mathbf b}-2n+s(\mathbf b),\qquad
 \boldsymbol\delta_{\mathbf b}
 =\sum_{\mathbf x\in F_{\mathbf b}}
       (\mathbf x-\boldsymbol\varepsilon(\mathbf x)).
\]
Uniqueness of representatives in $B_r$ gives the following partition.
\begin{lemma}\label{lem:exact-fiber}
The signed supports of the points of 
$F_{\mathbf b}$ partition
$S_1\setminus\ssupp(\mathbf b)$. Consequently,
$
 D_{\mathbf b}
=\sum_{\mathbf x\in F_{\mathbf b}}(r+1-s(\mathbf x))\ge0,
$
$\|\boldsymbol\delta_{\mathbf b}\|_1\le D_{\mathbf b}$
and 
$
 (f_{\mathbf b}+1)\pi(\mathbf b)
 =\pi\bigl(\mathbf b-\boldsymbol\varepsilon(\mathbf b)
                         +\boldsymbol\delta_{\mathbf b}\bigr).
$
Every point of $S_{r+1}$ has its representative in $S_r$.
\end{lemma}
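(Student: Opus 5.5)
The plan is to derive everything from two facts: uniqueness of representatives in $B_r$, and the minimum norm $2r+1$ of nonzero vectors of $L$ (recorded in Section~\ref{sec:preliminaries}). Each point of $F_{\mathbf b}$ and each direction in $S_1$ will be compared through a one-step shift back into $B_r$. I would start with the last assertion, since it is the quickest. If $\mathbf x\in S_{r+1}$ has representative $\mathbf y\in B_r$, then $\mathbf x-\mathbf y\in L\setminus\{0\}$, so $\|\mathbf x-\mathbf y\|_1\ge 2r+1$. Hence $\|\mathbf y\|_1\ge 2r+1-(r+1)=r$, so $\mathbf y\in S_r$.

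For the partition I would prove three things.

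\emph{Disjointness from $\ssupp(\mathbf b)$.} Suppose $\mathbf x\in F_{\mathbf b}$ and $\mathbf a\in\ssupp(\mathbf x)\cap\ssupp(\mathbf b)$. Then $\mathbf x-\mathbf a$ has norm $r$ and $\mathbf b-\mathbf a$ has norm $r-1$. Both lie in $B_r$, they have the same image, and they are distinct because their norms differ. This contradicts uniqueness of representatives.

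\emph{Pairwise disjointness.} Suppose distinct $\mathbf x,\mathbf x'\in F_{\mathbf b}$ share a direction $\mathbf a$. Then $\mathbf x-\mathbf a$ and $\mathbf x'-\mathbf a$ are distinct points of $S_r$ with equal images, again a contradiction.

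\emph{Covering.} This is the only step needing a construction, and I expect it to be the main (though still modest) obstacle. Fix $\mathbf a\in S_1\setminus\ssupp(\mathbf b)$. Since the $\mathbf a$-coordinate of $\mathbf b$ is zero or of opposite sign, we get $\|\mathbf b-\mathbf a\|_1=r+1$. Let $\mathbf y\in B_r$ be the representative of $\pi(\mathbf b-\mathbf a)$ and put $\mathbf x=\mathbf y+\mathbf a$, so that $\pi(\mathbf x)=\pi(\mathbf b)$. If $\|\mathbf x\|_1\le r$, uniqueness would force $\mathbf x=\mathbf b$. That gives $\mathbf y=\mathbf b-\mathbf a$ of norm $r+1$, which is impossible. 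So $\|\mathbf x\|_1=r+1$, and therefore $\|\mathbf y\|_1=r$. Adding $\mathbf a$ raises the norm exactly when $y_{\mathbf a}\ge0$, so $\mathbf a\in\ssupp(\mathbf x)$. Thus $\mathbf x\in F_{\mathbf b}$ covers $\mathbf a$.

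The consequences then follow by bookkeeping. The partition gives $\sum_{\mathbf x\in F_{\mathbf b}}s(\mathbf x)=2n-s(\mathbf b)$. Substituting this into the definition yields $D_{\mathbf b}=\sum_{\mathbf x}(r+1-s(\mathbf x))$, and each term is nonnegative because $s(\mathbf x)\le\|\mathbf x\|_1=r+1$.

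Next, $\mathbf x-\boldsymbol\varepsilon(\mathbf x)$ lowers each nonzero coordinate's magnitude by one, so its norm is $r+1-s(\mathbf x)$. The triangle inequality then gives $\|\boldsymbol\delta_{\mathbf b}\|_1\le D_{\mathbf b}$.

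Finally, the elements of $S_1$ sum to zero, so the partition gives
\[
 \sum_{\mathbf x\in F_{\mathbf b}}\boldsymbol\varepsilon(\mathbf x)=-\boldsymbol\varepsilon(\mathbf b).
\]
Summing $\pi(\mathbf x)=\pi(\mathbf b)$ over $F_{\mathbf b}$ gives
\[
 f_{\mathbf b}\pi(\mathbf b)=\pi\bigl(-\boldsymbol\varepsilon(\mathbf b)+\boldsymbol\delta_{\mathbf b}\bigr).
\]
Adding $\pi(\mathbf b)$ to both sides gives the stated identity.
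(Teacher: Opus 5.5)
Your proposal is correct and follows the paper's proof essentially step for step. Disjointness comes from comparing the shifted points $\mathbf x-\mathbf a$ inside $B_r$, and covering comes from the representative of $\pi(\mathbf b-\mathbf a)$ shifted back by $\mathbf a$. The identities then follow from support-size bookkeeping and $\sum_{\mathbf a\in S_1}\mathbf a=\mathbf 0$, and the last assertion from the minimum norm $2r+1$ in $L$. The only difference is that you split the paper's single uniqueness step into two separate disjointness claims.
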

\begin{proof}
If $\mathbf x\in F_{\mathbf b}$ and $\mathbf a\in\ssupp(\mathbf x)$,
then $\mathbf x-\mathbf a\in B_r$ represents $\pi(\mathbf b-\mathbf a)$.
For fixed $\mathbf a$, uniqueness in $B_r$ permits at most one such
$\mathbf x$. If also $\mathbf a\in\ssupp(\mathbf b)$, then
$\mathbf b-\mathbf a\in B_r$ has the same image, forcing
$\mathbf x=\mathbf b$, a contradiction.
Conversely, let $\mathbf a\notin\ssupp(\mathbf b)$ and choose
$\mathbf w\in B_r$ with $\pi(\mathbf w)=\pi(\mathbf b-\mathbf a)$.
If $\mathbf w+\mathbf a\in B_r$, it equals $\mathbf b$, contrary to
$\|\mathbf b-\mathbf a\|_1=r+1$. Thus
$\|\mathbf w+\mathbf a\|_1=r+1$, which also forces
$\mathbf a\in\ssupp(\mathbf w+\mathbf a)$.
This proves the partition and gives
\[
 \sum_{\mathbf x\in F_{\mathbf b}}s(\mathbf x)=2n-s(\mathbf b),
 \qquad
 \sum_{\mathbf x\in F_{\mathbf b}}\boldsymbol\varepsilon(\mathbf x)
 =-\boldsymbol\varepsilon(\mathbf b).
\]
Subtracting support sizes from norms proves $
 D_{\mathbf b}
=\sum_{\mathbf x\in F_{\mathbf b}}(r+1-s(\mathbf x))\ge0
$
and 
$\|\boldsymbol\delta_{\mathbf b}\|_1\le D_{\mathbf b}$;
summing images proves $
 (f_{\mathbf b}+1)\pi(\mathbf b)
 =\pi\bigl(\mathbf b-\boldsymbol\varepsilon(\mathbf b)
                         +\boldsymbol\delta_{\mathbf b}\bigr)
$.
Finally, an exterior point with representative in $B_{r-1}$ would
give a nonzero vector of $L$ of norm at most $2r$.
\end{proof}

The sums needed below follow from the support distribution of a Lee shell.
For $k\ge0$, with negative-index shells empty, one has
\begin{equation}\label{eq:support-first-moment}
	\begin{split}
		&\sum_{\mathbf x\in S_k}(k-s(\mathbf x))=(k-1)\sigma_{k-1} \\
		&\sum_{\mathbf x\in S_k}(k-s(\mathbf x))(k-s(\mathbf x)-1) =(k-1)(k-2)\sigma_{k-2}.
	\end{split}
\end{equation}
For $k=0,1$, both moment identities hold directly. The second
identity also holds directly for $k=2$, because
$(2-s(\mathbf x))(1-s(\mathbf x))=0$ for every $\mathbf x\in S_2$.
For $k\ge2$, the number of vectors of support size $a$ is
$2^a\binom na\binom{k-1}{a-1}$, and
\[
 (k-a)\binom{k-1}{a-1}=(k-1)\binom{k-2}{a-1}.
\]
For $k\ge3$, one also has
\[
 (k-a)(k-a-1)\binom{k-1}{a-1}
 =(k-1)(k-2)\binom{k-3}{a-1}.
\]
Multiplying by $2^a\binom na$ and summing over $a$ proves the
asserted identities.
Since the fibers partition $S_{r+1}$, these identities and the shell
recurrence give
\begin{equation}\label{eq:boundary-total-defect}
 \sum_{\mathbf b\in S_r}D_{\mathbf b}
 =(r+1)\sigma_{r+1}-2n\sigma_r+\sum_{\mathbf b\in S_r}s(\mathbf b)
 =r\sigma_r.
\end{equation}

To control the fiber sizes, use $2n=(r+1)q+c$ with $0\le c\le r$.
We call fibers of size $q$ minimum fibers and write
$E=|\{\mathbf b\in S_r:f_{\mathbf b}=q\}|$ for their number.
For each such fiber put
$\boldsymbol\eta_{\mathbf b}
=\mathbf b-\boldsymbol\varepsilon(\mathbf b)+\boldsymbol\delta_{\mathbf b}$.
For every nonnegative integer $k$, define
$\mathcal P_k=\{\mathbf y\in B_k:\|\mathbf y\|_1\equiv k\pmod2\}$.
The following bounds include $c=0$ and $c=r$.
\begin{lemma}\label{lem:minimal-fibers}
One has $f_{\mathbf b}\ge q$ and
$(r+1)E\ge(r-c+1)\sigma_r-(r-1)\sigma_{r-1}$.
For every minimum fiber, $\boldsymbol\eta_{\mathbf b}$ is a sum
of exactly $r-c$ signed units and satisfies
\begin{equation}\label{eq:minimum-scalar-representative}
 \boldsymbol\eta_{\mathbf b}\in B_{r-c},
 \qquad \pi(\boldsymbol\eta_{\mathbf b})=(q+1)\pi(\mathbf b).
\end{equation}
Consequently,
\begin{equation}\label{eq:minimum-parity-capacity}
 E\le |A[q+1]|\,|(q+1)A\cap\pi(\mathcal P_{r-c})|.
\end{equation}
For $c=r$, one has $E\le |A[q+1]|-1$.
\end{lemma}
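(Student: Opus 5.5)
Every assertion here comes from Lemma~\ref{lem:exact-fiber}, the defect identity~\eqref{eq:boundary-total-defect}, and the first moment identity in~\eqref{eq:support-first-moment}. The plan is to establish them in the order stated.

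\textbf{Lower bound on fiber sizes.} Every $\mathbf x\in F_{\mathbf b}$ has $s(\mathbf x)\le r+1$. Since the supports partition $S_1\setminus\ssupp(\mathbf b)$,
\[
(r+1)f_{\mathbf b}\ge 2n-s(\mathbf b)\ge 2n-r=(r+1)q+c-r .
\]
Because $c-r\ge -r>-(r+1)$, this gives $f_{\mathbf b}>q-1$, hence $f_{\mathbf b}\ge q$.

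\textbf{Lower bound on $E$.} Write
\[
D_{\mathbf b}=(r+1)(f_{\mathbf b}-q)-c+s(\mathbf b).
\]
On a non-minimum fiber we have $f_{\mathbf b}\ge q+1$, so $D_{\mathbf b}\ge r+1-c+s(\mathbf b)-r$; more usefully, $D_{\mathbf b}\ge (r+1)-c-(r-s(\mathbf b))$. On a minimum fiber, $D_{\mathbf b}=s(\mathbf b)-c\ge -(r-s(\mathbf b))-(c-r)$; in fact $D_{\mathbf b}\ge0$ always holds.

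Sum over $S_r$, using
\[
\sum_{S_r}(r-s(\mathbf b))=(r-1)\sigma_{r-1}
\quad\text{and}\quad
\sum D_{\mathbf b}=r\sigma_r .
\]
The bound $D_{\mathbf b}\ge (r+1-c)\mathbf 1_{f_{\mathbf b}>q}-(r-s(\mathbf b))$ holds in both cases: for a minimum fiber use $D_{\mathbf b}\ge0\ge-(r-s)$, and for a non-minimum fiber it is the computation above. Summing,
\[
r\sigma_r\ge (r+1-c)(\sigma_r-E)-(r-1)\sigma_{r-1}.
\]
This rearranges to $(r+1-c)E\ge (r+1-c)\sigma_r-r\sigma_r-(r-1)\sigma_{r-1}$, which is weaker than the claim. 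So a sharper per-fiber inequality is needed:
\[
D_{\mathbf b}\ge (r+1)\mathbf 1_{f_{\mathbf b}>q}-c-(r-s(\mathbf b)),
\]
which is exactly $(r+1)(f-q)-c+s\ge\ldots$ and holds in both cases. Summing now gives
\[
r\sigma_r\ge (r+1)(\sigma_r-E)-c\sigma_r-(r-1)\sigma_{r-1},
\]
which is the claimed $(r+1)E\ge(r-c+1)\sigma_r-(r-1)\sigma_{r-1}$.

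\textbf{The minimum-fiber representative.} For a minimum fiber, $\mathbf b-\boldsymbol\varepsilon(\mathbf b)$ is a sum of $r-s(\mathbf b)$ signed units, with nonnegative parts in each coordinate. Likewise $\boldsymbol\delta_{\mathbf b}$ is a sum of exactly $D_{\mathbf b}=s(\mathbf b)-c$ signed units, namely the entries of $\mathbf x-\boldsymbol\varepsilon(\mathbf x)$. Together, $\boldsymbol\eta_{\mathbf b}$ is a sum of $r-c$ signed units. Hence $\|\boldsymbol\eta_{\mathbf b}\|_1\le r-c$ with matching parity, so $\boldsymbol\eta_{\mathbf b}\in\mathcal P_{r-c}\subseteq B_{r-c}$. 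The identity $\pi(\boldsymbol\eta_{\mathbf b})=(q+1)\pi(\mathbf b)$ is Lemma~\ref{lem:exact-fiber} with $f_{\mathbf b}=q$.

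\textbf{Capacity bound~\eqref{eq:minimum-parity-capacity}.} The map $\mathbf b\mapsto\pi(\mathbf b)$ is injective on $S_r$. Each value $(q+1)\pi(\mathbf b)$ lies in $(q+1)A\cap\pi(\mathcal P_{r-c})$, and each fiber of multiplication by $q+1$ has size $|A[q+1]|$. This gives~\eqref{eq:minimum-parity-capacity}.

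\textbf{The case $c=r$.} Here $\mathcal P_0=\{\mathbf 0\}$, so $(q+1)\pi(\mathbf b)=0$ and $\pi(\mathbf b)\in A[q+1]$. The element $0\in A[q+1]$ is represented by $\mathbf 0\notin S_r$, so it cannot be $\pi(\mathbf b)$. Hence $E\le|A[q+1]|-1$.

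The main obstacle is finding the sharp per-fiber inequality for $D_{\mathbf b}$; the first attempt loses a term, and the correct choice is the form with $-c$ uniform across both cases. Everything else is bookkeeping.
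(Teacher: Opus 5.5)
\textbf{There is a genuine gap in the lower bound on $E$.} Your arguments for $f_{\mathbf b}\ge q$, for $\boldsymbol\eta_{\mathbf b}\in\mathcal P_{r-c}\subseteq B_{r-c}$ with $\pi(\boldsymbol\eta_{\mathbf b})=(q+1)\pi(\mathbf b)$, for the capacity bound~\eqref{eq:minimum-parity-capacity}, and for the case $c=r$ are correct and coincide with the paper's. The bound $(r+1)E\ge(r-c+1)\sigma_r-(r-1)\sigma_{r-1}$, however, is not established; the step you flag as the main obstacle rests on an arithmetic slip.

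On a nonminimum fiber the true bound is $D_{\mathbf b}\ge r+1-c+s(\mathbf b)$. You wrote $D_{\mathbf b}\ge r+1-c+s(\mathbf b)-r$, which discards $r$ at every boundary point. Your ``sharper'' inequality $D_{\mathbf b}\ge(r+1)\mathbf 1_{f_{\mathbf b}>q}-c-(r-s(\mathbf b))$ keeps this loss. It is in fact pointwise weaker than your first attempt, since it is smaller by $c$ on minimum fibers. Summed over $S_r$, it gives $r\sigma_r\ge(r+1)(\sigma_r-E)-c\sigma_r-(r-1)\sigma_{r-1}$, and this rearranges to
\[
(r+1)E\ge(1-c)\sigma_r-(r-1)\sigma_{r-1},
\]
not to the claim. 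The $r\sigma_r$ on the left has nothing to cancel against, so the result falls short of the target by $r\sigma_r$ and is vacuous whenever $c\ge1$.

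\textbf{Fix.} Keep $s(\mathbf b)$ exactly. Sum the identity $D_{\mathbf b}=(r+1)(f_{\mathbf b}-q)-c+s(\mathbf b)$ over $S_r$, using $\sum D_{\mathbf b}=r\sigma_r$ from~\eqref{eq:boundary-total-defect} and $\sum s(\mathbf b)=r\sigma_r-(r-1)\sigma_{r-1}$ from the first moment identity~\eqref{eq:support-first-moment}. The two $r\sigma_r$ terms cancel, leaving the exact excess identity
\[
(r+1)\sum_{\mathbf b}(f_{\mathbf b}-q)=c\sigma_r+(r-1)\sigma_{r-1}.
\]
Each of the $\sigma_r-E$ nonminimum fibers contributes at least one to the left-hand sum. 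Hence $(r+1)(\sigma_r-E)\le c\sigma_r+(r-1)\sigma_{r-1}$, which is the stated inequality; this is exactly the paper's argument. Equivalently, your per-fiber route works if you use $D_{\mathbf b}\ge(r+1)\mathbf 1_{f_{\mathbf b}>q}-c+s(\mathbf b)$ in place of your version.
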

\begin{proof}
Lemma~\ref{lem:exact-fiber} gives
$(r+1)f_{\mathbf b}\ge 2n-s(\mathbf b)\ge (r+1)q+c-r>(r+1)(q-1)$.
Thus $f_{\mathbf b}-q\ge0$, and
\[
 D_{\mathbf b}=s(\mathbf b)-c+(r+1)(f_{\mathbf b}-q),
 \qquad
 (r+1)\sum_{\mathbf b}(f_{\mathbf b}-q)
 =c\sigma_r+(r-1)\sigma_{r-1}.
\]
Every nonminimum fiber contributes at least one to the sum of
$f_{\mathbf b}-q$, proving $
 (r+1)E\ge(r-c+1)\sigma_r-(r-1)\sigma_{r-1}.
$
For a minimum fiber, $\mathbf b-\boldsymbol\varepsilon(\mathbf b)$
contains $r-s(\mathbf b)$ signed units and
$\boldsymbol\delta_{\mathbf b}$ contains $s(\mathbf b)-c$ signed
units, for a total of $r-c$.
Lemma~\ref{lem:exact-fiber}  proves the scalar identity.
Multiplication by $q+1$ has fibers of size $|A[q+1]|$, giving
\eqref{eq:minimum-parity-capacity}. If $c=r$, the boundary images
are distinct nonzero elements of $A[q+1]$.
\end{proof}

The volume and shell recurrences give $V_{k-1}/V_k\le k/(2n+1)$
and $\sigma_{k-1}/\sigma_k\le k/(2n)$. Iteration yields
\begin{equation}\label{eq:successive-ratios}
 \frac{V_{r-j}}{V_r}\le\frac{(r)_j}{(2n+1)^j},\qquad
 \frac{\sigma_{r-j}}{\sigma_r}\le\frac{(r)_j}{(2n)^j}
 \qquad(0\le j\le r),
\end{equation}
where $(r)_j=r(r-1)\cdots(r-j+1)$ and $(r)_0=1$.
The next bound applies to any subgroup, independently of the kernel size.
\begin{lemma}\label{lem:odd-tail-occupancy}
If $G\le A$ and $k>0$ is odd with $3k<2r+1$, then
\[
 \frac{|G\cap\pi(\mathcal P_k)|}{|G|}\le\frac{2}{\min\{j\ge1:j\text{ is odd},\ jk\ge2r+1\}}
 \le\frac{2k}{2r+1}.
\]
\end{lemma}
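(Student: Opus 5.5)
The plan is a sumset-growth argument inside $G$, using only that nonzero vectors of $L$ have norm at least $2r+1$. Put $T=G\cap\pi(\mathcal P_k)$ and $j_0=\min\{j\ge1:j\text{ odd},\ jk\ge2r+1\}$. Since $3k<2r+1$, we have $j_0\ge5$. The second inequality of the statement is elementary. The odd number $j_0-2$ still satisfies $(j_0-2)k\le2r$, so $j_0\le(2r+1)/k+2$, which is at most $(2r+1)/k$ up to the rounding that the odd-$j$ minimum absorbs. I expect to need to check this small step more carefully, possibly via $2/j_0\le 2k/(2r+1)$ directly from $j_0k\ge2r+1$. Indeed $j_0k\ge 2r+1$ gives $2/j_0\le 2k/(2r+1)$ at once, so this part is immediate. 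The work is in the first inequality, $j_0|T|\le2|G|$.

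First I record the parity/injectivity facts. Every element of $\mathcal P_k$ has odd norm, since $k$ is odd. A $t$-fold sum of elements of $\mathcal P_k$ lies in $\mathcal P_{tk}$, because norms add up to parity. Two vectors of the same parity class whose difference lies in $L$ differ by a vector of norm at least $2r+1$. Such a difference, having even norm, must have norm at least $2r+2$. An odd-norm vector and an even-norm vector can only collide under $\pi$ if their difference has norm at least $2r+1$. Consequently, for $h=(j_0-1)/2$, the images of the even-norm set $\mathcal P_{(j_0-1)k}\cap\{\text{even norms}\}$ are pairwise distinct modulo $L$ as long as the relevant norm bound $2\cdot\frac{(j_0-1)k}{2}\le 2r$ holds. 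This is exactly guaranteed by the minimality of $j_0$, using $(j_0-2)k\le 2r$ and $k$ odd.

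The main step is to produce, inside $G$, about $j_0/2$ pairwise disjoint copies of $T$. I would try sets of the form $T+x_1+\dots+x_i$ with $x_1,\dots,x_i\in T$, which are built from elements of $\mathcal P_k$. Concretely, fix a vector $\mathbf y\in\mathcal P_k$ with $\pi(\mathbf y)\in T$. Consider the translates $T+i\pi(\mathbf y)$ for $0\le i\le h$, where $h=(j_0-1)/2$. A coincidence between the $i$-th and $i'$-th translate, with $i<i'$, yields $\mathbf u-\mathbf u'-(i'-i)\mathbf y\in L$ with $\mathbf u,\mathbf u'\in\mathcal P_k$. This vector has norm at most $(i'-i+2)k$ and parity of $i'-i$. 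The parity bookkeeping then forces the norm bound to contradict $2r+1$ unless $i=i'$. If the direct translate choice fails because $(i'-i+2)k$ can reach $2r+1$, I would instead apply Kneser's theorem to the $h$-fold sumset of $T\cup\{0\}$, whose image lies in $G\cap\pi(B_{hk})$. Since $T$ is symmetric and generates a subgroup of $G$, Kneser gives growth of at least $|T|$ per summand up to the stabilizer. The stabilizer must be trivial within the injectivity range.

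Combining the disjoint copies, or the Kneser lower bound $|hT_0|\gtrsim h|T|$ together with the disjointness of the odd-parity and even-parity layers, gives $(h+\tfrac12)|T|\le|G|$, i.e.\ $j_0|T|\le2|G|$. The main obstacle I expect is precisely this disjointness/stabilizer step. I must make sure the norm accounting reaches exactly $j_0$ and not just $j_0-2$. That is where the choice to count odd $j$ only, and to use that differences in $L$ of even norm have norm at least $2r+2$, has to be exploited sharply.
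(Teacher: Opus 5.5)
\textbf{Verdict: genuine gap.} Your reduction of the second inequality to $j_0k\ge2r+1$ and the remark $j_0\ge5$ are fine. The main inequality $j_0|T|\le2|G|$, however, is not established, and the step you propose for it fails.

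The translates $T+i\pi(\mathbf y)$, $0\le i\le h$, are not pairwise disjoint. Since $T$ is symmetric, $\pi(\mathbf y)\in T$ and also $\pi(\mathbf y)=\pi(-\mathbf y)+2\pi(\mathbf y)\in T+2\pi(\mathbf y)$. In your coincidence analysis the vector $\mathbf u-\mathbf u'-(i'-i)\mathbf y$ is then just $\mathbf 0$. The minimum-norm property of $L$ only excludes \emph{nonzero} short vectors, and parity forces nonzeroness only when $i'-i$ is odd. No three indices have pairwise odd differences, so this method guarantees at most two disjoint translates, i.e.\ $2|T|\le|G|$. Had $(j_0+1)/2$ disjoint translates existed, you would even get $(j_0+1)|T|\le2|G|$, more than the lemma claims.

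The Kneser fallback does not repair this. The assertion that the stabilizer of $hT_0$ is trivial is unjustified; it is all of $G$ whenever $hT_0=G$. Even with a trivial stabilizer, Kneser gives only $|hT_0|\ge h|T|+1$, i.e.\ $(j_0-1)|T|<2|G|$. The missing $|T|/2$ is not supplied by the ``parity layers'' remark.

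The missing idea is that the inequality genuinely needs the oddness of $|G|$, that is, non-bipartiteness of the Cayley graph $\mathrm{Cay}(G,T)$. Consider any argument that uses only the absence of odd closed walks of length less than $j_0$, together with growth estimates valid in every abelian group. It would apply verbatim to bipartite Cayley graphs such as $\mathrm{Cay}(\Z/2d\Z,\{\text{odd residues}\})$. There $|T|=d$ and $|G|=2d$, so $5|T|\le2|G|$ fails.

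The paper uses oddness as follows.
\begin{itemize}
\item A nonzero $t\in T$ has odd order, so its multiples give an odd cycle in $\mathrm{Cay}(G,T)$.
\item Take a shortest odd cycle $C$, of length $\ell$. Lifting its edge labels to $\mathcal P_k$ gives a lattice vector of odd coordinate parity, hence nonzero, of norm at most $\ell k$. So $\ell k\ge2r+1$, and since $\ell$ is odd, $\ell\ge j_0$.
\item By minimality $C$ is chordless, and there are no triangles. A vertex off $C$ with three or more neighbours on $C$ would therefore have an odd gap of length at most $\ell-4$ between successive neighbours, closing a shorter odd cycle. Hence every vertex of $G$ has at most two neighbours on $C$.
\item Double counting the $\ell|T|$ edge incidences at $C$ gives $\ell|T|\le2|G|$, so $|T|/|G|\le2/\ell\le2/j_0$.
\end{itemize}
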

\begin{proof}
The assertion is immediate if $G\cap\pi(\mathcal P_k)$ is empty.
The set $G\cap\pi(\mathcal P_k)$ is symmetric and does not contain zero, since $k<r$ and $\pi$ is
injective on $B_r$. The Cayley graph on $G$ with step set $G\cap\pi(\mathcal P_k)$
contains an odd cycle: any nonzero step has odd order, and its
successive multiples form such a cycle.
Let $\ell$ be the shortest odd-cycle length. Lifting its edge labels
to $\mathcal P_k$ gives a sum in $L$ of odd coordinate parity and norm at
most $\ell k$. Thus the sum is nonzero, $\ell k\ge2r+1$,
and $\ell\ge5$.
Fix a shortest odd cycle $C$. It has no chord, since a chord would
split it into an even cycle and a shorter odd cycle.
Every vertex on $C$ therefore has exactly two neighbors on $C$.
A vertex outside $C$ cannot have two consecutive neighbors on $C$,
because they would form a triangle. If it had at least three
neighbors, the gaps between successive neighbors would all be at
least two, and one gap would be odd and at most $\ell-4$.
That gap together with the two edges to the outside vertex would
give an odd cycle of length at most $\ell-2$, a contradiction.
Consequently every vertex has at most two neighbors on $C$.
Counting incidences with $C$ gives
$
 \ell|G\cap\pi(\mathcal P_k)|\le2|G|,
$
and $ \frac{|G\cap\pi(\mathcal P_k)|}{|G|}\le\frac{2}{\ell}\le\frac{2}{\min\{j\ge1:j\text{ is odd},\ jk\ge2r+1\}}
 \le\frac{2k}{2r+1}.
$
\end{proof}

\subsection{Sign-restricted shadows and integral defect bounds}
Differences of points in the same fiber belong to $L$.
After removing heavy coordinate contributions, we compare the resulting
vectors through their sign-compatible cuts. For $\mathbf w\in\Z^n$, define
\[
 \mathscr K_j(\mathbf w)
 =\{\mathbf z\in\Z^n:\|\mathbf z\|_1=j,\quad
 z_iw_i\ge0,\ |z_i|\le|w_i|\text{ for every }i\}.
\]
This rank-$j$ shadow has cardinality
$|\mathscr K_j(\mathbf w)|=[z^j]\prod_i(1+z+\cdots+z^{|w_i|})$.
For a finite sequence of nonnegative integers $w_i$, write
$M=\sum_iw_i$ and
$\prod_i(1+z+\cdots+z^{w_i})=\sum_{j=0}^M A_jz^j$.
\begin{lemma}\label{lem:chain-binomial-monotonicity}
For every such sequence, $jA_j\le(M-j+1)A_{j-1}$ for
$1\le j\le\lfloor M/2\rfloor$. Consequently,
\[
 A_j\le\frac{\binom Mj}{\binom Mk}A_k
 \qquad(0\le k\le j\le\lfloor M/2\rfloor).
\]
\end{lemma}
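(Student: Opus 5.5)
The plan is a double count of weighted cover pairs, followed by a symmetric-pairing argument carried out separately in each coordinate. Write $A(z)=\prod_i(1+z+\cdots+z^{w_i})$. Then $A_j$ counts the integer vectors $\mathbf z$ with $0\le z_i\le w_i$ and $\sum_i z_i=j$. Since $\sum_i z_i=j$ for each such vector, summing over them gives
\[
 jA_j=\sum_{\|\mathbf z\|=j}\sum_i z_i .
\]
Similarly, summing $w_i-y_i$ over the vectors $\mathbf y$ of rank $j-1$ gives
\[
 (M-j+1)A_{j-1}=\sum_{\|\mathbf y\|=j-1}\sum_i (w_i-y_i).
\]

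Next we reindex the first sum. A pair $(\mathbf z,i)$ with $z_i\ge1$ corresponds bijectively to the pair $(\mathbf y,i)$ with $\mathbf y=\mathbf z-\mathbf e_i$ and $y_i\le w_i-1$. Under this correspondence the weight $z_i$ becomes $y_i+1$. Terms with $y_i=w_i$ contribute zero to the second sum. Hence the inequality $jA_j\le(M-j+1)A_{j-1}$ is equivalent to
\[
 \sum_i\ \sum_{\substack{\|\mathbf y\|=j-1\\ y_i\le w_i-1}}\bigl(w_i-1-2y_i\bigr)\ \ge 0 .
\]
We prove that each summand, indexed by $i$, is nonnegative.

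Fix $i$ and let $R(z)=\sum_t R_tz^t$ be the product of the remaining factors. Its degree is $M-w_i$, and $R_t=0$ for $t<0$. The $i$-th summand then equals
\[
 \sum_{k=0}^{w_i-1}(w_i-1-2k)\,R_{j-1-k}.
\]
Pair the index $k$ with $w_i-1-k$. The summand becomes the sum over $k<(w_i-1)/2$ of
\[
 (w_i-1-2k)\bigl(R_{j-1-k}-R_{j-w_i+k}\bigr),
\]
and the middle term, if there is one, vanishes. It therefore suffices to show $R_{j-1-k}\ge R_{j-w_i+k}$ whenever $k<(w_i-1)/2$.

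For this we use the classical fact that a product of polynomials $1+z+\cdots+z^{w}$ has a symmetric unimodal coefficient sequence. Thus $R$ is symmetric and unimodal about $(M-w_i)/2$. The two indices $j-1-k>j-w_i+k$ have midpoint $(2j-1-w_i)/2$. Since $j\le\lfloor M/2\rfloor$, this midpoint is at most $(M-w_i-1)/2$, which lies strictly left of the center of $R$. Consequently the larger index is at least as close to the center as the smaller one. By symmetry and unimodality, $R_{j-1-k}\ge R_{j-w_i+k}$. This remains valid when indices are negative or exceed the degree, since those coefficients are zero and hence minimal. This proves $jA_j\le(M-j+1)A_{j-1}$.

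The consequence follows by iteration. The inequality says exactly that $A_j/\binom Mj\le A_{j-1}/\binom M{j-1}$. Chaining this from $k+1$ up to $j$ gives $A_j\le\binom Mj A_k/\binom Mk$ for $0\le k\le j\le\lfloor M/2\rfloor$.

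The main point needing care is the midpoint comparison near the boundary, that is, small $j$ or small $w_i$, where some coefficients of $R$ vanish. I would check it against the zero convention rather than the unimodality statement alone. The unimodality of products of chain polynomials is standard: a product of symmetric unimodal polynomials with nonnegative coefficients is again symmetric unimodal, which follows by induction on the number of factors. I would quote it or prove it in a line.
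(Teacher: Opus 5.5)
Your proof is correct, but it is organized differently from the paper's.

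The paper inducts on the number of factors. It writes $A=B(1+\cdots+z^w)$ and $D_i=(m-i+1)B_{i-1}-iB_i$, and expands
\[
(M-j+1)A_{j-1}-jA_j=\sum_{a=0}^wD_{j-a}+\sum_{a=1}^w(w+1-2a)B_{j-a}.
\]
The first sum is controlled by the inductive hypothesis together with the antisymmetry $D_{m-i+1}=-D_i$: terms above the midpoint of $B$ cancel against their reflections inside the window $[j-w,j]$. The second sum is controlled by the same midpoint comparison you use. Symmetry and unimodality of the coefficients are carried along as part of the induction, via $A_j-A_{j-1}=B_j-B_{j-w-1}\ge 0$.

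Your double count instead gives the global identity
\[
(M-j+1)A_{j-1}-jA_j=\sum_i\sum_{k=0}^{w_i-1}(w_i-1-2k)\,R^{(i)}_{j-1-k},
\]
where $R^{(i)}$ is the product of the factors other than the $i$th. Here every factor plays the role of the ``new'' factor at once. Each inner sum is exactly the paper's second sum, with $a=k+1$ and $B$ replaced by the complementary product.

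What your route buys is that the $D$-terms and their reflection-cancellation bookkeeping disappear entirely. What it costs is that symmetric unimodality of every complementary product becomes an external input, whereas the paper proves it inside the same induction. That input is standard, and the one-line inductive argument above supplies it.

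Your treatment of out-of-range indices is sound. Extended by zeros, a symmetric unimodal sequence is a nonincreasing function of the distance to its center on all of $\mathbb Z$. Since the midpoint $(2j-1-w_i)/2$ lies strictly left of the center $(M-w_i)/2$, the comparison $R^{(i)}_{j-1-k}\ge R^{(i)}_{j-w_i+k}$ holds with no boundary cases. The final iteration step is the same as in the paper.
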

\begin{proof}
We induct on the factors, retaining symmetry and unimodality.
Extend coefficient sequences by zero. Suppose $B(z)=\sum_iB_i z^i$
has degree $m$ and the three asserted properties, and put
$D_i=(m-i+1)B_{i-1}-iB_i$.
Then $D_{m-i+1}=-D_i$ and $D_i\ge0$ for $i\le(m+1)/2$;
when $m$ is odd the middle term is zero.
For $A(z)=B(z)(1+z+\cdots+z^w)$ and $M=m+w$,
\[
 A_j=\sum_{a=0}^wB_{j-a},\qquad
 A_j-A_{j-1}=B_j-B_{j-w-1}\ge0\quad(j\le M/2).
\]
The last inequality holds because the two indices have midpoint
$j-(w+1)/2\le(m-1)/2$; the larger is at least as close to $m/2$.
Symmetry of $A$ follows by reversing its coefficients.
Direct expansion gives
\begin{equation*}
 (M-j+1)A_{j-1}-jA_j
 =\sum_{a=0}^wD_{j-a}
    +\sum_{a=1}^w(w+1-2a)B_{j-a}.
\end{equation*}
If an index $i\in[j-w,j]$ exceeds $(m+1)/2$, its reflection
$i'=m-i+1$ also lies in that interval, since
$
 i'\ge m-j+1\ge j-w,
$
$
i'\le(m+1)/2<i\le j.
$
The terms $D_i,D_{i'}$ cancel; all unpaired terms are nonnegative.
Pairing $a$ with $w+1-a$ in the second sum gives, for
$a<(w+1)/2$,
$
 (w+1-2a)(B_{j-a}-B_{j-w-1+a})\ge0,
$
by the same midpoint comparison; an unpaired middle coefficient is
zero. 
This proves the adjacent-rank inequality and preserves all three
induction hypotheses. The initial polynomial $1$ satisfies these
hypotheses. Iterating the adjacent-rank inequality gives
$A_j\le\binom Mj A_k/\binom Mk$.
\end{proof}

In the following moment sums, $\mathbf b$ ranges over $S_r$. Put
\[
 P_{\mathbf b}=\sum_{\mathbf x\in F_{\mathbf b}}\#\{i:|x_i|\ge2\},\quad
 H_{\mathbf b}=\sum_{\mathbf x\in F_{\mathbf b}}\sum_i(|x_i|-2)_+,
\]
\[
 \MomentU=\sum_{\mathbf b}P_{\mathbf b}^2,\quad
 \MomentT=\sum_{\mathbf b}\bigl(r-s(\mathbf b)\bigr)P_{\mathbf b}.
\]
Lemma~\ref{lem:exact-fiber} gives
$D_{\mathbf b}=P_{\mathbf b}+H_{\mathbf b}$ and
$D_{\mathbf b}+r-s(\mathbf b)=r-c+(r+1)(f_{\mathbf b}-q)$.
Subtracting $\operatorname{sgn}(x_i)\mathbf e_i$ from a pair
$(\mathbf x,i)$ with $\mathbf x\in S_{r+1}$ and $|x_i|\ge3$ maps it
bijectively to $(\mathbf y,i)$ with $\mathbf y\in S_r$ and
$|y_i|\ge2$, and preserves the weight
$|x_i|-2=|y_i|-1$. Equations~\eqref{eq:support-first-moment} and
\eqref{eq:boundary-total-defect} therefore give
\begin{align*}
 &\sum_{\mathbf b}H_{\mathbf b}
 =\sum_{\mathbf b}\bigl(r-s(\mathbf b)\bigr)=(r-1)\sigma_{r-1},\\
 &\sum_{\mathbf b}P_{\mathbf b}=r\sigma_r-(r-1)\sigma_{r-1},\\
 &\sum_{\mathbf b}\bigl(r-s(\mathbf b)\bigr)^2
=(r-1)\sigma_{r-1}+(r-1)(r-2)\sigma_{r-2}.
\end{align*}
For $k\ge0$ define
\begin{align*}
 \Phi_k=\sum_{\mathbf z\in S_k}
             \bigl((2n-s(\mathbf z))^2-2n+2s(\mathbf z)\bigr), \quad
 \Psi_k=\sum_{\mathbf z\in S_k}
             \bigl((2n-s(\mathbf z))^2-4n+3s(\mathbf z)\bigr).
\end{align*}
Expanding in $k-s(\mathbf z)$ and using~\eqref{eq:support-first-moment}, one has
\[
 \sum_{\mathbf z\in S_k}(k-s(\mathbf z))^2
 =(k-1)\sigma_{k-1}+(k-1)(k-2)\sigma_{k-2},
\]
which yields
\begin{align*}
 \Phi_k={}&\bigl((2n-k)^2-2n+2k\bigr)\sigma_k
 +(4n-2k-1)(k-1)\sigma_{k-1}
 +(k-1)(k-2)\sigma_{k-2},\\
 \Psi_k={}&\bigl((2n-k)^2-4n+3k\bigr)\sigma_k
 +(4n-2k-2)(k-1)\sigma_{k-1}
 +(k-1)(k-2)\sigma_{k-2}.
\end{align*}
The rank ratio bound now controls pairs of heavy coordinate incidences,
including pairs involving the boundary representative.
\begin{lemma}
\label{lem:sign-restricted-heavy-pairs}
For $r\ge4$,
\begin{align*}
 \MomentU&\le r\sigma_r-(r-1)\sigma_{r-1}+\Psi_{r-3}
                         +\frac{r}{r-1}\Phi_{r-2},\\
 \MomentT&\le\Phi_0+\sum_{j=2}^{r-1}\frac{r}{r-j}\Phi_{r-j-1}.
\end{align*}
\end{lemma}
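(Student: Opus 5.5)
The plan is to expand both moments as counts of pairs of ``heavy incidences'' and to inject, or fractionally charge, each kind of pair into a Lee shell of lower radius. The decoration on that lower shell is a pair of signed directions, and the number of admissible pairs is exactly the summand of $\Phi_k$ or $\Psi_k$.

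\emph{Heavy incidences.} Call $(\mathbf x,\mathbf a)$ with $\mathbf x\in F_{\mathbf b}$, $\mathbf a\in\ssupp(\mathbf x)$ and $|x_{\mathbf a}|\ge2$ a heavy incidence of $\mathbf b$. Then $P_{\mathbf b}$ is the number of heavy incidences, and $P_{\mathbf b}^2$ counts ordered pairs of them. The diagonal pairs contribute $\sum_{\mathbf b}P_{\mathbf b}=r\sigma_r-(r-1)\sigma_{r-1}$, which is the first term of the bound on $\MomentU$.

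\emph{Off-diagonal pairs in one point.} Consider pairs $(\mathbf x,\mathbf a),(\mathbf x,\mathbf a')$ in the same point with $\mathbf a\ne\mathbf a'$. Map such a pair to $\mathbf z=\mathbf x-2\mathbf a-2\mathbf a'\in S_{r-3}$ together with the ordered pair $(\mathbf a,\mathbf a')$. Since $\mathbf x=\mathbf z+2\mathbf a+2\mathbf a'$ and $\mathbf x$ determines $\mathbf b$ (the fibers partition $S_{r+1}$ by Lemma~\ref{lem:exact-fiber}), this map is injective. The admissible pairs $(\mathbf a,\mathbf a')$ for a given $\mathbf z$ must satisfy two conditions:
\begin{itemize}
\item[(i)] each direction either lies in $\ssupp(\mathbf z)$ or lies on a coordinate outside the support of $\mathbf z$, which gives $2n-s(\mathbf z)$ choices;
\item[(ii)] the two directions lie on distinct coordinates.
\end{itemize}
Condition (ii) removes $s(\mathbf z)+4(n-s(\mathbf z))$ ordered pairs, which leaves exactly $(2n-s)^2-4n+3s$. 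Summing over $\mathbf z\in S_{r-3}$ gives the term $\Psi_{r-3}$.

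\emph{Off-diagonal pairs in distinct points.} These are pairs $(\mathbf x,\mathbf a),(\mathbf x',\mathbf a')$ with $\mathbf x\ne\mathbf x'$ in the same fiber. Here $\mathbf x-\mathbf x'\in L$, and by the partition in Lemma~\ref{lem:exact-fiber} the signed supports of $\mathbf x$ and $\mathbf x'$ are disjoint, so $\mathbf a\ne\mathbf a'$. The target is $S_{r-2}$ decorated by pairs counted by $(2n-s)^2-2n+2s$. This count allows $\mathbf a=\mathbf a'\in\ssupp(\mathbf z)$ and forbids only equal directions off the support.

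Unlike the single-point case, one point $\mathbf z$ cannot simultaneously record $\mathbf x$ and $\mathbf x'$. I would therefore charge fractionally. Spread one unit of weight uniformly over the sign-restricted shadow $\mathscr K_{r-2}(\mathbf w)$ of a suitable vector $\mathbf w$ of norm $r-1$ built from $\mathbf x-2\mathbf a$; the natural choice is to remove the heavy unit. Each shadow point $\mathbf z$ is charged together with the direction pair $(\mathbf a,\mathbf a')$.

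Recovery of $\mathbf x'$ from $(\mathbf z,\mathbf a')$ should follow from injectivity of $\pi$ on $B_r$. The element $\mathbf z+\mathbf a'$, or a short modification of it, lies in $B_r$ and represents $\pi(\mathbf b-\cdot)$. So, exactly as in the proof of Lemma~\ref{lem:exact-fiber}, at most one $\mathbf x'$ in the fiber is compatible with it. The total weight received by a decorated $\mathbf z$ is then bounded by a ratio of shadow sizes. By Lemma~\ref{lem:chain-binomial-monotonicity} with $M=r-1$, this ratio is at most $\binom{r-1}{r-2}^{-1}\cdot\binom{r-1}{r-1}$-type quotients after the rank reflection $j\mapsto M-j$, which yields the factor $r/(r-1)$. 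This gives the term $\frac r{r-1}\Phi_{r-2}$.

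\emph{The bound on $\MomentT$.} The same scheme applies, with one change. The factor $r-s(\mathbf b)$ counts the excess units of $\mathbf b$, meaning unit steps of $\mathbf b$ beyond its support, so $\MomentT$ counts pairs consisting of an excess unit of $\mathbf b$ and a heavy incidence in $F_{\mathbf b}$. Stratify by the depth $j$ of the excess unit, $2\le j\le r-1$, with the extreme configuration giving $\Phi_0$. Charge each such pair to $S_{r-j-1}$ through shadows of rank $r-j-1$ inside a vector of norm $r-j$. Lemma~\ref{lem:chain-binomial-monotonicity} then produces the factor $r/(r-j)$.

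\emph{Main obstacle.} I expect the main obstacle to be the cross-point charging: choosing $\mathbf w$ and the decoration so that the recovery is genuinely unique, which requires care when $\mathbf a$ or $\mathbf a'$ lies in $\ssupp(\mathbf b)$. It also requires the shadow ratio to come out at rank $\lfloor M/2\rfloor$-compatible indices so that Lemma~\ref{lem:chain-binomial-monotonicity} applies, possibly after complementing ranks. I would first verify the construction for $r=4$ by hand before generalizing.
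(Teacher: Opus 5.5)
Your decomposition of $\MomentU$ into diagonal pairs, pairs of distinct heavy axes in one point, and pairs in distinct points of one fiber is the right one. The first two pieces are correct: the diagonal gives $\sum_{\mathbf b}P_{\mathbf b}$, and $\mathbf z=\mathbf x-2\mathbf a-2\mathbf a'\in S_{r-3}$ with $4(n-s(\mathbf z))+s(\mathbf z)$ same-axis pairs gives exactly $\Psi_{r-3}$.

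\textbf{The gap.} It lies in the distinct-point term, and the same defect runs through your treatment of $\MomentT$. You note $\mathbf x-\mathbf x'\in L$, but you then shadow a vector of norm $r-1$ built from $\mathbf x$ alone. That construction never uses the minimum norm of $L$, so nothing bounds how many pairs are charged to a decorated $\mathbf z$. Your recovery of $\mathbf x'$ from ``$\mathbf z+\mathbf a'$ or a short modification'' has no basis, since $\mathbf z$ does not even determine $\mathbf x$ or its fiber. Nor can Lemma~\ref{lem:chain-binomial-monotonicity} yield $r/(r-1)$ at $M=r-1$. Once $r\ge4$, the ranks $r-1,r-2$ exceed $\lfloor M/2\rfloor$, and there $|\mathscr K_{r-1}(\mathbf w)|/|\mathscr K_{r-2}(\mathbf w)|=1/s(\mathbf w)$. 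For the same reason, a vector of norm $r-j$ cannot produce the factor $r/(r-j)$ in the $\MomentT$ bound.

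\textbf{The missing idea.} Shadow the lattice vector itself, and record each pair as a middle-rank cut of it.
\begin{itemize}
\item For distinct $\mathbf x,\mathbf y\in F_{\mathbf b}$ with heavy $\mathbf a\in\ssupp(\mathbf x)$ and $\mathbf a'\in\ssupp(\mathbf y)$, set $\boldsymbol\ell=\mathbf x-\mathbf y\in L$ and $\mathbf w=\boldsymbol\ell-2\mathbf a+2\mathbf a'$. This $\mathbf w$ has norm $2r-2$ because the signed supports are disjoint.
\item Then $\mathbf x-2\mathbf a\in\mathscr K_{r-1}(\mathbf w)$. For fixed $(\boldsymbol\ell,\mathbf a,\mathbf a')$, this cut recovers $\mathbf x$ and $\mathbf y=\mathbf x-\boldsymbol\ell$, whatever the fiber.
\item For fixed $(\mathbf a,\mathbf a')$, the rank-$(r-2)$ shadows of distinct $\mathbf w_{\boldsymbol\ell}$ are disjoint, since a common point would give $\|\boldsymbol\ell-\boldsymbol\ell'\|_1\le2r$.
\end{itemize}
Hence, by Lemma~\ref{lem:chain-binomial-monotonicity} with $M=2r-2$, $j=r-1$, $k=r-2$, the number of pairs is at most
\[
 \sum_{\boldsymbol\ell}|\mathscr K_{r-1}(\mathbf w_{\boldsymbol\ell})|
 \le\frac{r}{r-1}\Bigl|\bigcup_{\boldsymbol\ell}\mathscr K_{r-2}(\mathbf w_{\boldsymbol\ell})\Bigr|.
\]
The factor $r/(r-1)$ is a global comparison of two ranks of one vector, not a pointwise per-pair charge. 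The signs of $\mathbf w$ on the two heavy axes force $-\mathbf a\notin\ssupp(\mathbf z)$ and $\mathbf a'\notin\ssupp(\mathbf z)$. Together with $\mathbf a\ne\mathbf a'$, this gives exactly the summand of $\Phi_{r-2}$.

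\textbf{The same construction for $\MomentT$.} Index incidences by $(\mathbf a,j)$ with $2\le j\le b_{\mathbf a}$, and use:
\begin{itemize}
\item $\boldsymbol\ell=\mathbf x-\mathbf b$, of norm $2r+1$;
\item $\mathbf w=\boldsymbol\ell-2\mathbf a'+j\mathbf a$, of norm $2r-j-1$;
\item the cut $-(\mathbf b-j\mathbf a)$, of rank $r-j$;
\item disjoint rank-$(r-j-1)$ shadows, valid for $j<r$.
\end{itemize}
The adjacent-rank factor at $M=2r-j-1$ is $r/(r-j)$. The case $j=r$ forces $\mathbf b=r\mathbf a$, and then $\mathbf x$ is fixed by $\mathbf a'$ through Lemma~\ref{lem:exact-fiber}. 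This leaves at most one incidence for each of the $2n(2n-1)=\Phi_0$ ordered direction pairs.
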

\begin{proof}
First take distinct $\mathbf x,\mathbf y\in F_{\mathbf b}$ and heavy
signed directions $\mathbf a\in\ssupp(\mathbf x)$,
$\mathbf a'\in\ssupp(\mathbf y)$. Their disjoint signed supports give
$\mathbf a\ne\mathbf a'$, and
\[
 \boldsymbol\ell=\mathbf x-\mathbf y\in L,\quad
 \|\boldsymbol\ell\|_1=2(r+1),\quad
 \mathbf w=\boldsymbol\ell-2\mathbf a+2\mathbf a',\quad
 \|\mathbf w\|_1=2r-2.
\]
The vector $\mathbf u=\mathbf x-2\mathbf a$ belongs to
$\mathscr K_{r-1}(\mathbf w)$. Indeed,
$\mathbf w=\mathbf u-(\mathbf y-2\mathbf a')$, and the two
summands have compatible signs in every coordinate. For fixed
$\boldsymbol\ell,\mathbf a,\mathbf a'$, the cut determines the pair
uniquely, since $\mathbf x=\mathbf u+2\mathbf a$ and
$\mathbf y=\mathbf u+2\mathbf a-\boldsymbol\ell$.
These assertions also apply when $\mathbf a'=-\mathbf a$.
For fixed $\mathbf a,\mathbf a'$, let $\boldsymbol\ell$ range over
the lattice vectors arising from such pairs, and write
$\mathbf w_{\boldsymbol\ell}=\boldsymbol\ell-2\mathbf a+2\mathbf a'$.
If $\mathbf z$ belonged to both
$\mathscr K_{r-2}(\mathbf w_{\boldsymbol\ell})$ and
$\mathscr K_{r-2}(\mathbf w_{\boldsymbol\ell'})$, then
$
 \|\boldsymbol\ell-\boldsymbol\ell'\|_1
 \le\|\mathbf w_{\boldsymbol\ell}-\mathbf z\|_1
    +\|\mathbf w_{\boldsymbol\ell'}-\mathbf z\|_1=2r.
$
The minimum norm in $L$ forces $\boldsymbol\ell=\boldsymbol\ell'$.
Lemma~\ref{lem:chain-binomial-monotonicity} therefore bounds the
number of ordered pairs for these fixed directions by
\[
 \sum_{\boldsymbol\ell}|\mathscr K_{r-1}(\mathbf w_{\boldsymbol\ell})|
 \le\frac{r}{r-1}\sum_{\boldsymbol\ell}
                    |\mathscr K_{r-2}(\mathbf w_{\boldsymbol\ell})|
 =\frac{r}{r-1}\left|\bigcup_{\boldsymbol\ell}
                    \mathscr K_{r-2}(\mathbf w_{\boldsymbol\ell})\right|.
\]
A shadow $\mathbf z$ excludes
$-\ssupp(\mathbf z)$ as choices for $\mathbf a$ and
$\ssupp(\mathbf z)$ as choices for $\mathbf a'$.
There are $2n-s(\mathbf z)$ choices of each; equality $\mathbf a=\mathbf a'$
is allowed by both restrictions on exactly $2(n-s(\mathbf z))$ zero-axis
directions. Thus the number of admissible ordered choices is
$(2n-s(\mathbf z))^2-2(n-s(\mathbf z))=(2n-s(\mathbf z))^2-2n+2s(\mathbf z)$.
This also counts opposite directions $\mathbf a'=-\mathbf a$.
Summing bounds the distinct-point contribution by
$r\Phi_{r-2}/(r-1)$.

For the same exterior point twice, diagonal heavy incidences total
$\sum P_{\mathbf b}=r\sigma_r-(r-1)\sigma_{r-1}$.
For two distinct heavy axes, remove $2\mathbf a+2\mathbf a'$ to
obtain a residual $\mathbf z\in S_{r-3}$. Each direction has
$2n-s(\mathbf z)$ choices; pairs on the same axis number
$4(n-s(\mathbf z))+s(\mathbf z)$.
Consequently this contribution is exactly
\[
 \sum_{\mathbf z\in S_{r-3}}
       \bigl((2n-s(\mathbf z))^2-4n+3s(\mathbf z)\bigr)=\Psi_{r-3}.
\]
Adding the three contributions proves the first inequality.

An incidence in $\MomentT$ specifies $\mathbf b\in S_r$,
$\mathbf x\in F_{\mathbf b}$, a direction $\mathbf a$ of
$\mathbf b$, a heavy direction $\mathbf a'$ of $\mathbf x$, and
$2\le j\le b_{\mathbf a}$. Fix $\mathbf a,\mathbf a',j$.
The signed supports are disjoint, and
\[
 \boldsymbol\ell=\mathbf x-\mathbf b\in L,\quad
 \|\boldsymbol\ell\|_1=2r+1,\quad
 \mathbf w=\boldsymbol\ell-2\mathbf a'+j\mathbf a,
 \quad\|\mathbf w\|_1=2r-j-1.
\]
The cut $\mathbf u=-(\mathbf b-j\mathbf a)$ belongs to
$\mathscr K_{r-j}(\mathbf w)$, since
$\mathbf w=(\mathbf x-2\mathbf a')+\mathbf u$ is a sum of
sign-compatible vectors. For fixed $\boldsymbol\ell,\mathbf a,
\mathbf a',j$, the inverse formulas
$\mathbf b=j\mathbf a-\mathbf u$ and
$\mathbf x=\boldsymbol\ell+\mathbf b$ recover the incidence uniquely. If $j<r$, rank-$(r-j-1)$
shadows for different lattice vectors are disjoint, since a common
shadow would give distance at most
$2\{(2r-j-1)-(r-j-1)\}=2r$.
Their adjacent-rank factor is
$(2r-j-1-(r-j)+1)/(r-j)=r/(r-j)$.
The residual signs exclude $\ssupp(\mathbf z)$ from the choices
of $\mathbf a$ and $-\ssupp(\mathbf z)$ from those of $\mathbf a'$;
the number of allowed unequal ordered directions is again the
summand defining $\Phi_{r-j-1}$.
If $j=r$, the residual has norm $r-1$, so for each ordered pair of
directions there is at most one lattice vector. There are
$2n(2n-1)=\Phi_0$ such pairs. Summing over $j$ gives the second bound.
\end{proof}

For the lower bounds, we use the gaps between consecutive possible values
of $P_{\mathbf b}+r-s(\mathbf b)+H_{\mathbf b}=r-c+(r+1)(f_{\mathbf b}-q)$.
\begin{lemma}\label{lem:two-shell-defect}
For every $0\le c\le r$, one has
\begin{equation*}
 \MomentU+2\MomentT+(r+1)(r-1)\sigma_{r-1}+(r-3)(r-2)\sigma_{r-2}
 \ge\bigl(r(r+1)+(r-c)(c-1)\bigr)\sigma_r.
\end{equation*}
If $c=r$ and $f_{\mathbf b}\ne q+1$ for every $\mathbf b\in S_r$,
then also
\[
 \MomentU+2\MomentT+2(r+1)(r-1)\sigma_{r-1}
 +(r-1)(r-2)\sigma_{r-2}\ge2r(r+1)\sigma_r.
\]
\end{lemma}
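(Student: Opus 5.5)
The plan is to reduce both inequalities to one pointwise quadratic inequality for each boundary representative $\mathbf b\in S_r$, and then sum it using the moment identities already established. For each $\mathbf b$ put
\[
 X_{\mathbf b}=P_{\mathbf b}+r-s(\mathbf b),\qquad
 m_{\mathbf b}=X_{\mathbf b}+H_{\mathbf b}=r-c+(r+1)(f_{\mathbf b}-q).
\]
By Lemma~\ref{lem:minimal-fibers}, $m_{\mathbf b}$ is a nonnegative integer in the progression $r-c+(r+1)\Z_{\ge0}$. Also $X_{\mathbf b},H_{\mathbf b}\ge0$ are integers and $X_{\mathbf b}\ge r-s(\mathbf b)$.

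Expanding the square gives
\[
 \sum_{\mathbf b}X_{\mathbf b}^2=\MomentU+2\MomentT+\sum_{\mathbf b}(r-s(\mathbf b))^2
 =\MomentU+2\MomentT+(r-1)\sigma_{r-1}+(r-1)(r-2)\sigma_{r-2}.
\]
The first-moment data are
\begin{align*}
 \sum_{\mathbf b}X_{\mathbf b}&=r\sigma_r,&
 \sum_{\mathbf b}H_{\mathbf b}&=(r-1)\sigma_{r-1},\\
 \sum_{\mathbf b}m_{\mathbf b}&=r\sigma_r+(r-1)\sigma_{r-1},&
 \sum_{\mathbf b}(r-s(\mathbf b))&=(r-1)\sigma_{r-1}.
\end{align*}
So the first assertion is equivalent to a lower bound on $\sum X_{\mathbf b}^2$ that is linear in $\sigma_r,\sigma_{r-1},\sigma_{r-2}$. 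I will look for a pointwise inequality of the form
\[
 X^2\ \ge\ \alpha X+\beta H+\gamma(r-s)+\delta(r-s)^2+\epsilon,
\]
with constants depending only on $r$ and $c$. It must hold for all admissible integer triples: $X+H\in r-c+(r+1)\Z_{\ge0}$, $0\le r-s\le X$, $H\ge0$. Summing it with the displayed moments should reproduce the right-hand side.

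The natural choice of coefficients is guided by the tangent (secant) line through the two smallest admissible values. Since $m$ takes values $r-c$ and $2r-c+1$ on consecutive terms of its progression, the convexity bound for integers in an arithmetic progression gives
\[
 m^2\ge(3r-2c+1)m-(r-c)(2r-c+1).
\]
Then I correct for $X=m-H$. When $H>0$ the value of $X$ drops, but $X^2$ is still at least the secant through the two admissible neighbouring values of $X$. This costs a term linear in $H$, which is what produces the $(r+1)(r-1)\sigma_{r-1}$ coefficient. The $(r-s)$ and $(r-s)^2$ terms are used to absorb the difference between $(r-1)(r-2)\sigma_{r-2}$ and $(r-3)(r-2)\sigma_{r-2}$. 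The constant term should combine into $\bigl(r(r+1)+(r-c)(c-1)\bigr)\sigma_r$; the cross term $(r-c)(c-1)$ is exactly what the secant through $r-c$ and $2r-c+1$ contributes after the $X$-sum $r\sigma_r$ is inserted.

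For the second assertion ($c=r$ and no fiber of size $q+1$), $m_{\mathbf b}\in\{0\}\cup\{2(r+1),3(r+1),\dots\}$, so the gap after the smallest value doubles. I will repeat the same argument with the secant through $m=0$ and $m=2(r+1)$, i.e. $m^2\ge2(r+1)m$. The same $H$- and $(r-s)$-corrections, now with doubled slopes, should give the coefficients $2(r+1)(r-1)$, $(r-1)(r-2)$ and $2r(r+1)$.

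The main obstacle is verifying the pointwise inequality over all admissible triples, especially the boundary cases. These are $H$ large compared with $m$, where $X$ may fall to $r-s$; and $m=r-c$ with $c$ near $0$ or $r$, where $(c-1)$ changes sign. I would check it by fixing $m$ and treating it as a convex quadratic in $X$ on the allowed interval $[r-s,m]$. It then suffices to check the endpoints and the two progression values, and the case $c=0$ must be handled explicitly.
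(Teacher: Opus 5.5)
There is a genuine gap in how you recover the $\sigma_{r-2}$ coefficient. Your setup is correct, and the secant bound corrected linearly in $H$, namely $X^2\ge(2a+G)X-a(a+G)-rH$ with $a=r-c$, $G=r+1$, is valid on your admissible set. Summed, however, it only gives the first inequality with $(r-1)(r-2)\sigma_{r-2}$ in place of $(r-3)(r-2)\sigma_{r-2}$.

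You propose to recover the missing $2(r-2)\sigma_{r-2}$ from the $(r-s)$ and $(r-s)^2$ terms, and this cannot work. For $1\le c\le r$, no inequality $X^2\ge\alpha X+\beta H+\gamma d+\delta d^2+\epsilon$ (with $d=r-s$) on your admissible set sums to the stated bound:
\begin{itemize}
\item Matching the $\sigma_r$ coefficient means $\alpha r+\epsilon=(2a+G)r-a(a+G)$. Because $a<r<a+G$ when $c\ge1$, the admissible triples $(a,0,0)$ and $(a+G,0,0)$ then force $X^2-\alpha X-\epsilon=(X-a)(X-a-G)$.
\item The triple $(a+G,0,r-1)$ forces $\gamma+\delta(r-1)\le0$.
\item The triple $(a+G-1,1,0)$ forces $\beta\le-r$.
\end{itemize}
Hence $\beta+\gamma+\delta\le-r-\delta(r-2)$. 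But the $\sigma_{r-2}$ coefficient requires $\delta=2/(r-1)>0$, and the $\sigma_{r-1}$ coefficient requires $\beta+\gamma+\delta=-r$, a contradiction.

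Trading slack between shells does not rescue this for large $n$ either: the triple $(a,0,1)$, or $(r+1,0,1)$ when $c=r$, rules it out. Only at $c=0$, where $r=a$ is an endpoint of the secant, is there room, and the later arguments need $c\ge1$. The numbers $r-s(\mathbf b)$ carry no information beyond what is already used up at the equality configurations of the secant.

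The missing idea is second-order information about $H_{\mathbf b}$ itself. The paper first sharpens the pointwise gap inequality to $(z-a)(z-a-G)\ge-(G-1)H+2(H-1)_+$ whenever $z+H\notin(a,a+G)$ and $G\ge3$. For $z=a+G-v$ with $0<v<G$ one has $H\ge v$, and then $(G-3)H+2-v(G-v)\ge(v-1)(v-2)\ge0$.

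It then proves $\sum_{\mathbf b}(H_{\mathbf b}-1)_+\ge(r-2)\sigma_{r-2}$ in two steps. Superadditivity, $(\sum_ia_i-1)_+\ge\sum_i(a_i-1)_+$, applied inside each fiber gives $\sum_{\mathbf b}(H_{\mathbf b}-1)_+\ge\sum_{\mathbf x\in S_{r+1}}\sum_i(|x_i|-3)_+$. The shift $\mathbf x\mapsto\mathbf x-2\operatorname{sgn}(x_i)\mathbf e_i$ identifies the right side with $\sum_{\mathbf y\in S_{r-1}}\sum_i(|y_i|-1)_+=(r-2)\sigma_{r-2}$. The resulting $2(r-2)\sigma_{r-2}$ is exactly the improvement from $(r-1)(r-2)$ to $(r-3)(r-2)$.

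Your plan for the second assertion does go through, and it needs no $(r-s)$ terms at all: $X(X-2(r+1))\ge-(2r+1)H$ holds on the admissible set and sums directly to the stated bound.
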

\begin{proof}
For nonnegative integers $z,H,a$ and an integer $G\ge3$, suppose
$z+H\notin(a,a+G)$. We claim
\begin{equation}\label{eq:two-shell-gap-pointwise}
 (z-a)(z-a-G)\ge-(G-1)H+2(H-1)_+.
\end{equation}
The right side is nonpositive, so only $a<z<a+G$ requires proof.
In this case $v=a+G-z$ is a positive integer and $H\ge v$. Thus
\[
 (G-1)H-2(H-1)_+
 =(G-3)H+2\ge(G-3)v+2\ge v(G-v).
\]
Since $(z-a)(z-a-G)=-v(G-v)$, this proves the claim.
The boundary identities give
$P_{\mathbf b}+r-s(\mathbf b)+H_{\mathbf b}=r-c+(r+1)(f_{\mathbf b}-q)$,
where $f_{\mathbf b}\ge q$, and
\begin{align*}
 &\sum_{\mathbf b}\bigl(P_{\mathbf b}+r-s(\mathbf b)\bigr)=r\sigma_r,\\
 &\sum_{\mathbf b}\bigl(P_{\mathbf b}+r-s(\mathbf b)\bigr)^2
 =\MomentU+2\MomentT+(r-1)\sigma_{r-1}+(r-1)(r-2)\sigma_{r-2},\\
& \sum_{\mathbf b}H_{\mathbf b}=(r-1)\sigma_{r-1}.
\end{align*}
For nonnegative integers $a_i$,
$(\sum_i a_i-1)_+\ge\sum_i(a_i-1)_+$: if some $a_i\ge2$, the
right side is the sum of those $a_i$, less their number, and hence
at most $\sum_i a_i-1$; otherwise it is zero. Applying this to
$H_{\mathbf b}=\sum_{\mathbf x\in F_{\mathbf b}}\sum_i(|x_i|-2)_+$
and then summing the fibers yields
\begin{align*}
 \sum_{\mathbf b}(H_{\mathbf b}-1)_+
 &\ge\sum_{\mathbf x\in S_{r+1}}\sum_i(|x_i|-3)_+
 =\sum_{\mathbf y\in S_{r-1}}\sum_i(|y_i|-1)_+
 =(r-2)\sigma_{r-2}.
\end{align*}
For the middle equality, pairs $(\mathbf x,i)$ with $|x_i|\ge4$
correspond bijectively to $(\mathbf y,i)$ with $|y_i|\ge2$ by
$\mathbf y=\mathbf x-2\operatorname{sgn}(x_i)\mathbf e_i$,
preserving the displayed weight.

Taking $z=P_{\mathbf b}+r-s(\mathbf b)$ and $(a,G)=(r-c,r+1)$
in~\eqref{eq:two-shell-gap-pointwise} and summing gives
\begin{align*}
 \sum_{\mathbf b}\bigl(P_{\mathbf b}+r-s(\mathbf b)\bigr)^2
 &\ge\bigl((3r-2c+1)r-(r-c)(2r-c+1)\bigr)\sigma_r-r(r-1)\sigma_{r-1}
                                      +2(r-2)\sigma_{r-2}\\
 &=\bigl(r(r+1)+(r-c)(c-1)\bigr)\sigma_r-r(r-1)\sigma_{r-1}
                                      +2(r-2)\sigma_{r-2}.
\end{align*}
Substituting the displayed square-sum identity and collecting the two shell
coefficients gives the first assertion.
If $c=r$ and $f_{\mathbf b}\ne q+1$ for every $\mathbf b$, then
$P_{\mathbf b}+r-s(\mathbf b)+H_{\mathbf b}
\in\{0,2(r+1),3(r+1),\ldots\}$.
The same pointwise inequality with $(a,G)=(0,2(r+1))$, after dropping
its nonnegative last term and summing, gives
\[
 \sum_{\mathbf b}\bigl(P_{\mathbf b}+r-s(\mathbf b)\bigr)^2
 \ge2r(r+1)\sigma_r-(2r+1)(r-1)\sigma_{r-1}.
\]
Substituting the same square-sum identity and collecting the
$\sigma_{r-1}$ coefficient gives the second assertion.
\end{proof}

\subsection{Scalar-kernel collisions and cancellation}
Scalar collisions lead to intersections of a projected ball with its
nonzero shifts. Lifting these intersections allows us to apply the
mixed-radius bounds. For $Y\subseteq A$ and $a\in A$, write
$C_Y(a)=|Y\cap(a+Y)|$ for the autocorrelation of $Y$.
\begin{lemma}\label{lem:nonzero-autocorrelation}
For $0\le k\le r-1$ and $0\ne a\in A$,
\begin{equation}\label{eq:nonzero-autocorrelation}
 C_{\pi(B_k)}(a)\le2V_{k-1}.
\end{equation}
\end{lemma}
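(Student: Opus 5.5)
We lift the collision set to $\Z^n$ and reduce to the mixed-radius intersection count of Lemma~\ref{thm:lens-three-halves}. Since $\pi$ is injective on $B_r\supseteq B_k$, the quantity $C_{\pi(B_k)}(a)$ counts pairs $(\mathbf x,\mathbf y)\in B_k\times B_k$ with $\pi(\mathbf x)-\pi(\mathbf y)=a$; each $\mathbf x$ occurs in at most one such pair. For each pair, $\mathbf x-\mathbf y\in B_{2k}$. Because $a\ne0$, the difference $\mathbf x-\mathbf y$ lies in a coset $\mathbf v+L$ with $\mathbf v\notin L$.

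\textbf{Step 1: reduce to a single displacement.} Fix $\mathbf v$ with $\pi(\mathbf v)=a$. The pairs are indexed by lattice vectors $\boldsymbol\ell\in L$ with $\mathbf x\in B_k\cap(\mathbf v+\boldsymbol\ell+B_k)$, so
\[
C_{\pi(B_k)}(a)=\sum_{\boldsymbol\ell\in L}C_{k,k}(\mathbf v+\boldsymbol\ell).
\]
Only displacements $\mathbf w=\mathbf v+\boldsymbol\ell$ with $\|\mathbf w\|_1\le2k\le2r-2$ contribute. We choose $\mathbf v\in B_r$ as the representative of $a$. Two contributing displacements differ by a lattice vector of norm at most $4k$, which can be nonzero, so several $\boldsymbol\ell$ may contribute. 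A cleaner route is therefore to bound each intersection and then control how many there are.

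\textbf{Step 2: bound each term by the adjacent-radius count.} By Lemma~\ref{thm:lens-three-halves}, for $\|\mathbf w\|_1\ge3$ we have $C_{k,k}(\mathbf w)\le\tfrac32 C_{k-1,k+1}(\mathbf w)$. The point of passing to radii $(k-1,k+1)$ is disjointness. The sets $B_{k-1}\cap(\mathbf w+B_{k+1})$, as $\mathbf w$ ranges over $\mathbf v+L$, are pairwise disjoint. Indeed, a common point $\mathbf z$ would give $\mathbf w-\mathbf w'=(\mathbf w-\mathbf z)-(\mathbf w'-\mathbf z)\in B_{2k+2}$, which is at most $2r$ when $k\le r-1$, and the minimum norm of $L$ then forces $\mathbf w=\mathbf w'$. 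Hence $\sum_{\mathbf w}C_{k-1,k+1}(\mathbf w)\le|B_{k-1}|=V_{k-1}$. This gives the bound $\tfrac32 V_{k-1}\le2V_{k-1}$ for all displacements of norm at least $3$.

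\textbf{Step 3: small displacements.} Displacements with $\|\mathbf w\|_1\le2$ must be handled separately, and this is the main obstacle. By the same minimum-norm argument, at most one $\mathbf w$ in the coset lies in $B_r$. If that $\mathbf w$ has norm $1$ or $2$, its term is $C_{k,k}(\mathbf w)$ while the remaining terms are bounded via disjointness as above. The difficulty is that the remaining intersections need not avoid $B_{k-1}\cap(\mathbf w+B_{k+1})$ for the short $\mathbf w$, since there the shift is too small to separate.

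For the short displacement itself I would compute directly. For $\|\mathbf w\|_1=1$, the set $B_k\cap(\mathbf e_1+B_k)$ consists of points with $\|\mathbf x\|_1\le k$ and $\|\mathbf x-\mathbf e_1\|_1\le k$. Its size is $V_k-\#\{\mathbf x\in S_k: x_1\le0\}$, and symmetry should bound this by about $V_{k-1}+\tfrac12\sigma_k$. For norm $2$ the analogous count is smaller. The other displacements in the coset have norm at least $2r+1-2\ge2r-1>2k$ once $k\le r-1$, so they contribute nothing at all. The short case is then the whole sum, and it remains to show the direct count is at most $2V_{k-1}$.

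One way to finish is to write $B_k\cap(\mathbf w+B_k)\subseteq B_{k-1}\cup(\mathbf w+B_{k-1})$. This holds because a point on the sphere $S_k$ lying in $\mathbf w+B_k$ with $\|\mathbf w\|_1\le2$ sits in $\mathbf w+B_{k-1}$ when $\|\mathbf w\|_1$ is odd. For even norm a parity argument is needed: for $\|\mathbf w\|_1=2$, a point of $S_k\cap(\mathbf w+S_k)$ must be shown to lie in $\mathbf w+B_{k-1}$ or in $B_{k-1}$, possibly after splitting into cases. I expect this parity case to need care. The covering bound then gives $C\le2V_{k-1}$ directly, and combining it with Step 2 proves \eqref{eq:nonzero-autocorrelation}.
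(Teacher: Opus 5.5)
Your architecture matches the paper's proof, and most of it is correct. The exact lift sum is right. So is the observation that a lift $\mathbf w_0$ of norm at most two is the only contributing lift, since every other lift has norm at least $2r-1>2k$. When every lift has norm at least three, Lemma~\ref{thm:lens-three-halves} together with disjointness of the sets $B_{k-1}\cap(\mathbf w+B_{k+1})$ gives $\tfrac32V_{k-1}$. Your norm-one short case is also fine: parity forces $S_k\cap(\mathbf e_i+B_k)\subseteq\mathbf e_i+B_{k-1}$.

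The gap is the norm-two short displacement, and the route you propose there cannot work. The inclusion $B_k\cap(\mathbf w+B_k)\subseteq B_{k-1}\cup(\mathbf w+B_{k-1})$ is equivalent to $S_k\cap(\mathbf w+S_k)=\varnothing$. Points of that set are at distance exactly $k$ from both centres, so they lie in neither $B_{k-1}$ nor $\mathbf w+B_{k-1}$. For $\|\mathbf w\|_1=2$ the set is nonempty for every $k\ge1$:
\begin{itemize}
\item $k\mathbf e_1\in S_k\cap(\mathbf e_1+\mathbf e_2+S_k)$;
\item $\mathbf e_1+(k-1)\mathbf e_2\in S_k\cap(2\mathbf e_1+S_k)$.
\end{itemize}
So the parity case split you anticipate would have to prove a false inclusion. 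Your aside that the norm-two count is smaller is also wrong for $\mathbf e_i\pm\mathbf e_j$: the paper shows $C_{k,k}(\mathbf e_i)=C_{k,k}(\mathbf e_i\pm\mathbf e_j)=V_{k-1}(n)+V_{k-1}(n-1)$.

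The paper closes this case by direct section counting. In one or two coordinates the intersections have $2a$, $2a^2$, and $\max(0,2a-1)$ points at residual radius $a$, which gives $C_{k,k}(\mathbf e_i\pm\mathbf e_j)\le2V_{k-1}$ and $C_{k,k}(2\mathbf e_i)=V_{k-1}$.

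A repair closer to your idea replaces the union in $\Z^n$ by an injection into two disjoint copies of $B_{k-1}$, split by the sign of $x_1$.
\begin{itemize}
\item For $\mathbf w=\mathbf e_1+\mathbf e_2$ and $\mathbf x\in B_k\cap(\mathbf w+B_k)$: if $x_1\ge1$ then $\|\mathbf x-\mathbf e_1\|_1=\|\mathbf x\|_1-1\le k-1$; if $x_1\le0$ then $\|\mathbf x-\mathbf e_2\|_1=\|\mathbf x-\mathbf w\|_1-1\le k-1$. The two translations act on disjoint domains, so the count is at most $2V_{k-1}$.
\item For $\mathbf w=2\mathbf e_1$ the same split shows $\mathbf x-\mathbf e_1\in B_{k-1}$ in both cases, so the count is at most $V_{k-1}$.
\end{itemize}
Coordinate sign changes cover the remaining norm-two shifts.
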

\begin{proof}
For $k=0$ both sides vanish. Suppose $k\ge1$.
Injectivity on $B_r$ gives the exact lift sum
\[
 C_{\pi(B_k)}(a)
 =\sum_{\pi(\mathbf w)=a}C_{k,k}(\mathbf w).
\]
If $a\in\pi(B_2)$, let $\mathbf w_0\in B_2$ be its unique lift.
Every other contributing lift would satisfy
$\|\mathbf w-\mathbf w_0\|_1\le2k+2\le2r$, which is impossible.
For a unit displacement, or a signed displacement on two distinct axes,
direct section counting gives
\begin{equation}\label{eq:short-shift-autocorrelation}
 C_{k,k}(\mathbf e_i)
 =C_{k,k}(\mathbf e_i\pm\mathbf e_j)
 =V_{k-1}(n)+V_{k-1}(n-1)\le2V_{k-1}(n)
 \qquad(i\ne j).
\end{equation}
Indeed, a one-dimensional section of residual radius $a\ge1$
has $2a$ intersection points; subtracting the $2a-1$ points of
the corresponding $B_{k-1}$ section leaves one point for each
member of $B_{n-1}(k-1)$. In two coordinates, the intersections
for shifts $(1,0)$ and $(1,1)$ both have $2a^2$ points at residual
radius $a$; summing over the remaining coordinates gives the equality.
For displacement $2\mathbf e_i$, the sections have
$\max(0,2a-1)$ points, hence $C_{k,k}(2\mathbf e_i)=V_{k-1}$.
Sign changes give all nonzero cases in $B_2$.

If $a\notin\pi(B_2)$, every lift has norm at least three.
Lemma~\ref{thm:lens-three-halves} gives
\[
 \sum_{\pi(\mathbf w)=a}C_{k,k}(\mathbf w)
 \le\frac{3}{2}\sum_{\pi(\mathbf w)=a}C_{k-1,k+1}(\mathbf w)
 \le\frac{3}{2} V_{k-1}.
\]
For the last bound, each $\mathbf x\in B_{k-1}$ permits at most
one $\mathbf y\in B_{k+1}\subseteq B_r$ with
$\pi(\mathbf x-\mathbf y)=a$. This proves~\eqref{eq:nonzero-autocorrelation}.
\end{proof}

Assume $0\le c\le r-1$, so $1\le r-c\le r$.
For each scalar representative $\boldsymbol\eta\in B_{r-c}$, let
$E_{\boldsymbol\eta}=\{\pi(\mathbf b):f_{\mathbf b}=q,
\ \boldsymbol\eta_{\mathbf b}=\boldsymbol\eta\}$ and put
$m_{\boldsymbol\eta}=|E_{\boldsymbol\eta}|$.
By~\eqref{eq:minimum-scalar-representative} and injectivity on $B_r$,
two minimum images have the same representative exactly when they
belong to the same fiber of multiplication by $q+1$.
Each scalar fiber has size $|A[q+1]|$.
\begin{lemma}\label{lem:full-minimum-lifting}
With the notation above, one has
\begin{equation}\label{eq:full-weighted-energy}
 \sum_{\boldsymbol\eta}\|\boldsymbol\eta\|_1
       (m_{\boldsymbol\eta}^2-m_{\boldsymbol\eta})
 \le4n(|A[q+1]|-1)\sum_{j=1}^{r-c}V_{r-j-1}.
\end{equation}
\end{lemma}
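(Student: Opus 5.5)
The plan is to turn the left side of \eqref{eq:full-weighted-energy} into a count of ordered pairs of distinct minimum images with a common scalar representative. Each pair is then charged to an autocorrelation of a smaller projected ball, so that Lemma~\ref{lem:nonzero-autocorrelation} supplies the factor $2V_{r-j-1}$. The remaining factor $2n\,(|A[q+1]|-1)$ should come from the number of signed directions in $S_1$ and the number of nonzero kernel elements.

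First, by~\eqref{eq:minimum-scalar-representative} and the remark preceding the lemma, $m_{\boldsymbol\eta}^2-m_{\boldsymbol\eta}$ counts ordered pairs $(\pi(\mathbf b),\pi(\mathbf b'))$ of distinct minimum images with $\boldsymbol\eta_{\mathbf b}=\boldsymbol\eta_{\mathbf b'}=\boldsymbol\eta$. Their difference $a=\pi(\mathbf b)-\pi(\mathbf b')$ is then a nonzero element of $A[q+1]$. Next, write $\|\boldsymbol\eta\|_1=\sum_{j=1}^{r-c}\mathbf 1_{j\le\|\boldsymbol\eta\|_1}$; this is legitimate because $\boldsymbol\eta\in B_{r-c}$. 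For each $j\le\|\boldsymbol\eta\|_1$, fix canonically (say, by a fixed ordering of $S_1$) a decomposition of $\boldsymbol\eta$ into signed units, and let $\mathbf u_j(\boldsymbol\eta)\in S_1$ be its $j$-th unit. The left side then becomes a sum over $j\in\{1,\dots,r-c\}$, over directions $\mathbf u\in S_1$, and over $0\ne a\in A[q+1]$. The summand counts triples $(\boldsymbol\eta,\mathbf b,\mathbf b')$ with $\mathbf u_j(\boldsymbol\eta)=\mathbf u$ and difference $a$.

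It then suffices to show that, for fixed $(j,\mathbf u,a)$, this count is at most $C_{\pi(B_{r-j})}(a)$. Lemma~\ref{lem:nonzero-autocorrelation}, applied with $k=r-j\le r-1$, bounds that autocorrelation by $2V_{r-j-1}$. Summing over the $2n$ directions, the $|A[q+1]|-1$ kernel elements, and $j$ gives exactly $4n(|A[q+1]|-1)\sum_{j=1}^{r-c}V_{r-j-1}$. To build the injection, I would use the structure of $\boldsymbol\eta_{\mathbf b}=\mathbf b-\boldsymbol\varepsilon(\mathbf b)+\boldsymbol\delta_{\mathbf b}$ from Lemma~\ref{lem:minimal-fibers}. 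The first $j$ units of $\boldsymbol\eta$ (those up to and including $\mathbf u$) are peeled off $\mathbf b$, producing a point $\mathbf x\in B_{r-j}$. The same $j$ units are peeled off $\mathbf b'$, producing $\mathbf x'\in B_{r-j}$. Then $\pi(\mathbf x)-\pi(\mathbf x')=a$, so $\pi(\mathbf x)$ lies in $\pi(B_{r-j})\cap(a+\pi(B_{r-j}))$. Injectivity of $\pi$ on $B_r$, together with the fact that $\mathbf b$ is recovered from $\mathbf x$ once $(j,\mathbf u,\boldsymbol\eta)$ is known, should make the map injective.

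The main obstacle is this peeling step. The units of $\boldsymbol\eta$ coming from $\mathbf b-\boldsymbol\varepsilon(\mathbf b)$ sit inside $\mathbf b$ with compatible signs. The units coming from $\boldsymbol\delta_{\mathbf b}$ do not, so subtracting them might increase the norm. There are two possible fixes. One is to replace $\mathbf b$ by a fiber point $\mathbf x\in F_{\mathbf b}$ carrying the relevant $\boldsymbol\delta$-units; this still has image $\pi(\mathbf b)$ and norm $r+1$, and peeling then leaves norm $r+1-j$. The other is to choose the canonical ordering so that the sign-compatible units come first. If neither works cleanly, I would use instead the translate $\pi(\mathbf b)-\pi(\boldsymbol\eta_{\le j})$, where $\boldsymbol\eta_{\le j}$ is the partial sum of the first $j$ units. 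I would then bound its membership in $\pi(B_{r-j})$ via the minimum norm $2r+1$ of $L$, using the uniqueness argument from the proof of Lemma~\ref{lem:nonzero-autocorrelation}.
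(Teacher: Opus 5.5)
Your counting frame is the right one. The quantity $m_{\boldsymbol\eta}^2-m_{\boldsymbol\eta}$ counts ordered pairs of distinct minimum images with difference $0\ne a\in A[q+1]$. The factor $2V_{r-j-1}$ must then come from Lemma~\ref{lem:nonzero-autocorrelation} after an injection into $\pi(B_{r-j})\cap(a+\pi(B_{r-j}))$.

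The gap is that injection, which is exactly the step you leave open, and none of your remedies supplies it. The shift applied to $\pi(\mathbf b)$ and to $\pi(\mathbf b')$ must be the same vector, so it may depend only on $\boldsymbol\eta$ and the index, never on $\mathbf b$.
\begin{itemize}
\item Reordering so that units ``sign-compatible with $\mathbf b$'' come first is therefore ruled out. Which units of $\boldsymbol\eta$ are carried by $\mathbf b$ rather than by exterior points depends on $\mathbf b$; when $s(\mathbf b)=r$, none are.
\item Partial sums $\boldsymbol\eta_{\le j}$ of a global ordering generally mix units carried by different members of $\{\mathbf b\}\cup F_{\mathbf b}$. Those members have disjoint signed supports, so the units cannot be peeled from a single point.
\item Even a single unit fails, so your fallback claim is false. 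If the first unit $\mathbf u$ is carried by an exterior $\mathbf x\in F_{\mathbf b}$, then $x_{\mathbf u}\ge2$ and $\pi(\mathbf b)-\pi(\mathbf u)=\pi(\mathbf x-\mathbf u)$ with $\mathbf x-\mathbf u\in S_r$. Injectivity on $B_r$ then shows this image is not in $\pi(B_{r-1})$, and no minimum-norm argument can change that.
\item Peeling $j$ units from a fiber point lands in $S_{r+1-j}$, one shell too high. For $j=1$ this is $S_r$, where $C_{\pi(B_r)}(a)=V_r$.
\end{itemize}

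Your fiber-point idea is the right starting point, but it needs two changes. First, work one direction at a time: index by a direction $\mathbf u$ and a level $j$ in that direction, not by position in a global ordering. Second, peel $j+1$ units rather than $j$.

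Let $v_{\mathbf u}$ be the net coefficient of $\boldsymbol\eta$ in direction $\mathbf u$, so $\|\boldsymbol\eta\|_1=\sum_{\mathbf u}v_{\mathbf u}$. We have $\boldsymbol\eta_{\mathbf b}=\sum_{\mathbf x\in\{\mathbf b\}\cup F_{\mathbf b}}(\mathbf x-\boldsymbol\varepsilon(\mathbf x))$, and these signed supports partition $S_1$. Hence the positive excess in direction $\mathbf u$ comes from a single member $\mathbf x$, and subtracting the opposing excess gives $x_{\mathbf u}\ge v_{\mathbf u}+1$.

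For $1\le j\le v_{\mathbf u}$, put $\mathbf z=\mathbf x-(j+1)\mathbf u$. This point lies in $S_{r-j}$ if $\mathbf x$ is exterior and in $S_{r-j-1}$ if $\mathbf x=\mathbf b$. Every member of the extended fiber has image $\pi(\mathbf b)$, so $\pi(\mathbf z)=\pi(\mathbf b)-(j+1)\pi(\mathbf u)$. The extra unit, taken from $\boldsymbol\varepsilon(\mathbf x)$, absorbs the norm $r+1$ of an exterior carrier. The shift $(j+1)\pi(\mathbf u)$ depends only on $(\mathbf u,j)$.

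Doing this in both fibers injects the relevant pairs, for fixed $(\mathbf u,j,a)$, into $\pi(B_{r-j})\cap(a+\pi(B_{r-j}))$, since $\pi(\mathbf b)$ is recovered by adding $(j+1)\pi(\mathbf u)$. Summing over the $2n$ directions, $1\le j\le r-c$, and $0\ne a\in A[q+1]$ gives \eqref{eq:full-weighted-energy}. Your global-position indexing would instead charge a pair to $B_{r-j}$ for $j$ up to $\|\boldsymbol\eta\|_1$, which this construction does not provide.
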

\begin{proof}
Adjoin $\mathbf b$ to $F_{\mathbf b}$. Its signed supports partition
$S_1$, and
$
 \boldsymbol\eta_{\mathbf b}
 =\sum_{\mathbf x\in\{\mathbf b\}\cup F_{\mathbf b}}
                (\mathbf x-\boldsymbol\varepsilon(\mathbf x)).
$
Suppose a signed direction $\mathbf u$ has net coefficient $v>0$
in $\boldsymbol\eta_{\mathbf b}$. Its positive excess comes from
one member $\mathbf x$ of the extended fiber. Subtracting the
opposing nonnegative excess leaves $v$, so $x_{\mathbf u}\ge v+1$.
For every $1\le j\le v$, put
$\mathbf z=\mathbf x-(j+1)\mathbf u$.
If $\mathbf x$ is exterior, $\mathbf z\in S_{r-j}$; if
$\mathbf x=\mathbf b$, then $\mathbf z\in S_{r-j-1}$.
In both cases $\mathbf z\in B_{r-j}$ and
$\pi(\mathbf z)=\pi(\mathbf b)-(j+1)\pi(\mathbf u)$.
Two minimum images differing by $a\in A[q+1]$ have the same
$\boldsymbol\eta$ by uniqueness in $B_r$. Apply this construction
in both fibers for each net direction and each $j$.
There are $\|\boldsymbol\eta\|_1$ choices, and the resulting
images of $\mathbf z,\mathbf z'$ differ by $a$.
For fixed $j,\mathbf u,\mathbf z,\mathbf z'$, the two original
images are uniquely recovered by adding $(j+1)\pi(\mathbf u)$.
There are $2n$ directions, so for each $0\ne a\in A[q+1]$,
\begin{align*}
 \sum_{\boldsymbol\eta}\|\boldsymbol\eta\|_1
       C_{E_{\boldsymbol\eta}}(a)
 &\le2n\sum_{j=1}^{r-c}
       C_{\pi(B_{r-j})}(a)
\le4n\sum_{j=1}^{r-c} V_{r-j-1}.
\end{align*}
The last bound is~\eqref{eq:nonzero-autocorrelation}; its $j=r$
term vanishes because the nonzero-shift autocorrelation of $B_0$ is zero.
Summing over $a\in A[q+1]\setminus\{0\}$ counts every ordered
distinct pair in each $E_{\boldsymbol\eta}$ once and proves
\eqref{eq:full-weighted-energy}.
\end{proof}

Define the cancellation loss by
$\mathcal C=\sum_{\mathbf b:f_{\mathbf b}=q}
(r-c-\|\boldsymbol\eta_{\mathbf b}\|_1)$.
The weighted estimate~\eqref{eq:full-weighted-energy} gives the following constraint when the
scalar kernel is nontrivial.
\begin{lemma}\label{lem:all-minimum-kernel}
If $A[q+1]\ne\{0\}$, then
$
 (r-c)\max\{2E-V_r,E-V_r/3\}
 \le\mathcal C+4n\sum_{j=1}^{r-c}V_{r-j-1}.
$
\end{lemma}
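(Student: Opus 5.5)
Write $K=|A[q+1]|$. We derive the statement from \eqref{eq:full-weighted-energy} by giving lower bounds for its left side in terms of $E$. Summing the definition of $\mathcal C$ over minimum fibers gives
\[
 \sum_{\boldsymbol\eta}\|\boldsymbol\eta\|_1m_{\boldsymbol\eta}=(r-c)E-\mathcal C,
\]
since each minimum image $\pi(\mathbf b)$ is counted once, in the class of its own $\boldsymbol\eta_{\mathbf b}$. It therefore suffices to show that
\[
 \sum_{\boldsymbol\eta}\|\boldsymbol\eta\|_1(m_{\boldsymbol\eta}^2-m_{\boldsymbol\eta})
 \ge(K-1)\Bigl((r-c)\max\{2E-V_r,\,E-V_r/3\}-\mathcal C\Bigr)
\]
and then divide \eqref{eq:full-weighted-energy} by $K-1$. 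This division is legitimate because $A[q+1]\ne\{0\}$ gives $K\ge2$.

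The pointwise inequalities come from the sizes of the classes. Each $E_{\boldsymbol\eta}$ lies in a single fiber of multiplication by $q+1$, so $0\le m_{\boldsymbol\eta}\le K$. The number of nonempty classes is at most the number of scalar fibers, namely $|(q+1)A|=V_r/K$.

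For the term $2E-V_r$ we use the elementary inequality $m^2-m\ge(K-1)(2m-K)$ for integers $0\le m\le K$. It is equivalent to $(m-K)(m-K+1)\ge0$, which holds for integers. Multiplying by $\|\boldsymbol\eta\|_1\le r-c$ where needed, we get
\[
 \|\boldsymbol\eta\|_1(m^2-m)\ge(K-1)\bigl(2\|\boldsymbol\eta\|_1m-(r-c)K\bigr).
\]
When $2m<K$ the right side is negative and the inequality is trivial; otherwise we use $\|\boldsymbol\eta\|_1\le r-c$ on the negative term. Summing over the at most $V_r/K$ nonempty classes gives
\[
 (K-1)\bigl(2((r-c)E-\mathcal C)-(r-c)V_r\bigr)\ge(K-1)\bigl((r-c)(2E-V_r)-\mathcal C\bigr).
\]
Here we used that $\mathcal C\ge0$, since $\|\boldsymbol\eta\|_1\le r-c$ by \eqref{eq:minimum-scalar-representative}.

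For the term $E-V_r/3$ we use the tangent line at $m=(K+1)/2$, namely $m^2-m\ge(K-1)\bigl(m-\tfrac{K+1}{4}\bigr)$, valid for all real $m$ since the difference is $(m-(K+1)/2)^2$ rearranged. Because $K$ is odd, we can try the better integer line through $m=(K\pm1)/2$: this gives $m^2-m\ge(K-1)\bigl(m-\tfrac{K+1}{2}\cdot\tfrac{1}{2}\bigr)$ up to the constant. Summing over $V_r/K$ classes produces a loss of $(r-c)\tfrac{(K+1)}{4K}V_r$ per unit of $K-1$. For $K\ge3$ (and $K$ is odd because $|A|$ is odd) we have $(K+1)/(4K)\le1/3$, which is exactly the constant $V_r/3$. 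The nonpositive part is again handled by replacing $\|\boldsymbol\eta\|_1$ by $r-c$ only on the subtracted constant, with the contribution $-(K-1)\mathcal C$ as before.

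The main obstacle is the bookkeeping of $\mathcal C$. We must weight the linear term by $\|\boldsymbol\eta\|_1$ and the constant term by $r-c$, and keep the argument valid for classes with $\|\boldsymbol\eta\|_1$ small, including $\boldsymbol\eta=0$ (where $m\le1$ is forced by the collision lemma). Choosing the quadratic minorants so that they are nonpositive whenever their linear part could exceed the left side makes both cases go through. Combining the two minorants with \eqref{eq:full-weighted-energy} and dividing by $K-1$ yields the stated bound.
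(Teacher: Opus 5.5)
\textbf{The $2E-V_r$ branch has a sign error, so that branch does not prove the lemma as stated.}

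After you multiply $m^2-m\ge(K-1)(2m-K)$ by $\|\boldsymbol\eta\|_1$ and sum, the linear term contributes $2\sum_{\boldsymbol\eta}\|\boldsymbol\eta\|_1m_{\boldsymbol\eta}=2\bigl((r-c)E-\mathcal C\bigr)$. What you actually obtain is therefore $(K-1)\bigl((r-c)(2E-V_r)-2\mathcal C\bigr)$. You then claim this is at least $(K-1)\bigl((r-c)(2E-V_r)-\mathcal C\bigr)$ ``since $\mathcal C\ge0$''. But $\mathcal C\ge0$ gives exactly the opposite inequality.

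As written, this branch proves only $(r-c)(2E-V_r)\le 2\mathcal C+4n\sum_{j=1}^{r-c}V_{r-j-1}$, which is weaker than the lemma. The constant matters later: $\mathcal C/V_r\le\gamma_r(2n)$ enters $\omega_r$ with coefficient one. The cause is the slope-$2$ minorant. Weighting its linear part by $\|\boldsymbol\eta\|_1$ doubles the cancellation loss. Your $E-V_r/3$ branch escapes this only because that minorant has slope one.

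\textbf{How to repair it.} The paper never weights the minorant. It writes
$\|\boldsymbol\eta\|_1m(m-1)=(r-c)m(m-1)-(r-c-\|\boldsymbol\eta\|_1)m(m-1)$
and bounds the subtracted term using $m-1\le K-1$. This gives
$\sum_{\boldsymbol\eta}\|\boldsymbol\eta\|_1m_{\boldsymbol\eta}(m_{\boldsymbol\eta}-1)\ge(r-c)\sum_{\boldsymbol\eta}m_{\boldsymbol\eta}(m_{\boldsymbol\eta}-1)-(K-1)\mathcal C$.
Both quadratic minorants are then applied to the unweighted sum, so the loss is $(K-1)\mathcal C$ in both branches.

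You can also repair your own pointwise route by spending $\|\boldsymbol\eta\|_1\le r-c$ on a different term. Since $m\le K$,
$\|\boldsymbol\eta\|_1(2m-K)=\|\boldsymbol\eta\|_1m+\|\boldsymbol\eta\|_1(m-K)\ge\|\boldsymbol\eta\|_1m+(r-c)(m-K)$.
Summing over the at most $V_r/K$ nonempty classes gives exactly $(r-c)(2E-V_r)-\mathcal C$.

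\textbf{Two smaller points.}

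First, $m^2-m\ge(K-1)\bigl(m-\tfrac{K+1}{4}\bigr)$ is a secant through $(K\pm1)/2$, not a tangent. It is false for real $m$ strictly between those two points, since the difference is $(m-K/2)^2-\tfrac14$. It holds for integers precisely because $K$ is odd, and that is the justification you should give.

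Second, your remark that $\boldsymbol\eta=\mathbf 0$ forces $m\le1$ is unfounded. The weighted energy carries zero weight at $\boldsymbol\eta=\mathbf 0$, and when $(q+1)A=\{0\}$ one has $m_{\mathbf 0}=E$. Your argument never uses this remark, so it should simply be dropped.
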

\begin{proof}
Since $V_r$ is odd and the kernel is nontrivial, $|A[q+1]|$ is odd and $|A[q+1]|\ge3$.
For every integer $0\le m\le |A[q+1]|$,
\begin{align*}
 &m(m-1)-(|A[q+1]|-1)(2m-|A[q+1]|)=(|A[q+1]|-m)(|A[q+1]|-m-1)\ge0,\\
 &m(m-1)-(|A[q+1]|-1)(m-|A[q+1]|/3)\ge\frac{(|A[q+1]|-1)(|A[q+1]|-3)}{12}\ge0.
\end{align*}
The second estimate uses that $|A[q+1]|$ is odd. Apply these inequalities to the number of
minimum images in each of the $V_r/|A[q+1]|$ scalar fibers, including zero
counts. Since their sum is $E$,
\[
 \sum_{\boldsymbol\eta}m_{\boldsymbol\eta}
           (m_{\boldsymbol\eta}-1)
 \ge(|A[q+1]|-1)\max\{2E-V_r,E-V_r/3\}.
\]
Moreover $m_{\boldsymbol\eta}-1\le |A[q+1]|-1$ whenever
$m_{\boldsymbol\eta}>0$, and
$\mathcal C=\sum_{\boldsymbol\eta}
(r-c-\|\boldsymbol\eta\|_1)m_{\boldsymbol\eta}$. Therefore
\begin{align*}
 \sum_{\boldsymbol\eta}\|\boldsymbol\eta\|_1
           m_{\boldsymbol\eta}(m_{\boldsymbol\eta}-1)
 &\ge (r-c)\sum_{\boldsymbol\eta}
            m_{\boldsymbol\eta}(m_{\boldsymbol\eta}-1)
                   -(|A[q+1]|-1)\mathcal C\\
 &\ge(|A[q+1]|-1)\bigl((r-c)\max\{2E-V_r,E-V_r/3\}-\mathcal C\bigr).
\end{align*}
Comparison with Lemma~\ref{lem:full-minimum-lifting}, followed by division
by $|A[q+1]|-1$, proves the assertion.
\end{proof}

To bound the cancellation term, we count successive levels of opposing
coordinate excesses. For $k\ge0$ and $0\le m\le2r$, put
\[
 W_k=(2n-k)\sigma_k+(k-1)\sigma_{k-1},\qquad
 \Omega_m=\frac{\binom{m}{\lfloor m/2\rfloor}}{\binom{m}{(m-r)_+}}W_{(m-r)_+}.
\]
\begin{lemma}\label{lem:exact-layered-cancellation}
One has
\begin{equation}\label{eq:layered-cancellation}
\mathcal C\le
\sum_{j=1}^{\lfloor (r-c)/2\rfloor}
\bigl(\Omega_{2r-2j}+2\Omega_{2r-2j-1}\bigr).
\end{equation}
If $n\ge r^2/3$, then
\begin{equation}\label{eq:layered-cancellation-ratio}
\frac{\mathcal C}{V_r}\le\frac{r^2}{2n}
\left(1+\frac{2(r-1)}{2n}+
\frac{2(r-1)^2}{4n^2-(r-1)^2}\right)
<\frac{2r^2}{2n}.
\end{equation}
\end{lemma}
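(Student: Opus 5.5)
Write $\boldsymbol\eta=\boldsymbol\eta_{\mathbf b}$ for a minimum fiber. By Lemma~\ref{lem:minimal-fibers}, $\boldsymbol\eta$ is a formal sum of exactly $r-c$ signed units, coming from the extended fiber $\{\mathbf b\}\cup F_{\mathbf b}$ as in the proof of Lemma~\ref{lem:full-minimum-lifting}. The loss $r-c-\|\boldsymbol\eta\|_1$ equals twice the number of cancelled opposing pairs. A cancellation on axis $i$ means that two members $\mathbf x,\mathbf x'$ of the extended fiber carry opposite signed excesses on that axis: $x_{\mathbf u}\ge2$ and $x'_{-\mathbf u}\ge2$. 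These members have disjoint signed supports, and their difference lies in $L$, or equals $\mathbf x-\mathbf b$, with norm $2r+2$ or $2r+1$.

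So I would bound $\mathcal C$ by counting \emph{levels} $j\ge1$ of cancellation. At level $j$ there is a pair of opposing excesses on one axis, each of size at least $j$ after removing the first unit. A single fiber can contribute at most $2j$ loss at depth up to $j$, and we only need $j\le\lfloor(r-c)/2\rfloor$, since the total loss is at most $r-c$.

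For a fixed level $j$, I would follow the sign-restricted shadow argument of Lemma~\ref{lem:sign-restricted-heavy-pairs}. Remove the $j+1$ units on each side of the cancelling axis, which lie in opposite directions. The lattice vector $\boldsymbol\ell=\mathbf x-\mathbf x'$ then becomes a vector $\mathbf w$ of norm $2r-2j$ (both exterior) or $2r-2j-1$ (one of them is $\mathbf b$). The truncated point is a cut in $\mathscr K_{\cdot}(\mathbf w)$ that determines the incidence.

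\begin{itemize}
\item \emph{Disjointness of shadows.} Shadows of rank $(m-r)_+$, where $m=\|\mathbf w\|_1$, are disjoint for distinct lattice vectors. A common shadow would give $\|\boldsymbol\ell-\boldsymbol\ell'\|_1\le 2(m-(m-r)_+)+2(j+1)\cdot0$, which needs checking to be at most $2r$.
\item \emph{Rank transfer.} Lemma~\ref{lem:chain-binomial-monotonicity} moves from the cut rank, at worst the central rank $\lfloor m/2\rfloor$, down to rank $(m-r)_+$ with factor $\binom{m}{\lfloor m/2\rfloor}/\binom{m}{(m-r)_+}$.
\item \emph{Counting residual shadows.} A residual shadow $\mathbf z\in S_k$, $k=(m-r)_+$, together with an admissible cancelling axis outside its support, is counted by $W_k$. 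The weight $(2n-k)\sigma_k$ counts a free signed direction. The correction $(k-1)\sigma_{k-1}$ should come from first-moment identity~\eqref{eq:support-first-moment}, accounting for directions inside the support where the excess extends a coordinate.
\item \emph{Assembling~\eqref{eq:layered-cancellation}.} The exterior–exterior case gives $\Omega_{2r-2j}$. The case involving $\mathbf b$ gives $\Omega_{2r-2j-1}$, with factor $2$ for the two orientations ($\mathbf b$ holding the positive or negative excess). Summing over $j$ gives~\eqref{eq:layered-cancellation}.
\end{itemize}

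For~\eqref{eq:layered-cancellation-ratio} with $n\ge r^2/3$, the terms have $m\le2r-2$, so $(m-r)_+\le r-2$. I would estimate $W_k\le 2n\sigma_k(1+\cdots)$ and convert to $V_r$ via~\eqref{eq:successive-ratios}: $\sigma_k/V_r\le (r)_{r-k}/(2n)^{r-k}$. The dominant term is $j=1$, from $\Omega_{2r-2}$ and $2\Omega_{2r-3}$. Here the binomial ratio is at most $\binom{2r-2}{r-1}/\binom{2r-2}{r-2}=r/(r-1)$ and $W_{r-2}\approx 2n\sigma_{r-2}\lesssim (r-1)r\,V_r/(2n)$. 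Together these give leading order $r^2/(2n)$.

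The correction $2(r-1)/(2n)$ comes from $2\Omega_{2r-3}$ at rank $r-3$. The geometric tail over $j\ge2$ sums to at most $2(r-1)^2/(4n^2-(r-1)^2)$, using $n\ge r^2/3$ so that each further level gains a factor $\lesssim r^2/(4n^2)$. The final strict inequality $<2r^2/(2n)$ follows because the bracket is below $2$ when $n\ge r^2/3$ and $r\ge3$.

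The main obstacle is the first step. I must verify the disjointness radius and the exact shadow rank so that they match $(m-r)_+$, and confirm that the admissible-direction count is precisely $W_k$. The ratio estimate is routine bookkeeping once those are fixed.
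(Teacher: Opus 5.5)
\paragraph{Verdict.} Your route is the paper's. You decompose the loss by levels via $2\min(u-1,v-1)=2\sum_{j\ge1}\mathbf 1_{u\ge j+1,\,v\ge j+1}$ and pass to $\mathbf w=\mathbf x-\mathbf y-2(j+1)\mathbf a$ of norm $m\in\{2r-2j,\,2r-2j-1\}$. The rank-$(m-r)_+$ shadows are disjoint; this is immediate for fixed $\mathbf a,j$, since a common shadow gives $\|\mathbf w-\mathbf w'\|_1\le 2\min(m,r)\le 2r$. You then transfer ranks with Lemma~\ref{lem:chain-binomial-monotonicity}, count (shadow, direction) pairs by $W_k$, and sum the ratio bounds geometrically. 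There is no missing idea, but several bookkeeping steps are wrong as stated and need repair.

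\paragraph{The factor of two.} Your accounting is the reverse of the correct one. In $\Omega_m$ the direction $\mathbf a$ ranges over all $2n$ signed directions, which is what $W_k$ counts. It is oriented toward the member whose cut is taken. Summing over $\mathbf a$ already covers both signs of $b_{\mathbf a}$, so ``$\mathbf b$ holding the positive or negative excess'' is not an extra multiplicity.
\begin{itemize}
\item Each cancelled level costs $2$ in $\mathcal C$.
\item An unordered exterior pair is counted twice in $\Omega_{2r-2j}$, once with $\mathbf a$ pointing toward each member (cut rank $r-j=m/2$). This double count absorbs the loss weight $2$.
\item A mixed pair is oriented toward $\mathbf b$ (cut $\mathbf b-(j+1)\mathbf a$ of rank $r-j-1=\lfloor m/2\rfloor$) and counted once. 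So the weight $2$ survives as $2\Omega_{2r-2j-1}$.
\end{itemize}
As written, your exterior term is off by an unexplained factor $2$, and your mixed-term factor is given for the wrong reason.

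\paragraph{Admissible directions.} For a shadow $\mathbf z$, the admissible directions are those with $-\mathbf a\notin\ssupp(\mathbf z)$, because $w_{\mathbf a}=(u-j-1)+(v-j-1)\ge0$. These include the directions in $\ssupp(\mathbf z)$, not only axes outside its support. This yields $\sum_{\mathbf z\in S_k}(2n-s(\mathbf z))=W_k$ exactly.

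\paragraph{The ratio bound \eqref{eq:layered-cancellation-ratio}.} Use $W_k\le 2n\sigma_k$ exactly, not with a factor $1+\cdots$; otherwise the stated constants do not follow. Combine it with $\sigma_k/V_r\le r!/(k!(2n)^{r-k})$. With $\mathcal E_j=(r)_j^2/(2n)^{2j-1}$ this gives
\begin{itemize}
\item $\Omega_{2r-2j}/V_r\le\mathcal E_j$;
\item $2\Omega_{2r-2j-1}/V_r\le\frac{2(r-j)}{2n}\mathcal E_j$ whenever $2j\le r-1$.
\end{itemize}
The hypothesis $n\ge r^2/3$ enters in two places. It gives $2(r-j)/(2n)<1$ for $j\ge2$, so each later level costs at most $2\mathcal E_j$. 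It also gives the final constant $1+\tfrac23+\tfrac14=\tfrac{23}{12}<2$. The ratio $\mathcal E_{j+1}/\mathcal E_j=(r-j)^2/(4n^2)$ by itself needs nothing more.

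\paragraph{Missing boundary case.} When $c=0$ and $r=2j$, the top mixed term has $(m-r)_+=0$, not $m-r=-1$, so your uniform formula does not apply. It must be checked directly, and it gives exactly $2\Omega_{r-1}/V_r\le 2\binom{r-1}{j-1}\,r!/(2n)^{r-1}=\mathcal E_j$. With these repairs your argument coincides with the paper's proof.
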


\begin{proof}
Opposing coordinate magnitudes $u,v$ contribute
$2\min(u-1,v-1)
=2\sum_{j\ge1}\mathbf1_{u\ge j+1,\ v\ge j+1}$.
Since the total excess is $r-c$, only $2j\le r-c$ can occur.

Fix such a level $j$ and opposing origins $\mathbf x,\mathbf y$.
Orient $\mathbf a$ toward $\mathbf x$, taking $\mathbf x=\mathbf b$
for a mixed pair, and put
$\mathbf w=\mathbf x-\mathbf y-2(j+1)\mathbf a$.
The disjoint signed supports give
$m:=\|\mathbf w\|_1=2r-2j$ for an exterior pair and
$m=2r-2j-1$ for a mixed pair.
In either case, $\mathbf x-(j+1)\mathbf a$ is a
rank-$\lfloor m/2\rfloor$ cut of $\mathbf w$.
For fixed $\mathbf a,j,\mathbf w$, the cut recovers
$\mathbf x$, then $\mathbf y$, and hence their fiber.

For fixed $\mathbf a,j$ and either type of pair, distinct
residuals have disjoint rank-$(m-r)_+$ shadows:
a common shadow would give a nonzero lattice difference
of norm at most $2(m-(m-r)_+)\le2r$.
Every shadow omits $-\mathbf a$ from its signed support, and
\eqref{eq:support-first-moment} gives
\[
\sum_{\mathbf z\in S_k}(2n-s(\mathbf z))
=(2n-k)\sigma_k+(k-1)\sigma_{k-1}=W_k.
\]
Since $(m-r)_+\le\lfloor m/2\rfloor$,
Lemma~\ref{lem:chain-binomial-monotonicity} bounds the
number of oriented pairs by $\Omega_m$.
Exterior pairs have two orientations, accounting for their
loss factor two; mixed pairs have one orientation and retain
that factor. Summing proves~\eqref{eq:layered-cancellation}.

Now assume $2n\ge2r^2/3$ and put
$\mathcal E_j=(r)_j^2/(2n)^{2j-1}$.
By~\eqref{eq:successive-ratios},
$\sigma_k/V_r\le r!/(k!(2n)^{r-k})$ for $0\le k\le r$.
Using $W_k\le2n\sigma_k$ and cancelling factorials gives
\[
\begin{aligned}
&\frac{\Omega_{2r-2j}}{V_r}
\le\frac{(r!)^2}{((r-j)!)^2(2n)^{2j-1}}=\mathcal E_j,\\
&\frac{2\Omega_{2r-2j-1}}{V_r}
\le\frac{2(r!)^2}{(r-j)!(r-j-1)!(2n)^{2j}}
=\frac{2(r-j)}{2n}\mathcal E_j,
\end{aligned}
\]
where the second inequality assumes $2j\le r-1$.
The only remaining case is $c=0$ and $r=2j$, when
\[
\frac{2\Omega_{r-1}}{V_r}
=\frac{2\binom{r-1}{j-1}W_0}{V_r}
\le\frac{2\binom{r-1}{j-1}r!}{(2n)^{r-1}}
=\mathcal E_j.
\]
Because $r\ge3$, this exception has $j\ge2$.
For every ordinary level $j\ge2$,
$2(r-j)/(2n)\le3(r-2)/r^2<1$. Hence
\[
\frac{\mathcal C}{V_r}
\le \mathcal E_1\left(1+\frac{2(r-1)}{2n}\right)
+2\sum_{j=2}^{\lfloor (r-c)/2\rfloor}\mathcal E_j,
\qquad \mathcal E_1=\frac{r^2}{2n}.
\]
For consecutive terms,
\[
\frac{\mathcal E_{j+1}}{\mathcal E_j}
=\frac{(r-j)^2}{4n^2}
\le\frac{(r-1)^2}{4n^2}\le\frac{1}{9}.
\]
Summing the geometric series proves the first inequality
in~\eqref{eq:layered-cancellation-ratio}.
Finally, its parenthetical factor is at most
$1+2/3+1/4=23/12<2$.
\end{proof}

\subsection{Character coefficient bounds}
The ball generating function converts the tiling identity into a coefficient recurrence. A nonnegative matrix bounds this recurrence and supplies the character estimates needed for the repeated-coordinate count.
For $r\ge2$ and $M>0$, let $\mathsf H_r(M)$ be the $r\times r$
matrix indexed by $0,\ldots,r-1$, with
$(\mathsf H_r(M))_{i,i+1}=i+1$ for $0\le i\le r-2$,
$(\mathsf H_r(M))_{i,j}=M$ for $0\le j<i\le r-1$,
and all other entries zero.
Given $B\ge1$ and $n_0\ge1$, define $h_0=1$ and
\[
 kh_k=Bh_{k-1}-(2n_0+B)\sum_{j=0}^{k-2}h_j
 \qquad(1\le k\le r).
\]

\begin{lemma}\label{lem:positive-hessenberg}
If $r\ge2$, $B\ge1$, $n_0\ge1$, and $h_1,\ldots,h_r>0$,
then every lattice tiling of radius $r$ in dimension $n\ge n_0$
satisfies $\Lambda<B\sqrt{n/n_0}$.
\end{lemma}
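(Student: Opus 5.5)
The plan is to contradict $\chi(B_r)=0$ using the coefficient recurrence for the ball generating function evaluated at a character. Fix a nontrivial character $\chi$. Since $B_n(r)$ is a transversal of $A$, we have $\chi(B_r)=\sum_{a\in A}\chi(a)=0$. Suppose, for contradiction, that $|\lambda_\chi|\ge B\sqrt{n/n_0}$.

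\emph{Step 1: a recurrence for $\beta_k=\chi(B_k)$.} Put $x_i=\chi(g_i)+\chi(-g_i)\in[-2,2]$, which is real because $B_1$ is centrally symmetric; thus $\lambda=\lambda_\chi=1+\sum_i x_i$ is real. The ball generating function becomes
\[
 \sum_k\beta_kz^k=\frac1{1-z}\prod_i\frac{1+x_iz+z^2}{1-x_iz+z^2}.
\]
Logarithmic differentiation, as in the derivation of $jV_j=(2n+1)V_{j-1}+(j-1)V_{j-2}$, gives
\[
 k\beta_k=\sum_{j\ge0}c_j\beta_{k-1-j}.
\]
Here $c_0$ is an affine function of $\lambda$ (it equals $1+2\sum_ix_i$, essentially $2\lambda$). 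For $j\ge1$, each $c_j$ is $1$ plus a sum over $i$ of Chebyshev-type coefficients in $x_i$. I will rewrite the recurrence so that the leading term is exactly $\lambda\beta_{k-1}$. Every later coefficient should then be bounded in absolute value by a quantity of order $2n$ (at most $2n+|\lambda|$ after the rewriting). This is plausibly why the recurrence for $h_k$ has the shape $Bh_{k-1}-(2n_0+B)\sum_{j\le k-2}h_j$. Replacing $\chi$ by a suitable sign twist, I would also reduce to the case $\lambda>0$.

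\emph{Step 2: rescaling.} Put $s=\sqrt{n/n_0}$ and $\gamma_k=\beta_k(B/\lambda)^k$, so $\gamma_0=1$. Under the hypothesis $\lambda\ge Bs$, the leading term becomes $B\gamma_{k-1}$. A lower-order term of lag $j+1\ge2$ carries the factor $(B/\lambda)^{j+1}$ times a coefficient of size $O(n)$. Using $\lambda^2\ge B^2n/n_0$ and $B\ge1$, each such term is at most $(2n_0+B)$ times $|\gamma_{k-1-j}|$. At this point the rewriting must produce a coefficient bound that is uniform in the lag, not merely linear in $j$.

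\emph{Step 3: comparison.} I would then prove by induction on $k$ that
\[
 \gamma_k\ge h_k>0\quad\text{and}\quad|\gamma_j|\le\tau_j
\]
for suitable majorants $\tau_j$. The nonnegative Hessenberg matrix $\mathsf H_r(M)$ governs exactly this two-sided comparison: the superdiagonal entries $i+1$ encode the factor $k$ on the left-hand side, and the constant $M$ below the diagonal bounds all lower-order terms. Running the recurrence through $\mathsf H_r(2n_0+B)$ should make the worst-case lower bound for $\gamma_k$ equal to $h_k$. For this, the lower-order terms must be estimated against the positive sequence $h$ itself, which requires that the errors stay dominated by the partial sums $\sum_{j\le k-2}h_j$ appearing in the definition of $h_k$. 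Reaching $k=r$ gives $\gamma_r\ge h_r>0$, hence $\beta_r\ne0$, contradicting $\chi(B_r)=0$. Since $\chi\ne\mathbf1$ was arbitrary, this yields $\Lambda<B\sqrt{n/n_0}$.

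The main obstacle is Step 3, together with the coefficient bound from Step 2. A naive induction only bounds $|\gamma_k-\text{(leading part)}|$, so it must be organized to use positivity of $h_1,\dots,h_{k-1}$. Concretely, the lower-order terms have to be absorbed into the $-(2n_0+B)\sum h_j$ term while signs are controlled. I would try first to obtain a clean uniform estimate $|c_j'|\le 2n+\lambda$ for the rewritten coefficients, and then to check that the comparison closes using only monotonicity of the recurrence in its lower-order coefficients.
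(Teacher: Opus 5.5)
Your outline (character recurrence, rescaling by $\sqrt{n/n_0}$, comparison with $h_k$) is pointed in the right direction, but there are two genuine gaps.

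The first is in Steps 1--2. The correct generating function is $\frac1{1-z}\prod_i\frac{1-z^2}{1-x_iz+z^2}$; your numerator $1+x_iz+z^2$ already fails for $\chi=\mathbf1$. Logarithmic differentiation gives $k\beta_k=\sum_{j=1}^k c_j\beta_{k-j}$ with $c_j=\lambda_{\chi^j}-2n\mathbf1_{2\mid j}$. So $c_1=\lambda_\chi$ exactly and no rewriting is needed. The lower coefficients, however, are values at the \emph{other} characters $\chi^j$, and $|\lambda_\chi|$ does not control them. The coordinatewise bound $|\chi^j(g_i)+\chi^j(-g_i)|\le 2$ only gives $|c_j|\le 4n$, which is too weak to reach $2n_0+B$. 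What is true is $|c_j|\le 2n+\Lambda$, using $\Lambda\ge1$ when $j$ is odd and $\chi^j$ is principal. Your estimate $(2n+\Lambda)(B/|\lambda_\chi|)^j\le 2n_0+B$ for $j\ge2$ then holds only when $|\lambda_\chi|=\Lambda$. You must therefore assume $\Lambda\ge B\sqrt{n/n_0}$ and take $\chi$ attaining $\Lambda$, rather than closing with ``$\chi$ was arbitrary.'' No sign twist exists in a group of odd order, but none is needed: after rescaling by $(B/\lambda_\chi)^k$ only $|c_j|$ enters.

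The second gap is the main one: Step 3 is not carried out, and the invariant you propose does not close. A lower bound $\gamma_i\ge h_i$ points the wrong way for the subtracted terms $-(2n_0+B)\sum_{i\le k-2}|\gamma_i|$. You give no majorants $\tau_i$ that tie the size of those terms to $\gamma_{k-1}$, and absolute comparisons are lossy, since $\gamma_k$ can far exceed $h_k$. The missing idea is to use homogeneity and compare ratios. Induct on $\gamma_k/\gamma_{k-1}\ge h_k/h_{k-1}>0$, starting from $\gamma_1=h_1=B$. Telescoping gives $\gamma_i/\gamma_{k-1}\le h_i/h_{k-1}$ for $i\le k-2$, and dividing the rescaled recurrence by $\gamma_{k-1}>0$ gives
\[
 \frac{k\gamma_k}{\gamma_{k-1}}
 \ge B-(2n_0+B)\sum_{i=0}^{k-2}\frac{\gamma_i}{\gamma_{k-1}}
 \ge B-(2n_0+B)\sum_{i=0}^{k-2}\frac{h_i}{h_{k-1}}
 =\frac{kh_k}{h_{k-1}}>0 .
\]
At $k=r$ this contradicts $\gamma_r=0$.

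With both repairs, your direct-recurrence argument is a valid and slightly more elementary alternative to the paper's proof. The paper packages the same comparison spectrally. It writes $\Lambda\le\rho(\mathsf H_r(2n+\Lambda))$ and certifies $\rho(\mathsf H_r(2n_0+B))<B$ with the positive test vector $(h_0,\dots,h_{r-1})$. It then passes to dimension $n$ by conjugating with $\operatorname{diag}(1,a,\dots,a^{r-1})$, $a=\sqrt{n/n_0}$, followed by a monotonicity argument in $\Lambda$.
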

\begin{proof}
Fix a nonprincipal character $\chi$.
The ball generating function and its logarithmic derivative give
\[
 \sum_{k\ge0}\chi(B_n(k))z^k
 =\exp\left(\sum_{j\ge1}\frac{c_jz^j}{j}\right),\qquad
 c_j=\lambda_{\chi^j}-2n\mathbf1_{2\mid j},
\]
and hence $k\chi(B_n(k))=\sum_{j=1}^kc_j\chi(B_n(k-j))$.
The tiling identity gives $\chi(B_n(r))=0$.
If $\Lambda<1$, then $\Lambda<1\le B\sqrt{n/n_0}$. Otherwise, for even $j$, a principal
$\chi^j$ gives $c_j=1$, and a nonprincipal $\chi^j$ gives
$|c_j|\le2n+\Lambda$. For odd $j$, the corresponding alternatives are
$|c_j|=2n+1\le2n+\Lambda$ and $|c_j|\le\Lambda$. Thus
$|c_j|\le2n+\Lambda$ in every case. Set $v_i=\chi(B_n(i))$ for $0\le i<r$, and set $v_r=0$.
Then $v_0=1$, and the coefficient recurrence gives
$
 \lambda_\chi v_i=(i+1)v_{i+1}
       -\sum_{j=0}^{i-1}c_{i-j+1}v_j
       $ for
       $
   0\le i<r.
$
Consequently $\lambda_\chi$ is an eigenvalue of the matrix with
superdiagonal $i+1$ and entry $-c_{i-j+1}$ in position $(i,j)$
for $j<i$. Taking absolute values entrywise and then the maximum
over nonprincipal characters gives
$\Lambda\le\rho\bigl(\mathsf H_r(2n+\Lambda)\bigr)$.
Here $\rho$ denotes the Perron root of a nonnegative irreducible
matrix. To justify the comparison directly, an eigenvector $\mathbf v\ne\mathbf{0}$
satisfies $|\lambda_\chi||\mathbf v|\le \mathsf H_r(2n+\Lambda)|\mathbf v|$; multiplication by a
strictly positive left Perron vector gives $|\lambda_\chi|\le\rho$.

Set $M_0=2n_0+B$ and $\mathbf w=(h_0,\ldots,h_{r-1})^{\mathsf T}>\mathbf{0}$.
The defining recurrence gives
\[
 (\mathsf H_r(M_0)\mathbf w)_i=Bh_i\quad(0\le i<r-1),\qquad
 (\mathsf H_r(M_0)\mathbf w)_{r-1}=Bh_{r-1}-rh_r<Bh_{r-1}.
\]
Multiplying by a positive left Perron vector proves
$\rho(\mathsf H_r(M_0))<B$.
Write $a=\sqrt{n/n_0}\ge1$ and
$\mathsf D=\operatorname{diag}(1,a,\ldots,a^{r-1})$. 
Conjugating by $\mathsf D$ and applying the bound $2n+aB\le a^2(2n_0+B)=a^2M_0$ yields the entrywise estimates
$
 (\mathsf D^{-1}\mathsf H_r(2n+aB)\mathsf D)_{i,i+1}=a(i+1)
 $
 and $
 (\mathsf D^{-1}\mathsf H_r(2n+aB)\mathsf D)_{i,i-k}
 =(2n+aB)a^{-k}\le a^{2-k}M_0\le aM_0$
 for $k\ge1$.
Consequently
$\rho(\mathsf H_r(2n+aB))\le a\rho(\mathsf H_r(M_0))<aB$.
If $\Lambda\ge aB$, put $t=(2n+\Lambda)/(2n+aB)\ge1$.
Entrywise,
$
 \mathsf H_r(2n+\Lambda)\le t\mathsf H_r(2n+aB)$
 and 
 $
 \rho(\mathsf H_r(2n+\Lambda))<taB\le\Lambda,
$
since $aB(2n+\Lambda)\le\Lambda(2n+aB)$ is equivalent to
$2naB\le2n\Lambda$. This contradicts $\Lambda\le\rho(\mathsf H_r(2n+\Lambda))$.
\end{proof}

The same matrix comparison gives a spectral bound without endpoint certificates.
\begin{corollary}\label{thm:uspec}
For $r\ge2$, one has
$\Lambda\le3(r-1)+2\sqrt{2(r-1)(n+r-1)}$.
\end{corollary}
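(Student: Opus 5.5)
The plan is to reuse the first half of the proof of Lemma~\ref{lem:positive-hessenberg}, namely the comparison $\Lambda\le\rho\bigl(\mathsf H_r(2n+\Lambda)\bigr)$, and then to bound the Perron root by an explicit diagonal similarity rather than by an endpoint certificate. First, if $\Lambda<1$ there is nothing to prove, because the right-hand side is at least $3(r-1)\ge3$ for $r\ge2$. So assume $\Lambda\ge1$. Then the argument in that proof applies verbatim: $|c_j|\le2n+\Lambda$ for every $j$, and an eigenvector argument with a positive left Perron vector gives $\Lambda\le\rho(\mathsf H_r(M))$ with $M=2n+\Lambda$.

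Next I bound $\rho(\mathsf H_r(M))$ by the maximal row sum of $\mathsf D^{-1}\mathsf H_r(M)\mathsf D$, where $\mathsf D=\operatorname{diag}(1,a,\ldots,a^{r-1})$ and $a>1$ is a free parameter. Conjugation preserves the spectrum, and for a nonnegative matrix the Perron root is at most the maximal row sum. The entry $(i,j)$ becomes $(\mathsf H_r(M))_{i,j}a^{j-i}$, so:
\begin{itemize}
\item the superdiagonal entry is $(i+1)a\le(r-1)a$;
\item the entry at distance $k\ge1$ below the diagonal is $Ma^{-k}$, and these sum along a row to at most $M\sum_{k\ge1}a^{-k}=M/(a-1)$.
\end{itemize}
Hence every row sum is at most $(r-1)a+M/(a-1)$. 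Writing $a=1+t$ with $t>0$, this equals $(r-1)+(r-1)t+M/t$. Choosing $t=\sqrt{M/(r-1)}$ gives
\[
 \rho(\mathsf H_r(M))\le(r-1)+2\sqrt{(r-1)M}.
\]

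It remains to solve the resulting self-referential inequality $\Lambda\le(r-1)+2\sqrt{(r-1)(2n+\Lambda)}$. Put $x=\Lambda-(r-1)$. If $x\le0$ the claim is immediate. Otherwise squaring gives
\[
 x^2\le4(r-1)x+4(r-1)(2n+r-1),
\]
a quadratic inequality in $x$. Its larger root is
\[
 x\le2(r-1)+\sqrt{4(r-1)^2+4(r-1)(2n+r-1)}=2(r-1)+2\sqrt{2(r-1)(n+r-1)},
\]
because $(r-1)+(2n+r-1)=2(n+r-1)$. Adding back $r-1$ yields exactly the stated bound $\Lambda\le3(r-1)+2\sqrt{2(r-1)(n+r-1)}$.

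The only delicate point is making sure the comparison $\Lambda\le\rho(\mathsf H_r(2n+\Lambda))$ really holds under the hypotheses here. Its derivation used $\Lambda\ge1$ to absorb the principal-power cases $c_j=1$ and $|c_j|=2n+1$ into $|c_j|\le2n+\Lambda$; that is handled by the initial case split. The row-sum bound and the quadratic solution are routine.
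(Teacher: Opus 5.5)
Your proposal is correct and follows the paper's proof essentially step for step. Both arguments use the comparison $\Lambda\le\rho(\mathsf H_r(2n+\Lambda))$ from Lemma~\ref{lem:positive-hessenberg}, the diagonal similarity giving the row-sum bound $(r-1)t+M/(t-1)$, the choice $t=1+\sqrt{M/(r-1)}$, and the same quadratic rearrangement; you take the larger root where the paper completes the square as $(\Lambda-3(r-1))^2\le 8(r-1)(n+r-1)$.
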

\begin{proof}
For $\Lambda<1$ the conclusion is immediate. Otherwise set
$M=2n+\Lambda$ and $\mathsf D_t=\operatorname{diag}(1,t,\ldots,t^{r-1})$.
For $t>1$, the matrix comparison already proved gives
\[
 \Lambda\le\rho(\mathsf H_r(M))
 \le\|\mathsf D_t^{-1}\mathsf H_r(M)\mathsf D_t\|_\infty
 \le (r-1)t+M\sum_{j\ge1}t^{-j}=(r-1)t+\frac M{t-1}.
\]
Taking $t=1+\sqrt{M/(r-1)}$ gives $\Lambda\le r-1+2\sqrt{(r-1)(2n+\Lambda)}$.
If $\Lambda\le r-1$ there is nothing further to prove. 
If $\Lambda>r-1$, the inequality $(\Lambda-(r-1))^2\le4(r-1)(2n+\Lambda)$ algebraically rearranges to
$
 (\Lambda-3(r-1))^2\le8(r-1)(n+r-1).
$
The stated bound follows by taking the positive square root.
\end{proof}

For fixed radii, successive coefficient ratios give a direct way to
verify the positivity required by the matrix criterion.
For $r\ge2$, $n_0\ge1$, $B\ge1$, $D>0$, and $a_1,\ldots,a_r>0$, put
\[
 \delta_k=a_{k-1}\{D(B+k-1)-a_k\}
              -D^2(k-1)(2n_0+2B)
 \qquad(2\le k\le r).
\]

\begin{lemma}\label{lem:spectral-ratio-certificate}
If $a_1=DB$ and $\delta_k\ge0$ for
$2\le k\le r$, then the hypotheses of
Lemma~\ref{lem:positive-hessenberg} hold.
\end{lemma}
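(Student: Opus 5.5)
The plan is to turn the recurrence defining $h_k$ into a three-term recurrence. The hypotheses $\delta_k\ge0$ then become exactly the induction step for a lower bound on the successive ratios $h_k/h_{k-1}$, with $a_k/(kD)$ as the guaranteed lower bound.

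First, for $2\le k\le r$, subtract the defining relation at index $k-1$, namely $(k-1)h_{k-1}=Bh_{k-2}-(2n_0+B)\sum_{j=0}^{k-3}h_j$, from the relation at index $k$. For $k=2$ the empty sum is zero and the relation at index $1$ reads $h_1=Bh_0$. The partial sums telescope to the single term $h_{k-2}$, giving
\[
 kh_k=(B+k-1)h_{k-1}-(2n_0+2B)h_{k-2}\qquad(2\le k\le r).
\]

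Next, prove by induction on $k\ge1$ that
\[
 h_k\ge\frac{a_k}{kD}\,h_{k-1}>0 .
\]
For $k=1$ we have $h_1=Bh_0=B=a_1/D$, using the hypothesis $a_1=DB$.

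For the step with $k\ge2$, the induction hypothesis gives $h_{k-1}>0$ and $h_{k-2}\le D(k-1)h_{k-1}/a_{k-1}$. Note that $h_{k-2}>0$ as well, since either $k=2$ and $h_0=1$, or it was established at the previous stage. Since the coefficient $-(2n_0+2B)$ is negative, we may substitute this upper bound for $h_{k-2}$ into the three-term recurrence:
\[
 kh_k\ge h_{k-1}\Bigl((B+k-1)-\frac{D(k-1)(2n_0+2B)}{a_{k-1}}\Bigr)
 =\frac{h_{k-1}}{Da_{k-1}}\bigl(Da_{k-1}(B+k-1)-D^2(k-1)(2n_0+2B)\bigr).
\]
By the definition of $\delta_k$, the bracket on the right equals $\delta_k+a_{k-1}a_k$, so $\delta_k\ge0$ yields $kh_k\ge h_{k-1}a_k/D$. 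Because $a_k>0$ and $h_{k-1}>0$, this gives $h_k>0$, and it is the ratio bound at index $k$.

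Running the induction up to $k=r$ gives $h_1,\ldots,h_r>0$, which is the only hypothesis of Lemma~\ref{lem:positive-hessenberg} beyond $r\ge2$, $B\ge1$, $n_0\ge1$; those three conditions are assumed directly.

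The only delicate point is choosing the normalization of the inductive invariant so that it matches the shape of $\delta_k$. The factor $k-1$ multiplying $D^2$ in $\delta_k$ indicates that the invariant must be $h_k\ge a_k h_{k-1}/(kD)$ rather than $h_k\ge a_kh_{k-1}/D$. Once that normalization is fixed, the argument is a direct verification.
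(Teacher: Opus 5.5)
Your proof is correct and is essentially the paper's argument. The paper derives the same three-term recurrence, rescales to $\widetilde h_k=k!h_k$, and inducts on $\widetilde h_k/\widetilde h_{k-1}\ge a_k/D$, which is exactly your invariant $h_k\ge a_kh_{k-1}/(kD)$ without the factorial normalization.
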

\begin{proof}
For the sequence in Lemma~\ref{lem:positive-hessenberg}, put
$\widetilde h_k=k!h_k$. Subtracting consecutive recurrences gives
\[
 kh_k-(k-1)h_{k-1}
 =B(h_{k-1}-h_{k-2})-(2n_0+B)h_{k-2}.
\]
Multiplication by $(k-1)!$ yields
\[
 \widetilde h_0=1,\quad \widetilde h_1=B,\qquad
 \widetilde h_k=(B+k-1)\widetilde h_{k-1}-(k-1)(2n_0+2B)\widetilde h_{k-2}\quad(k\ge2).
\]
Starting from $\widetilde h_1/\widetilde h_0=B=a_1/D>0$, induction gives
\[
\begin{aligned}
 \frac{\widetilde h_k}{\widetilde h_{k-1}}
 &=B+k-1-\frac{(k-1)(2n_0+2B)}{\widetilde h_{k-1}/\widetilde h_{k-2}}\\
 &\ge B+k-1-\frac{D(k-1)(2n_0+2B)}{a_{k-1}}\\
 &=\frac{a_k}{D}+\frac{\delta_k}{Da_{k-1}}>0.
\end{aligned}
\]
Hence all $\widetilde h_k$, and therefore all $h_k=\widetilde h_k/k!$, are positive.
\end{proof}

Applying this ratio criterion to the following rational data gives the
spectral estimates needed at the smaller radii.
Define $(n_r,\beta_r,D_r)$ for $4\le r\le16$ by
\[
\begin{array}{c|ccccccccccccc}
 r&4&5&6&7&8&9&10&11&12&13&14&15&16\\ \hline
 n_r&15&21&28&32&45&50&66&72&78&91&105&120&136\\
 \beta_r&17&25&34&165/4&53&62&75&84&94&106&119&132&145\\
 D_r&4&2&2&4&10&1&2&10&1&2&1&2&10
\end{array}
\]

\begin{lemma}\label{lem:explicit-spectral-certificates}
For every $4\le r\le16$, Lemma~\ref{lem:positive-hessenberg}
applies with $(n_0,B)=(n_r,\beta_r)$.
\end{lemma}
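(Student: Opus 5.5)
The plan is to apply Lemma~\ref{lem:spectral-ratio-certificate} separately for each $4\le r\le16$, with $n_0=n_r$, $B=\beta_r$, $D=D_r$, and an explicitly chosen positive ratio sequence $a_1,\ldots,a_r$. That lemma reduces the positivity of $h_1,\ldots,h_r$ required by Lemma~\ref{lem:positive-hessenberg} to the finitely many rational inequalities $\delta_k\ge0$, $2\le k\le r$, together with $a_1=DB$.

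The natural choice is to take $a_k$ to be $D$ times a rational lower bound for $\widetilde h_k/\widetilde h_{k-1}$, rounded down to a convenient denominator. Here $\widetilde h_k$ is the sequence $\widetilde h_0=1$, $\widetilde h_1=B$,
\[
\widetilde h_k=(B+k-1)\widetilde h_{k-1}-(k-1)(2n_0+2B)\widetilde h_{k-2}
\]
from the proof of Lemma~\ref{lem:spectral-ratio-certificate}. Concretely, I would set $a_1=D_r\beta_r$. For $k\ge2$ I would take
\[
a_k=\Bigl\lfloor D\bigl(B+k-1\bigr)-\frac{D^2(k-1)(2n_0+2B)}{a_{k-1}}\Bigr\rfloor ,
\]
or the floor to a fixed denominator. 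This makes $\delta_k\ge0$ automatic, and the only real check is that every $a_k$ stays positive. Equivalently, I would compute $\widetilde h_k$ exactly in rational arithmetic for $k\le r$ and confirm that each is positive. Since $r\le16$ and all data are rational, either form is a finite exact computation.

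For the write-up, I would tabulate the integers $a_1,\ldots,a_r$ for each of the thirteen radii. I would then verify $\delta_k\ge0$ entrywise, which is an integer computation once $D$ clears the denominator of $\beta_7=165/4$. The scaling $D_r$ is presumably chosen to make the $a_k$ integral and the margins visible.

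The main obstacle is that the parameters are tuned near the threshold. $\beta_r$ is roughly the smallest $B$ for which the recurrence stays positive through index $r$. So the last few ratios $\widetilde h_k/\widetilde h_{k-1}$, for $k$ near $r$, become small, and rounding the $a_k$ down could drive some $a_k$ nonpositive or make a later $\delta_k$ negative. If that happens, I would round less aggressively, for example by increasing the denominator (effectively increasing $D$), or by choosing $a_k$ equal to $D\widetilde h_k/\widetilde h_{k-1}$ exactly, which gives $\delta_k=0$ for every $k$. In that case the proof reduces to the exact positivity of $\widetilde h_1,\ldots,\widetilde h_r$ for the thirteen parameter triples.
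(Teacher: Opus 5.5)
Your plan is the paper's proof. With $D=D_r$, the certificates the paper tabulates are exactly your greedy floors $a_k=\lfloor D(B+k-1)-D^2(k-1)(2n_0+2B)/a_{k-1}\rfloor$; for instance, $r=4$ gives $(68,56,39,1)$ with $(\delta_2,\delta_3,\delta_4)=(64,24,9)$. These floors stay positive through $k=r$ for every $4\le r\le16$, so your fallback is never needed; what your write-up lacks is only those thirteen explicit rows, which are the entire content of the verification.
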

\begin{proof}
For each $r$, take $(n_0,B,D)=(n_r,\beta_r,D_r)$.
In the table, the two lines for each radius give $(a_1,\ldots,a_r)$
and the corresponding $(\delta_2,\ldots,\delta_r)$ from
Lemma~\ref{lem:spectral-ratio-certificate}, respectively.
\begingroup
\small
\renewcommand{\arraystretch}{1}
\setlength{\LTpre}{3pt}
\setlength{\LTpost}{3pt}
\begin{longtable}{@{}r@{\quad}l@{}}
 $r$&$(a_1,\ldots,a_r)$ and $(\delta_2,\ldots,\delta_r)$\\
 \midrule
\endhead
 4&$(68, 56, 39, 1)$\\*
   &$(64, 24, 9)$\\[2pt]
 5&$(50, 44, 37, 26, 1)$\\*
   &$(32, 12, 6, 10)$\\[2pt]
 6&$(68, 62, 56, 47, 33, 2)$\\*
   &$(48, 0, 24, 37, 28)$\\[2pt]
 7&$(165, 154, 142, 127, 107, 75, 1)$\\*
   &$(131, 86, 68, 22, 50, 36)$\\[2pt]
 8&$(530, 503, 472, 435, 389, 328, 231, 6)$\\*
   &$(10, 34, 200, 335, 28, 152, 14)$\\[2pt]
 9&$(62, 59, 56, 53, 49, 44, 37, 26, 1)$\\*
   &$(24, 24, 0, 5, 7, 20, 23, 2)$\\[2pt]
 10&$(150, 144, 138, 131, 123, 114, 102, 86, 61, 1)$\\*
   &$(72, 48, 66, 73, 18, 72, 60, 6, 35)$\\[2pt]
 11&$(840, 812, 783, 750, 713, 671, 621, 558, 472, 335, 8)$\\*
   &$(720, 124, 360, 450, 147, 9, 192, 384, 40, 220)$\\[2pt]
 12&$(94, 91, 88, 85, 81, 77, 73, 68, 61, 52, 37, 2)$\\*
   &$(32, 40, 24, 69, 62, 15, 1, 36, 15, 44, 27)$\\[2pt]
 13&$(212, 206, 200, 194, 187, 179, 171, 161, 149, 134, 114, 81, 2)$\\*
   &$(120, 144, 72, 98, 161, 31, 83, 111, 120, 52, 106, 42)$\\[2pt]
 14&$(119, 116, 113, 110, 106, 102, 98, 94, 88, 82, 74, 63, 45, 2)$\\*
   &$(28, 32, 12, 78, 92, 66, 0, 82, 16, 30, 30, 42, 26)$\\[2pt]
 15&$(264, 258, 252, 246, 239, 231, 223, 214, 204, 193, 179, 162, 138, 100, 9)$\\*
   &$(96, 96, 0, 54, 197, 147, 160, 136, 12, 105, 20, 108, 12, 76)$\\[2pt]
 16&$(1450, 1421, 1390, 1358, 1324, 1287, 1247, 1204, 1156, 1102, 1040, 965, 871, 741, 528, 3)$\\*
   &$(350, 1280, 980, 628, 1012, 1281, 652, 696, 528, 20, 600, 135, 169, 142, 216)$\\
\end{longtable}
\endgroup
Every $a_k$ is positive, every $\delta_k$ is nonnegative,
and $a_1=D_r\beta_r$. Hence
Lemma~\ref{lem:spectral-ratio-certificate} applies in each case.
\end{proof}

\section{Uniform exclusion}\label{sec:uniform-exclusion}
A signed-orbit inequality first excludes the smaller dimensions. We then apply Section~\ref{sec:boundary-methods} to the three remainder ranges $1\le c\le r-1$, $c=0$, and $c=r$.

\subsection{A signed-orbit obstruction}
We compare four signed-permutation orbits using weights of total mass zero.

Fix $r\ge3$, $n\ge r+3$, and let $\mathcal O_-$, $\mathcal O_0$,
$\mathcal O_2$, $\mathcal O_3$ be the signed-permutation orbits of $(1^{r+1})$, $(1^{r+3})$, $(1^{r+1},2)$, and $(1^r,3)$, respectively.
Choosing the nonzero coordinates, their signs, and the distinguished
coordinate in the last two orbits gives
\[
\begin{array}{c|cccc}
j&-&0&2&3\\\hline
M_j=|\mathcal O_j|&2^{r+1}\binom n{r+1}&2^{r+3}\binom n{r+3}
 &2^{r+2}n\binom{n-1}{r+1}&2^{r+1}n\binom{n-1}r
\end{array}
\]
For $\|\mathbf x\|_1=2r+1$, signed-permutation symmetry lets us assume $x_i\ge0$. Put $m=|\{i:x_i>0\}|$, $a=|\{i:x_i\ge2\}|$, $b=|\{i:x_i\ge3\}|$, and $I_j=|\mathcal O_j\cap(\mathbf x+B_n(r))|$.
The identity
\[
\|\mathbf x-\mathbf y\|_1=\|\mathbf x\|_1+\|\mathbf y\|_1
 -2\sum_{i:y_i>0}\min(x_i,y_i)
\]
shows that a negative-orbit vector contributes exactly when all its
$r+1$ entries are positive and inside the support. Thus
$I_-=\binom m{r+1}$.
For a positive orbit, its norm is $r+3$, and the allowed defect
$\|\mathbf y\|_1-\sum_{y_i>0}\min(x_i,y_i)$ is at most one.
A good unit entry is positive and inside the support; a bad unit
entry is either negative inside it or has either sign outside it.
Consequently the three counts are
\begin{align*}
I_0&=\binom m{r+3}
 +\{m-r-2+2(n-m)\}\binom m{r+2},\\
I_2&=(m-a)\binom{m-1}{r+1}
 +a\left[\binom{m-1}{r+1}
 +\{m-r-1+2(n-m)\}\binom{m-1}r\right]\\
 &=m\binom{m-1}{r+1}+a(2n-m-r-1)\binom{m-1}r,\\
I_3&=(a-b)\binom{m-1}r
 +b\left[\binom{m-1}r
 +\{m-r+2(n-m)\}\binom{m-1}{r-1}\right]\\
 &=a\binom{m-1}r+b(2n-m-r)\binom{m-1}{r-1}.
\end{align*}
In the second count, a positive entry two at a coordinate equal to
one spends the defect allowance; at a coordinate at least two it
leaves room for one bad unit. The third count has the same alternatives
for an entry three at coordinates equal to two or at least three.
A nonexistent choice contributes zero.

When $m\ge r+1$, set $t=m-r-1$ and $v=n-r-1$. Then
$0\le t\le\min(r,v)$, $v\ge2$, and division by $I_-/M_-$ gives
\begin{align*}
F_0=\frac{I_0/M_0}{I_-/M_-}
 &=\frac{t\{(2v-t-1)(r+3)+t-1\}}{4v(v-1)},\\
F_2=\frac{I_2/M_2}{I_-/M_-}
 &=\frac t{2v}+\frac{a(r+1)(2v-t)}{2v(r+t+1)},\\
F_3=\frac{I_3/M_3}{I_-/M_-}
 &=\frac a{r+t+1}+\frac{br(2v-t+1)}{(r+t+1)(t+1)}.
\end{align*}
These expressions follow by substituting the orbit cardinalities and cancelling binomial coefficients.
If $b=0$, then $2r+1=m+a$, so $a=r-t$.
If $b\ge1$, then $a\ge b\ge1$ and
$2r+1\ge m+a+b\ge r+3+t$, so $t\le r-2$.
Since the coefficients of $a,b$ in $F_2,F_3$ are nonnegative,
only the following two cases need to be considered:
\[
(a,b)=(r-t,0),\quad 0\le t\le r;
\qquad
(a,b)=(1,1),\quad0\le t\le r-2.
\]
The additional restriction $t\le v$ remains in force whenever it is
needed in a lower-dimensional calculation.
For positive weights $w_0,w_2,w_3$, write
$\mathscr F(t,a,b)=\sum_{j=0,2,3}w_j(F_j-1)$.
These orbit ratios give the following signed-mass criterion.

\begin{lemma}\label{lem:four-orbit-duality}
If $\mathscr F(t,a,b)>0$ for all integers $0\le t\le\min(r,v)$
in the first case and $0\le t\le\min(r-2,v)$ in the second,
then $B_n(r)$ has no lattice tiling.
\end{lemma}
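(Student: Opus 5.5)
The plan is a double-counting (signed-mass) argument over the lattice $L$, using the fact that every point of $\Z^n$ lies in exactly one translate $\boldsymbol\ell+B_n(r)$ with $\boldsymbol\ell\in L$. For each orbit $\mathcal O_j$ ($j\in\{-,0,2,3\}$), summing over lattice points gives
\[
 \sum_{\boldsymbol\ell\in L}I_j(\boldsymbol\ell)=M_j,
\]
where now $I_j(\boldsymbol\ell)=|\mathcal O_j\cap(\boldsymbol\ell+B_n(r))|$. Dividing by $M_j$, weighting by $w_j$ and subtracting the normalized $\mathcal O_-$ count, I would obtain the zero-mass identity
\[
 \sum_{\boldsymbol\ell\in L}\Delta(\boldsymbol\ell)=0,\qquad
 \Delta(\boldsymbol\ell)=\sum_{j=0,2,3}w_j\Bigl(\frac{I_j(\boldsymbol\ell)}{M_j}-\frac{I_-(\boldsymbol\ell)}{M_-}\Bigr).
\]
The goal is then to show that every $\Delta(\boldsymbol\ell)\ge0$ and that at least one is strictly positive. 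That contradiction proves the statement.

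Next I classify the lattice points by norm.
\begin{itemize}
\item $\boldsymbol\ell=0$: all four orbits have norm at least $r+1$, so none meets $B_n(r)$ and $\Delta(0)=0$.
\item $\|\boldsymbol\ell\|_1\ge2r+4$: the orbits have norm at most $r+3$, so every $I_j$ vanishes and $\Delta(\boldsymbol\ell)=0$.
\item $\|\boldsymbol\ell\|_1\in\{2r+2,2r+3\}$: a point of $\mathcal O_-$ has norm $r+1$ and so lies at distance at least $r+1$ from $\boldsymbol\ell$. Hence $I_-=0$, and $\Delta(\boldsymbol\ell)\ge0$ because $I_0,I_2,I_3\ge0$.
\end{itemize}
Since the minimum nonzero norm in $L$ is $2r+1$, only the case $\|\boldsymbol\ell\|_1=2r+1$ remains. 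By signed-permutation invariance of all four orbits we may take $\boldsymbol\ell=\mathbf x$ with $x_i\ge0$, which is exactly the setting in which $I_-,I_0,I_2,I_3$ were computed above.

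In the norm-$(2r+1)$ case there are two sub-cases.
\begin{itemize}
\item If $m\le r$, then $I_-=\binom m{r+1}=0$ and again $\Delta\ge0$.
\item If $m\ge r+1$, then $I_->0$ and $\Delta(\mathbf x)=(I_-/M_-)\,\mathscr F(t,a,b)$ with $t=m-r-1$. Here $t\le\min(r,v)$ because $m\le\min(2r+1,n)$.
\end{itemize}
To reduce to the hypothesis, I use the dichotomy already recorded. If $b=0$, then $a=r-t$ exactly. If $b\ge1$, then $a\ge b\ge1$ and $t\le r-2$. Since $F_0$ does not depend on $a,b$, and $F_2,F_3$ are nondecreasing in $a$ and $b$ (their coefficients are nonnegative because $2v-t\ge0$), we get $\mathscr F(t,a,b)\ge\mathscr F(t,1,1)$ in the second case. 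So the hypothesis gives $\Delta(\mathbf x)>0$ in both cases.

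Finally, I need one strictly positive term. Take any $\mathbf y\in\mathcal O_-$, which is nonempty since $n\ge r+3$, and let $\boldsymbol\ell\in L$ be its covering center. Then $\boldsymbol\ell\ne0$, and $\|\boldsymbol\ell\|_1\le\|\mathbf y\|_1+r=2r+1$, so $\|\boldsymbol\ell\|_1=2r+1$ and $I_-(\boldsymbol\ell)\ge1$. By the previous paragraph $\Delta(\boldsymbol\ell)>0$, contradicting the zero-mass identity.

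I expect the only delicate step to be checking that the monotonicity reduction to the two listed $(a,b)$ cases is legitimate on the whole admissible range. In particular, the coefficient $2v-t$ must stay nonnegative, which follows from $t\le v$. The rest is bookkeeping.
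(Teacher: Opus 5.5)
Your proposal is correct and is essentially the paper's argument. The paper writes your zero-mass identity as $\sum_{\boldsymbol\ell\in L}(g*\mathbf 1_{B_n(r)})(\boldsymbol\ell)=\sum_{\mathbf y}g(\mathbf y)=0$ for the signed weight $g=\sum_j w_j(\mathbf 1_{\mathcal O_j}/M_j-\mathbf 1_{\mathcal O_-}/M_-)$. It then uses the same steps you do: the norm classification (only norm $2r+1$ can carry $I_->0$), the reduction to the two $(a,b)$ cases by monotonicity in $a,b$, and a covering center of a point of $\mathcal O_-$ to produce the strictly positive summand.
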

\begin{proof}
Put
$g=\sum_{j=0,2,3}w_j\bigl(\mathbf1_{\mathcal O_j}/M_j
 -\mathbf1_{\mathcal O_-}/M_-\bigr)$.
Then $g=0$ on $B_n(r)$ and $\sum g=0$. At norm $2r+1$,
\[
(g*\mathbf1_{B_n(r)})(\mathbf x)
=\frac{I_-}{M_-}\mathscr F(t,a,b)>0\quad\text{if }I_->0;
\]
if $I_-=0$, the convolution is nonnegative. At larger norms the
negative orbit contributes nothing, so the convolution is again
nonnegative.
If a subgroup $L$ tiles, every nonzero $\boldsymbol{\ell}\in L$ has
$\|\boldsymbol{\ell}\|_1\ge2r+1$, since $B_n(r)-B_n(r)=B_n(2r)$.
Choose $\mathbf y\in\mathcal O_-$. Its unique covering center
$\boldsymbol{\ell}$ is nonzero, has $I_->0$, and satisfies
$\|\boldsymbol{\ell}\|_1\le\|\mathbf y\|_1+r=2r+1$.
The summand at this center is strictly positive; all other summands
are nonnegative, including the zero summand at the origin. However,
\[
0<\sum_{\boldsymbol{\ell}\in L}(g*\mathbf1_{B_n(r)})(\boldsymbol{\ell})
=\sum_{\mathbf y}g(\mathbf y)
 \sum_{\boldsymbol{\ell}\in L}\mathbf1_{B_n(r)}(\boldsymbol{\ell}-\mathbf y)
=\sum_{\mathbf y}g(\mathbf y)=0,
\]
a contradiction. All sums containing $g$ are finite.
\end{proof}

We shall use one elementary endpoint principle. If $p(y)=a_0+a_1y+a_2y^2-a_3y^3$ and $a_0,a_1,a_3>0$,
then
\[
\left(\frac{p(y)}{y^2}\right)'
=-\frac{2a_0}{y^3}-\frac{a_1}{y^2}-a_3<0\quad(y>0).
\]
If $p(Y)>0$, then $\frac{p(y)}{y^2}\ge\frac{p(Y)}{Y^2}>0$ $(0<y\le Y)$, and $p(0)=a_0>0$.
This reduces each cubic below to its value and derivative at the
lower endpoint and its value at the upper endpoint.
The following weights satisfy that criterion throughout the geometric range needed below. Put $N_3=7$ and $N_r=r(r+1)/2+2$ for $r\ge4$.

\begin{proposition}\label{thm:uniform-triangular-orbits}
There is no lattice tiling by $B_n(r)$ whenever $r\ge3$ and $3\le n\le N_r$.
\end{proposition}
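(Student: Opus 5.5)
First I dispose of the small dimensions with earlier results. For $3\le n\le5$, Lemma~\ref{lem:known-small} applies for every $r\ge2$. For $n\ge6$ with $n\le r+2$, we have $r\ge n-2\ge n/\sqrt2-3/(2\sqrt2)-1/2$, because $n(1-1/\sqrt2)\ge 6(1-1/\sqrt2)>2-3/(2\sqrt2)-1/2$; hence Lemma~\ref{lem:known-post} excludes these cases. This leaves $n\ge\max\{6,r+3\}$ with $n\le N_r$, and in this range Lemma~\ref{lem:four-orbit-duality} is available. For $r=3$ the range is only $n\in\{6,7\}$. Here I would fix explicit weights $(w_0,w_2,w_3)$ and check by hand the finitely many values of $\mathscr F(t,r-t,0)$ and $\mathscr F(t,1,1)$, with $v=n-4\in\{2,3\}$ and $t\le\min(3,v)$, respectively $t\le\min(1,v)$. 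If $n=6$ is awkward, Lemma~\ref{lem:known-post} also applies there, since $6/\sqrt2-1.56<3$.

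For $r\ge4$ I would choose weights depending on $r$ and $v$. The shape is suggested by the structure of the $F_j$. The term $F_0$ is quadratic in $t$ and vanishes at $t=0$. The term $F_2$ with $a=r-t$ equals $t/(2v)+(r-t)(r+1)(2v-t)/(2v(r+t+1))$, which is close to $(r+1)$ at $t=0$ and decreases in $t$. The term $F_3$ with $(a,b)=(r-t,0)$ equals $(r-t)/(r+t+1)$, which is about $1$ at $t=0$. In the first case the danger is intermediate $t$, where all three ratios may drop below one. In the second case, $(a,b)=(1,1)$, the term $F_3$ contains $r(2v-t+1)/((r+t+1)(t+1))$, which is large for small $t$, while $F_0$ covers large $t$.

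After multiplying by the positive denominators $4v(v-1)(r+t+1)(t+1)$, each of the two constraints becomes a polynomial in $t$ of degree at most three with coefficients depending on $r,v$. I would choose the weights as simple rational functions of $r$ and $v$, for instance $w_0$ of order $1$, $w_2$ of order $v/r^2$ and $w_3$ of order $1$, tuned so that the cubic has the form $a_0+a_1y+a_2y^2-a_3y^3$ in the shifted variable $y=t-t_{\min}$, with $a_0,a_1,a_3>0$. The endpoint principle stated just before the proposition then reduces positivity on the whole interval to three checks: the value and derivative at the lower endpoint, and the value at the upper endpoint $t=\min(r,v)$, or $\min(r-2,v)$ in the second case. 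The condition $n\le N_r=r(r+1)/2+2$, that is $v\le r(r-1)/2+1$, is what should make the upper-endpoint value positive: as $v$ grows, $F_0$ and the $t$-part of $F_2$ both decay like $t/v$, and the margin should close near $v\approx r^2/2$.

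The main obstacle is finding a single weight family that works uniformly for all $r\ge4$ and all $v$ in $[\max\{2,3-?\},\,r(r-1)/2+1]$. In particular it must handle both regimes $v\le r$, where the endpoint is $t=v$, and $v>r$, where the endpoint is $t=r$. I would first try weights proportional to the reciprocals of the $t=0$ deficits and then verify the sign conditions on the coefficients $a_i$ as polynomial inequalities in $(r,v)$. I would split at $v=r$ and verify each piece by expanding in $r$ with $v$ bounded by its range. If finitely many small $r$ (say $r\le 6$) resist the uniform choice, I would check them separately with tailored numerical weights, as in the case $r=3$.
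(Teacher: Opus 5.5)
The preliminary reductions are fine. Lemma~\ref{lem:known-small} handles $n\le5$. Your inequality for Lemma~\ref{lem:known-post} reduces to $n\ge3/2$, so it covers $6\le n\le r+2$. At $r=3$ only $n=7$ needs Lemma~\ref{lem:four-orbit-duality}; the paper uses $(w_0,w_2,w_3)=(15,10,4)$ there.

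\textbf{The gap.} For $r\ge4$ you supply no weights, and the verification scheme you propose fails for \emph{every} choice of positive weights. You want each branch, after clearing denominators, to be a cubic in $t$ with positive value and positive $t$-derivative at $t=0$.

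In the first branch put $P(t)=4v(v-1)(r+t+1)\,\mathscr F(t,r-t,0)$. At $(r,n)=(4,N_4)=(4,12)$, so $v=7$, the orbit ratios give
\[
 P(0)=168\,(15w_2-5w_0-w_3),\qquad P'(0)=282\,w_0-1188\,w_2-336\,w_3 .
\]
Thus $P(0)>0$ forces $w_2>w_0/3$, and then $P'(0)<-114\,w_0<0$. The same incompatibility occurs at $n=N_r$ for every $r\ge4$.

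In the second branch the cleared numerator $P_1(t)=4v(v-1)(r+t+1)(t+1)\,\mathscr F(t,1,1)$ is quartic in $t$, not cubic, because $w_0F_0$ is already quadratic in $t$. Moreover
\[
 P_1'(0)=4v(v-1)\Bigl\{w_0\Bigl[\frac{(r+1)\{(2v-1)(r+3)-1\}}{4v(v-1)}-(r+2)\Bigr]-w_2-(2r+1)w_3\Bigr\}.
\]
The coefficients of $w_0,w_2,w_3$ here are all negative whenever $r\ge5$ and $n\ge2r$.

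The obstruction is structural. $F_2$ in the first branch and $F_3$ in the second fall steeply in $t$, so the binding constraints sit at the right end of the $t$-interval. At $(t,v)=(r,v_{\max})$, positivity reads
\[
(3r-4)w_0>(r^2-2r+2)w_2+(r^2-r+2)w_3 .
\]
Together with $w_2>w_0/(r-1)$, which is forced at $t=0$, this confines $w_2$ and $w_3$ to size $O(w_0/r)$. That already rules out your suggested scaling ($w_0,w_3$ of order one, $w_2$ of order $v/r^2$).

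\textbf{The missing idea} is to run the endpoint principle in $v$, not in $t$. The paper takes $v$-dependent weights
\[
w_0=\frac{4v(v-1)}{rD_{\mathrm{orb}}},\qquad w_2=\frac{8}{4r+7},\qquad w_3=\frac{r-1}{4v},\qquad D_{\mathrm{orb}}=v(2r+5)-r^2-3r-4 .
\]
For each fixed $t$, both cleared numerators are then cubics in $v$ with negative leading coefficients $-16(4r+7)(r+t+1)$ and $-16(4r+7)(r+t+1)(t+1)$. The paper checks three things, uniformly in $t$:
\begin{itemize}
\item value and $v$-derivative at $v_{\min}=r-1$, via Bernstein expansions in $t/r$ and $t/(r-2)$ with positive coefficients;
\item value at $v_{\max}=(r^2-r+2)/2$, via explicit polynomial expansions.
\end{itemize}
Because this family starts at $v=r-1$, the paper also needs separate weights $\bigl(\tfrac4{r+3},\tfrac2{r+1},\tfrac14\bigr)$ for $r\ge6$, $r+3\le n<2r$, and Post's bound up to $n\le2r-1$ for $r=4,5$. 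Your plan of a single family on all of $2\le v\le v_{\max}$ would have to absorb these ranges as well.
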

\begin{proof}
Lemma~\ref{lem:known-small} rules out $n\le5$. Moreover, Lemma~\ref{lem:known-post} excludes several low-dimensional ranges: it covers $6\le n\le r+2$ for $r\ge6$, as $\frac n{\sqrt2}-\frac{3}{2\sqrt2}-\frac12\le\frac{r+1/2}{\sqrt2}-\frac12<r$; it covers $n\le2r-1$ for $r=4,5$, since $r+\frac12-\frac{2r-5/2}{\sqrt2}\ge\frac{11\sqrt2-15}{2\sqrt2}>0$ (from $(11\sqrt2)^2=242>225=15^2$); and it covers $n=6$ for $r=3$, where the condition $\frac9{2\sqrt2}-\frac12\le3$ simplifies to $81\le98$. The remaining case at radius three is $(r,n)=(3,7)$, which is excluded by Lemma~\ref{lem:four-orbit-duality} with weights $(w_0,w_2,w_3)=(15,10,4)$, as direct substitution into the orbit ratios yields
\[
\begin{array}{c|cccc|cc}
(a,b)&\multicolumn{4}{c|}{(3-t,0)}&\multicolumn{2}{c}{(1,1)}\\
t&0&1&2&3&0&1\\\hline
\mathscr F&4&13/5&115/36&9/4&3&7/3
\end{array}
\]

Next, suppose $r\ge6$ and $r+3\le n<2r$, so that $2\le v<r-1$. Choosing $(w_0,w_2,w_3)=\bigl(\frac4{r+3},\frac2{r+1},\frac14\bigr)$ gives $\sum_{j=0,2,3}w_j\le\frac49+\frac27+\frac14=\frac{247}{252}<1$; by Lemma~\ref{lem:four-orbit-duality}, it suffices to verify $\sum_jw_jF_j\ge1$, which implies $\mathscr F\ge1-\sum_{j=0,2,3}w_j>0$. In the first branch $(a,b)=(r-t,0)$, the case $t=0$ is immediate from $w_2F_2=2r/(r+1)>1$. For $t\ge1$, the nonnegative quantities $v-t$ and $r-v$ give
\begin{align*}
w_0F_0&=1-\frac{(v-t)(v-t-1)}{v(v-1)}
 +\frac{t(t-1)}{(r+3)v(v-1)}
 \ge1-\frac{(v-t)(v-t-1)}{v(v-1)},\\
w_2F_2&=\frac{t}{v(r+1)}+\frac{(r-t)(2v-t)}{v(r+t+1)}
 \ge\frac{(r-t)(2v-t)}{v(r+t+1)}.
\end{align*}
Cross-multiplying these lower bounds shows that $w_0F_0+w_2F_2\ge1$, because the numerator difference is
\begin{align*}
&(r-t)(2v-t)(v-1)-(v-t)(v-t-1)(r+t+1)\\
&\quad=(r-v)\{(v-t)(v-t-1)+t(t-1)+3(v-t)t\}\\
&\qquad+(v-t)\{(v-t-1)^2+(v-t)t+t^2+t\}\ge0.
\end{align*}
In the second branch $(a,b)=(1,1)$, if $t=0$, then $w_3F_3=\frac{1+r(2v+1)}{4(r+1)}\ge\frac{5r+1}{4(r+1)}=1+\frac{r-3}{4(r+1)}>1$. If $1\le t\le r-2$, then $r/(r+t+1)\ge1/2$, whence
\[
w_0F_0+w_3F_3\ge
\frac{t(2v-t-1)}{v(v-1)}+\frac{2v-t+1}{8(t+1)}.
\]
Multiplying the difference between this lower bound and $1$ by $8(t+1)v(v-1)$ gives
\begin{align*}
 &8t(t+1)(2v-t-1)+(2v-t+1)v(v-1)-8(t+1)v(v-1)\\
 &\quad=t(t-1)(t+1)
 +\frac{v-t}{8}\{(4(v-t)-3t-9)^2+(t-1)(23t+25)\}\ge0,
\end{align*}
which completes the range $n<2r$.

It remains to treat the main range $r\ge4$ and $2r\le n\le N_r$, where $v$ lies between $v_{\min}=r-1$ and $v_{\max}=(r^2-r+2)/2$. Setting $D_{\mathrm{orb}}=v(2r+5)-r^2-3r-4\ge r^2-9>0$, we employ the unified weight family
\[
w_0=\frac{4v(v-1)}{rD_{\mathrm{orb}}},\qquad
w_2=\frac8{4r+7},\qquad
w_3=\frac{r-1}{4v}.
\]
Clearing the positive denominators of $\mathscr F$ in the two branches, we define
\begin{align*}
\mathscr F_0(t,v)&=4rv(4r+7)(r+t+1)D_{\mathrm{orb}}\,\mathscr F(t,r-t,0),\\
\mathscr F_1(t,v)&=4rv(4r+7)(r+t+1)(t+1)D_{\mathrm{orb}}\,\mathscr F(t,1,1).
\end{align*}
As polynomials in $v$, both $\mathscr F_0$ and $\mathscr F_1$ are cubics with negative leading coefficients $-16(4r+7)(r+t+1)$ and $-16(4r+7)(r+t+1)(t+1)$, respectively. By the cubic endpoint principle, positivity on $[v_{\min},v_{\max}]$ reduces to verifying $\mathscr F_i(t,v_{\min})>0$, $\partial_v\mathscr F_i(t,v_{\min})>0$, and $\mathscr F_i(t,v_{\max})>0$ for $i=0,1$, which we establish below without any parameter search.

For the first branch ($b=0$), let $z=t/r\in[0,1]$. Expanding at the lower endpoint $v_{\min}=r-1$ in the Bernstein basis gives
\[
 \mathscr F_0(t,v_{\min})=\sum_{i=0}^3\binom3i L_i^{(0)} z^i(1-z)^{3-i},\qquad
 \partial_v\mathscr F_0(t,v_{\min})=\sum_{i=0}^3\binom3i J_i^{(0)} z^i(1-z)^{3-i},
\]
with coefficients
\begin{align*}
 L_0^{(0)}&=(r-1)\{(r-4)[32r^4+r^2(60r-71)+40r+495]+1756\},\\
 L_1^{(0)}&=\frac{r-1}{3}\{(r-4)[72r^4+r^2(62r-73)+1378r+6013]+23380\},\\
 3L_2^{(0)}&=(r-4)(80r^5+68r^4+447r^3+3433r^2+11697r+46119)+185148,\\
 L_3^{(0)}&=(2r+1)\{(r-4)[r^3(12r-7)+15r^2+211r+341]+1588\},\\
 J_0^{(0)}&=(r-1)\{(r-4)(96r^3+376r^2+910r+3701)+15364\},\\
 3J_1^{(0)}&=(r-4)[208r^4+r^2(196r-348)+1687r+7237]+27268,\\
 3J_2^{(0)}&=(r-4)(208r^4+164r^3+134r^2+4225r+16381)+63844,\\
 J_3^{(0)}&=(2r+1)\{(r-4)(40r^3+42r^2+211r+1455)+5260\}.
\end{align*}
Since $r\ge4$, the factors $60r-71\ge169$, $62r-73\ge175$, $12r-7\ge41$, and $196r-348\ge436$ are all positive, whence $L_i^{(0)},J_i^{(0)}>0$. Because $\sum_{i=0}^3\binom3i z^i(1-z)^{3-i}=1$, we obtain $\mathscr F_0(t,v_{\min})>0$ and $\partial_v\mathscr F_0(t,v_{\min})>0$. At the upper endpoint $v_{\max}$, we set $\mathcal R(x)=\mathscr F_0(r-x,v_{\max})$ for the nonnegative integer $x=r-t$. Its first two values factor directly as
\begin{align*}
 \mathcal R(0)&=\frac{r(2r+1)}2
 \{(r-4)[r^3(8r-6)+r(87r+268)]+929r+174\}>0,\\
 \mathcal R(1)&=\frac{r(r-1)}2
 \{4r^3[(r-4)(4r-7)+5]+(r-4)(37r-49)+170\}>0.
\end{align*}
For $x\ge2$, expansion in powers of $x-2\ge0$ gives
$\mathcal R(x)=q_0+q_1(x-2)+q_2(x-2)^2+q_3(x-2)^3$, where
\begin{align*}
 q_0&=(r-4)\left[r^5(8r-26)+40r^4
                 +\frac{43r^3+417r}{2}+3r^2+21\right]+420,\\
 q_1&=(r-4)(4r^5+54r^4+107r^3+490r^2+1771r+6338)+25968,\\
 q_2&=(r-4)(8r^5+38r^4+142r^3+618r^2+2382r+9444)+38112,\\
 q_3&=2(r+2)(4r+7)(r^2-r+2).
\end{align*}
Every $q_i$ is manifestly positive for $r\ge4$, establishing $\mathscr F_0(t,v_{\max})=\mathcal R(r-t)>0$.

For the second branch ($a=b=1$), set $z=t/(r-2)\in[0,1]$ and use $r-4\ge0$. The lower-endpoint expansions are
\begin{align*}
 \mathscr F_1((r-2)z,v_{\min})
 &=\sum_{j=0}^4\binom4j L_j^{(1)}z^j(1-z)^{4-j},\\
 \partial_v\mathscr F_1((r-2)z,v_{\min})
 &=\sum_{j=0}^4\binom4j J_j^{(1)}z^j(1-z)^{4-j},
\end{align*}
where
\begin{align*}
 L_0^{(1)}&=2(r-1)^2(4r+7)\bigl[32+(r-4)\{r(r^2-1)+4r^2+4\}\bigr],\\
 L_1^{(1)}&=\frac{r-1}{4}\bigl[9064+(r-4)\{r^3(24(r-4)^2+278(r-4)+693)+24r^2+823r+2126\}\bigr],\\
 6L_2^{(1)}&=r^5(52(r-4)^2+235(r-4)+104)+r^3(324(r-4)+478)
       +r(1077(r-4)+3394)+336,\\
 4L_3^{(1)}&=r^5(80(r-4)^2+288(r-4)+203)+r^3(783(r-4)+447)
       +r(2465(r-4)+8998)+112,\\
 L_4^{(1)}&=(r-1)\bigl[r^4(28(r-4)^2+133(r-4)+87)+r(373(r-4)^2+2401(r-4)+3490)+112\bigr],\\
 J_0^{(1)}&=2(r-1)(4r+7)\bigl[r^2(3(r-4)^2+27(r-4)+22)+16r+40\bigr],\\
 4J_1^{(1)}&=r^4(80(r-4)^2+652(r-4)+348)+r(585(r-4)^2+6909(r-4)+15638)+896,\\
 6J_2^{(1)}&=r\bigl[r^3(152(r-4)^2+798(r-4)+347)+(1458r+5191)(r-4)+20622\bigr],\\
 4J_3^{(1)}&=r^4(208(r-4)^2+836(r-4)+254)+r^2(2553(r-4)+6211)+2222(r-4)+8440,\\
 J_4^{(1)}&=(r-1)\bigl[r^3(96(r-4)^2+492(r-4)+195)+(1105r+3766)(r-4)+14952\bigr].
\end{align*}
All coefficients are positive for $r\ge4$, so both lower-endpoint quantities are positive.

At the upper endpoint $v_{\max}$, set $Q_r(z)=\mathscr F_1((r-2)z,v_{\max})$. For $r=4$, the integer restriction $0\le t\le2$ gives $(\mathscr F_1(0,7),\mathscr F_1(1,7),\mathscr F_1(2,7))=(370944,1736,119496)>0$ directly. For $r\ge5$, applying the identities
\[
 z^3=(z-\tfrac35)^2(z+\tfrac65)+\tfrac{27}{25}z-\tfrac{54}{125},
 \qquad
 z^2=(z-\tfrac23)^2+\tfrac43z-\tfrac49
\]
yields the following decomposition
\[
 Q_r(z)=A_rz^3(1-z)+C_r(z-\tfrac35)^2(z+\tfrac65)
 +\mathcal H_r(z-\tfrac23)^2+\frac{(r-2)\ell_r}{150}z
 +\frac{k_r}{2250},
\]
where
\begin{align*}
 A_r&=2(r-2)^4(r+2)(4r+7)(r^2-r+2),\\
 C_r&=2(r-2)^3
 (4r^6+3r^5-15r^4+8r^3-28r^2-74r+28),\\
 \mathcal H_r&=\frac{(r-2)^2}{2}
 (56r^6+46r^5+87r^4+184r^3-11r^2-178r+112),\\
 \ell_r&=96r^8+488r^7-9264r^6+26620r^5-20555r^4
 -13570r^3+56463r^2-71742r+22288,\\
 k_r&=1224r^9-2926r^8+27215r^7-226913r^6+440945r^5
 -444765r^4\\
 &\hspace{2em}-341968r^3+1141908r^2-1098032r+211456.
\end{align*}
Here $A_r>0$, and grouping terms confirms the positivity of $C_r$ and $\mathcal H_r$:
\begin{align*}
 \frac{C_r}{2(r-2)^3}
 &=r^4(4r^2-15)+r^2(3r^3-28)+r(8r^2-74)+28>0,\\
 \frac{2\mathcal H_r}{(r-2)^2}
 &=56r^6+46r^5+r^2(87r^2-11)+r(184r^2-178)+112>0.
\end{align*}
Moreover, expansion in powers of $r-5\ge0$ shows that $\ell_r$ and $k_r$ have strictly positive expansions:
\begin{align*}
 \ell_r
 &=96(r-5)^8+4328(r-5)^7+75016(r-5)^6\\
 &\quad+676900(r-5)^5+3505945(r-5)^4+10545330(r-5)^3\\
 &\quad+17219663(r-5)^2+12060138(r-5)+594528>0,\\
 k_r
 &=1224(r-5)^9+52154(r-5)^8+1011775(r-5)^7\\
 &\quad+11529412(r-5)^6+83829430(r-5)^5+398489610(r-5)^4\\
 &\quad+1223494607(r-5)^3+2301528888(r-5)^2\\
 &\quad+2336640948(r-5)+908810496>0.
\end{align*}
Since $0\le z\le1$, every summand in the decomposition of $Q_r(z)$ is nonnegative and the constant term $k_r/2250$ is strictly positive, ensuring $Q_r(z)>0$.

Having verified all three endpoint conditions for both branches, we apply the cubic endpoint principle with $y=v-v_{\min}$ and $Y=v_{\max}-v_{\min}=(r^2-3r+4)/2>0$ to conclude that $\mathscr F_i(t,v)>0$, and hence $\mathscr F(t,a,b)>0$. Lemma~\ref{lem:four-orbit-duality} completes the proof.
\end{proof}

The orbit obstruction leaves the common range
\begin{equation}\label{eq:geometric-range}
 n\ge8\quad(r=3),\qquad n>\frac{r(r+1)}2+2\quad(r\ge4).
\end{equation}
The next three subsections treat the disjoint cases
$1\le c\le r-1$, $c=0$, and $c=r$, respectively.

\subsection{Intermediate remainders}
We first combine the moment and scalar-kernel bounds for
$1\le c\le r-1$ in the range~\eqref{eq:geometric-range}.
In particular $2n>r(r+1)>r^2$, and
$2n\ge r(r+1)+6$ for $r\ge4$. The division identity is $2n=(r+1)q+c$.
Retain the number $E$ of minimum fibers.

We collect the density, capacity, and collision estimates in a form that can be checked at one endpoint. For $r\ge3$, $1\le k\le r-1$, and $2n\ge r^2$, define
\begin{align*}
 &e_{r,k}(2n)=\frac{k+1-r(r-1)/(2n)}{r+1}
                     \left(1-\frac r{2n+1}\right),\\
 &\gamma_r(2n)=\frac{r^2}{2n}
 \left(1+\frac{2(r-1)}{2n}+\frac{2(r-1)^2}{(2n)^2-(r-1)^2}\right),\\
 &\omega_r(2n)=\gamma_r(2n)+\frac{4nr(r-1)}{(2n+1)(2n+1-r)}.
\end{align*}
The density estimate uses this family with second index $r-c$.

\begin{lemma}\label{lem:positive-scalar-comparison}
Fix $r\ge3$ and $1\le c\le r-1$. On $2n\ge r^2$,
$e_{r,r-c}(2n)$ increases with $n$, whereas
$\gamma_r(2n)$, $\omega_r(2n)$, and $(r)_c/(2n+1)^c$ decrease.
Every lattice tiling satisfies
$e_{r,r-c}(2n)\le E/V_r\le |A[q+1]|\frac{(r)_c}{(2n+1)^c}$.
If $|A[q+1]|>1$, it also satisfies
\[
 (r-c)\max\left\{2e_{r,r-c}(2n)-1,e_{r,r-c}(2n)-\frac13\right\}\le \omega_r(2n).
\]
If $(q+1) A=\{0\}$, then $(r-c) e_{r,r-c}(2n)\le\gamma_r(2n)$.
Finally, $r-c\ge2$ forces $|A[q+1]|>1$.
\end{lemma}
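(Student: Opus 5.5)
Monotonicity comes first and is elementary. Write $x=2n$. For $e_{r,k}$ with $k=r-c\ge1$, the first factor $(k+1-r(r-1)/x)/(r+1)$ increases in $x$. Since $x\ge r^2$, it is positive: $k+1\ge2>(r-1)/r$. The second factor $1-r/(x+1)$ is positive and increasing, so the product increases.

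For $\gamma_r$, the leading factor $r^2/x$ decreases, and so does each parenthetical term. For the last term, $x^2-(r-1)^2$ increases. The quotient $(r)_c/(x+1)^c$ clearly decreases. For $\omega_r$ it remains to see that $x r(r-1)/((x+1)(x+1-r))$ decreases. Its reciprocal is $(x+1)(x+1-r)/x$ divided by $r(r-1)$. This equals $x+2-r+(1-r)/x$ up to that constant, and it increases, since $(1-r)/x$ increases when $r\ge1$.

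For the density lower bound, I use Lemma~\ref{lem:minimal-fibers}: $(r+1)E\ge(r-c+1)\sigma_r-(r-1)\sigma_{r-1}$. Inequality~\eqref{eq:successive-ratios} gives $\sigma_{r-1}\le r\sigma_r/(2n)$, so
\[
E\ge\frac{\sigma_r}{r+1}\Bigl(k+1-\frac{r(r-1)}{2n}\Bigr).
\]
Then $\sigma_r=V_r-V_{r-1}\ge V_r(1-r/(2n+1))$, again by~\eqref{eq:successive-ratios}, and this gives $E/V_r\ge e_{r,r-c}(2n)$.

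For the upper bound, \eqref{eq:minimum-parity-capacity} gives
\[
E\le|A[q+1]|\,|\pi(\mathcal P_{r-c})|\le|A[q+1]|\,V_{r-c},
\]
because $\mathcal P_{r-c}\subseteq B_{r-c}$ and $\pi$ is injective there. Then~\eqref{eq:successive-ratios} with $j=c$ finishes.

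For the kernel case I apply Lemma~\ref{lem:all-minimum-kernel}, divide by $V_r$, and insert $E/V_r\ge e$. The map $e\mapsto\max\{2e-1,e-1/3\}$ is increasing, so the substitution is valid. Bound $\mathcal C/V_r$ by the first inequality of~\eqref{eq:layered-cancellation-ratio}, written with $(2n)^2$; it applies because $2n\ge r^2$ gives $n\ge r^2/3$. For the collision term, bound $4n\sum_{j=1}^{r-c}V_{r-j-1}/V_r\le4n\sum_{j\ge2}(r)_j/(2n+1)^j$. This geometric-type sum is at most
\[
\frac{r(r-1)}{(2n+1)^2}\cdot\frac{1}{1-r/(2n+1)}=\frac{r(r-1)}{(2n+1)(2n+1-r)}.
\]
Multiplying by $4n$ gives exactly the second summand of $\omega_r$.

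When $(q+1)A=0$, we have $A[q+1]=A$. The energy argument is then unnecessary: there is a single scalar fiber, so the only representative is $\boldsymbol\eta$ with $\pi(\boldsymbol\eta)=0$, which forces $\boldsymbol\eta=\mathbf0$. Every minimum fiber therefore has $\|\boldsymbol\eta_{\mathbf b}\|_1=0$, so $\mathcal C=(r-c)E$. Then~\eqref{eq:layered-cancellation-ratio} gives $(r-c)E/V_r\le\gamma_r$.

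For the last claim, suppose $|A[q+1]|=1$. Then $(q+1)A=A$, and~\eqref{eq:minimum-parity-capacity} gives $E\le|\pi(\mathcal P_{r-c})|$. I would compare this with the density lower bound: $|\mathcal P_{r-c}|\le V_{r-c}$ is far smaller than $e\,V_r$ once $c\ge1$ and $2n\ge r^2$. Explicitly, $e_{r,r-c}\ge(2-(r-1)/r)(1-r/(2n+1))/(r+1)$, while $V_{r-c}/V_r\le r/(2n+1)\le1/r$.

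This step is the main obstacle. The inequality $e>1/r$ can fail for large $r$ with $k$ small, so the plain volume count is not enough. The hypothesis $r-c\ge2$ should be what rescues it. I would first try bounding $|\pi(\mathcal P_{r-c})|\le V_{r-c}$, using $E/V_r\ge e_{r,r-c}$ with $k=r-c\ge2$ to get $e\ge(3-(r-1)/r)/(r+1)\cdot(1-1/r)$. This has to be compared with $(r)_c/(2n+1)^c\le r/(2n+1)\le1/r$. If that is still insufficient, I would use Lemma~\ref{lem:odd-tail-occupancy} or the parity restriction $\mathcal P_{r-c}$, which roughly halves the count, to gain the missing factor.
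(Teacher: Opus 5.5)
Everything up to the last claim follows the paper's proof: the same monotonicity checks, the density bound from Lemma~\ref{lem:minimal-fibers} and \eqref{eq:successive-ratios}, and scalar capacity from \eqref{eq:minimum-parity-capacity}. The kernel case also matches, combining Lemma~\ref{lem:all-minimum-kernel} with \eqref{eq:layered-cancellation-ratio} and the geometric volume sum, as does the identity $\mathcal C=(r-c)E$ when $(q+1)A=\{0\}$. Your collision term for $\omega_r$ lacks a factor $2$, since $4n=2x$, but a constant factor does not affect monotonicity.

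The gap is the final claim, which your proposal does not prove. You set up the right comparison, call it the main obstacle, and then offer two fallbacks without carrying either out. The first, Lemma~\ref{lem:odd-tail-occupancy}, applies only to odd $k$ with $3k<2r+1$, so it could not cover all $k\ge2$. The second, the parity restriction, is only a heuristic as stated.

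No fallback is needed, because your first attempt already closes. With $k=r-c\ge2$ and $2n\ge r^2$, the first factor of $e_{r,r-c}$ is at least $(2r+1)/(r(r+1))$, and the second factor exceeds $(r-1)/r$. Hence
\[
 e_{r,r-c}(2n)>\frac{(2r+1)(r-1)}{r^2(r+1)}>\frac1r,
\]
where the last step is $(2r+1)(r-1)-r(r+1)=r^2-2r-1=(r-1)^2-2>0$ for $r\ge3$. Since $V_{r-c}/V_r\le r/(2n+1)<1/r$, the assumption $|A[q+1]|=1$ would give
\[
 e_{r,r-c}(2n)\le E/V_r\le V_{r-c}/V_r<e_{r,r-c}(2n),
\]
a contradiction.

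The paper makes the same comparison but keeps the $n$-dependence. From $r(r-1)/(2n)<1$ it gets $e_{r,r-c}(2n)>2(2n+1-r)/((r+1)(2n+1))$. This exceeds $r/(2n+1)$ because $2(2n+1-r)-r(r+1)\ge(r-1)(r-2)>0$.
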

\begin{proof}
Lemma~\ref{lem:minimal-fibers} and the ratios in~\eqref{eq:successive-ratios} give
\begin{align*}
 \frac E{V_r}\ge\frac{r-c+1-(r-1)\sigma_{r-1}/\sigma_r}{r+1}
                                      \frac{\sigma_r}{V_r}\ge\frac{r-c+1-r(r-1)/(2n)}{r+1}
                         \left(1-\frac{V_{r-1}}{V_r}\right)\ge e_{r,r-c}(2n).
\end{align*}
Scalar capacity gives the upper bound
$E/V_r\le |A[q+1]| V_{r-c}/V_r\le |A[q+1]|\frac{(r)_c}{(2n+1)^c}$.
Both factors of $e_{r,r-c}$ are positive and increasing. The terms
$(2n)^{-1}$, $(2n)^{-2}$, and $[2n\{(2n)^2-(r-1)^2\}]^{-1}$ decrease, since $\frac{1}{2}\frac{d}{dn}\{2n((2n)^2-(r-1)^2)\}=3(2n)^2-(r-1)^2>0$.
Moreover
\[
 \frac{1}{2}\frac{d}{dn}\frac {2n}{(2n+1)(2n+1-r)}
 =\frac{1-r-(2n)^2}{(2n+1)^2(2n+1-r)^2}<0.
\]
Finally,
$\frac{d}{dn}\bigl((r)_c/(2n+1)^c\bigr)
=-2c(r)_c/(2n+1)^{c+1}<0$.
This proves the asserted monotonicities.
Lemma~\ref{lem:exact-layered-cancellation}, specifically~\eqref{eq:layered-cancellation-ratio}, gives
$\mathcal C/V_r\le\gamma_r(2n)$, while the volume ratios give
\begin{align*}
 \frac{4n}{V_r}\sum_{j=1}^{r-c}V_{r-j-1}
 \le4n\frac{V_{r-2}}{V_r}\sum_{j\ge0}\left(\frac r{2n+1}\right)^j
 \le\frac{4nr(r-1)}{(2n+1)(2n+1-r)}.
\end{align*}
The two inequalities in Lemma~\ref{lem:all-minimum-kernel} now prove the bound involving $\omega_r$.
If $(q+1) A=\{0\}$, every minimum representative is zero; hence
$\mathcal C=\sum(r-c)=(r-c)E$, giving the bound involving $\gamma_r$.
For $r-c\ge2$, the inequality $r(r-1)/(2n)<1$ gives
\[
 e_{r,r-c}(2n)>\frac{2(2n+1-r)}{(r+1)(2n+1)}>\frac r{2n+1}
 \ge\frac{V_{r-c}}{V_r},
\]
where the second strict inequality follows from $2(2n+1-r)-r(r+1)\ge r^2-3r+2=(r-1)(r-2)>0$.
Thus $|A[q+1]|=1$ would contradict $E\le V_{r-c}$.
\end{proof}

Lemma~\ref{lem:odd-tail-occupancy} now contradicts these density estimates for the shortest tails.

\begin{lemma}\label{lem:unit-tail}
There is no lattice tiling with $r-c=1$ in the remaining geometric range.
The same holds for $(r,r-c)=(5,3),(6,3)$.
\end{lemma}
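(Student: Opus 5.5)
The plan is to compare the lower density bound $E/V_r\ge e_{r,r-c}(2n)$ from Lemma~\ref{lem:positive-scalar-comparison} with an upper bound on $E$ coming from Lemma~\ref{lem:odd-tail-occupancy}. By \eqref{eq:minimum-parity-capacity}, $E\le |A[q+1]|\,|G\cap\pi(\mathcal P_{k})|$ with $G=(q+1)A$ and $k=r-c$. Since $|A[q+1]|\,|G|=V_r$, whenever $k$ is odd with $3k<2r+1$ we get
\[
\frac{E}{V_r}\le\frac{|G\cap\pi(\mathcal P_k)|}{|G|}\le\frac{2}{\ell_k},\qquad \ell_k=\min\{j\ge1\ \text{odd}: jk\ge2r+1\}.
\]
For $k=1$ (which satisfies $3<2r+1$), $\ell_1$ is the least odd $j\ge2r+1$, so $\ell_1=2r+1$ and $E/V_r\le 2/(2r+1)$. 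For $(r,k)=(5,3)$ we have $3k=9<11$ and $\ell_3=5$, since $15\ge11$ while $9<11$; thus $E/V_r\le2/5$. For $(r,k)=(6,3)$ we have $9<13$ and $\ell_3=5$, so again $E/V_r\le 2/5$. The case $G=\{0\}$ is harmless, because then $G\cap\pi(\mathcal P_k)$ is empty: $\mathcal P_k$ with $k$ odd excludes $0$.

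Next I bound $e_{r,k}$ from below at the smallest admissible dimension in \eqref{eq:geometric-range}, using that $e_{r,k}(2n)$ increases in $n$ (Lemma~\ref{lem:positive-scalar-comparison}). For $k=1$,
\[
e_{r,1}(2n)=\frac{2-r(r-1)/(2n)}{r+1}\Bigl(1-\frac{r}{2n+1}\Bigr).
\]
With $2n>r(r+1)$ we have $r(r-1)/(2n)<(r-1)/(r+1)$ and $r/(2n+1)<1/(r+1)$, so
\[
e_{r,1}>\frac{(r+3)}{(r+1)^2}\cdot\frac{r}{r+1}=\frac{r(r+3)}{(r+1)^3}.
\]
We need this to be at least $2/(2r+1)$, i.e.\ $r(r+3)(2r+1)\ge 2(r+1)^3$. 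Expanding, $2r^3+7r^2+3r\ge 2r^3+6r^2+6r+2$ reduces to $r^2-3r-2\ge0$, which holds for $r\ge4$. For $r=3$ I instead use $n\ge8$ directly: $2n\ge16$ gives $e_{3,1}\ge\frac{2-6/16}{4}\cdot\frac{14}{17}=\frac{13}{32}\cdot\frac{14}{17}\approx0.335>2/7$. Since the inequality is strict, this contradicts the upper bound, and the monotonicity in $n$ covers every larger $n$.

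For $(5,3)$ the range is $2n\ge 34$, and then $e_{5,3}\ge\frac{4-20/34}{6}\cdot\frac{30}{35}\approx0.487>2/5$. For $(6,3)$ the range is $2n\ge 46$, and then $e_{6,3}\ge\frac{4-30/46}{7}\cdot\frac{41}{47}\approx0.417>2/5$. Both contradict $E/V_r\le 2/5$.

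The main obstacle is to confirm that the parity/capacity bound in \eqref{eq:minimum-parity-capacity} indeed passes to the subgroup $G=(q+1)A$ with the normalization $|A[q+1]|\,|G|=V_r$, and that the hypotheses of Lemma~\ref{lem:odd-tail-occupancy} hold: $k$ odd, $k>0$, and $3k<2r+1$. The rest consists of the endpoint numerics above, which should be double-checked, especially the tight case $(6,3)$ and the minimal values of $2n$ (parity of $2n$ and the exact integer threshold).
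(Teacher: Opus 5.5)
Your proposal is correct and follows the paper's argument: the density lower bound $E/V_r\ge e_{r,r-c}(2n)$ from Lemma~\ref{lem:positive-scalar-comparison} is played against the upper bound $E/V_r\le 2/\ell_k$, which comes from \eqref{eq:minimum-parity-capacity} and Lemma~\ref{lem:odd-tail-occupancy} applied to $G=(q+1)A$. The only difference is that the paper first sharpens the dimension thresholds using the division identity and remainder congruences ($2n\ge r^2+2r-1$ for $r-c=1$, and $n\ge19,26$ for $r=5,6$), whereas you evaluate at the cruder geometric cutoff, with a separate check at $r=3$; your endpoint computations show that the margins already suffice there.
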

\begin{proof}
For $r-c=1$, the identity $2n=(r+1)(q+1)-2>r(r+1)$ implies
$q+1\ge r+1$, hence $2n\ge 2n_0=r^2+2r-1$. The odd-tail
occupancy bound and the scalar comparison would require $e_{r,1}(2n_0)\le\frac E{V_r}\le\frac2{2r+1}$.
But direct common-denominator subtraction gives
\[
 e_{r,1}(2n_0)-\frac2{2r+1}
 =\frac{(r-1)(3r-2)}{(r+2)(2r+1)(r^2+2r-1)}>0.
\]
For $r-c=3$, the least odd integer $\ell$ with $3\ell\ge2r+1$
is $5$ at both $r=5$ and $r=6$. Thus $E/V_r\le2/5$.
The remainder congruences give $n\equiv1\pmod3$ and
$n\equiv5\pmod7$, respectively. The geometric cutoffs therefore
force $n\ge19$ and $n\ge26$. In these two cases,
\[
 e_{5,3}(38)-\frac25=\frac{388}{3705}>0,
 \qquad
 e_{6,3}(52)-\frac25=\frac{1623}{48230}>0.
\]
Monotonicity gives the contradiction in every larger dimension.
\end{proof}

For the middle tails, we combine the two-shell defect inequality with the sign-restricted moment bounds.

\begin{lemma}\label{lem:tail-two}
A lattice tiling by a Lee ball is impossible in each of the following ranges:
\[
\begin{array}{c|c|c}
 r&r-c&2n\\\hline
 r\ge7&3\le r-c\le r-4&2n\ge r^2+r+4\\
 r\ge10&r-c\in\{2,r-3\}&2n\ge r^2+2r-2\\
 r=8&r-c\in\{2,5\}&2n\ge82.
 \end{array}
\]
\end{lemma}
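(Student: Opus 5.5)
The plan is to play the lower bound of Lemma~\ref{lem:two-shell-defect} against the upper bounds of Lemma~\ref{lem:sign-restricted-heavy-pairs}, after dividing everything by $\sigma_r$. Here $r\ge7$ (or $r=8$), so $r\ge4$ and the sign-restricted estimates apply. Adding the bound for $\MomentU$ to twice the bound for $\MomentT$ and substituting into the first inequality of Lemma~\ref{lem:two-shell-defect}, a lattice tiling would force
\[
 \bigl(r(r+1)+(r-c)(c-1)\bigr)\sigma_r\le r\sigma_r+(r+1)(r-1)\sigma_{r-1}+(r-3)(r-2)\sigma_{r-2}+\Psi_{r-3}+\tfrac{r}{r-1}\Phi_{r-2}+2\Phi_0+\sum_{j=2}^{r-1}\tfrac{2r}{r-j}\Phi_{r-j-1}.
\]
The gain on the left is the term $(r-c)(c-1)\sigma_r$. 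It is at least $2(r-3)\sigma_r$ when $r-c\in\{2,r-3\}$ and at least $3(r-4)\sigma_r$ in the middle range $3\le r-c\le r-4$. The factor $r(r+1)$ is essentially matched by the main terms $r\sigma_r+\ldots$ and the $\Phi$'s, so the task is to show that the right-hand side exceeds $r(r+1)\sigma_r$ by less than this gain.

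First, I would express each $\Phi_k$ and $\Psi_k$ through the explicit formulas in terms of $\sigma_k,\sigma_{k-1},\sigma_{k-2}$, using $(2n-k)^2-2n+2k\le(2n)^2$ and similar bounds. I would then apply the shell ratios \eqref{eq:successive-ratios}, $\sigma_{r-j}/\sigma_r\le(r)_j/(2n)^j$, to turn everything into a function of $x=r/(2n)$ or of $2n$ alone.

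The dominant contributions are the following:
\begin{itemize}
\item $\frac{r}{r-1}\Phi_{r-2}\approx\frac{r}{r-1}(2n)^2\sigma_{r-2}\approx r^2\sigma_r$, up to a factor $(1-1/r)$-type correction, with first-order relative error of size $(r-1)\cdot(\text{const})/2n$;
\item $\Psi_{r-3}\approx(2n)^2\sigma_{r-3}\approx r^3\sigma_r/(2n)$;
\item $(r+1)(r-1)\sigma_{r-1}\approx r^3\sigma_r/(2n)$;
\item the $\Phi_{r-j-1}$ tail with $j=2$, which contributes about $2r\,(2n)^2\sigma_{r-3}/(r-2)\approx2r^3\sigma_r/(2n)$;
\item the remaining $j$, which decay geometrically since $r^2/(2n)<1$ in the stated ranges.
\end{itemize}
I must track the exact coefficients, in particular $\frac{r}{r-1}(r)_2=r^2$, so that the $r^2\sigma_r$ cancels against $r(r+1)\sigma_r-r\sigma_r$. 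The residual inequality then reads roughly $(r-c)(c-1)\le C\,r^3/(2n)+O(r^4/(2n)^2)$ with a specific constant $C$ near $4$ or $5$.

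Next, I would check the inequality at the smallest admissible $2n$ and argue monotonicity. Once normalized by $\sigma_r$, each right-hand term is decreasing in $n$, so verifying at the endpoint suffices, as in Lemma~\ref{lem:positive-scalar-comparison}. The endpoints are $2n=r^2+r+4$ for the middle range, $2n=r^2+2r-2$ for $r\ge10$, and $2n=82$ for $r=8$.

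The main obstacle is exactly this endpoint verification. At $2n\approx r^2$, the correction $Cr^3/(2n)\approx Cr$ is of the same order as the gain $2(r-3)$ or $3(r-4)$, so crude bounds such as $(2n-k)^2\le(2n)^2$ and the ratio inequalities may lose too much. I would first try keeping the exact factors $(2n-k)^2-2n+2k$ and the exact recurrence $j\sigma_j=2n\sigma_{j-1}+(j-2)\sigma_{j-2}$. If the margin is still insufficient for the extreme tails $r-c\in\{2,r-3\}$, I would add the scalar-kernel information from Lemma~\ref{lem:positive-scalar-comparison}. That explains why those tails need the larger threshold $2n\ge r^2+2r-2$ and why $r=8$ is handled as a separate numerical case at $2n=82$.
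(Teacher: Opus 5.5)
Your route is the paper's: the first inequality of Lemma~\ref{lem:two-shell-defect} against the bounds of Lemma~\ref{lem:sign-restricted-heavy-pairs}, normalized by $\sigma_r$ and checked at the least admissible $2n$. However, the lemma consists entirely of that quantitative check, which you leave open, and the specific choices you commit to make it fail.

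\textbf{The first-order constant.} Your positive first-order terms are right: $r^3/(2n)$ each from $\Psi_{r-3}$ and from $\sigma_{r-1}$, and $2r^3/(2n)$ from the $j=2$ part of $2\MomentT$. What you miss is that $\frac{r}{r-1}\Phi_{r-2}/\sigma_r=r^2-2r^2(r-2)/(2n)+O(1)$, which comes from the term $-2(r-2)(2n)$ in $(2n-r+2)^2$. The net coefficient is therefore $C=2$. With your $C\approx4$--$5$ the excess near $2n=r^2$ would be about $4r$, which swamps the gain $2(r-3)$ or $3(r-4)$; your own accounting predicts failure.

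\textbf{The margin is $O(1)$, and your losses exceed it.} Even with $C=2$, the margin in the second row is only $O(1)$. Take $(r,n,c)=(10,59,8)$, a case of that row ($r-c=2$, $2n=118$). The correctly combined inequality reads $114\le$ (right side), and the right side, evaluated with exact shell values from the recurrence, is about $113.62$.
\begin{itemize}
\item Your display carries $(r+1)(r-1)\sigma_{r-1}$ instead of $r(r-1)\sigma_{r-1}$, so it drops the $-(r-1)\sigma_{r-1}$ in the bound for $\MomentU$. This costs $(r-1)\sigma_{r-1}/\sigma_r\approx0.76$. Your inequality therefore yields no contradiction at this point, however precisely its right side is evaluated.
\item The crude ratio $\sigma_{r-2}/\sigma_r\le(r)_2/(2n)^2$ in the $\Phi_{r-2}$ term also costs about $0.76$ there. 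The paper's final bound keeps a margin of only about $0.04$. This is why the paper feeds the lower estimate $\sigma_{k-1}/\sigma_k\ge k/\bigl(2n+(k-1)(k-2)/(2n)\bigr)$ back into the recurrence, obtaining $\sigma_{r-2}/\sigma_r\le\frac{(r)_2}{(2n)^2}\bigl(1-\frac{2(r-2)^2-5}{(2n)^2}\bigr)$.
\end{itemize}

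\textbf{Monotonicity.} The normalized right-hand terms are not each decreasing in $n$: $\frac{r}{r-1}\Phi_{r-2}/\sigma_r$ increases toward $r^2$. Monotonicity must be proved for the aggregate. The paper does this by clearing denominators to a polynomial in $2n$ and showing that its cubic part is increasing.

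\textbf{The fallback.} Appealing to Lemma~\ref{lem:positive-scalar-comparison} gives nothing for $r-c=2$. There $e_{r,2}(2n)\approx2/(r+1)$, so both collision alternatives have negative left side, and the alternative $2e_{r,2}\le\gamma_r$ also holds. The row $r=8$ is not a scalar-kernel case either: it is the same moment inequality, verified by direct substitution at $2n=82$.
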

\begin{proof}
For the three rows of the lemma, take
\[
 (g,2n_0)=(3r-12,r^2+r+4),\quad
 (2r-6,r^2+2r-2),\quad (10,82),
\]
respectively. In every case $(r-c)(c-1)\ge g$, $2n\ge2n_0\ge r^2$,
and $g\ge2r-6$. Since $\sigma_0/\sigma_1=1/(2n)$, the shell
recurrence gives, for $2\le k\le r$,
\[
 \frac{\sigma_{k-1}}{\sigma_k}
 =\frac{k}{2n+(k-2)\sigma_{k-2}/\sigma_{k-1}},\qquad
 \frac{k}{2n+(k-2)(k-1)/(2n)}
 \le\frac{\sigma_{k-1}}{\sigma_k}\le\frac{k}{2n}.
\]
For $3\le k\le r$, applying the lower bound with index $k-1$ gives
\[
 \frac{\sigma_{k-1}}{\sigma_k}\le\frac{k}{2n}
 \left(1-\frac{(k-1)(k-2)}{(2n)^2+2(k-2)^2}\right).
\]
To combine the bounds at indices $r$ and $r-1$, use $2n\ge r^2$
and $(1+x)^{-1}\ge1-x$ for $x\ge0$ to obtain
\begin{align*}
 &\frac{(r-1)(r-2)}{(2n)^2+2(r-2)^2}
  +\frac{(r-2)(r-3)}{(2n)^2+2(r-3)^2}\\
 &\qquad-\frac{(r-2)^2(r-1)(r-3)}
 {\bigl((2n)^2+2(r-2)^2\bigr)\bigl((2n)^2+2(r-3)^2\bigr)}\\
 &\quad\ge\frac{2(r-2)^2}{(2n)^2+2(r-2)^2}
       -\frac{(r-2)^2(r-1)(r-3)}{(2n)^4}\\
 &\quad\ge\frac{2(r-2)^2}{(2n)^2}
       -\frac{5(r-2)^4-(r-2)^2}{(2n)^4}
 \ge\frac{2(r-2)^2-5}{(2n)^2}.
\end{align*}
The last step uses $5(r-2)^4-(r-2)^2\le5(2n)^2$. Therefore
\begin{align*}
 \frac{\sigma_{r-2}}{\sigma_r}
 &=\frac{\sigma_{r-1}}{\sigma_r}\frac{\sigma_{r-2}}{\sigma_{r-1}}\\
 &\le\frac{(r)_2}{(2n)^2}
 \left(1-\frac{(r-1)(r-2)}{(2n)^2+2(r-2)^2}\right)
 \left(1-\frac{(r-2)(r-3)}{(2n)^2+2(r-3)^2}\right)\\
 &\le\frac{(r)_2}{(2n)^2}
 \left(1-\frac{2(r-2)^2-5}{(2n)^2}\right).
\end{align*}
For every other shell we use
$\sigma_{r-j}/\sigma_r\le(r)_j/(2n)^j$.
The explicit expansions of $\Phi_k$ and $\Psi_k$ in
Section~\ref{sec:boundary-methods} consequently give, for $j=3,4$,
\begin{align*}
 \frac{\Phi_{r-j}}{\sigma_r}
 &\le\bigl((2n-r+j)^2-2n+2r-2j\bigr)\frac{(r)_j}{(2n)^j}\\
 &\quad +(4n-2r+2j-1)(r-j-1)\frac{(r)_{j+1}}{(2n)^{j+1}}\\
 &\quad +(r-j-1)(r-j-2)\frac{(r)_{j+2}}{(2n)^{j+2}},\\
 \frac{\Psi_{r-3}}{\sigma_r}
 &\le\bigl((2n-r+3)^2-4n+3r-9\bigr)\frac{(r)_3}{(2n)^3}\\
 &\quad +(4n-2r+4)(r-4)\frac{(r)_4}{(2n)^4}
       +(r-4)(r-5)\frac{(r)_5}{(2n)^5}.
\end{align*}
The remaining mixed shadows satisfy
\begin{align*}
 \frac1{\sigma_r}\left(\Phi_0+
       \sum_{j=4}^{r-1}\frac r{r-j}\Phi_{r-j-1}\right)
 &\le2n(2n-1)\left(\frac{(r)_r}{(2n)^r}
             +\frac r{2n}\sum_{j=4}^{r-1}\frac{(r)_j}{(2n)^j}\right)\\
 &\le2n(2n-1)\frac r{2n}\frac{(r)_4}{(2n)^4}
       \left(\left(\frac r{2n}\right)^{r-5}
                    +\frac1{1-(r-4)/(2n)}\right)\\
 &\le2n(2n-1)\frac r{2n}\frac{(r)_4}{(2n)^4}
              \left(\frac19+\frac98\right)\\
 &<\frac43(2n)^2\frac r{r-4}\frac{(r)_5}{(2n)^5}.
\end{align*}
Here $(r/(2n))^{r-5}\le1/49<1/9$ and
$(r-4)/(2n)\le(r-4)/r^2<1/9$, because
$r^2-9r+36=(r-4)(r-5)+16>0$.
Also $\Phi_k\le2n(2n-1)\sigma_k$ follows by summing
\[
 (2n-a)^2-2n+2a=2n(2n-1)-a(4n-a-2)\le2n(2n-1)
 \qquad(0\le a\le k<2n).
\]
The estimates for the finite sum use
\[
 \frac{(r)_r/(2n)^r}{(r)_4/(2n)^4}\le\left(\frac r{2n}\right)^{r-4},
 \qquad
 \frac{(r)_{j+1}/(2n)^{j+1}}{(r)_j/(2n)^j}
 =\frac{r-j}{2n}\le\frac{r-4}{2n}\quad(4\le j<r).
\]
Lemma~\ref{lem:two-shell-defect} gives
\[
 \MomentU+2\MomentT+(r+1)(r-1)\sigma_{r-1}
 +(r-3)(r-2)\sigma_{r-2}
 \ge\bigl(r(r+1)+(r-c)(c-1)\bigr)\sigma_r.
\]
Insert the upper bound for $\MomentU$ from
Lemma~\ref{lem:sign-restricted-heavy-pairs}. Its term $r\sigma_r$
reduces $r(r+1)\sigma_r$ to $r^2\sigma_r$, and the coefficient
of $\sigma_{r-1}$ becomes
$(r+1)(r-1)-(r-1)=r(r-1)$. Thus
\begin{align*}
 r^2+(r-c)(c-1)
 &\le\frac{r\Phi_{r-2}}{(r-1)\sigma_r}
       +\frac{\Psi_{r-3}+2\MomentT}{\sigma_r}\\
 &\quad+r(r-1)\frac{\sigma_{r-1}}{\sigma_r}
       +(r-2)(r-3)\frac{\sigma_{r-2}}{\sigma_r}.
\end{align*}
Apply the upper bound for $\MomentT$ from the same lemma, the improved
bound for $\sigma_{r-2}/\sigma_r$ in the $\Phi_{r-2}$ term,
and the other shell bounds above. All their coefficients are
nonnegative. We obtain $r^2+g\le r^2+(r-c)(c-1)\le\mathscr W$, where
\begin{align*}
 \mathscr W={}&\frac r{r-1}\biggl[
 \bigl((2n-r+2)^2-2n+2r-4\bigr)\frac{(r)_2}{(2n)^2}
 \left(1-\frac{2(r-2)^2-5}{(2n)^2}\right)\\
 &\qquad +(4n-2r+3)(r-3)\frac{(r)_3}{(2n)^3}
       +(r-3)(r-4)\frac{(r)_4}{(2n)^4}\biggr]\\
 &+\bigl((2n-r+3)^2-4n+3r-9\bigr)\frac{(r)_3}{(2n)^3}\\
 &+(4n-2r+4)(r-4)\frac{(r)_4}{(2n)^4}
       +(r-4)(r-5)\frac{(r)_5}{(2n)^5}\\
 &+r(r-1)\frac{r}{2n}+(r-2)(r-3)\frac{(r)_2}{(2n)^2}\\
 &+2\sum_{j=3}^4\frac r{r-j+1}\biggl[
       \bigl((2n-r+j)^2-2n+2r-2j\bigr)\frac{(r)_j}{(2n)^j}\\
 &\qquad +(4n-2r+2j-1)(r-j-1)\frac{(r)_{j+1}}{(2n)^{j+1}}\\
 &\qquad +(r-j-1)(r-j-2)\frac{(r)_{j+2}}{(2n)^{j+2}}\biggr]
       +\frac83(2n)^2\frac r{r-4}\frac{(r)_5}{(2n)^5}.
\end{align*}
We now prove the opposite strict inequality on the stated ranges.
Direct multiplication gives
$3(2n)^6(r^2+g-\mathscr W)=\sum_{j=0}^6 C_j(2n)^j$,
where
\begin{align*}
 C_6&=3g,\qquad C_5=-6r^3+9r^2-6r,\\
 C_4&=6r^4-24r^3+6r^2-6r,\\
 C_3&=-29r^5+201r^4-460r^3+417r^2-144r,\\
 C_2&=3r^6-60r^4+6r^3+312r^2-288r,\\
 C_1&=3r^7-27r^6-9r^5+759r^4-2694r^3+3408r^2-1440r,\\
 C_0&=-6r^8+138r^7-1266r^6+5862r^5-14208r^4
                                      +16680r^3-7200r^2.
\end{align*}
For $r\ge7$,
\begin{align*}
 C_2&=3r^4(r^2-20)+6r(r^2+52r-48)>0,\\
 C_1&=r\bigl((r-7)[3r^3(r^2-2r-17)+402r^2+120r+4248]
                                  +28296\bigr)>0,\\
 C_0+6r^8&=6r^2\bigl((r-7)[r^3(23r-50)+627r^2+2021r+16927]
                                  +117289\bigr)>0,\\
 C_3&>-29r^5+201r^4-460r^3,\qquad
 C_4=6r((r-4)r^2+r-1)>0.
\end{align*}
Here $r^2-2r-17=(r-7)(r+5)+18>0$. Define
\[
 F(2n)=C_6(2n)^3+C_5(2n)^2+2nC_4-29r^5+201r^4-460r^3.
\]
Since $C_5<0$ and $2n\ge r^2$,
\begin{align*}
 F'(2n)=3C_6(2n)^2+4nC_5+C_4\ge 2n(3C_6r^2+2C_5)+C_4
   \ge12nr\{(r-7)(r+1)+5\}+C_4>0.
\end{align*}
Therefore
\[
 \sum_{j=0}^6C_j(2n)^j>(2n)^3F(2n)-6r^8
 \ge (2n)^3(F(2n_0)-6r^2).
\]
For the first row, expansion in powers of $r-7\ge0$ at $2n_0=r^2+r+4$ gives
\[
 \begin{aligned}
 F(2n_0)-6r^2&=3(r-7)^7+141(r-7)^6+2773(r-7)^5\\
 &\quad+29348(r-7)^4+177135(r-7)^3\\
 &\quad+582019(r-7)^2+831833(r-7)+82284>0.
 \end{aligned}
\]
For the second row,
\[
 F(2n_0)-6r^2=9r^6-83r^5-33r^4-286r^3+414r^2-492r+144.
\]
Expansion in powers of $r-10\ge0$ gives
\[
 \begin{aligned}
 &9(r-10)^6+457(r-10)^5+9317(r-10)^4+95394(r-10)^3\\
 &\qquad+492034(r-10)^2+1039988(r-10)+120624>0.
 \end{aligned}
\]
For the third row, $r=8$, $g=10$, $2n_0=82$, and direct substitution gives $F(82)-6\cdot8^2=107472>0$.
Thus $\mathscr W<r^2+g$ in every row, a contradiction.
\end{proof}

The two remaining short even tails are controlled by the density of a scalar image inside the second shell.

\begin{lemma}\label{lem:small-tail-two}
In the remaining geometric range, the cases $(r,r-c)=(4,2)$ and
$(6,2)$ are impossible, except possibly $(r,n)=(6,30)$.
\end{lemma}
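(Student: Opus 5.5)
The two cases have $r-c=2$, so $c=r-2$. Then $2n=(r+1)q+r-2$, i.e.\ $2n\equiv r-2\pmod{r+1}$. For $r=4$ this gives $2n\equiv2\pmod5$, and for $r=6$ it gives $2n\equiv4\pmod7$. In the geometric range~\eqref{eq:geometric-range} we then have $n\ge13$ with $2n\equiv2\pmod5$ for $r=4$, and $n\ge25$ with $2n\equiv4\pmod 7$ for $r=6$, so $n\ge 30$ for $r=6$.

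By Lemma~\ref{lem:minimal-fibers}, every minimum-fiber image $\pi(\mathbf b)$ satisfies $(q+1)\pi(\mathbf b)=\pi(\boldsymbol\eta_{\mathbf b})$, where $\boldsymbol\eta_{\mathbf b}$ is a sum of exactly two signed units. Hence $\boldsymbol\eta_{\mathbf b}$ lies in $\mathcal P_2=\{0\}\cup S_2$, and the capacity bound~\eqref{eq:minimum-parity-capacity} reads
\[
 E\le|A[q+1]|\,\bigl|(q+1)A\cap\pi(\mathcal P_2)\bigr|.
\]
The lower bound $E/V_r\ge e_{r,2}(2n)$ from Lemma~\ref{lem:positive-scalar-comparison} is fixed, so the task is to show that the subgroup $G=(q+1)A$ meets $\pi(\mathcal P_2)$ in a proportion strictly smaller than $e_{r,2}(2n)$. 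This is the ``density of a scalar image inside the second shell'' that the lemma refers to.

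The analogue of Lemma~\ref{lem:odd-tail-occupancy} for even $k=2$ fails, since there is no odd-cycle parity argument. I would therefore bound $|G\cap\pi(S_2)|/|G|$ directly through sums in $G$, using the two ingredients below.
\begin{itemize}
\item[(i)] Differences of two elements of $\pi(S_2)$ are images of vectors of norm at most $4\le 2r$. These differences are therefore distinct as long as they are distinct in $\Z^n$, so the map $(\mathbf x,\mathbf y)\mapsto\pi(\mathbf x-\mathbf y)$ is injective modulo the obvious coincidences. Restricting to $\mathbf x,\mathbf y$ with images in $G$ gives $|Y-Y|$ large compared with $|Y|$, where $Y=G\cap\pi(\mathcal P_2)$.
\item[(ii)] The lifted set of $Y$ is a centrally symmetric subset of $\mathcal P_2$ of affine dimension at least two. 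Lemma~\ref{lem:planar-sumset} then gives $|2Y|\ge3(|Y|-1)$ and $|3Y|\ge4(|Y|-1)$, and all of $2Y,3Y$ lies inside $G$.
\end{itemize}
Injectivity of $\pi$ on $B_r$ ensures that sums of at most $r/2$ elements of $\mathcal P_2$ are distinct in $A$. This holds for $2Y$ when $r=4$ and for $3Y$ when $r=6$, so we get $|G|\ge|2Y|\ge 3|Y|-3$ for $r=4$ and $|G|\ge|3Y|\ge 4|Y|-4$ for $r=6$. Combining with the capacity bound yields
\[
 \frac{E}{V_r}\le\frac{|Y|}{|G|}\lessapprox\frac13\ (r=4),\qquad \frac{E}{V_r}\lessapprox\frac14\ (r=6).
\]

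Finally I compare these bounds with $e_{r,2}(2n)$ at the smallest admissible dimension and invoke the monotonicity of $e_{r,2}$ from Lemma~\ref{lem:positive-scalar-comparison}. For $r=4$ we have $e_{4,2}\to 3/5>1/3$, and already at $2n=32$ the value $e_{4,2}(32)$ clearly exceeds $1/3$. For $r=6$ we have $e_{6,2}\to 3/7$, and $e_{6,2}(2n)$ exceeds $1/4$ comfortably once $2n$ passes the first admissible value. The single boundary value $2n=60$ with $r=6$ is presumably where the crude estimate $|3Y|\ge4|Y|-4$ together with the $-4$ correction fails to close, which is why that case is left open.

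The main obstacle is the distinctness claim used in (ii). Sums of three elements of $\mathcal P_2$ have norm at most $6=r$, so they are distinct in $A$ exactly when they are distinct in $\Z^n$; this needs care with repeated terms and with zero. In addition, $G$-membership has to be transferred to the lifts. If the crude sumset bound turns out to be insufficient near the endpoint, I would fall back on the character bound of Corollary~\ref{thm:uspec} applied to the scalar image.
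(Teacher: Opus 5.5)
\textbf{There is a genuine gap.} Your argument covers only the case where $G=(q+1)A$ is nontrivial and the lift of $Y$ is genuinely planar. The two cases you omit are where the real work lies.

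First, you never treat $(q+1)A=\{0\}$. Then $G\cap\pi(\mathcal P_2)=G$, and \eqref{eq:minimum-parity-capacity} is vacuous ($E\le V_r$). This case needs a different input. Injectivity on $B_r$ forces every $\boldsymbol\eta_{\mathbf b}=\mathbf 0$, so $\mathcal C=2E$. The layered cancellation bound, as packaged in Lemma~\ref{lem:positive-scalar-comparison}, then gives $2e_{r,2}(2n)\le\gamma_r(2n)$. This fails at $2n=32$ ($r=4$) and $2n=74$ ($r=6$), and beyond by monotonicity.

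Second, your assertion in (ii) that the lift of $Y$ has affine dimension at least two is not justified. The lift may be $\{\mathbf 0\}$, which is harmless since $|G|\ge3$ is odd. It may also be $\{\mathbf 0,\pm\mathbf v\}$ with $\mathbf v\in S_2$, and this line case is the binding one. There $|Y|/|G|\le 3/k$, where $k$ is the odd order of $\pi(\mathbf v)$. The fact $k\mathbf v\in L\setminus\{\mathbf 0\}$ only gives $2k\ge 2r+1$, i.e.\ $k\ge r+1$. Since $e_{r,2}(2n)<3/(r+1)$ for every $n$, your comparison can never close unless $k=r+1$ is ruled out.

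The missing idea is arithmetic:
\begin{itemize}
\item $r+1\in\{5,7\}$ is prime, and $k\mid |G|\mid V_r(n)$.
\item Lucas's congruence gives $V_r(n)\equiv(-1)^{n\bmod (r+1)}\not\equiv0\pmod{r+1}$.
\item Hence $k\ge r+3$, and $E/V_r\le 3/(r+3)$.
\end{itemize}
It is this bound that $e_{r,2}$ must beat, not your planar estimate. The planar estimate gives at most $1/(t+1)+1/(4t+5)$ with $t=r/2$, once you use $|X|\ge4$. The margins at $2n=32$ and $74$ are only $101/3080$ and $149/19425$.

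This also corrects your explanation of the exception $(6,30)$. There $e_{6,2}(60)<1/3$ and $2e_{6,2}(60)<\gamma_6(60)$, so the line case and the annihilated case fail, while the planar case still closes. Finally, the distinctness you flag as the main obstacle is automatic: $tY$ is a set of vectors in $B_r$, and $\pi$ is injective there.
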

\begin{proof}
First suppose $ (q+1)A\ne\{0\}$, and set $Y=\{\mathbf{0}\}\cup\{\mathbf{x}\in S_2:\pi(\mathbf{x})\in(q+1)A\}$ and $X=Y\setminus\{\mathbf{0}\}$.
Equation~\eqref{eq:minimum-parity-capacity} and $V_r={|A[q+1]|}|(q+1)A|$ give $E/V_r\le|Y|/|(q+1)A|$. For $r+1\in\{5,7\}$,
Lucas's identity and $\binom{r}{j}\equiv(-1)^j\pmod{r+1}$ give
\[
 {V_r}=\sum_{j=0}^{r}2^j\binom nj\binom{r}j
 \equiv\sum_{j=0}^{r}(-2)^j\binom{n\bmod (r+1)}{j}
 =(-1)^{n\bmod (r+1)}\not\equiv0\pmod{r+1}.
\]
Thus $(r+1)\nmid |(q+1)A|$. If $X$ is empty, $|Y|/|(q+1)A|\le1/3$.
If $X$ spans a line, $Y=\{\mathbf{0},\mathbf{v},-\mathbf{v}\}$; the odd order $k$ of
$\pi(\mathbf{v})$ satisfies $2k\ge2r+1$. Hence $k\ge r+1$;
since $(r+1)\nmid k$, in fact $k\ge r+3$, and
$|Y|/|(q+1)A|\le3/(r+3)$.
Otherwise put $t=r/2\in\{2,3\}$. Lemma~\ref{lem:planar-sumset} gives $|2Y|\ge3|X|$, $|3Y|\ge4|X|$, and $|tY|\ge(t+1)|X|$.
Since $tY\subseteq B_r$, its image is injective, and $|X|\ge4$.
The odd integer $|(q+1)A|$ therefore satisfies $|(q+1)A|\ge4(t+1)+1$.
Consequently
\[
 \frac{|Y|}{|(q+1)A|}\le\frac1{t+1}+\frac1{4t+5}<\frac3{2t+3}
 =\frac3{r+3},
\]
where the strict difference has numerator $3(t+1)(4t+5)-(2t+3)(5t+6)=2t^2-3>0$.
We have proved $E/V_r\le3/(r+3)$ in every nonzero-image case.
If $ (q+1)A=\{0\}$, Lemma~\ref{lem:positive-scalar-comparison} instead
requires $2e_{r,2}(2n)\le\gamma_r(2n)$.
Both alternatives contradict the following endpoint inequalities:
\[
\begin{array}{c|c|c|c}
 r&2n_0&e_{r,2}(2n_0)-3/(r+3)&2e_{r,2}(2n_0)-\gamma_r(2n_0)\\\hline
 4&32&101/3080&114369/357280\\
 6&74&149/19425&54551664/435307775.
\end{array}
\]
Every displayed number is positive; monotonicity extends the
contradictions to $2n\ge {2n_0}$. The congruences
$n\equiv1\pmod5$ and $n\equiv2\pmod7$, together with the
geometric cutoffs, leave only $(r,n)=(6,30)$ below these endpoints.
\end{proof}

For longer tails, the collision budget forces the boundary scalar to be injective, contradicting capacity.

\begin{lemma}\label{lem:large-tail}

No lattice tiling has $r-2\le r-c\le r-1$ if $r\ge11$ and $2n\ge r^2$.
For the smaller radii, a lattice tiling is impossible whenever
\[
\begin{array}{c|rrrrrrrr}
 r&4&6&6&7&8&8&9&10\\\hline
 r-c&3&4&5&5&6&7&7&8,9\\
 n\ge n_0&21&45&28&47&50&37&53&56.
\end{array}
\]
\end{lemma}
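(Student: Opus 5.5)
The plan is to apply Lemma~\ref{lem:positive-scalar-comparison} with $k=r-c\ge2$ and show that, in the stated ranges, the collision budget forbids a nontrivial scalar kernel. The capacity bound then contradicts the density lower bound. Since $r-c\ge2$, the last assertion of Lemma~\ref{lem:positive-scalar-comparison} gives $|A[q+1]|>1$. The tiling must therefore satisfy
\[
 (r-c)\max\Bigl\{2e_{r,r-c}(2n)-1,\ e_{r,r-c}(2n)-\tfrac13\Bigr\}\le\omega_r(2n).
\]
Both $e_{r,r-c}$ and $\omega_r$ are monotone in the favourable direction (the former increases, the latter decreases). So it suffices to show that the left side exceeds the right side at the single endpoint $2n=r^2$ for the general statement, and at $2n=2n_0$ for each column of the table.

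For $r\ge11$ and $k=r-c\in\{r-2,r-1\}$, I would bound the quantities as follows.
\begin{itemize}
\item At $2n=r^2$, $e_{r,k}(r^2)=\frac{k+1-(r-1)/r}{r+1}\bigl(1-\frac r{r^2+1}\bigr)$, which for $k\ge r-2$ is at least roughly $\frac{r-2+1/r}{r+1}(1-1/r)$. This tends to $1$, so $2e-1$ tends to $1$.
\item On the other side, $\gamma_r(r^2)\approx 1+2/r+\cdots$ and the second term of $\omega_r$ is about $\frac{2n r(r-1)}{(2n)^2}\approx (r-1)/r$. Hence $\omega_r(r^2)\lesssim 2+O(1/r)$.
\item The left side is then about $(r-2)(1-O(1/r))$, which is already much larger than $2$ for $r\ge11$.
\end{itemize}
I would make this rigorous by lower-bounding $2e_{r,k}(r^2)-1\ge 1-\frac{c_1}{r}$ with an explicit constant ($c_1=6$ appears to suffice), and upper-bounding $\omega_r(r^2)\le 3$ via crude inequalities such as $\frac{2(r-1)^2}{r^4-(r-1)^2}\le \frac{2}{r^2-1}$. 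The resulting inequality $(r-2)(1-6/r)>3$ holds for $r\ge11$. If needed I would also use the $e-\tfrac13$ branch, or clear denominators into a polynomial in $r$ and expand in powers of $r-11$ with positive coefficients, as the paper does elsewhere.

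For the table entries, which are finitely many pairs $(r,r-c)$ with explicit $n_0$, I would evaluate $e_{r,r-c}(2n_0)$ and $\omega_r(2n_0)$ as exact rationals. I would then check that $(r-c)\max\{2e-1,e-1/3\}-\omega_r>0$, listing the positive differences in a small table as in Lemma~\ref{lem:small-tail-two}, and let monotonicity cover all larger $n$.

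The main obstacle is the small-radius entries where $r-c$ is small relative to $\omega_r$, for instance $(r,r-c)=(4,3)$ or $(6,4)$. In those cases $\omega_r(2n_0)$ may exceed the kernel-collision left side. If the plain comparison fails there, I would first try raising the threshold $n_0$, which is presumably why the table's $n_0$ values are large. I would also try the sharper cancellation sum~\eqref{eq:layered-cancellation} in place of $\gamma_r$, or combine with the capacity upper bound $E/V_r\le|A[q+1]|(r)_c/(2n+1)^c$ to force $|A[q+1]|$ to be large, and then use the integrality of $m_{\boldsymbol\eta}$ in Lemma~\ref{lem:all-minimum-kernel} more carefully.
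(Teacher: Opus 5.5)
Your reduction is the paper's. Since $r-c\ge2$, Lemma~\ref{lem:positive-scalar-comparison} forces $|A[q+1]|>1$, so it suffices to violate its collision inequality, and the monotonicity of $e_{r,r-c}$ and $\omega_r$ reduces this to one endpoint. For the table this works exactly as you propose. At the listed $n_0$ the plain comparison already has positive margin, small but positive (e.g.\ $535/246519$ for $(4,3)$), so none of your fallback devices are needed. The pair $(10,9)$ follows either directly or because $e_{10,k}$ and $k(2e_{10,k}-1)$ increase with $k$.

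The step that fails is your quantitative sketch for $r\ge11$; both explicit estimates are false.

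\emph{The bound on $\omega_r$.} The second summand of $\omega_r$ is $\frac{4nr(r-1)}{(2n+1)(2n+1-r)}\approx\frac{2r(r-1)}{2n}$. At $2n=r^2$ this is about $2$, not $(r-1)/r$. Hence $\omega_r(r^2)=3+2/r+O(r^{-2})$, and in fact $\omega_r(r^2)>3$ throughout $r\ge11$ (e.g.\ $\omega_{11}(121)\approx3.145$). So ``$\omega_r(r^2)\le3$'' cannot be proved.

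\emph{The bound on $e$.} One has
\[
 e_{r,r-2}(r^2)=\frac{(r-1)^2(r^2-r+1)}{r(r+1)(r^2+1)}=1-\frac4r+O(r^{-2}),
\]
so $2e-1=1-8/r+O(r^{-2})$, and your bound with $c_1=6$ is false. At $r=11$ one gets $2e-1=254/671<5/11$. A bound $2e-1\ge1-c_1/r$ valid for all $r\ge11$ needs $c_1\ge8$, and then $(r-2)(1-8/r)=27/11<\omega_{11}(121)$.

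So this crude route cannot reach $r=11$. There the true comparison is only about $3.407$ against $3.145$, and the exact value of $e_{r,r-2}(r^2)$ is required, at least at $r=11$.

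The paper handles this uniformly. Using $r^2/(2n)\le1$, $(r-1)/(2n)\le1/12$ and $\frac{4nr(r-1)}{(2n+1)(2n+1-r)}<\frac{2r^2}{2n-r+1}$, it gets $\omega_r(2n)<2885/858<27/8$ on $2n\ge r^2$. It then checks $(r-2)(2e_{r,r-2}(r^2)-1)>27/8$ by clearing denominators. This gives $8r^5-99r^4+141r^3-195r^2+101r-32$, whose expansion in powers of $r-11$ has positive coefficients (constant term $4104$).

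Your fallback of expanding in $r-11$ would land there. The inequality chain you actually wrote down would not.
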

\begin{proof}
By Lemma~\ref{lem:positive-scalar-comparison}, $|A[q+1]|>1$, so it
suffices to contradict its collision bound. For $r\ge11$, monotonicity gives
\[
 e_{r,r-c}(2n)\ge e_{r,r-2}(r^2)
 =\frac{(r-1)^2(r^2-r+1)}{r(r+1)(r^2+1)}
 \qquad(r-c\ge r-2).
\]
Moreover, $r^2/(2n)\le1$ and $(r-1)/(2n)\le1/12$, since
$r^2-12(r-1)=(r-11)(r-1)+1>0$.
Furthermore,
\[
 \frac{4nr(r-1)}{(2n+1)(2n+1-r)}
 <\frac{2r^2}{2n-r+1}.
\]
Consequently
\begin{align*}
 \omega_r(2n)
 &<\frac{r^2}{2n}\left(1+\frac{r-1}{n}
     +\frac{2(r-1)^2}{(2n)^2-(r-1)^2}
     +\frac{4n}{2n-r+1}\right)\\
 &\le1+\frac16+\frac2{143}+\frac{24}{11}
 =\frac{2885}{858}<\frac{27}{8}.
\end{align*}
The last difference is $27/8-2885/858=43/3432>0$.
On the other hand, expanding in powers of $r-11\ge0$,
\begin{align*}
 &8r(r+1)(r^2+1)
 \left[(r-2)(2e_{r,r-2}(r^2)-1)-\frac{27}{8}\right]\\
 &\quad=8r^5-99r^4+141r^3-195r^2+101r-32\\
 &\quad=8(r-11)^5+341(r-11)^4+5465(r-11)^3+39064(r-11)^2+105558(r-11)+4104>0.
\end{align*}
Thus $2e_{r,r-2}(r^2)-1>0$ and
$(r-c)(2e_{r,r-c}(2n)-1)>27/8>\omega_r(2n)$, the required contradiction.

For the smaller radii, one endpoint suffices. In the table below,
the last column is $r-c$ times the indicated collision term,
less $\omega_r(2n_0)$. Direct substitution gives
\[
\begin{array}{c|r|c|c}
 (r,r-c)&n_0&\text{collision term}&\text{margin}\\\hline
 (4,3)&21&2e_{r,r-c}(2n_0)-1&535/246519\\
 (6,4)&45&e_{r,r-c}(2n_0)-1/3&1268/101745\\
 (6,5)&28&2e_{r,r-c}(2n_0)-1&1404367/23170728\\
 (7,5)&47&e_{r,r-c}(2n_0)-1/3&295991/25182600\\
 (8,6)&50&e_{r,r-c}(2n_0)-1/3&5231948/209385625\\
 (8,7)&37&2e_{r,r-c}(2n_0)-1&69088739/185739075\\
 (9,7)&53&2e_{r,r-c}(2n_0)-1&120553113/2798241530\\
 (10,8)&56&2e_{r,r-c}(2n_0)-1&192031977/2208244192.
\end{array}
\]
Every margin is positive. Since $e_{r,r-c}(2n)$ increases and $\omega_r(2n)$ decreases,
it remains positive for all $n\ge n_0$. At $r=10$, the positive
margin also gives $2e_{10,8}(112)-1>0$; increasing $r-c$ increases
both $e_{10,10-c}$ and $(r-c)(2e_{10,10-c}-1)$. Thus the last row covers
$r-c=9$ as well.
\end{proof}

Only finitely many dimensions lie below the preceding scalar thresholds.
The cases with $r=4,6$ are also covered by
\cite[Theorem~5.1]{XZ}; the following scalar-capacity argument treats
all the listed parameters directly.

\begin{lemma}\label{lem:positive-finite-capacity}
No lattice tiling exists at the following parameters:
\[
\begin{array}{c|l}
 r&n\\\hline
 4&13,18\\
 6&25,29,30,36,43\\
 7&33,37,41,45\\
 8&39,46\\
 9&51.
\end{array}
\]
\end{lemma}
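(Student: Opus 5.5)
For each listed pair $(r,n)$ I would first compute $2n=(r+1)q+c$:
\begin{itemize}
\item $(4,13)$: $26=5\cdot5+1$, so $c=1$, $q+1=6$.
\item $(4,18)$: $36=5\cdot7+1$, so $c=1$, $q+1=8$.
\item $(6,25)$: $50=7\cdot7+1$, so $c=1$, $q+1=8$.
\item $(6,29)$: $58=7\cdot8+2$, so $c=2$, $q+1=9$.
\item $(6,30)$: $60=7\cdot8+4$, so $c=4$, $q+1=9$.
\item $(6,36)$: $72=7\cdot10+2$, so $c=2$, $q+1=11$.
\item $(6,43)$: $86=7\cdot12+2$, so $c=2$, $q+1=13$.
\end{itemize}
The remaining radii are treated the same way. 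In every listed case one has $1\le c\le r-1$, $r-c\ge2$ or a short tail, and the dimension lies just below an endpoint $n_0$ of Lemmas~\ref{lem:small-tail-two}, \ref{lem:large-tail} or \ref{lem:tail-two}.

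The plan is to play the lower density bound $E/V_r\ge e_{r,r-c}(2n)$ of Lemma~\ref{lem:positive-scalar-comparison} (or the sharper exact value $\bigl((r-c+1)\sigma_r-(r-1)\sigma_{r-1}\bigr)/((r+1)V_r)$ from Lemma~\ref{lem:minimal-fibers}, computed with exact shell sizes) against the capacity bound \eqref{eq:minimum-parity-capacity},
\[
E\le |A[q+1]|\,|(q+1)A\cap\pi(\mathcal P_{r-c})|,
\]
after pinning down $|A[q+1]|$ arithmetically. Since $|A|=V_r(n)$ is an explicit odd integer, I would factor $V_r(n)$ (or at least reduce it modulo the primes dividing $q+1$, using the Lucas digit factorization $V_v(u)\equiv\prod_iV_{v_i}(u_i)\pmod p$). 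Whenever $\gcd(q+1,V_r(n))=1$, multiplication by $q+1$ is bijective, so $|A[q+1]|=1$. Lemma~\ref{lem:positive-scalar-comparison} already shows this is impossible when $r-c\ge2$, because it forces $E\le V_{r-c}$, which is too small. So the first step is to record, pair by pair, the prime divisors of $\gcd(q+1,V_r(n))$. I expect many pairs to die immediately here, and for example the even $q+1$ in $(4,18)$ and $(6,25)$ should already be handled this way since $V_r$ is odd.

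For pairs where $p\mid\gcd(q+1,V_r(n))$ with $p$ odd, I would bound $|A[q+1]|$ above by the $p$-part of $V_r(n)$, computed exactly, and then use two ingredients:
\begin{itemize}
\item the odd-tail occupancy bound, Lemma~\ref{lem:odd-tail-occupancy}, with $G=(q+1)A$, when $r-c$ is odd;
\item the planar sumset argument of Lemma~\ref{lem:small-tail-two} when $r-c$ is even and small.
\end{itemize}
Either bounds $|(q+1)A\cap\pi(\mathcal P_{r-c})|/|(q+1)A|$ by an explicit constant, giving $E/V_r\le\theta$, and I would check numerically that $\theta<$ the exact lower density. When the kernel is nontrivial, I would also evaluate the collision inequality of Lemma~\ref{lem:all-minimum-kernel} with exact shell and volume values rather than the $\omega_r$ estimate, since the listed $n$ sit barely below the thresholds where the cruder $\omega_r$ bound worked. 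In all of these I would use exact $\sigma_j,V_j$ values.

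The main obstacle is the pair $(6,30)$, explicitly left open in Lemma~\ref{lem:small-tail-two}, and any pair where $q+1$ shares a large prime power with $V_r(n)$, so that $|A[q+1]|$ can be big. For these I would first try a prime-quotient reduction. Pass to $A/pA$ (or the $p$-primary part), push the tiling identity $\chi(B_r)=0$ to characters of order $p$, and use the congruence for $V_r$ together with the scalar identity $(q+1)\pi(\mathbf b)=\pi(\boldsymbol\eta_{\mathbf b})$ modulo $p$. The aim is to show that all minimum boundary images would then lie in a small subgroup, contradicting $E/V_r\ge e_{r,r-c}$. If this fails, the fallback is the exact two-shell moment inequality of Lemma~\ref{lem:two-shell-defect} with Lemma~\ref{lem:sign-restricted-heavy-pairs}, evaluated at the specific $(r,n)$ with exact $\Phi_k,\Psi_k,\sigma_k$, which is sharper than the asymptotic polynomial $\mathscr W$ used in Lemma~\ref{lem:tail-two}.
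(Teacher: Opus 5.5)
Your first step (check $\gcd(q+1,V_r(n))=1$ and invoke the last assertion of Lemma~\ref{lem:positive-scalar-comparison}) is exactly the paper's argument. However, the proposal stops before carrying it out, and that finite check is essentially the whole content of the lemma. Done explicitly, with Lucas's digit factorization making it quick, it disposes of every pair except $(4,13)$ and $(7,37)$.

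In particular, $(6,30)$, which you single out as the main obstacle, is immediate. There $q+1=9$, and $30=(1010)_3$, $6=(20)_3$ give $V_6(30)\equiv V_2(1)=5\equiv2\pmod 3$. Likewise, $V_r(n)$ is prime to $3$ at $(6,29),(7,33),(7,45),(8,39)$, prime to $11$ at $(6,36),(7,41),(8,46),(9,51)$, and prime to $13$ at $(6,43)$. No pair has $q+1$ sharing a large prime power with $V_r(n)$, so the prime-quotient and two-shell fallbacks are never needed.

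For the two survivors, the tools you name would not work.
\begin{itemize}
\item Lemma~\ref{lem:odd-tail-occupancy} requires $3(r-c)<2r+1$. This fails with equality in both cases: $9=2\cdot4+1$ and $15=2\cdot7+1$.
\item The collision inequality of Lemma~\ref{lem:all-minimum-kernel} is exactly what is too weak below the thresholds of Lemma~\ref{lem:large-tail}. At $(4,13)$, the least admissible value is $E=(4\sigma_4-3\sigma_3)/5=13650$. There the explicit term $4n\sum_{j=1}^{3}V_{3-j}(13)=52\cdot393=20436$ alone exceeds $3\max\{2E-V_4,E-V_4/3\}=18381$. So no contradiction can come from that inequality.
\end{itemize}

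What works is the crude capacity bound you already wrote down, with the kernel pinned by a $p$-adic valuation. One has $V_4(13)=22569=3\cdot7523$ with $3\nmid7523$, and $V_7(37)\equiv20\pmod{25}$. Hence $|A[6]|\le3$ and $|A[10]|\le5$. Then $E\le|A[q+1]|\,V_{r-c}$ gives the following contradictions with the density lower bound:
\begin{itemize}
\item at $(4,13)$, it forces $13650\le E\le3V_3(13)=9909$;
\item at $(7,37)$, the density bound $e_{7,5}(74)=1139/1850$ exceeds the capacity bound $5\cdot42/75^2$.
\end{itemize}
This is precisely the paper's proof.
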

\begin{proof}
Every listed pair satisfies $2n\ge r^2$ and $r-c\ge2$.
Thus Lemma~\ref{lem:positive-scalar-comparison} requires
$|A[q+1]|>1$.
At $(r,n)=(4,18),(6,25)$, one has $q+1=8$; since $V_r$ is odd,
multiplication by $q+1$ is injective, a contradiction.
For the other pairs, the volume formula gives
\[
\begin{array}{c|r|r|r}
 (r,n)&q+1&m&V_r\bmod m\\\hline
 (4,13)&6&9&6\\
 (6,29)&9&3&1\\
 (6,30)&9&3&2\\
 (6,36)&11&11&3\\
 (6,43)&13&13&2\\
 (7,33)&9&3&1\\
 (7,37)&10&25&20\\
 (7,41)&11&11&8\\
 (7,45)&12&3&1\\
 (8,39)&9&3&2\\
 (8,46)&11&11&2\\
 (9,51)&11&11&7.
\end{array}
\]
Each residue is obtained from
$V_r\equiv\sum_{j=0}^{r}2^j\binom nj\binom rj\pmod m$.
Except at $(r,n)=(4,13),(7,37)$, the displayed residue and the
oddness of $V_r$ give $\gcd(q+1,V_r)=1$, hence $|A[q+1]|=1$,
again a contradiction.
At $(4,13)$, one has $c=1$ and $V_4(13)\equiv6\pmod9$.
Thus $v_3(V_4(13))=1$ and $|A[6]|\le3$, whereas
\[
 e_{4,3}(26)-\frac{3(4)_1}{27}
 =e_{4,3}(26)-\frac{12}{27}
 =\frac{278}{1755}>0.
\]
At $(7,37)$, one has $c=2$ and $V_7(37)\equiv20\pmod{25}$.
Thus $v_5(V_7(37))=1$ and $|A[10]|\le5$, whereas
\[
 e_{7,5}(74)-\frac{5(7)_2}{75^2}
 =e_{7,5}(74)-\frac{210}{75^2}
 =\frac{16049}{27750}>0.
\]
Both inequalities contradict
$e_{r,r-c}(2n)\le |A[q+1]|(r)_c/(2n+1)^c$.
\end{proof}

We now combine the tail criteria and their congruence restrictions to exhaust the positive remainders.

\begin{proposition}\label{prop:positive-complete}

Let $r\ge3$ and let $n$ satisfy~\eqref{eq:geometric-range}.
If $1\le c\le r-1$, then $B_n(r)$ admits no lattice tiling
of $\Z^n$.
\end{proposition}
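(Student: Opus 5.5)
The plan is a case analysis on the tail length $k=r-c\in\{1,\ldots,r-1\}$ and the radius, assigning each $(r,k)$ to one of Lemmas~\ref{lem:unit-tail}, \ref{lem:tail-two}, \ref{lem:small-tail-two}, \ref{lem:large-tail}, and then clearing the finitely many dimensions left below the thresholds with Lemma~\ref{lem:positive-finite-capacity}. Throughout, the range~\eqref{eq:geometric-range} gives $2n>r(r+1)$, so $2n\ge r^2$ always holds, as the earlier lemmas require. The tail $k=1$ is excluded for every $r\ge3$ by Lemma~\ref{lem:unit-tail}.

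For $r\ge11$, the cases $k\in\{r-2,r-1\}$ fall under the first assertion of Lemma~\ref{lem:large-tail}. The case $k=2$ and the case $k=r-3$ fall under the second row of Lemma~\ref{lem:tail-two}, since $2n>r^2+r+4$ forces $2n\ge r^2+2r-2$ once the congruence $2n\equiv c\pmod{r+1}$ is imposed. I expect this step to need checking: the gap between $r^2+r+4$ and $r^2+2r-2$ must contain no admissible $2n$ with the given residue $c$. The cases $3\le k\le r-4$ fall under the first row, because $r(r+1)/2+2<n$ gives $2n\ge r^2+r+6$.

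The radii $3\le r\le10$ are the main obstacle. They require bookkeeping rather than new ideas. For each $r$, I list $k=1,\ldots,r-1$ and assign a lemma:
\begin{itemize}
\item $r=3$: $k=1$ by Lemma~\ref{lem:unit-tail}. For $k=2$, I expect to reuse the argument of Lemma~\ref{lem:positive-scalar-comparison}, which gives $|A[q+1]|>1$, together with a collision margin computed directly at $2n_0=16$, as in Lemma~\ref{lem:large-tail}. This is the piece I would verify first.
\item $r=4$: $k=2$ by Lemma~\ref{lem:small-tail-two}, $k=3$ by Lemma~\ref{lem:large-tail} for $n\ge21$.
\item $r=5$: $k=3$ by Lemma~\ref{lem:unit-tail}; for $k=2,4$ I would check the collision margin at the cutoff.
\item $r=6$: $k=2$ by Lemma~\ref{lem:small-tail-two}, $k=3$ by Lemma~\ref{lem:unit-tail}, and $k=4,5$ by Lemma~\ref{lem:large-tail}.
\item $r=7$: $k=3$ by the first row of Lemma~\ref{lem:tail-two}, and $k=5$ by Lemma~\ref{lem:large-tail}.
\item $r=8$: $k\in\{2,5\}$ by the third row of Lemma~\ref{lem:tail-two}, $k=3,4$ by its first row, and $k=6,7$ by Lemma~\ref{lem:large-tail}.
\item $r=9$ and $r=10$: the first row of Lemma~\ref{lem:tail-two} handles the middle tails; $r=10$ uses its second row for $k\in\{2,7\}$, and Lemma~\ref{lem:large-tail} handles $k=7$ at $r=9$ and $k=8,9$ at $r=10$.
\end{itemize}
Where a needed lemma only applies from some $n_0$, I solve $2n=(r+1)q+c$ for the admissible $n$ between the geometric cutoff and $n_0$. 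The expectation is that exactly the pairs in Lemma~\ref{lem:positive-finite-capacity} remain, for example $(4,13),(4,18)$ for $(r,k)=(4,3)$ and $(6,30)$ from Lemma~\ref{lem:small-tail-two}. These are then excluded by that lemma.

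The hard step is confirming that this assignment is exhaustive, especially the few $(r,k)$ pairs with $r\le5$ or $r=7,9$ that none of the listed lemmas names explicitly. For each such pair I would first try the endpoint collision inequality of Lemma~\ref{lem:positive-scalar-comparison}, evaluated at the smallest admissible $n$. Where that margin is negative, I would fall back on the capacity bound $E/V_r\le|A[q+1]|(r)_c/(2n+1)^c$, combined with a Lucas-congruence computation of $\gcd(q+1,V_r)$ as in Lemma~\ref{lem:positive-finite-capacity}.
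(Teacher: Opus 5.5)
Your case assignment agrees with the paper's for every tail length that actually occurs. This covers the radii $4,6,8,10$, the rows chosen for $r\ge11$, and the leftover dimensions passed to Lemma~\ref{lem:positive-finite-capacity}. The gap is in the pairs you flag as not named by any lemma: $(r,k)=(3,2),(5,2),(5,4)$, the tails $k\in\{2,4,6\}$ at $r=7$, and $k\in\{2,6,8\}$ at $r=9$. For these you propose evaluating the collision inequality of Lemma~\ref{lem:positive-scalar-comparison} at the geometric cutoff, and that does not close them. At $(r,k)=(3,2)$ and $2n=16$ one has $e_{3,2}(16)=147/272$, so $(r-c)\max\{2e-1,e-1/3\}=169/408\approx0.41$, while $\omega_3(16)\approx1.53$. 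Similarly, at $(5,4)$ and $2n=36$ the two sides are about $1.23$ and $2.08$. Your further fallback to capacity plus Lucas congruences is not specified concretely enough to finish these cases. So, as written, the exhaustiveness step you yourself identify as the hard one is left open.

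The missing idea is a one-line parity observation, which the paper uses at the start of its proof. If $r$ is odd, then $r+1$ and $2n$ are both even, so $c=2n-(r+1)q$ is even and $k=r-c$ is odd. Hence only odd tails occur at odd radii:
\begin{itemize}
\item $r=3$ leaves only $k=1$;
\item $r=5$ leaves $k\in\{1,3\}$, both handled by Lemma~\ref{lem:unit-tail};
\item $r=7$ leaves $k\in\{1,3,5\}$;
\item $r=9$ leaves $k\in\{1,3,5,7\}$.
\end{itemize}
All of these already have the assignments you gave. Your own recipe of solving $2n=(r+1)q+c$ would have revealed this, since those congruences have no even solutions. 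You instead evaluated at $2n=16$, which is not admissible for $c=1$. With the parity line added, the unnamed cases disappear, and your argument becomes the paper's proof. This includes the congruence step for $k\in\{2,r-3\}$: it gives no admissible $2n$ for odd $r$, and $q\ge r$ or $q\ge r+1$ for even $r$.
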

\begin{proof}
Lemma~\ref{lem:unit-tail} excludes $r-c=1$. If $r$ is odd, both $2n$ and
$r+1$ are even, so $r-c$ is odd. For $r=3,4,5,6$, the remaining
possibilities are therefore
\[
\begin{array}{c|c|c}
 r&r-c&\text{applicable comparison}\\\hline
 3&\varnothing&\\
 4&2,3&\text{short even tail; collision}\\
 5&3&\text{odd occupancy}\\
 6&2,3,4,5&\text{short even tail; odd occupancy; collision}.
\end{array}
\]
These comparisons are Lemmas~\ref{lem:small-tail-two}, \ref{lem:unit-tail}, and \ref{lem:large-tail}, respectively. Their exceptional dimensions are included in Lemma~\ref{lem:positive-finite-capacity}.

Suppose now $r\ge7$. The geometric bound gives
$2n\ge r(r+1)+6>r^2+r+4$, so the first row of
Lemma~\ref{lem:tail-two} excludes $3\le r-c\le r-4$,
including $r=7,8$.
For $r-c=2$ or $r-c=r-3$, odd $r$ is impossible. If $r$ is even,
write $2n=(r+1)q+c$. For $r-c=2$, $q$ is even; the inequality
$2n>r(r+1)$ therefore gives $q\ge r$ and $2n\ge r^2+2r-2$. For $r-c=r-3$, $q$ is odd, hence $q\ge r+1$
and $2n\ge r^2+2r+4$. The second moment row applies for $r\ge10$.
For $r=8$, the third row applies once $n\ge41$; the congruences
$n\equiv3,6\pmod9$ leave only $(n,r-c)=(39,2)$ below that threshold.
This point is in the finite-capacity lemma.

Only $r-c\ge r-2$ remains. Lemma~\ref{lem:large-tail} covers $r\ge11$;
for $r=10$, the geometric bound gives $n\ge58>56$, so its last row applies.
For the smaller radii, the remainder congruences and the thresholds
in that lemma leave exactly
\[
\begin{array}{c|c|c|l}
 (r,r-c)&\text{congruence}&n_0&n<n_0\\\hline
 (4,3)&3\pmod5&21&13,18\\
 (6,4)&1\pmod7&45&29,36,43\\
 (6,5)&4\pmod7&28&25\\
 (7,5)&1\pmod4&47&33,37,41,45\\
 (8,6)&1\pmod9&50&46\\
 (8,7)&5\pmod9&37&\varnothing\\
 (9,7)&1\pmod5&53&51.
\end{array}
\]
Every listed point is in Lemma~\ref{lem:positive-finite-capacity}.
Together with $(6,30)$ from the short-even comparison, this exhausts
all positive remainders.
\end{proof}

\subsection{Zero remainders and scalar images}
When the remainder is zero, the repeated-coordinate incidence involves a
second scalar. We first prove injectivity of the boundary scalar; the
doubling deficit then treats an annihilated second image, while the
character bounds control a nonzero image.
Throughout this subsection, $c=0$, so $2n=(r+1)q$.
\begin{lemma}
\label{lem:zero-kernel}
Suppose that $B_n(r)$ is a transversal for a finite abelian quotient
$A$ and $2n=(r+1)q$.
If $r\ge4$ and $n\ge\max\{r^2/3,4(r-1)\}$,
then multiplication by $q+1$ is injective on $A$.
For $r=3$, the same conclusion holds whenever $n\ge8$ is even.
\end{lemma}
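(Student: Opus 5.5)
We argue by contradiction. Suppose $A[q+1]\ne\{0\}$. Since $c=0$, the tail length is $r-c=r$, so Lemma~\ref{lem:all-minimum-kernel} applies. It gives
\[
 r\max\{2E-V_r,\,E-V_r/3\}\le\mathcal C+4n\sum_{j=1}^{r}V_{r-j-1}.
\]
We will show that the left side is too large.

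The first step is a lower bound on $E$. Lemma~\ref{lem:minimal-fibers} with $c=0$ gives
\[
 (r+1)E\ge(r+1)\sigma_r-(r-1)\sigma_{r-1}.
\]
Dividing by $V_r$ and using~\eqref{eq:successive-ratios} yields $E/V_r\ge e_{r,r}(2n)$. The argument is the same as in Lemma~\ref{lem:positive-scalar-comparison}, now with $k=r$:
\[
 e_{r,r}(2n)=\Bigl(1-\frac{r(r-1)}{(r+1)2n}\Bigr)\Bigl(1-\frac r{2n+1}\Bigr).
\]
Under $2n\ge 8(r-1)$ and $2n\ge 2r^2/3$, this is close to $1$. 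In particular $2e_{r,r}(2n)-1$ is bounded below by an explicit positive quantity.

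The second step bounds the right side. The hypothesis $n\ge r^2/3$ is exactly what~\eqref{eq:layered-cancellation-ratio} needs, so $\mathcal C/V_r\le\gamma_r(2n)<2r^2/(2n)$. The collision sum is controlled as before:
\[
 \frac{4n}{V_r}\sum_{j}V_{r-j-1}\le\frac{4nr(r-1)}{(2n+1)(2n+1-r)}<\frac{2r(r-1)}{2n+1-r}.
\]
Thus the right side over $V_r$ is at most $\omega_r(2n)$. The needed contradiction is
\[
 r\bigl(2e_{r,r}(2n)-1\bigr)>\omega_r(2n).
\]
Both $e_{r,r}$ increasing and $\omega_r$ decreasing in $n$ were proved in Lemma~\ref{lem:positive-scalar-comparison}. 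Hence it suffices to check the endpoint $2n_0=\max\{2r^2/3,\,8(r-1)\}$.

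At that endpoint, $\omega_r=O(r^2/n)$ is at most a constant, roughly $3/2+3=4.5$ when $2n\approx 2r^2/3$. Meanwhile $r(2e_{r,r}-1)$ grows linearly, since $2e_{r,r}-1\ge 1-O(r/n)$ is bounded away from zero. So the inequality holds for large $r$ by a one-variable polynomial check. The main obstacle is the small radii $r=4,\dots,$ roughly $8$. There the endpoint is $8(r-1)$, the margins are thin, and I expect to verify them by direct rational substitution, as in the tables of Lemma~\ref{lem:large-tail}. If a margin fails, I would use the other branch $E-V_r/3$ of Lemma~\ref{lem:all-minimum-kernel}. Alternatively I would use that $|A[q+1]|$ is odd and at least $3$, and restrict the prime divisors of $q+1$ through Lucas's factorization $V_r(n)\equiv\prod_i V_{v_i}(u_i)\pmod p$.

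For $r=3$, the condition $n\ge r^2/3=3$ holds, but $4(r-1)=8$ is replaced by the hypothesis $n\ge8$ even. Here $2n=4q$, so $q$ is even and $q+1$ is odd. The same computation applies with
\[
 e_{3,3}(2n)=\Bigl(1-\frac{3}{2n}\Bigr)\Bigl(1-\frac3{2n+1}\Bigr),
\]
checked at $2n=16$. If the margin is insufficient there, I would treat the few smallest even $n$ by computing $V_3(n)\bmod p$ for the primes $p\mid q+1$. This forces $\gcd(q+1,V_3(n))=1$, as in Lemma~\ref{lem:positive-finite-capacity}.
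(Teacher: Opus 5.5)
Your proposal is correct and follows the paper's proof. Assuming $A[q+1]\ne\{0\}$, the paper combines the minimum-fiber count, the $2E-V_r$ branch of Lemma~\ref{lem:all-minimum-kernel}, the layered cancellation bound (which is where $n\ge r^2/3$ is used) and the geometric collision sum, then concludes by monotonicity and one endpoint evaluation at $n=4(r-1)$. For $r=3$ the same inequality covers $n\ge 9$, and $n=8$ is settled by $\gcd(q+1,V_3(8))=\gcd(5,833)=1$. That is exactly your fallback, and it is genuinely needed: the endpoint inequality fails at $(r,n)=(3,8)$ in both branches.

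Your bound $E/V_r\ge e_{r,r}(2n)$ is slightly sharper than the paper's $1-2r^2/((r+1)(2n+1))$, which leaves a margin of only $272/756000$ at $r=4$. However, the first factor of $e_{3,3}(2n)$ should be $1-3/(4n)$, not $1-3/(2n)$; your weaker version still suffices for even $n\ge 10$.
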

\begin{proof}
Let $E$ be the number of minimum
boundary fibers. In the zero-remainder branch their tail length is $r$.
The minimum-fiber inequality and the ball recurrence give
$E\ge\sigma_r-\frac{r-1}{r+1}\sigma_{r-1},$
$\frac{V_{r-1}}{V_r}\le\frac{r}{2n+1}$ and
$
 \frac{E}{V_r}
 \ge1-\frac{V_{r-1}}{V_r}-
                 \frac{r-1}{r+1}\frac{\sigma_{r-1}}{V_r}
 \ge1-\frac{2r}{r+1}\frac{V_{r-1}}{V_r}
 \ge1-\frac{2r^2}{(r+1)(2n+1)}.
$
Suppose $A[q+1]\ne\{0\}$. Lemma~\ref{lem:all-minimum-kernel}, including
its zero-remainder endpoint, implies
\[
 r\left(2\frac{E}{V_r}-1\right)
 \le\frac{\mathcal C}{V_r}+
              \frac{4n}{V_r}\sum_{j=1}^{r}V_{r-j-1}.
\]
For $n\ge r^2/3$, the layered cancellation estimate gives
\[
 \frac{\mathcal C}{V_r}\le\frac{r^2}{2n}
 \left(1+\frac{2(r-1)}{2n}+
                \frac{2(r-1)^2}{(2n)^2-(r-1)^2}\right).
\]
To estimate the remaining sum, apply the ball recurrence twice:
\[
 \frac{V_{r-2}}{V_r}\le\frac{r(r-1)}{(2n+1)^2},\qquad
 V_{j-1}\le\frac{j}{2n+1}V_j\le\frac{r}{2n+1}V_j.
\]
Since $r<2n+1$, the geometric series consequently gives
\begin{align*}
 \frac{4n}{V_r}\sum_{j=1}^{r}V_{r-j-1}
 &=\frac{4n}{V_r}\sum_{j=0}^{r-2}V_j 
 \le\frac{4n}{1-r/(2n+1)}\frac{V_{r-2}}{V_r}
 \le\frac{4nr(r-1)}{(2n+1)(2n+1-r)}.
\end{align*}
It follows that $\Delta_r(n)\le0$, where
\begin{align*}
 \Delta_r(n)
 &=r-\frac{4r^3}{(r+1)(2n+1)}-\frac{r^2}{2n}
 \left(1+\frac{2(r-1)}{2n}+
              \frac{2(r-1)^2}{(2n)^2-(r-1)^2}\right)
 -\frac{4nr(r-1)}{(2n+1)(2n+1-r)}.
\end{align*}
We now show the reverse inequality without a parameter enumeration.
For $2n>r-1$, each of $\frac{1}{2n+1}$, $\frac{1}{2n}$, $\frac{1}{(2n)^2}$, and $\frac{1}{2n\{(2n)^2-(r-1)^2\}}$
is strictly decreasing. The last denominator has derivative
$24n^2-2(r-1)^2>0$. Moreover
\[
 \frac{d}{dn}\frac {2n}{(2n+1)(2n+1-r)}
 =\frac{2(1-r-4n^2)}{(2n+1)^2(2n+1-r)^2}<0.
\]
Every nonconstant subtracted term in $\Delta_r(n)$ is therefore
decreasing; hence $\Delta_r$ is strictly increasing for $2n>r-1$.
If $r\ge4$, then $2n\ge8(r-1)$, and direct common-denominator
simplification yields
\[
 \Delta_r(4(r-1))
 =\frac{r(872r^3-1763r^2-9259r+9504)}{2016(r-1)(r+1)(8r-7)}.
\]
For $r\ge4$, its numerator is exactly
\[
 872(r-4)^4+12189(r-4)^3+53297(r-4)^2+74040(r-4)+272>0.
\]
Its denominator is positive. Thus
$\Delta_r(n)\ge\Delta_r(4(r-1))>0$, contrary to the necessary
inequality. This proves injectivity for $r\ge4$.

For $r=3$ and $n\ge9$, the same argument applies and
\[
 \Delta_3(9)
 =3-\frac{108}{76}
   -\frac{1}{2}\left(1+\frac{2}{9}+\frac{1}{40}\right)
   -\frac{216}{304}
 =\frac{3349}{13680}>0.
\]
Finally, when $(r,n)=(3,8)$, one has $q+1=5$ and
\[
 |A|=V_3(8)
 =1+2\binom{8}{1}\binom{3}{1}
     +4\binom{8}{2}\binom{3}{2}+8\binom{8}{3}\binom{3}{3}
 =1+48+336+448=833.
\]
Since $833=5\cdot166+3$ and $\gcd(5,833)=1$, multiplication
by five is an automorphism of $A$.
\end{proof}
We next count heavy coordinate incidences in the zero-remainder branch. Put
$J=|\pi(S_{r-1})\setminus2\pi(B_{r-1})|$.
We separate the heavy coordinate incidences according to whether their
boundary fibers are minimum.

\begin{lemma}\label{lem:all-pairs}

If $r\ge3$ and $2n=(r+1)q$, then
\begin{align*}
 \sum_{\mathbf b\in S_r}P_{\mathbf b}=W_{r-1}&=n\bigl(\sigma_{r-1}(n)+\sigma_{r-1}(n-1)\bigr)
   =(2n-r+1)\sigma_{r-1}+(r-2)\sigma_{r-2},\\
 W_{r-1}&\le \frac{(2r+1)(r-1)}{r+1}\sigma_{r-1}+
 [0]S_1^{(2q+1)}S_{r-1}^{(q+1)}B_{r-1}.
\end{align*}
If $(2q+1) A=\{0\}$, then also
$
 (2n-r+1)J\le \frac{(2r+1)(r-1)}{r+1}\sigma_{r-1}.
$
\end{lemma}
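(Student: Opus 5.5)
The first identity is pure counting. By definition $\sum_{\mathbf b}P_{\mathbf b}$ counts pairs $(\mathbf x,i)$ with $\mathbf x\in S_{r+1}$ and $|x_i|\ge2$, since the fibers partition $S_{r+1}$ (Lemma~\ref{lem:exact-fiber}). I will subtract $\operatorname{sgn}(x_i)\mathbf e_i$. This is a bijection onto pairs $(\mathbf y,i)$ with $\mathbf y\in S_r$ and $y_i\ne0$. Subtracting $\operatorname{sgn}(y_i)\mathbf e_i$ once more gives pairs $(\mathbf z,\mathbf a)$ with $\mathbf z\in S_{r-1}$ and $\mathbf a\in S_1\setminus(-\ssupp(\mathbf z))$. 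Their number is $\sum_{\mathbf z\in S_{r-1}}(2n-s(\mathbf z))=W_{r-1}$, using \eqref{eq:support-first-moment}. This gives the last expression $(2n-r+1)\sigma_{r-1}+(r-2)\sigma_{r-2}$. The form $n(\sigma_{r-1}(n)+\sigma_{r-1}(n-1))$ comes from fixing the axis $i$: there are $n$ axes, and the other coordinates with the residual magnitude form a shell in dimension $n-1$ plus one distinguished coordinate. Equivalently, $[z^{r-1}]$ of $n\frac{1+z}{1-z}\cdot\frac{(1+z)^{n-1}}{(1-z)^{n-1}}\cdot\frac{1}{1+z}$. I will check this by generating functions rather than bijectively.

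For the inequality I split the incidences by whether the fiber $F_{\mathbf b}$ is minimum. Since $c=0$, we have $D_{\mathbf b}+r-s(\mathbf b)=r+(r+1)(f_{\mathbf b}-q)$, so $P_{\mathbf b}\le r+(r+1)(f_{\mathbf b}-q)-(r-s(\mathbf b))-H_{\mathbf b}$. The nonminimum fibers have total excess $\sum(f_{\mathbf b}-q)=(r-1)\sigma_{r-1}/(r+1)$ (Lemma~\ref{lem:minimal-fibers} with $c=0$). I expect the contribution of the nonminimum fibers, together with the subtracted sums $\sum H_{\mathbf b}$ and $\sum(r-s(\mathbf b))$, both equal to $(r-1)\sigma_{r-1}$, to collapse to at most $\frac{(2r+1)(r-1)}{r+1}\sigma_{r-1}$. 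Precisely, $(r+1)\cdot\frac{r-1}{r+1}\sigma_{r-1}$ plus an $r\cdot\frac{r-1}{r+1}\sigma_{r-1}$ correction from the count of nonminimum fibers gives that coefficient.

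For a minimum fiber, $\boldsymbol\eta_{\mathbf b}\in B_r$ satisfies $\pi(\boldsymbol\eta_{\mathbf b})=(q+1)\pi(\mathbf b)$. Each heavy incidence $(\mathbf x,\mathbf a)$ gives $\mathbf z=\mathbf x-2\mathbf a\in S_{r-1}$ with $\pi(\mathbf z)=\pi(\mathbf b)-2\pi(\mathbf a)$. Multiplying by $q+1$, I want to encode the incidence as a solution of $(2q+1)\pi(\mathbf a)+(q+1)\pi(\mathbf z)=$ an element of $\pi(B_{r-1})$. Here the element of $B_{r-1}$ is $\boldsymbol\eta_{\mathbf b}$ with one unit removed, using $2(q+1)\pi(\mathbf a)-\pi(\mathbf a)$. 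Such solutions are counted, injectively, by $[0]S_1^{(2q+1)}S_{r-1}^{(q+1)}B_{r-1}$. The main obstacle is this injectivity and the exact bookkeeping showing which unit of $\boldsymbol\eta_{\mathbf b}$ is removed. I would try first to use that $\mathbf a$ is a net positive direction of $\boldsymbol\eta_{\mathbf b}$, as in the proof of Lemma~\ref{lem:full-minimum-lifting}, so that $\boldsymbol\eta_{\mathbf b}-\mathbf a\in B_{r-1}$. Uniqueness of representatives in $B_r$ then recovers $\mathbf b$ from $(\mathbf a,\mathbf z)$.

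For the final claim, assume $(2q+1)A=0$. Then the $S_1^{(2q+1)}$ factor is $2n[0]$, and the group-ring term equals $2n\,|\{(\mathbf z,\mathbf w):(q+1)\pi(\mathbf z)=-\pi(\mathbf w)\}|$. Since $2(q+1)\equiv1$, this condition becomes $\pi(\mathbf z)\in 2\pi(B_{r-1})$, up to sign symmetry. Points $\mathbf z$ whose image lies outside $2\pi(B_{r-1})$ therefore contribute nothing. Re-running the count restricted to those $J$ image classes, each carrying at least $2n-r+1$ admissible directions $\mathbf a$, yields $(2n-r+1)J\le\frac{(2r+1)(r-1)}{r+1}\sigma_{r-1}$.
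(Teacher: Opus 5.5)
Your outline matches the paper's: the bijection $(\mathbf x,\mathbf a)\mapsto(\mathbf x-2\mathbf a,\mathbf a)$, the split into minimum and nonminimum fibers, the bound $P_{\mathbf b}\le D_{\mathbf b}\le r+(r+1)(f_{\mathbf b}-q)\le(2r+1)(f_{\mathbf b}-q)$ on the nonminimum fibers, the identity $(2q+1)\pi(\mathbf a)+(q+1)\pi(\mathbf z)=(q+1)\pi(\mathbf b)-\pi(\mathbf a)$, and $q+1=2^{-1}$ in the last part. The gap is in the step you yourself flag as the main obstacle. Your proposed justification, that $\mathbf a$ is a net positive direction of $\boldsymbol\eta_{\mathbf b}$, is false in general.

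Disjointness of signed supports only gives $\mathbf a\notin\ssupp(\mathbf b)$, so one may have $b_{\mathbf a}=-2$ and $x_{\mathbf a}=2$. Then no other member of $F_{\mathbf b}$ meets that axis, and the $\mathbf a$-coordinate of $\boldsymbol\eta_{\mathbf b}=\mathbf b-\boldsymbol\varepsilon(\mathbf b)+\boldsymbol\delta_{\mathbf b}$ is $-1+1=0$. The same happens if $b_{\mathbf a}=0$ and another $\mathbf y\in F_{\mathbf b}$ has $y_{-\mathbf a}\ge x_{\mathbf a}$. Lemma~\ref{lem:full-minimum-lifting} only needs net directions, but here every heavy incidence in a minimum fiber must be encoded. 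So your argument does not cover these incidences. It also fails to support the final bound on $J$, which rests on every minimum-fiber incidence having $\pi(\mathbf z)\in2\pi(B_{r-1})$.

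The conclusion you want is true, but for a different reason: count formal units, not net coefficients. With $c=0$ and $f_{\mathbf b}=q$ one has $D_{\mathbf b}=s(\mathbf b)$. Hence $\boldsymbol\delta_{\mathbf b}$ is a formal sum of exactly $s(\mathbf b)$ signed units. The summand $\mathbf x-\boldsymbol\varepsilon(\mathbf x)$ contains $\mathbf a$ at least once because $x_{\mathbf a}\ge2$. Write $\boldsymbol\delta_{\mathbf b}=\mathbf a+\boldsymbol\eta$, with $\boldsymbol\eta$ a sum of $s(\mathbf b)-1$ units. The triangle inequality gives $\|\mathbf b-\boldsymbol\varepsilon(\mathbf b)+\boldsymbol\eta\|_1\le(r-s(\mathbf b))+(s(\mathbf b)-1)=r-1$, and this vector has image $(q+1)\pi(\mathbf b)-\pi(\mathbf a)$. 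Cancellation only lowers the norm, so it never hurts.

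Injectivity is not an obstacle either. Since $\mathbf x=\mathbf z+2\mathbf a$, the map is trivially injective. For fixed $(\mathbf a,\mathbf z)$, the element of $B_{r-1}$ is unique by injectivity of $\pi$ on $B_r$, so each pair contributes exactly one to the coefficient (using $B_{r-1}=-B_{r-1}$).

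There are two small slips. First, your generating function for the per-axis count is off by a factor of two. That axis carries weight $2$ at $z_i=0$ and $1$ at each nonzero value, i.e.\ $\frac{2}{1-z}=\frac{1+z}{1-z}+1$. This gives $[z^{r-1}]\,2n(1+z)^{n-1}(1-z)^{-n}$, not $n(1+z)^{n-1}(1-z)^{-n}$. Second, the sums $\sum H_{\mathbf b}$ and $\sum(r-s(\mathbf b))$ run over all fibers and play no role in the nonminimum bound. Simply drop them.
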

\begin{proof}
The heavy coordinate incidences are the pairs $(\mathbf x,\mathbf a)$ with
$\mathbf x\in S_{r+1}$, $\mathbf a\in S_1$, and $x_{\mathbf a}\ge2$.
The map
$
 (\mathbf x,\mathbf a)\longmapsto
 (\mathbf z,\mathbf a)=(\mathbf x-2\mathbf a,\mathbf a)
$
is a bijection onto the pairs with $\mathbf z\in S_{r-1}$ for which
the coordinate of $\mathbf z$ in direction $\mathbf a$ is
nonnegative. A zero coordinate allows two signs, and a nonzero
coordinate allows one. Hence
\begin{align*}
 \sum_{\mathbf b\in S_r}P_{\mathbf b}
 &=\sum_{\mathbf z\in S_{r-1}}(2n-s(\mathbf z))\\
 &=2n\sigma_{r-1}-\bigl((r-1)\sigma_{r-1}-(r-2)\sigma_{r-2}\bigr)\\
 &=(2n-r+1)\sigma_{r-1}+(r-2)\sigma_{r-2}=W_{r-1}.
\end{align*}
For each fixed coordinate the two signs contribute
$\sigma_{r-1}(n)+\sigma_{r-1}(n-1)$, because the zero-coordinate vectors
are counted twice and the other vectors once. Summing over the
$n$ coordinates proves the desired result.

The exact boundary partition gives
\[
 f_{\mathbf b}\ge q,\qquad
 D_{\mathbf b}=s(\mathbf b)+(r+1)(f_{\mathbf b}-q),\qquad
 \sum_{\mathbf b\in S_r}(f_{\mathbf b}-q)
 =\frac{r-1}{r+1}\sigma_{r-1}.
\]
Each point
$\mathbf x$ contributes at most $r+1-s(\mathbf x)$ pairs, because
each coordinate of magnitude at least two contributes at least one
to that difference. The pairs in a nonminimum fiber are thus at most
\[
 \sum_{\mathbf x\in F_{\mathbf b}}(r+1-s(\mathbf x))
 =D_{\mathbf b}
 \le r+(r+1)(f_{\mathbf b}-q)
 \le(2r+1)(f_{\mathbf b}-q).
\]
Their total is at most
$(2r+1)\sum_{\mathbf b}(f_{\mathbf b}-q)=\frac{(2r+1)(r-1)}{r+1}\sigma_{r-1}$.
For a pair in a minimum fiber, put $\mathbf x=2\mathbf a+\mathbf z$.
The vector
$\boldsymbol\delta_{\mathbf b}
=\sum_{\mathbf y\in F_{\mathbf b}}(\mathbf y-
\boldsymbol\varepsilon(\mathbf y))$
is a sum of exactly $D_{\mathbf b}=s(\mathbf b)$ signed units.
The summand belonging to $\mathbf x$ contains $\mathbf a$.
Consequently $\boldsymbol\delta_{\mathbf b}=\mathbf a+\boldsymbol\eta$
with $\|\boldsymbol\eta\|_1\le s(\mathbf b)-1$.
The exact scalar identity gives
\begin{align*}
 (2q+1)\pi(\mathbf a)+(q+1)\pi(\mathbf z)
 &=(q+1)\pi(2\mathbf a+\mathbf z)-\pi(\mathbf a)
 =(q+1)\pi(\mathbf b)-\pi(\mathbf a)
=\pi\bigl(\mathbf b-\boldsymbol\varepsilon(\mathbf b)
                         +\boldsymbol\eta\bigr),
\end{align*}
and the vector in the final expression has norm at most
$r-s(\mathbf b)+s(\mathbf b)-1=r-1$.
The map $(\mathbf x,\mathbf a)\mapsto(\mathbf a,\mathbf z)$ is
injective. Since $B_{r-1}=-B_{r-1}$, each such pair contributes to
$[0]S_1^{(2q+1)}S_{r-1}^{(q+1)}B_{r-1}$.
Adding the nonminimum contribution proves
the required result. 

Finally suppose $(2q+1) A=\{0\}$. Since $2(q+1)-1=2q+1$, multiplication
by $q+1$ is multiplication by $2^{-1}$ on $A$.
For each $\mathbf z\in S_{r-1}$ with $\pi(\mathbf z)\notin2\pi(B_{r-1})$,
the above bijection gives $2n-s(\mathbf z)\ge2n-r+1$ pairs.
None belongs to a minimum fiber: otherwise
$
 2^{-1}\pi(\mathbf z)
 =(2q+1)\pi(\mathbf a)+(q+1)\pi(\mathbf z)\in\pi(B_{r-1}).
$
Their number is therefore at least $(2n-r+1)J$ and at most
$\frac{(2r+1)(r-1)}{r+1}\sigma_{r-1}$, proving $
 (2n-r+1)J\le \frac{(2r+1)(r-1)}{r+1}\sigma_{r-1}
$.
\end{proof}

The two lens constants now enter the same root-graph calculation.
The resulting deficit bounds every annihilated scalar image and leads
to a coordinate-order obstruction. With $J$ as defined above, put
\begin{gather*}
 a(n,r)=4n^2-24n(r-1)+14n-9(r-1)+3,\\
 b(n,r)=4n^2+4n-2n(r-1)+(r-1)^2-3(r-1)+2,\\
 \delta(n,r)=\frac{a(n,r)}{6b(n,r)}.
\end{gather*}

\begin{lemma}\label{lem:rootgraph}
If $r\ge3$ and $n\ge16(r-1)$, then
\begin{equation}\label{eq:rootgraph-sixteenth}
 J\ge\frac{1}{16}\sigma_{r-1}.
\end{equation}
If $r\ge5$ and $n\ge6(r-1)$, then
\begin{equation}\label{eq:rootgraph-delta}
 \frac{J}{\sigma_{r-1}}\ge\delta(n,r)>0.
\end{equation}
\end{lemma}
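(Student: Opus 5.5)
Since $|A|=V_r$ is odd, doubling is an automorphism of $A$. So $|\pi(S_{r-1})\cap 2\pi(B_{r-1})|$ equals the number $N$ of $\mathbf y\in B_{r-1}$ with $2\pi(\mathbf y)\in\pi(S_{r-1})$, and $J=\sigma_{r-1}-N$. It therefore suffices to show $N\le\frac{15}{16}\sigma_{r-1}$, and $N\le(1-\delta(n,r))\sigma_{r-1}$ in the second range.

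\emph{Parametrizing roots by lattice vectors.} Each root $\mathbf y$ comes with a unique $\mathbf z\in S_{r-1}$ satisfying $\boldsymbol\ell=2\mathbf y-\mathbf z\in L$. I split into two cases.

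If $\boldsymbol\ell=\mathbf 0$, then $\mathbf z=2\mathbf y$. These \emph{trivial roots} are exactly the even vectors of $S_{r-1}$. Their number is a fixed coefficient of a generating function, and it is tiny compared with $\sigma_{r-1}$: it is at most $\sigma_{\lfloor (r-1)/2\rfloor}$, which is $O((2n)^{-(r-1)/2})\sigma_{r-1}$ by~\eqref{eq:successive-ratios}.

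If $\boldsymbol\ell\ne\mathbf 0$, then $\|\boldsymbol\ell\|_1\ge 2r+1$. The root $\mathbf y$ then lies in $B_{r-1}$ and simultaneously $\mathbf y-\boldsymbol\ell/2$ has norm $(r-1)/2$ in the halved sense. Coordinatewise, I will count these pairs through the sign-compatible cuts used in Lemma~\ref{lem:sign-restricted-heavy-pairs}. Writing $\mathbf u=\boldsymbol\ell-\mathbf y$ gives $\pi(\mathbf u)=-\pi(\mathbf y)$ and $\mathbf z=\mathbf y-\mathbf u$. Hence a nontrivial root is a point of $B_{r-1}$ at distance exactly $r-1$ from some point $\mathbf u$ whose image is $-\pi(\mathbf y)$, with $\|\mathbf u\|_1\ge r+2$.

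\emph{Key inequality via lenses.} For fixed nonzero $\boldsymbol\ell$, the set of such $\mathbf y$ lies in the lens $B_n(r-1)\cap(\boldsymbol\ell+B_n(\cdot))$. This is a mixed-radius intersection $C_{a,b}(\boldsymbol\ell)$ with $a+b$ around $2r-2$, restricted to the parity/shell condition. I apply the curvature inequality~\eqref{eq:lens-curvature} of Lemma~\ref{thm:lens-three-halves} repeatedly. This moves mass from the balanced pair of radii toward the unbalanced pair $(r-2,r)$, and then toward $(r-1-j,r-1+j)$ as far as possible. The reason for doing this is that $\sum_{\boldsymbol\ell\ne0}C_{k,2r-2-k}(\boldsymbol\ell)$ with $2r-2-k\le r$ is controlled by injectivity of $\pi$ on $B_r$: each point of $B_k$ has at most one partner. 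So these sums are at most $V_k$, exactly as at the end of Lemma~\ref{lem:nonzero-autocorrelation}.

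Using $\kappa\le 3/2$ (all norms $\ge3$) gives a crude geometric factor. Combined with~\eqref{eq:successive-ratios} and $n\ge16(r-1)$, the resulting bound on $N$ together with the trivial roots should stay below $\frac{15}{16}\sigma_{r-1}$, which proves~\eqref{eq:rootgraph-sixteenth}. For~\eqref{eq:rootgraph-delta}, the lattice vectors have norm $\ge 2r+1\ge 11$ when $r\ge5$, so $\kappa=4/3$ applies uniformly. Tracking the exact shell contributions $\sigma_{r-1},\sigma_{r-2},\sigma_{r-3}$, with the recurrence $j\sigma_j=2n\sigma_{j-1}+(j-2)\sigma_{j-2}$ used to express everything relative to $\sigma_{r-1}$, should yield $N/\sigma_{r-1}\le 1-a(n,r)/(6b(n,r))$. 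The quadratics $a$ and $b$ are presumably exactly what these two recurrence steps produce. Finally I will check that $a(n,r)>0$ for $n\ge6(r-1)$: $4n^2-24n(r-1)\ge0$ there, and $14n-9(r-1)+3>0$. I will also check that $b>0$.

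\emph{Main obstacle.} The hardest step is organizing the lens iteration so that it respects the shell constraint $\mathbf z\in S_{r-1}$ rather than merely $B_{r-1}$, and so that the constant comes out as precisely $\delta(n,r)$. I would first try the generating-function form $G_{\boldsymbol\ell}(s,t)$ with the multipliers $K_3$ and $K_5$ from the proof of Lemma~\ref{thm:lens-three-halves}. Multiplying by $(1-s)$ turns balls into shells, and the coefficient identities then give the shell-restricted curvature inequality directly; after that, the remaining work is bookkeeping with~\eqref{eq:successive-ratios}.
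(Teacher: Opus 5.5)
\textbf{Verdict: there is a genuine gap.} Your opening reduction is sound: doubling is an automorphism of $A$, so $J=\sigma_{r-1}-N$, and the trivial roots (even vectors of $S_{r-1}$) are negligible. The failure is the ``key inequality via lenses,'' which cannot be carried out as described.

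For fixed $\boldsymbol\ell\in L\setminus\{\mathbf 0\}$, the set $\{\mathbf y\in B_{r-1}:\|2\mathbf y-\boldsymbol\ell\|_1=r-1\}$ is a halving set, not a Lee lens. Any lens $B_a\cap(\boldsymbol\ell+B_b)$ that contains one of its points has $a+b\ge\|\boldsymbol\ell\|_1\ge2r+1$. The lenses you want to reach, $C_{k,2r-2-k}(\boldsymbol\ell)$ with both radii at most $r$, are identically zero for nonzero lattice vectors. So the curvature inequality~\eqref{eq:lens-curvature} has nowhere useful to move your count.

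The ``at most one partner'' step of Lemma~\ref{lem:nonzero-autocorrelation} bounds sums over the lifts of a fixed \emph{nonzero} element of $A$, not sums over $L$. Summed over $L$, lenses with $a=r-1$ are bounded only by $V_b$ with $b\ge r+2$, which is far above $\sigma_{r-1}$. Organized lattice vector by lattice vector, your count only reproduces the trivial bound $N\le\sigma_{r-1}$, since each $\mathbf z$ has at most one root. Nothing in the scheme produces a factor strictly below one, and the appearance of $a(n,r)$ and $b(n,r)$ is guessed rather than derived.

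\textbf{The missing idea} is a comparison of autocorrelations of $\pi(B_{r-1})$ and $2\pi(B_{r-1})$ at short odd shifts.
\begin{itemize}
\item Let $g$ range over the $2n^2$ distinct images of $\pm\mathbf e_i$ and $\pm\mathbf e_i\pm\mathbf e_j$ ($i<j$). By~\eqref{eq:short-shift-autocorrelation}, $C_{\pi(B_{r-1})}(g)=V_{r-2}+V_{r-2}(n-1)$.
\item On the other side, $C_{2\pi(B_{r-1})}(g)=C_{\pi(B_{r-1})}(g/2)$. Every lift $\mathbf v$ of $g/2$ has $\|\mathbf v\|_1\ge r$, because $2\mathbf v-\mathbf u$ is a nonzero lattice vector when $\mathbf u$ has an odd coordinate.
\item Apply Lemma~\ref{thm:lens-three-halves} over these lifts, which is where the lens constant enters, and then injectivity on $B_r$. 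This gives $C_{\pi(B_{r-1})}(g/2)\le\kappa\sum_{\pi(\mathbf v)=g/2}C_{r-2,r}(\mathbf v)\le\kappa V_{r-2}$, with $\kappa=3/2$ in general and $\kappa=4/3$ when $r\ge5$. This is where $r\ge5$ really matters.
\item The total deficit $2n^2\bigl(V_{r-2}(n-1)-(\kappa-1)V_{r-2}\bigr)$ must be carried by root-graph edges at vertices of $\pi(B_{r-1})\setminus2\pi(B_{r-1})$. Those degrees are at most $2n(r-1)$ on $S_{r-1}\cup S_{r-2}$ and $2n^2$ on $B_{r-3}$. This yields a lower bound on $|\pi(B_{r-1})\setminus2\pi(B_{r-1})|$.
\item Discard the at most $V_{r-2}$ such points below $S_{r-1}$. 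Then use $V_{r-2}(n-1)\ge(1-(r-2)/n)V_{r-2}$, $V_{r-3}/V_{r-2}\le(r-2)/(2n+1)$, and $(r-1)\sigma_{r-1}=(2n-r+2)V_{r-2}+(r-2)V_{r-3}$.
\end{itemize}
This produces exactly $a(n,r)/(6b(n,r))$ for $\kappa=4/3$. For $\kappa=3/2$ it produces a ratio exceeding $1/16$ once $n\ge16(r-1)$.
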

\begin{proof}
Define
\[
 R_0=\{\pi(\pm\mathbf e_i)\}
 \cup\{\pi(\epsilon\mathbf e_i+\eta\mathbf e_j):
          i<j,\ \epsilon,\eta\in\{\pm1\}\}.
\]
The listed vectors have distinct nonzero images, because their
differences have norm at most four. Thus
$|R_0|=2n+4\binom{n}{2}=2n^2$.
The short-shift formula~\eqref{eq:short-shift-autocorrelation} gives $C_{\pi(B_{r-1})}(g)=V_{r-2}+V_{r-2}(n-1)$ $(g\in R_0)$.
Choose $\kappa=3/2$ when $r\ge3$, or $\kappa=4/3$ when $r\ge5$.
For one of the listed vectors $\mathbf u$, put $g=\pi(\mathbf u)$.
If $\pi(\mathbf v)=g/2$, then $2\mathbf v-\mathbf u\in L\setminus\{\mathbf{0}\}$, since $\mathbf u$
has an odd coordinate. Therefore
\[
 2r+1\le\|2\mathbf v-\mathbf u\|_1
 \le2\|\mathbf v\|_1+2,
 \qquad \|\mathbf v\|_1\ge r.
\]
The corresponding lens inequality, summed over lifts, gives
\begin{align*}
 C_{\pi(B_{r-1})}(g/2)&\le\kappa\sum_{\pi(\mathbf v)=g/2}
                    C_{r-2,r}(\mathbf v)\\
 &=\kappa\sum_{\mathbf x\in B_{r-2}}
       |\{\mathbf y\in B_r:\pi(\mathbf x-\mathbf y)=g/2\}|
 =\kappa V_{r-2}.
\end{align*}
The last equality uses the transversal property of $B_r$.
Multiplication by two is bijective, so
\begin{equation}\label{eq:root-correlation-loss}
 C_{\pi(B_{r-1})}(g)-C_{2\pi(B_{r-1})}(g)=C_{\pi(B_{r-1})}(g)-C_{\pi(B_{r-1})}(g/2)
 \ge V_{r-2}(n-1)-(\kappa-1)V_{r-2}.
\end{equation}

Summing~\eqref{eq:root-correlation-loss}, and then deleting the
vertices in $\pi(B_{r-1})\setminus2\pi(B_{r-1})$, gives
\begin{align*}
 &2n^2\bigl(V_{r-2}(n-1)-(\kappa-1)V_{r-2}\bigr)\\
 &\le\sum_{g\in R_0}\bigl(C_{\pi(B_{r-1})}(g)-C_{2\pi(B_{r-1})}(g)\bigr)\\
 &\le\sum_{g\in R_0}\bigl(C_{\pi(B_{r-1})}(g)
                 -C_{\pi(B_{r-1})\cap2\pi(B_{r-1})}(g)\bigr)\\
 &\le2\sum_{x\in\pi(B_{r-1})\setminus2\pi(B_{r-1})}
                   \deg_{\pi(B_{r-1})}(x).
\end{align*}
The induced root graph is the ordinary graph on $B_{r-1}$:
an additional edge would give a nonzero lattice vector of norm at
most $2(r-1)+2=2r$. For $\mathbf x\in B_{r-1}$, put $s=s(\mathbf x)$. The
two-coordinate moves that do not increase its norm consist of
$\binom{s}{2}$ inward--inward moves and $s(2n-s-1)$ inward--outward
moves. On $S_{r-1}$ there are $s$ inward unit moves; on $S_{r-2}$
all $2n$ unit moves are allowed. Hence
\begin{align*}
 \deg_{\pi(B_{r-1})}(\pi(\mathbf x))
 &=s+\binom{s}{2}+s(2n-s-1)
   =2ns-\frac{s(s+1)}{2}\le2n(r-1)
       &&(\mathbf x\in S_{r-1}),\\
 \deg_{\pi(B_{r-1})}(\pi(\mathbf x))
 &=2n+\binom{s}{2}+s(2n-s-1)
   =2n(s+1)-\frac{s(s+3)}{2}\le2n(r-1)
       &&(\mathbf x\in S_{r-2}).
\end{align*}
Every point of $B_{r-3}$ has degree $2n^2$, so
\begin{align*}
 &2n^2\bigl(V_{r-2}(n-1)-(\kappa-1)V_{r-2}\bigr)\\
 &\le4n(r-1)|\pi(B_{r-1})\setminus2\pi(B_{r-1})|+4n^2V_{r-3},\\
 &|\pi(B_{r-1})\setminus2\pi(B_{r-1})|\\
 &\ge\frac{n\bigl(V_{r-2}(n-1)-(\kappa-1)V_{r-2}\bigr)}{2(r-1)}
       -\frac{n}{r-1}V_{r-3}.
\end{align*}
Coordinate symmetry and $s(\mathbf x)\le r-2$ on $B_{r-2}$ give
\begin{gather*}
 n\bigl(V_{r-2}-V_{r-2}(n-1)\bigr)
 =\sum_{\mathbf x\in B_{r-2}}s(\mathbf x)\le(r-2)V_{r-2},\\
 V_{r-2}(n-1)-(\kappa-1)V_{r-2}
 \ge\left(2-\kappa-\frac{r-2}{n}\right)V_{r-2}.
\end{gather*}
At most $V_{r-2}$ points of $\pi(B_{r-1})\setminus2\pi(B_{r-1})$
lie below $S_{r-1}$, so
\begin{align*}
 J&\ge|\pi(B_{r-1})\setminus2\pi(B_{r-1})|-V_{r-2}\\
 &\ge\frac{1}{r-1}\left(\frac{(2-\kappa)n-3(r-1)+1}{2}V_{r-2}
                         -nV_{r-3}\right).
\end{align*}
The ball recurrence gives
\[
 \frac{V_{r-3}}{V_{r-2}}\le\frac{r-2}{2n+1},\qquad
 \sigma_{r-1}=\frac{(2n-r+2)V_{r-2}+(r-2)V_{r-3}}{r-1}.
\]
Thus, whenever the numerator at the upper bound for $V_{r-3}/V_{r-2}$ is positive,
\begin{align*}
 \frac{J}{\sigma_{r-1}}&\ge\frac{((2-\kappa)n-3(r-1)+1)V_{r-2}/2-nV_{r-3}}{(2n-r+2)V_{r-2}+(r-2)V_{r-3}}\\
 &\ge\frac{2(2-\kappa)n^2-8n(r-1)+(6-\kappa)n-3(r-1)+1}{2b(n,r)}.
\end{align*}
After division by $V_{r-2}$, the numerator decreases and the denominator
increases with $V_{r-3}/V_{r-2}$;
substitution of $V_{r-3}/V_{r-2}=(r-2)/(2n+1)$ gives the last fraction.
For $\kappa=4/3$ its numerator and denominator, multiplied by three,
are $a(n,r)$ and $6b(n,r)$. Moreover,
\[
 a(6(r-1),r)=75(r-1)+3>0,
 \quad \partial_na=8n-24(r-1)+14>0\quad(n\ge6(r-1)).
\]
This proves~\eqref{eq:rootgraph-delta}.
For $\kappa=3/2$ the fraction is $\frac{a_2(n,r)}{4b(n,r)}$, where $a_2=2n^2-16n(r-1)+9n-6(r-1)+2$.
At $n=16(r-1)$, $a_2=256(r-1)^2+138(r-1)+2>0$, and
$\partial_na_2=4n-16(r-1)+9>0$ for $n\ge16(r-1)$.
Furthermore, $n\ge16(r-1)$ gives
\begin{align*}
 4a_2-b
&=4\bigl(n-16(r-1)\bigr)^2+(66(r-1)+32)\bigl(n-16(r-1)\bigr)\\
 &\quad+31(r-1)^2+491(r-1)+6>0.
\end{align*}
Consequently $a_2/(4b)>1/16$, proving
\eqref{eq:rootgraph-sixteenth}.
\end{proof}

Suppose $2n=(r+1)q$ and $(2q+1)A=\{0\}$. The vector $(2q+1)\mathbf e_i\in L\setminus\{\mathbf{0}\}$
gives $2q+1\ge2r+1$, hence $q\ge r$.
For $r\ge5$ and $n\ge6(r-1)$, Lemma~\ref{lem:all-pairs} and the bound~\eqref{eq:rootgraph-delta} in Lemma~\ref{lem:rootgraph} imply
\begin{equation}\label{eq:exact-zero-necessary}
 (2n-r+1)\delta(n,r)\le\frac{(2r+1)(r-1)}{r+1}<2(r-1).
\end{equation}

The deficit inequality bounds the scalar multiplier independently of the radius.

\begin{lemma}\label{lem:zero-r22-m21}
If $r\ge5$, $2n=(r+1)q$, and $(2q+1)A=\{0\}$, then
$r\le q\le21$.
\end{lemma}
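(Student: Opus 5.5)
The lower bound $q\ge r$ was already noted before \eqref{eq:exact-zero-necessary}: the vector $(2q+1)\mathbf e_i$ is a nonzero element of $L$, so $2q+1\ge2r+1$. The plan is to obtain the upper bound from the necessary inequality \eqref{eq:exact-zero-necessary}. Its hypothesis $n\ge6(r-1)$ will need a separate check, since $q\ge r$ gives only $2n=(r+1)q\ge r(r+1)$, i.e. $n\ge r(r+1)/2$. For $r\ge11$ this exceeds $6(r-1)$. For $5\le r\le10$ we must either argue directly or note that if $q\ge22$ then $n=(r+1)q/2\ge11(r+1)>6(r-1)$ anyway. This second observation suffices, because we only need to exclude $q\ge22$.

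So assume $q\ge22$, hence $n=(r+1)q/2\ge11(r+1)$. Substitute $2n=(r+1)q$ into \eqref{eq:exact-zero-necessary}, giving
\[
 \bigl((r+1)q-r+1\bigr)\,a(n,r)\le 12(r-1)\,b(n,r).
\]
Here $a$ and $b$ are quadratics in $n$ with leading coefficient $4n^2$. To leading order the left side is $(r+1)q\cdot4n^2\approx 2n\cdot 4n^2$, while the right side is $12(r-1)\cdot4n^2$. The inequality therefore forces roughly $2n\lesssim 12(r-1)+O(r)$, i.e. $(r+1)q\lesssim 12(r-1)+\text{lower order}$, which gives about $q\lesssim 12$ plus corrections coming from the $-24n(r-1)$ term in $a$. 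That correction matters: $a\approx 4n^2-24n(r-1)=4n(n-6(r-1))$. The inequality reads approximately $2n\cdot4n(n-6(r-1))\le 12(r-1)\cdot 4n^2$, i.e. $n-6(r-1)\le 6(r-1)$, so $n\le12(r-1)$. This yields $(r+1)q\le24(r-1)$, hence $q<24$. Refining with the exact lower-order terms should bring the threshold to $q\le21$.

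Concretely, I would set $x=n-6(r-1)\ge0$. I would write $a=4x^2+(24(r-1)+14)x+75(r-1)+3$, using the value and derivative already computed in Lemma~\ref{lem:rootgraph}, and expand $b$ in $x$ as well. I would then show that for $q=22$, i.e. $n=11(r+1)$, the difference $\bigl(2n-r+1\bigr)a-12(r-1)b$ is a polynomial in $r$ with positive coefficients after shifting $r=5+s$. Finally I would show that this difference is increasing in $n$ for $n\ge11(r+1)$; its $n$-derivative is dominated by $24n^2-\dots$, and positivity can again be certified by expanding in $n-11(r+1)$ and $r-5$. This contradicts \eqref{eq:exact-zero-necessary} for every $q\ge22$.

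The main obstacle is the exact cubic bookkeeping: the threshold $21$ is near where the leading-order estimate ($q<24$) sits, so lower-order terms must be handled precisely. I expect the margin at $q=22$, $r=5$ to be the tightest case. If the positive-coefficient expansion fails at small $r$, I would treat $r=5,\dots,10$ by direct substitution at $n=11(r+1)$, and use the crude leading-order argument for large $r$.
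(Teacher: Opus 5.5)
There is a genuine gap: your upper-bound argument never uses $r\le q$, and without it the key claim is false. Put $u=r-1$ and substitute $2n=q(u+2)$. Then $\mathscr G=(2n-r+1)a(n,r)-12(r-1)b(n,r)$ is a cubic in $u$ with leading coefficient
\[
q^3-25q^2+24q-12=q(q-1)(q-24)-12,
\]
which is negative for every $q\le 24$. At $q=22$, that is at $n=11(r+1)$, one gets
\[
\mathscr G=-936u^3+19097u^2+91207u+98868 .
\]
This is already negative at $u=25$, where it equals $-310332$, and it stays negative for all larger $u$. So this polynomial cannot have positive coefficients after the shift $r=5+s$, since a shift does not change the leading coefficient. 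In fact \eqref{eq:exact-zero-necessary} genuinely holds for $q=22$ and every $r\ge 26$. Your own leading-order heuristic shows the same thing: it only gives roughly $q<24(r-1)/(r+1)$, which tends to $24$. No refinement of lower-order terms can therefore yield $q\le 21$ uniformly in $r$. Your fallback also puts the difficulty in the wrong place: the failure occurs at large $r$, not at $5\le r\le 10$.

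The missing idea is to use $q\ge r$, i.e. $0<r-1\le q-1$, to cap the negative cubic term. The paper writes $\mathscr G=c_3u^3+c_2u^2+c_1u+c_0$ and notes that $c_2>0$ and $c_0>0$. It then checks that $c_2+(q-1)c_3\ge-559$ for $q\ge 22$. Interpolating linearly between $u=0$ and $u=q-1$ gives $c_2+c_3u\ge-559$. Hence $\mathscr G\ge\bigl(c_1-559(q-1)\bigr)u+c_0>0$.

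You could stay within your own framework by choosing the base point $q_0=\max\{22,r\}$ for each $r$ instead of $q_0=22$. You would verify $\mathscr G>0$ there and then use monotonicity in $q$, which does hold for $q\ge 22$. Either way, the constraint $r\le q$ is indispensable, and it has to enter the upper-bound argument, not just the lower bound.
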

\begin{proof}
Suppose $q\ge22$. Since $r-1\ge4$ and $r-1\le q-1$, $n=\frac{q(r+1)}{2}\ge11(r+1)>6(r-1)$.
The polynomials $a,b$ defined before Lemma~\ref{lem:rootgraph} satisfy
\begin{align*}
 \mathscr G(r,q)&=(2n-r+1)a(n,r)-12(r-1)\,b(n,r)
 =c_3(r-1)^3+c_2(r-1)^2+c_1(r-1)+c_0,
\end{align*}
where substitution of $2n=q(r+1)$ and multiplication give
\begin{align*}
 c_3&=q^3-25q^2+24q-12,\\
 c_2&=6q^3-93q^2+8q+45
     =3q^2(2q-31)+8q+45>0,\\
 c_1&=12q^3-72q^2-77q-27,\\
 c_0&=8q^3+28q^2+6q>0.
\end{align*}
Since $q\ge22$, the two endpoint combinations expand as
\begin{align*}
 c_2+(q-1)c_3
 &=q^4-20q^3-44q^2-28q+57\\
 &=(q-22)^4+68(q-22)^3+1540(q-22)^2+11588(q-22)-559\ge-559,\\
 c_1-559(q-1)
 &=12q^3-72q^2-636q+532\\
 &=12(q-22)^3+720(q-22)^2+13620(q-22)+79468>0.
\end{align*}
Since $0<r-1\le q-1$,
\[
 c_2+c_3(r-1)=
 \left(1-\frac{r-1}{q-1}\right)c_2
 +\frac{r-1}{q-1}\bigl(c_2+(q-1)c_3\bigr)\ge-559.
\]
Therefore
\begin{align*}
 \mathscr G(r,q)&=(r-1)^2(c_2+c_3(r-1))+c_1(r-1)+c_0\\
 &\ge-559(r-1)^2+c_1(r-1)+c_0\\
 &\ge\{c_1-559(q-1)\}(r-1)+c_0>0.
\end{align*}
As $b(n,r)>0$, this gives
$(2n-r+1)\delta(n,r)>2(r-1)$, contradicting
\eqref{eq:exact-zero-necessary}. Hence $q\le21$.
\end{proof}

The multiplier bound leaves a finite range. Its cases with
$V_r(n)<2^{30}$ are covered by Xiao and Zhou~\cite[Theorem~5.1]{XZ};
we give a complete proof using coordinate orders and volume congruences.

\begin{proposition}\label{thm:zero-image-all-radii}

For every $r\ge4$, the simultaneous conditions
$2n=(r+1)q$ and $(2q+1)A=\{0\}$ are impossible.
\end{proposition}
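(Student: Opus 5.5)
By Lemma~\ref{lem:zero-r22-m21}, the two conditions force $r\le q\le21$, and $2n=(r+1)q$ then gives $r\le20$. So only finitely many pairs $(r,q)$ remain, and the plan is to rule each one out by arithmetic obstructions built from the condition $(2q+1)A=\{0\}$.

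The first obstruction is a coordinate order. Every $g_i=\pi(\mathbf e_i)$ satisfies $(2q+1)g_i=0$, so the order of $g_i$ is an odd divisor $d$ of $2q+1$. Since $d\mathbf e_i\in L\setminus\{\mathbf 0\}$ has norm $d$, we need $d\ge2r+1$. When $2q+1$ is prime, or more generally has no divisor in $[2r+1,2q+1)$, every $g_i$ has order exactly $2q+1$. In addition, $A$ is a group of exponent dividing $2q+1$, so every prime divisor of $|A|=V_r(n)$ divides $2q+1$. This gives the second, congruence obstruction. For each prime $p\mid V_r(n)$ we would need $p\mid 2q+1$; equivalently, $V_r(n)$ is a product of primes dividing $2q+1$. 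I would therefore compute $V_r(n)$ modulo the primes $p\mid 2q+1$ using the Lucas factorization $V_v(u)\equiv\prod_iV_{v_i}(u_i)\pmod p$ recorded in the preliminaries. I would then show that $V_r(n)$ is not a product of such primes.

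A convenient first check is this. If $\gcd(V_r(n),2q+1)=1$, then $A$ is trivial, which is absurd. So the condition $p\mid V_r(n)$ for some $p\mid 2q+1$ is necessary. Moreover $V_r(n)$ must divide $(2q+1)^{m}$ for some $m$, and by the structure theorem $|A|\le(2q+1)^{n}$. A sharper tool is that $A$ is generated by the $g_i$ inside an abelian group of exponent $2q+1$.

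For a uniform handle, I would compare $V_r(n)\bmod(2q+1)$. Writing $2n=(r+1)q$, we have $n\equiv (r+1)q/2$. Since $2q\equiv-1\pmod{2q+1}$, this gives $2n\equiv -(r+1)/2\cdot$\,(unit) modulo $2q+1$, so $n$ modulo each prime $p\mid 2q+1$ is a fixed small residue depending only on $r$. Lucas then reduces $V_r(n)\bmod p$ to a product of small volumes $V_{r_i}(n_i)$ with digits $<p$. I expect that for most $(r,q)$ this residue is nonzero for every $p\mid 2q+1$, which gives the contradiction outright. The leftover pairs, where some $p\mid 2q+1$ divides $V_r(n)$, I would handle in two ways. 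One is to show that the full factorization of $V_r(n)$ contains a prime not dividing $2q+1$. The other is to use $p$-adic valuation bounds: the $p$-part of $A$ has exponent dividing $p^{v_p(2q+1)}$, so it is generated by $n$ elements of order at most $p^{v_p(2q+1)}$.

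The main obstacle is exactly these leftover pairs. For them the simple residue test fails, and one must either factor $V_r(n)$ explicitly or bring in the inequality~\eqref{eq:exact-zero-necessary} with the exact value of $q$ rather than $q\ge22$. I would first retry~\eqref{eq:exact-zero-necessary} pair by pair, since $\mathscr G(r,q)>0$ may already hold for many $q<22$ at the specific $r$. I would then finish the rest by the prime-factor test on $V_r(n)$.
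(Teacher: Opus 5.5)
\textbf{There is a genuine gap at $r=4$.} Your reduction to a finite check does not cover $r=4$. Lemma~\ref{lem:zero-r22-m21} is stated and proved only for $r\ge5$. Its proof runs through \eqref{eq:exact-zero-necessary}, hence through \eqref{eq:rootgraph-delta}, which uses the lens constant $\kappa=4/3$. That constant needs every lift of $g/2$ to have norm at least five, but at $r=4$ these lifts are only known to have norm at least four. So for $r=4$ nothing you invoke bounds $q$, and the finite search never starts.

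Your congruence idea does not close this case uniformly either. For every prime $p\ge5$ dividing $2q+1$ you get $n\equiv-5/4\pmod p$ and $V_4(n)\equiv3^5\cdot2^{-7}\not\equiv0\pmod p$. So the test only reduces the problem to showing that $V_4(n)$, for $n\equiv0\pmod5$, is never a power of $3$. That is an exponential Diophantine question, and your plan has no tool for it.

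The missing ingredient is the weaker root-graph estimate \eqref{eq:rootgraph-sixteenth}, $J\ge\sigma_{r-1}/16$, which uses $\kappa=3/2$ and holds for $r\ge3$, $n\ge16(r-1)$. Combine it with the last assertion of Lemma~\ref{lem:all-pairs}, which at $r=4$ reads $(2n-3)J\le\frac{27}{5}\sigma_3$. This gives $(2n-3)/16\le27/5$, which is false for $n\ge48$. Hence $n<48$. Since $q$ is even and $q\ge r+1$, only $q\in\{6,8,\dots,18\}$ remain, and the residue test then settles them.

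\textbf{Smaller points.} The inference from $r\le q\le21$ to $r\le20$ is wrong: $(r,q)=(21,21)$, with $n=231$, is admissible. The paper disposes of this pair and every pair with $q=r$ at once, because $q=r$ means $n=r(r+1)/2$, which Proposition~\ref{thm:uniform-triangular-orbits} already excludes. This step is worth taking. When $q=r$ the bound $d_i\ge2r+1$ is sharp, and the residue test can fail: for example $9\mid V_4(10)$ and $15\mid V_7(28)$. Those cases would force explicit factorizations.

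You should also state your first obstruction in its sharp form. Each coordinate order $d_i$ is at least $2r+1$ and divides $\gcd(V_r(n),2q+1)$, because it divides both $|A|$ and $2q+1$. In the range at hand this single test kills every pair with $V_r(n)\not\equiv0\pmod{2q+1}$, since there the gcd is always smaller than $2r+1$. This includes the pairs with gcd $3$, $5$ or $7$, which your coprimality check alone does not handle.

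After that, only the pairs with $(2q+1)\mid V_r(n)$ need your fallbacks:
\begin{itemize}
\item Lucas's congruence modulo a prime not dividing $2q+1$;
\item valuations $v_p(V_r(n))=1$ for all admissible $p$, forcing $V_r(n)$ to be impossibly small.
\end{itemize}
With these repairs, your treatment of $r\ge5$ is the paper's argument.
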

\begin{proof}
The coordinate-order bound gives $q\ge r$.
If $q=r$, then $n=r(r+1)/2$, contrary to
Proposition~\ref{thm:uniform-triangular-orbits}.
Consequently $q\ge r+1$ for every $r\ge4$.
For $r\ge5$, Lemma~\ref{lem:zero-r22-m21} gives $q\le21$.
Even $r$ requires even $q$, so $r=20,21$ permit no values of $q$;
all remaining radii satisfy $5\le r\le19$.

For $r=4$, if $n\ge48$, then
$
 (2n-r+1)J\le \frac{(2r+1)(r-1)}{r+1}\sigma_{r-1}
$ and
\eqref{eq:rootgraph-sixteenth} imply
\[
 \frac{27}{5}\ge\frac{(2n-3)J}{\sigma_3}
 \ge\frac{2n-3}{16}\ge\frac{93}{16}>\frac{27}{5},
\]
where the last inequality is $465>432$.
Thus $n<48$. Since $q\ge5$ and $n=5q/2$ is an integer,
$q\in\{6,8,10,12,14,16,18\}$.

For all pairs in the resulting finite range, the table gives the least nonnegative residues
\[
 V_r(n)\bmod(2q+1)
 =\left(\sum_{j=0}^r2^j\binom{(r+1)q/2}{j}\binom rj\right)
       \bmod(2q+1).
\]
Here $n=(r+1)q/2$, and $a\bmod m$ denotes the unique integer in
$\{0,\ldots,m-1\}$ congruent to $a$ modulo $m$.
The entries are listed in the order of the second column.
The last column gives the maximum of $\gcd(V_r(n),2q+1)$ over the
nonzero residues in that row.
\[
\begin{array}{c|c|l|c}
 r&q&V_r(n)\bmod(2q+1)&\max\gcd(V_r(n),2q+1)\\\hline
4&6,8,\ldots,18&(2,10,6,6,13,30,23)&3\\
5&6,7,\ldots,21&(6,9,9,1,12,2,5,3,18,4,21,26,24,6,27,7)&5\\
6&8,10,\ldots,20&(15,1,0,3,13,9,1)&1\\
7&8,9,\ldots,21&(2,4,18,8,10,12,14,16,18,5,22,24,26,28)&5\\
8&10,12,\ldots,20&(7,4,23,19,24,8)&7\\
9&10,11,\ldots,21&(5,16,15,12,2,20,26,20,29,12,3,25)&5\\
10&12,14,\ldots,20&(0,19,12,15,1)&3\\
11&12,13,\ldots,21&(10,23,25,14,6,20,31,34,22,36)&5\\
12&14,16,18,20&(24,0,8,33)&1\\
13&14,15,\ldots,21&(22,10,9,0,10,21,5,35)&3\\
14&16,18,20&(12,34,14)&3\\
15&16,17,\ldots,21&(8,0,4,3,11,40)&3\\
16&18,20&(28,27)&1\\
17&18,19,20,21&(32,15,28,22)&3\\
18&20&(21)&1\\
19&20,21&(12,28)&1
\end{array}
\]
For any coordinate image $g_i=\pi(\mathbf e_i)$, its order $d_i$
satisfies
\[
 2r+1\le d_i,\qquad d_i\mid V_r(n),\qquad d_i\mid(2q+1),
 \qquad d_i\mid\gcd(V_r(n),2q+1).
\]
Every nonzero table entry has gcd at most seven, whereas $2r+1\ge9$.
These entries are therefore impossible.

The five zero entries are
\[
\begin{aligned}
 (r,q,n,2q+1)={}&(6,12,42,25),(10,12,66,25),(12,16,104,33),\\
              &(13,17,119,35),(15,17,136,35).
\end{aligned}
\]
If $(2q+1)A=\{0\}$, every prime divisor of $V_r(n)$ divides $2q+1$.
For $(r,n)=(6,42)$, the base-$13$ expansions
$42=3+3\cdot13$ and $6=6+0\cdot13$ give
\[
 V_6(42)\equiv V_6(3)V_0(3)
 =1+2\cdot3\cdot6+4\cdot3\cdot15+8\cdot20
 =377=13\cdot29\equiv0\pmod{13}.
\]
This contradicts $13\nmid25$.
For $(r,n)=(10,66)$, the base-three expansions are
\[
 66=(2110)_3,\quad10=(0101)_3,\qquad
 V_{10}(66)\equiv V_1(0)V_0(1)V_1(1)V_0(2)=3\equiv0\pmod3.
\]
For $(r,n)=(13,119)$ they are
\[
 119=(11102)_3,\quad13=(00111)_3,
 \quad V_r(n)\equiv V_{1}(2)V_{1}(0)V_{1}(1)V_{0}(1)V_{0}(1)
 =5\cdot1\cdot3\cdot1\cdot1\equiv0\pmod3.
\]
Both contradict $3\nmid(2q+1)$.

For the remaining two pairs the binomial volume formula gives
\begin{align*}
 V_{12}(104)&\equiv6\pmod9,
 &V_{12}(104)&\equiv88\pmod{121},\\
 V_{15}(136)&\equiv10\pmod{25},
 &V_{15}(136)&\equiv35\pmod{49}.
\end{align*}
Thus $v_3(V_{12}(104))=v_{11}(V_{12}(104))=1$.
Since only $3$ and $11$ may divide that volume, it would equal
$3\cdot11=33$, contradicting $V_{12}(104)\ge1+2\cdot104=209$.
Likewise $v_5(V_{15}(136))=v_7(V_{15}(136))=1$ would force
$V_{15}(136)=35$, contradicting $V_{15}(136)\ge273$.
All cases are excluded.
\end{proof}

For the nonzero scalar image, we combine the character bounds of
Section~\ref{sec:boundary-methods} with the repeated-coordinate count
in Lemma~\ref{lem:all-pairs}. Projection onto functions constant on
the cosets of $(2q+1)A$ separates the two invariant subspaces.
The parameter $\mathcal L$ below may be any upper bound for $\Lambda+1$.
For $m\ge1$, define the comparison functions by
\[
\begin{aligned}
 F_m(n,\mathcal L,e)
 &=\frac{2n}{m}+\left(1-\frac1m\right)\mathcal L
 +e\left[\frac{\mathcal L}2+(2n-\mathcal L)\left(2-\frac{3}{2m}\right)\right],\\
 G(e,z)&=\frac23+\left(\frac23-e\right)z+3e.
\end{aligned}
\]

\begin{lemma}\label{lem:scalar-image-comparison}
Suppose $r\ge3$, $2n=(r+1)q$, multiplication by $q+1$ is
injective on $A$, $1\le\Lambda+1\le\mathcal L<2n$, and
$V_{r-2}/\sigma_{r-1}<1/2$. Then
\[
 2n-\frac{(3r+2)(r-1)}{r+1}
 \le F_{|(2q+1)A|}\left(n,\mathcal L,\frac{V_{r-2}}{\sigma_{r-1}}\right).
\]
If $(2q+1)A=A$, one also has
\begin{equation}\label{eq:full-image-necessary}
 2n-\frac{(3r+2)(r-1)}{r+1}\le
 \frac{2nV_{r-1}}{V_r}+
 \mathcal L\left(\frac{\sigma_{r-1}+V_{r-1}}{2\sigma_{r-1}}
             -\frac{\sigma_{r-1}^2+V_{r-1}^2}{2V_r\sigma_{r-1}}\right).
\end{equation}
If $|(2q+1)A|\ge3$, $V_{r-2}/\sigma_{r-1}\le e<1/2$, and $\mathcal L/n\le z<2$, then
$2-\frac{(3r+2)(r-1)}{(r+1)n}\le G(e,z)$, where $G$ increases in each variable
on $0\le e<1/2$, $0\le z<2$.
\end{lemma}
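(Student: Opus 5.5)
The plan is to turn the incidence bound of Lemma~\ref{lem:all-pairs} into a character-sum inequality, split the characters along the subgroup $K=(2q+1)A$, and bound each part with Parseval and the spectral parameter $\mathcal L$. By Lemma~\ref{lem:all-pairs},
\[
 [0]S_1^{(2q+1)}S_{r-1}^{(q+1)}B_{r-1}\ge W_{r-1}-\tfrac{(2r+1)(r-1)}{r+1}\sigma_{r-1}.
\]
Since $W_{r-1}=(2n-r+1)\sigma_{r-1}+(r-2)\sigma_{r-2}$, dropping the nonnegative term $(r-2)\sigma_{r-2}$ and using $(r-1)+\frac{(2r+1)(r-1)}{r+1}=\frac{(3r+2)(r-1)}{r+1}$ turns the right side into $\bigl(2n-\frac{(3r+2)(r-1)}{r+1}\bigr)\sigma_{r-1}$. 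So in every assertion it suffices to show that $[0]X/\sigma_{r-1}$, with $X=S_1^{(2q+1)}S_{r-1}^{(q+1)}B_{r-1}$, is at most the stated right-hand side.

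\emph{Setting up the functions.} Multiplication by $q+1$ is injective, and $\pi$ is injective on $B_r$. Hence $S_{r-1}^{(q+1)}$ is the indicator $u$ of a set of size $\sigma_{r-1}$, and $B_{r-1}$ is the indicator $v$ of a set of size $V_{r-1}=\sigma_{r-1}+V_{r-2}$. Moreover $v=v_1+w$ with $\|v_1\|_2^2=\sigma_{r-1}$ (the image of $S_{r-1}$) and $\|w\|_2^2=V_{r-2}=e\sigma_{r-1}$. By orthogonality and the symmetry of $B_{r-1}$,
\[
 [0]X=\frac1{V_r}\sum_\chi\chi^{2q+1}(S_1)\,\widehat u(\chi)\,\overline{\widehat v(\chi)}.
\]
We have $\chi^{2q+1}=\mathbf 1$ exactly when $\chi$ is trivial on $K$. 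For these characters the factor is $2n$. For the others, $|\chi^{2q+1}(S_1)|=|\lambda_{\chi^{2q+1}}-1|\le\Lambda+1\le\mathcal L$.

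\emph{Splitting the characters.} Let $P$ denote averaging over cosets of $K$, with $m=|K|$. Parseval on the two character classes gives
\[
 [0]X\le 2n\langle Pu,Pv\rangle+\mathcal L\,\|(I-P)u\|_2\,\|(I-P)v\|_2 .
\]
Write $k_c$ for the number of points of $u$ in coset $c$. Then $\|Pu\|_2^2=\sum_c k_c^2/m$, which lies between $\sigma_{r-1}/m$ (since $k_c^2\ge k_c$ for integers) and $\sigma_{r-1}$. The bound $\|(I-P)u\|^2=\sigma_{r-1}-\|Pu\|^2$ trades against $\langle Pu,Pv\rangle$.

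\emph{The full-image case.} If $K=A$, then $P$ projects onto constants, so $\langle Pu,Pv\rangle=\sigma_{r-1}V_{r-1}/V_r$. The AM--GM bound $\sqrt{xy}\le(x+y)/2$ with $x=\sigma_{r-1}-\sigma_{r-1}^2/V_r$ and $y=V_{r-1}-V_{r-1}^2/V_r$ then gives \eqref{eq:full-image-necessary} exactly.

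\emph{The general case.} Write $\alpha=\|Pu\|^2/\sigma_{r-1}\in[1/m,1]$. Bound $\langle Pu,Pv\rangle\le\|Pu\|^2+\langle Pu,Pw\rangle$ and, by Cauchy--Schwarz, $\langle Pu,Pw\rangle\le\|Pu\|\,\|Pw\|$; treat $\|(I-P)v\|$ by the triangle inequality with the $w$ part, using $\|w\|^2=e\sigma_{r-1}$ and $e<1/2$. I then optimize over $\alpha$. The terms linear in $e$ come from AM--GM cross terms of the form $\sqrt e\cdot\sqrt e$, since each square root is paired with a $w$-factor. The coefficient of $\alpha$ is $2n-\mathcal L>0$, so the worst case in the $e$-free part is $\alpha=1/m$. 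The $e$-coefficient in $F_m$ is the worst case over $\alpha\in[1/m,1]$ of these cross terms, bounded by $(\mathcal L/2)+(2n-\mathcal L)(2-3/(2m))$. I expect this bookkeeping, i.e.\ getting exactly the coefficients of $F_m$, to be the main obstacle. I would first carry out the optimization with the ansatz $\|Pw\|^2\le\min\{e\sigma_{r-1},\,m\alpha\cdot e\sigma_{r-1}\}$ and adjust the AM--GM weights until the displayed constants appear.

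\emph{The last assertion.} Take $m\ge3$, divide the general bound by $n$, and use $1/m\le1/3$. Each term of $F_m/n$ is monotone in $1/m$, $e$ and $\mathcal L/n$; replacing these by $1/3$, $e$ and $z$ yields $G(e,z)=\frac23+(\frac23-e)z+3e$ as an upper bound. For monotonicity, $\partial_eG=3-z>0$ and $\partial_zG=\frac23-e>0$ on $0\le e<1/2$, $0\le z<2$, so $G$ increases in each variable.
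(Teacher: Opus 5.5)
Your reduction to bounding $[0]X/\sigma_{r-1}$, your splitting of the characters along $K=(2q+1)A$, your full-image case, and your final passage to $G$ all agree with the paper. The gap is in the general case. You only know $\alpha=\|Pu\|_2^2/\sigma_{r-1}\in[1/m,1]$. Your bound $2n\langle Pu,Pv\rangle+\mathcal L\|(I-P)u\|_2\|(I-P)v\|_2$ has positive coefficient $2n-\mathcal L$ in $\alpha$, so its worst case is $\alpha=1$, not $\alpha=1/m$. At $\alpha=1$ it gives only $[0]X\le 2n\sqrt{\sigma_{r-1}V_{r-1}}$, which yields no contradiction. No choice of AM--GM weights or ansatz for $\|Pw\|_2$ can then produce the $e$-free part $\frac{2n}{m}+(1-\frac1m)\mathcal L$ of $F_m$. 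That term needs an \emph{upper} bound $\alpha\le\frac1m+O(V_{r-2}/\sigma_{r-1})$, i.e.\ near-equidistribution of $(q+1)\pi(S_{r-1})$ and $\pi(B_{r-1})$ over the cosets of $K$. Nothing in your setup rules out concentration in a single coset.

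The missing ingredient is the collision estimate of Lemma~\ref{lem:nonzero-autocorrelation}. With $Y$ the support of $u$, one has $\|Pu\|_2^2=\frac1m\sum_{a\in K}C_Y(a)$. The term $a=0$ is $\sigma_{r-1}/m$. For $0\ne a\in K$, injectivity of multiplication by $q+1$ gives $C_Y(a)=C_{\pi(S_{r-1})}((q+1)^{-1}a)\le C_{\pi(B_{r-1})}((q+1)^{-1}a)\le2V_{r-2}$; the paper phrases this through $(q+1)y\equiv2^{-1}y$ modulo $K$. Thus $\|Pu\|_2^2\le\frac{\sigma_{r-1}}m+2(1-\frac1m)V_{r-2}$, and likewise $\|Pv\|_2^2\le\frac{V_{r-1}}m+2(1-\frac1m)V_{r-2}$. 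No decomposition $v=v_1+w$ is then needed. Applying $2\sqrt{xy}\le x+y$ to both terms gives $[0]X\le\frac{\mathcal L}2(\sigma_{r-1}+V_{r-1})+\frac{2n-\mathcal L}2(\|Pu\|_2^2+\|Pv\|_2^2)$. Substituting the two bounds and $V_{r-1}=\sigma_{r-1}+V_{r-2}$ yields $F_m$ exactly.

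A minor point for the last assertion: $F_m$ is not termwise monotone in $1/m$. Write $F_m=\mathcal L+\epsilon(4n-\frac32\mathcal L)+\frac{2n-\mathcal L}m(1-\frac32\epsilon)$ with $\epsilon=V_{r-2}/\sigma_{r-1}$; this shows it decreases in $m$, and then $F_3/n=G(\epsilon,\mathcal L/n)$.
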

\begin{proof}
Let $f=\mathbf1_{(q+1)\pi(S_{r-1})}$ and
$g=\mathbf1_{\pi(B_{r-1})}$. Injectivity gives
$\|f\|_2^2=\sigma_{r-1}$ and $\|g\|_2^2=V_{r-1}$. Define the projection by
\[
 (\mathsf P_{(2q+1)A}u)(x)
 =\frac1{|(2q+1)A|}\sum_{a\in(2q+1)A}u(x+a).
\]
Since $2(q+1)=(2q+1)+1$, reduction modulo $(2q+1)A$ gives
$(q+1)y\equiv y/2$, whence
$\mathsf P_{(2q+1)A}f
 =\mathsf P_{(2q+1)A}\mathbf1_{2^{-1}\pi(S_{r-1})}$.
Lemma~\ref{lem:nonzero-autocorrelation} gives, for $a\ne0$,
\begin{align*}
 C_{\pi(B_{r-1})}(a)&\le2V_{r-2},\\
 C_{2^{-1}\pi(S_{r-1})}(a)
 &=C_{\pi(S_{r-1})}(2a)
 \le C_{\pi(B_{r-1})}(2a)\le2V_{r-2}.
\end{align*}
Here $2a\ne0$ because $|A|$ is odd. Therefore
\begin{align*}
 \|\mathsf P_{(2q+1)A}f\|_2^2
 &=\frac1{|(2q+1)A|}\sum_{a\in(2q+1)A}C_{2^{-1}\pi(S_{r-1})}(a)\\
 &\le\frac{\sigma_{r-1}}{|(2q+1)A|}
      +2\left(1-\frac1{|(2q+1)A|}\right)V_{r-2},\\
 \|\mathsf P_{(2q+1)A}g\|_2^2
 &=\frac1{|(2q+1)A|}\sum_{a\in(2q+1)A}C_{\pi(B_{r-1})}(a)\\
 &\le\frac{V_{r-1}}{|(2q+1)A|}
      +2\left(1-\frac1{|(2q+1)A|}\right)V_{r-2}.
\end{align*}
The convolution operator associated with $S_1^{(2q+1)}$ has
eigenvalue $2n$ on $\operatorname{im}\mathsf P_{(2q+1)A}$;
on its orthogonal complement its eigenvalues are
$\lambda_{\chi^{2q+1}}-1$, of absolute value at most $\mathcal L$.
Thus the repeated-pair incidence satisfies
\begin{align*}
 &[0]S_1^{(2q+1)}S_{r-1}^{(q+1)}B_{r-1}\\
 &\quad\le2n\|\mathsf P_{(2q+1)A}f\|_2
             \|\mathsf P_{(2q+1)A}g\|_2\\
 &\qquad+\mathcal L
 \sqrt{\bigl(\sigma_{r-1}-\|\mathsf P_{(2q+1)A}f\|_2^2\bigr)
       \bigl(V_{r-1}-\|\mathsf P_{(2q+1)A}g\|_2^2\bigr)}\\
 &\quad\le\frac{\mathcal L}{2}(\sigma_{r-1}+V_{r-1})
 +\frac{2n-\mathcal L}{2}
   \bigl(\|\mathsf P_{(2q+1)A}f\|_2^2+\|\mathsf P_{(2q+1)A}g\|_2^2\bigr)\\
 &\quad\le\frac{\mathcal L}{2}(\sigma_{r-1}+V_{r-1})
 +\frac{2n-\mathcal L}{2|(2q+1)A|}(\sigma_{r-1}+V_{r-1})\\
 &\qquad+2(2n-\mathcal L)\left(1-\frac1{|(2q+1)A|}\right)V_{r-2}.
\end{align*}
The second inequality uses $2\sqrt{xy}\le x+y$ on both terms;
the final inequality uses $2n-\mathcal L>0$ and the preceding bounds
for the projected squared norms.
Since $V_{r-1}/\sigma_{r-1}=1+V_{r-2}/\sigma_{r-1}$,
division by $\sigma_{r-1}$ gives
\[
 \frac{[0]S_1^{(2q+1)}S_{r-1}^{(q+1)}B_{r-1}}{\sigma_{r-1}}
 \le F_{|(2q+1)A|}\left(n,\mathcal L,\frac{V_{r-2}}{\sigma_{r-1}}\right).
\]
Lemma~\ref{lem:all-pairs}, including its bound for the contribution
from nonminimum fibers, gives
\[
 \frac{[0]S_1^{(2q+1)}S_{r-1}^{(q+1)}B_{r-1}}{\sigma_{r-1}}
 \ge2n-\frac{(3r+2)(r-1)}{r+1},
\]
which proves the first inequality.

When $(2q+1)A=A$, the projection is onto constant functions, so
\[
 \|\mathsf P_{(2q+1)A}f\|_2^2=\frac{\sigma_{r-1}^2}{V_r},\qquad
 \|\mathsf P_{(2q+1)A}g\|_2^2=\frac{V_{r-1}^2}{V_r}.
\]
The first incidence inequality above and $\sqrt{xy}\le(x+y)/2$
applied only to the complementary terms give
\begin{align*}
 \frac{[0]S_1^{(2q+1)}S_{r-1}^{(q+1)}B_{r-1}}{\sigma_{r-1}}
 &\le\frac{2nV_{r-1}}{V_r}\\
 &\quad+\frac{\mathcal L}{2\sigma_{r-1}}
 \left(\sigma_{r-1}-\frac{\sigma_{r-1}^2}{V_r}
       +V_{r-1}-\frac{V_{r-1}^2}{V_r}\right),
\end{align*}
which proves \eqref{eq:full-image-necessary}.
Finally,
\begin{align*}
 F_{|(2q+1)A|}\left(n,\mathcal L,\frac{V_{r-2}}{\sigma_{r-1}}\right)
 &=\mathcal L+\frac{V_{r-2}}{\sigma_{r-1}}
                    \left(4n-\frac{3\mathcal L}{2}\right)\\
 &\quad+\frac{2n-\mathcal L}{|(2q+1)A|}
             \left(1-\frac{3V_{r-2}}{2\sigma_{r-1}}\right).
\end{align*}
Since $2n-\mathcal L>0$ and $1-3V_{r-2}/(2\sigma_{r-1})>0$,
this decreases with $|(2q+1)A|$.
For $|(2q+1)A|\ge3$ it gives
\begin{align*}
 \frac1nF_{|(2q+1)A|}\left(n,\mathcal L,\frac{V_{r-2}}{\sigma_{r-1}}\right)
 &\le\frac1nF_3\left(n,\mathcal L,\frac{V_{r-2}}{\sigma_{r-1}}\right)\\
 &=\frac23+\left(\frac23-\frac{V_{r-2}}{\sigma_{r-1}}\right)
               \frac{\mathcal L}{n}+\frac{3V_{r-2}}{\sigma_{r-1}}.
\end{align*}
The partial derivatives $\partial_z G=2/3-e>0$ and $\partial_e G=3-z>0$ prove both
the inequality and its stated monotonicity.
\end{proof}

For a nonzero image, the spectral comparison imposes a uniform
upper bound on the dimension.

\begin{lemma}\label{lem:nonzero-image-9r}
Suppose $r\ge3$, $2n=(r+1)q$, and $B_n(r)$ is a transversal
for a finite abelian quotient $A$.
If $q+1$ is injective and $(2q+1)A\ne\{0\}$, then $n<9r$.
\end{lemma}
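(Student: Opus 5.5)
Suppose, for contradiction, that $n\ge9r$. We then play the lower bound on repeated-coordinate incidences against the spectral upper bound in Lemma~\ref{lem:scalar-image-comparison}.

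\textbf{Inputs.} Take $\mathcal L=\Lambda+1$, or more conveniently the upper bound
\[
\mathcal L=1+3(r-1)+2\sqrt{2(r-1)(n+r-1)}
\]
from Corollary~\ref{thm:uspec}. For $n\ge9r$ this gives $\mathcal L=O(\sqrt{rn})$. The ratio $\mathcal L/n$ is then at most roughly $2\sqrt{2r/n}+3r/n$, which stays safely below $2$, and in particular $\mathcal L<2n$. For the ratio $e=V_{r-2}/\sigma_{r-1}$, I would use the recurrences $\sigma_{r-1}=((2n-r+2)V_{r-2}+(r-2)V_{r-3})/(r-1)$ and $V_{r-2}\ge0$. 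These give $e\le (r-1)/(2n-r+2)$, which is at most about $1/17$ when $n\ge9r$, well below $1/2$. So the hypotheses of Lemma~\ref{lem:scalar-image-comparison} hold.

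\textbf{Case $|(2q+1)A|\ge3$.} This always holds when the image is nonzero, since $|A|$ is odd. Lemma~\ref{lem:scalar-image-comparison} gives
\[
2-\frac{(3r+2)(r-1)}{(r+1)n}\le G(e,z),
\qquad e=\frac{r-1}{2n-r+2},\quad z=\frac{\mathcal L}{n}.
\]
By the stated monotonicity of $G$, it suffices to evaluate at these upper bounds. The left side is at least $2-3r/n\ge 2-1/3=5/3$. On the right, with $n\ge 9r$, the bound on $z$ is of order $2\sqrt{2/9}+1/3\approx1.28$ (slightly more after the lower-order corrections), and $e\le1/17$. This gives $G\approx 2/3+(2/3)(1.28)+3/17\approx1.70$.

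That estimate is borderline, which is the main obstacle. I would handle it by writing $n=\theta r$ with $\theta\ge9$ and bounding everything in terms of $\theta$ together with $1/r$ corrections. The key estimates are
\[
\mathcal L/n\le \frac{3}{\theta}+2\sqrt{\frac{2(1+1/\theta)}{\theta}},
\qquad
e\le \frac{1}{2\theta-1}.
\]
Because both sides are monotone in $\theta$ (the left side increases, the right side decreases), it reduces to checking a single endpoint inequality at $\theta=9$, possibly after treating small $r$ separately.

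\textbf{Fallbacks.} If the crude bound does not close the gap at $\theta=9$, I would sharpen it in two ways. First, use the exact $F_3$ rather than $G$, keeping the $-3\mathcal L e/2$ term and the exact coefficient $1-1/m$. Second, for the smaller radii $4\le r\le16$, replace Corollary~\ref{thm:uspec} by the explicit bounds $\Lambda<\beta_r\sqrt{n/n_r}$ of Lemma~\ref{lem:explicit-spectral-certificates}, which are considerably stronger. The radius $r=3$ would be checked by direct substitution.

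In the subcase $(2q+1)A=A$, the sharper inequality~\eqref{eq:full-image-necessary} applies. There $2nV_{r-1}/V_r\le 2nr/(2n+1)<r$, and the $\mathcal L$-coefficient is at most $1+O(r/n)$. This contradicts a left side of about $2n-3r$ as soon as $n$ exceeds a small multiple of $r$, so this subcase only makes the general argument easier.
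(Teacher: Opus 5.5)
Your route is the paper's. You assume $n\ge9r$ and use oddness of $|A|$ to get $|(2q+1)A|\ge3$. You bound $\Lambda$ by Corollary~\ref{thm:uspec} and $V_{r-2}/\sigma_{r-1}$ by $(r-1)/(2n-r+2)\le\frac1{17}$. You then compare $2-3r/n\ge\frac53$ with $G$ at the endpoint $\theta=9$. There your bounds $z\le 3/\theta+2\sqrt{2(1+1/\theta)/\theta}$ and $e\le 1/(2\theta-1)$ give exactly the paper's values $z=(3+\sqrt{80})/9$ and $e=\frac1{17}$.

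The defect is the numerical evaluation that made you think the endpoint is borderline. In $G(e,z)=\frac23+(\frac23-e)z+3e$ the coefficient of $z$ is $\frac23-e=\frac{31}{51}$, not $\frac23$; you dropped the $-ez$ term. Done correctly,
$G(\frac1{17},\frac{3+\sqrt{80}}9)=\frac{480+31\sqrt{80}}{459}\approx1.650<\frac53$,
because $285^2-31^2\cdot80=4345>0$. So the single endpoint check closes uniformly in $r$. You need no separate small-$r$ cases, no fallback, and no appeal to \eqref{eq:full-image-necessary}, since the case $(2q+1)A=A$ is already covered by $|(2q+1)A|\ge3$.

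As written, though, your proposal never verifies this decisive inequality. The estimate you do display, $G\approx1.70>\frac53$, would actually contradict it. Your fallbacks would not have rescued a genuine failure either:
\begin{itemize}
\item $F_3(n,\mathcal L,e)/n$ is identically $G(e,\mathcal L/n)$, so passing to the exact $F_3$ gains nothing.
\item The certificates of Lemma~\ref{lem:explicit-spectral-certificates} cover only $4\le r\le16$, leaving $r\ge17$ without an argument.
\end{itemize}
The gap is a computational slip rather than a missing idea, but it sits exactly at the step on which the lemma rests, so you must carry out that endpoint computation correctly for the proof to stand.
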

\begin{proof}
Assume $n\ge9r$. Since $|A|$ is odd, $|(2q+1)A|\ge3$.
The volume ratio and Corollary~\ref{thm:uspec} give
\[
 \frac{V_{r-2}}{\sigma_{r-1}}\le\frac{r-1}{2n-r+2}<\frac{r}{2n-r}
 =\frac{1}{2n/r-1}\le\frac{1}{17},
 \qquad \Lambda+1\le 3r+\sqrt{8r(n+r)}.
\]
For fixed $r>0$, both $3r/n$ and $8r/n+8r^2/n^2$ decrease for $n>0$. Consequently
\[
 \frac{3r+\sqrt{8r(n+r)}}{n}=\frac{3r}{n}+\sqrt{\frac{8r}{n}+\frac{8r^2}{n^2}}
 \le\frac{3+\sqrt{80}}{9}<\frac{4}{3}<2.
\]
The normalized scalar-image comparison therefore yields
\[
 2-\frac{(3r+2)(r-1)}{n(r+1)}
 \le G\left(\frac{1}{17},\frac{3+\sqrt{80}}{9}\right)
 =\frac{480+31\sqrt{80}}{459}<\frac{5}{3}.
\]
The last inequality follows from
$285^2-31^2\cdot80=4345>0$, since $285$ and $31\sqrt{80}$ are
positive. On the other hand,
\[
 \frac{(3r+2)(r-1)}{r+1}=3r-\frac{4r+2}{r+1}<3r,
 \qquad 2-\frac{(3r+2)(r-1)}{n(r+1)}>2-\frac{3r}{n}\ge\frac{5}{3},
\]
a contradiction.
\end{proof}

The monotonicity of the scalar-image comparison extends an endpoint
certificate to an entire interval of dimensions. For the smaller radii,
the following endpoints sharpen the bound $n<9r$.
For $4\le r\le16$, define the continuation endpoints
$(n_r^*,\beta_r^*)$ by the following table. Its last two columns
record the indicated quantities, where
\[
 \Delta_r^*=\frac{4n_r^*}{3}-\frac{(3r+2)(r-1)}{r+1}-\frac{2(\beta_r^*+1)}3
 -\frac{r-1}{2n_r^*+2-r}\bigl(3n_r^*-\beta_r^*-1\bigr).
\]
\[
\begin{array}{c|r|r|r|l}
 r&n_r^*&\beta_r^*&(\beta_r^*)^2n_r-\beta_r^2n_r^*&\Delta_r^*\\ \hline
4&20&20&220&677/570\\
5&27&29&786&2/3\\
6&35&39&2128&365/462\\
7&44&49&7853/4&2147/996\\
8&54&59&4959&43/9\\
9&60&68&560&1662/565\\
10&66&75&0&4699/2046\\
11&72&84&0&25/54\\
12&91&102&7436&32335/3354\\
13&91&106&0&1280/399\\
14&105&119&0&997/110\\
15&120&132&0&389/24\\
16&136&145&0&18019/731
\end{array}
\]

\begin{lemma}\label{lem:zero-dimension-continuation}
For $4\le r\le16$, no zero-remainder lattice tiling with
injective multiplication by $q+1$ and $(2q+1)A\ne\{0\}$ exists in dimension
$n\ge n_r^*$.
\end{lemma}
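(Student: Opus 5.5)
We argue by contradiction: assume a zero-remainder tiling with $n\ge n_r^*$, injective multiplication by $q+1$, and $(2q+1)A\ne\{0\}$. The inputs are the normalized comparison in Lemma~\ref{lem:scalar-image-comparison} (its first inequality with $|(2q+1)A|\ge3$, which holds because $|A|$ is odd and the image is nonzero) and the spectral bound from Lemma~\ref{lem:explicit-spectral-certificates}. For $n\ge n_r\ge1$, Lemma~\ref{lem:positive-hessenberg} with $(n_0,B)=(n_r,\beta_r)$ gives $\Lambda<\beta_r\sqrt{n/n_r}$. The column $(\beta_r^*)^2n_r-\beta_r^2n_r^*\ge0$ of the table says exactly that $\beta_r\sqrt{n_r^*/n_r}\le\beta_r^*$. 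Since every $n_r^*\ge n_r$, we may take $\mathcal L=\beta_r^*+1$ at $n=n_r^*$. More usefully, for $n\ge n_r^*$ we may take $\mathcal L=1+\beta_r^*\sqrt{n/n_r^*}$. Then $\mathcal L/n$ is decreasing in $n$, so its value at $n_r^*$, namely $(\beta_r^*+1)/n_r^*$, serves as the parameter $z$ for all larger $n$. One checks $z<2$ and $\mathcal L<2n$ from the table.

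Next we bound $e$. The ball recurrences give $V_{r-2}/\sigma_{r-1}\le(r-1)/(2n-r+2)$, and this is decreasing in $n$. We take $e=(r-1)/(2n_r^*+2-r)$, which is below $1/2$ because $n_r^*\ge4r$. The last assertion of Lemma~\ref{lem:scalar-image-comparison} then gives, for every $n\ge n_r^*$,
\[
 2-\frac{(3r+2)(r-1)}{(r+1)n}\le G(e,z).
\]
Here $e$ and $z$ are the endpoint values, which is legitimate by the monotonicity of $G$ in each variable.

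The left side increases in $n$, so it suffices to get a contradiction at $n=n_r^*$. Multiplying through by $n_r^*$ and substituting $z=(\beta_r^*+1)/n_r^*$, we find
\[
 n_r^*G(e,z)=\frac{2n_r^*}{3}+\frac{2(\beta_r^*+1)}{3}+e\bigl(3n_r^*-\beta_r^*-1\bigr).
\]
Hence the difference $2n_r^*-\frac{(3r+2)(r-1)}{r+1}-n_r^*G(e,z)$ is exactly $\Delta_r^*$. The tabulated values of $\Delta_r^*$ are all positive, and this contradicts the displayed inequality.

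The main obstacle is making the uniformity in $n$ airtight, since $\mathcal L$ genuinely grows with $n$. The key point is that the Hessenberg bound scales like $\sqrt n$, so $\mathcal L/n$ still decreases. The hypotheses $n\ge n_r$ of Lemma~\ref{lem:positive-hessenberg} and $V_{r-2}/\sigma_{r-1}<1/2$ must also be verified, and we should confirm that the direction of every monotonicity matches the conclusion of Lemma~\ref{lem:scalar-image-comparison}. The remaining work is the routine verification of the rational entries $\Delta_r^*$.
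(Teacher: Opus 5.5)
Your proposal is correct and follows the paper's proof essentially step for step. It uses the same transfer of the Hessenberg bound through the fourth column, the same decreasing endpoints $e(n)=(r-1)/(2n+2-r)$ and $z(n)=(\beta_r^*\sqrt{n/n_r^*}+1)/n$, and the same identification of the endpoint margin with $\Delta_r^*$; your explicit checks that $n_r^*\ge n_r$, that $n_r^*\ge4r$ (so $e<1/2$), and that $\mathcal L<2n$ fill in hypotheses the paper leaves implicit.
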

\begin{proof}
The nonnegative fourth column and
Lemma~\ref{lem:explicit-spectral-certificates} give
$
 \Lambda<\beta_r\sqrt{n/n_r}\le \beta_r^*\sqrt{n/n_r^*}
$
for all $n\ge n_r^*$.
Put $e(n)=(r-1)/(2n+2-r)$ and
$z(n)=(\beta_r^*\sqrt{n/n_r^*}+1)/n$. The ball recurrence gives $\frac{V_{r-2}}{\sigma_{r-1}}\le e(n)$. Indeed,
\[
 (r-1)V_{r-1}=(2n+1)V_{r-2}+(r-2)V_{r-3}
 \ge(2n+1)V_{r-2}.
\]
Substituting $V_{r-1}=\sigma_{r-1}+V_{r-2}$ and rearranging gives
$(2n+2-r)V_{r-2}\le(r-1)\sigma_{r-1}$, which is the claimed bound.
Moreover
\[
 e'(n)=-\frac{2(r-1)}{(2n+2-r)^2}<0,\qquad
 z'(n)=-\frac{\beta_r^*}{2\sqrt{n_r^*}n^{3/2}}-\frac1{n^2}<0.
\]
The table gives $e(n_r^*)<1/2$ and $z(n_r^*)=(\beta_r^*+1)/n_r^*<2$.
Since $|(2q+1)A|\ge3$, the normalized comparison would imply
\[
 2-\frac{(3r+2)(r-1)}{(r+1)n}
 \le G(e(n),z(n))\le G(e(n_r^*),z(n_r^*)).
\]
But the last column is exactly
\[
 \Delta_r^*=n_r^*\left\{2-\frac{(3r+2)(r-1)}{(r+1)n_r^*}
                          -G(e(n_r^*),z(n_r^*))\right\}>0.
\]
For $n\ge n_r^*$, the left side of the necessary comparison is at least
$2-\frac{(3r+2)(r-1)}{(r+1)n_r^*}$, a contradiction.
\end{proof}

The remaining initial dimensions require sharper scalar-image bounds and,
for $(r,n)=(5,18)$, a prime-quotient moment calculation.
The first six pairs below are covered by \cite[Theorem~5.1]{XZ};
we include direct proofs for every pair.
\begin{lemma}\label{lem:zero-initial-dimensions}
No zero-remainder lattice tiling with injective multiplication by $q+1$ exists at
\[
\begin{gathered}
 (r,n)=(4,15),(5,18),(5,21),(5,24),(6,28),(7,32),\\
 (7,36),(7,40),(8,45),(9,50),(9,55).
\end{gathered}
\]
\end{lemma}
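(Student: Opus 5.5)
The plan is to split according to whether $(2q+1)A=\{0\}$, and then, in the nonzero-image case, to run the scalar-image comparison of Lemma~\ref{lem:scalar-image-comparison} with sharper inputs than those used in Lemma~\ref{lem:zero-dimension-continuation}.

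For each listed pair, $n=(r+1)q/2$ determines $q$; the relevant values are $(r,q)=(4,6),(5,6),(5,7),(5,8),(6,8),(7,8),(7,9),(7,10),(8,10),(9,10),(9,11)$. If $(2q+1)A=\{0\}$ and $r\ge4$, Proposition~\ref{thm:zero-image-all-radii} already excludes the pair. Indeed, each listed pair appears in its table with a nonzero residue, and the gcd entry is too small for a coordinate order of at least $2r+1$. Hence we may assume $(2q+1)A\ne\{0\}$, so $|(2q+1)A|\ge3$. In fact, from the gcd data, $|(2q+1)A|$ is $|A|$ divided by a small divisor, so it is very large. Then $F_m$ in Lemma~\ref{lem:scalar-image-comparison} is essentially its $m\to\infty$ limit. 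When $\gcd(V_r(n),2q+1)=1$ we even have $(2q+1)A=A$, and I would use \eqref{eq:full-image-necessary} directly.

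The key step is therefore: compute $V_{r-2},V_{r-1},\sigma_{r-1},V_r$ exactly at each pair, and take $\mathcal L=\Lambda+1$. For $\Lambda$ I would not use the continuation endpoint. Instead I would obtain a sharper bound from Lemma~\ref{lem:positive-hessenberg}, taking $n_0=n$ itself and choosing $B$ minimal so that the recurrence $h_k$ stays positive. This is a one-parameter exact computation via Lemma~\ref{lem:spectral-ratio-certificate}, with suitable rational ratios $a_k$. I would then check that
\[
2n-\frac{(3r+2)(r-1)}{r+1}>F_{m}\bigl(n,\mathcal L,V_{r-2}/\sigma_{r-1}\bigr)
\]
with $m$ the smallest admissible value of $|(2q+1)A|$, or that the right side of \eqref{eq:full-image-necessary} is violated when the image is full. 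A more refined version would use, in the incidence count of Lemma~\ref{lem:all-pairs}, the exact nonminimum contribution $\sum_{\mathbf b}D_{\mathbf b}$ on nonminimum fibers rather than $(2r+1)\sum(f_{\mathbf b}-q)$, which gains a constant.

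The main obstacle is $(5,18)$, where $2n$ is smallest relative to $r^2$ and the spectral bound $\Lambda$ is comparable to $n$; there I expect the comparison to fail. For that case I would pass to a prime quotient. I would pick a prime $p\mid V_5(18)$ and the quotient $A\to A/pA$ (or a cyclic quotient of order $p$), and push the transversal identity $B_5\cdot[L]=\text{all}$ down to a group-ring identity in $\mathbb F$-characters of order $p$. The moments $\sum_i\chi(g_i)^k$ for $k\le 5$ are then determined by the tiling identities $\chi(B_n(k))$ through the exponential recurrence in the proof of Lemma~\ref{lem:positive-hessenberg}. Summing over the $p-1$ nontrivial characters gives integer power sums of the multiset $\{\pm g_i\}\subset\mathbb Z/p$. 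I would check that these forced power sums are inconsistent, for instance by producing a negative count or a non-integer multiplicity.
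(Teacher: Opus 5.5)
\textbf{The ten pairs other than $(5,18)$.} Here your plan is the paper's argument. You use the scalar-image comparison of Lemma~\ref{lem:scalar-image-comparison}, applying \eqref{eq:full-image-necessary} when $\gcd(2q+1,V_r)=1$ and $F_m$ with a lower bound $m$ for $|(2q+1)A|$ otherwise, with spectral bounds from Lemma~\ref{lem:positive-hessenberg}. The paper's explicit margins confirm that this closes. Two small corrections:
\begin{itemize}
\item The gcd alone does not make $|(2q+1)A|$ large; you need the $p$-adic valuations of $V_r$. For example, $V_7(40)=3^5\cdot18938939$ with $2q+1=21$, so $|A[21]|$ can be as large as $243$ although the gcd is only $3$.
\item Your appeal to Proposition~\ref{thm:zero-image-all-radii} for the annihilated case is legitimate but superfluous, since those factorizations already give $(2q+1)A\ne\{0\}$.
\end{itemize}
You are right that $(5,18)$ escapes this method. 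There $(2q+1)A=A$, and \eqref{eq:full-image-necessary} would need roughly $\Lambda<20.4$. The recurrence of Lemma~\ref{lem:positive-hessenberg} at $n_0=18$ only certifies about $\Lambda<23.4$.

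\textbf{The gap: $(5,18)$.} Your sketch for this pair names a setting but supplies no mechanism that forces a contradiction.
\begin{itemize}
\item The tiling gives only $\chi(B_{18}(5))=0$ for nonprincipal $\chi$. Via the exponential recurrence, this is a single polynomial relation among $\lambda_\chi,\lambda_{\chi^2},\dots,\lambda_{\chi^5}$. So the moments $\sum_i\chi(g_i)^k$ are \emph{not} determined by it.
\item Summing complex-character identities over the characters of $\Z/p$ merely restates that the image of $B_{18}(5)$ is uniform on $\Z/p$. That is the hypothesis itself. You give no identity that would produce a negative count or a fractional multiplicity.
\end{itemize}

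\textbf{The missing idea.} Take moments in characteristic $p$, with the specific prime $p=1451$. Here $V_5(18)=11\cdot37\cdot1451$, and the primes $11$ and $37$ do not suffice, because the argument needs $p-1>2n=36$.
\begin{enumerate}
\item Take a surjection $\varphi:A\to\mathbb F_p$ and put $a_i=\varphi(g_i)$. Uniformity gives $\sum_{\mathbf x\in B_{18}(5)}\varphi(\mathbf x)^{2k}=407\sum_{b\in\mathbb F_p}b^{2k}=0$ for $1\le k\le18$.
\item Put $M_k=\sum_ia_i^{2k}$ and $\ell_k=\sum_{a=1}^5V_{5-a}(18)a^{2k-1}$. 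In $\mathbb F_p[t]/(t^{37})$, consider $\frac1{1-z}\prod_i\frac{1-z^2}{(1-z\exp(a_it))(1-z\exp(-a_it))}$. Once $M_1=\cdots=M_{k-1}=0$, its $z^5t^{2k}$-coefficient equals $2\ell_kM_k/(2k)!$.
\item Verify $\ell_k\not\equiv0\pmod{1451}$ for all $k\le18$. Induction then forces $M_1=\cdots=M_{18}=0$.
\item Newton's identities, valid since $18<p$, give $\prod_i(X-a_i^2)=X^{18}$. Hence every $a_i=0$, contradicting surjectivity of $\varphi$.
\end{enumerate}
This step uses neither injectivity of $q+1$ nor any spectral bound.
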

\begin{proof}
We first exclude $(r,n)=(5,18)$. The volume formula gives
$V_5(18)=590557=407\cdot1451$. The integer $p=1451$ is prime:
$p<39^2$, and division by every prime at most $38$ gives
\[
\begin{array}{c|rrrrrrrrrrrr}
\ell&2&3&5&7&11&13&17&19&23&29&31&37\\\hline
1451\bmod\ell&1&2&1&2&10&8&6&7&2&1&25&8.
\end{array}
\]
The structure theorem for finite abelian groups therefore gives a surjection
$\varphi:A\to\mathbb F_p$. Put $a_i=\varphi(g_i)$ for $1\le i\le18$.
The $a_i$ generate $\mathbb F_p$, and every field element occurs
exactly $407$ times on $B_{18}(5)$.

Work in $\mathbb F_p[t]/(t^{37})$, where
$\exp(bt)=\sum_{j=0}^{36}b^jt^j/j!$ satisfies
$\exp(bt)\exp(dt)=\exp((b+d)t)$ by the binomial theorem.
Set $M_k=\sum_{i=1}^{18}a_i^{2k}$. The ball generating function is
\[
 \mathscr F(z,t)=\frac1{1-z}\prod_{i=1}^{18}
 \frac{1-z^2}{(1-z\exp(a_it))(1-z\exp(-a_it))},
 \qquad \mathscr F(z,0)=\frac{(1+z)^{18}}{(1-z)^{19}}.
\]
Expanding the logarithm through $z$-degree five gives
\[
\begin{aligned}
 \log\frac{\mathscr F(z,t)}{\mathscr F(z,0)}
 &=\sum_{a=1}^{5}\frac{z^a}{a}
       \sum_{i=1}^{18}\bigl(\exp(aa_it)+\exp(-aa_it)-2\bigr)\\
 &=2\sum_{k=1}^{18}\frac{M_kt^{2k}}{(2k)!}
       \sum_{a=1}^{5}a^{2k-1}z^a\pmod{z^6}.
\end{aligned}
\]
All denominators are invertible modulo $p$. Uniformity on the ball gives
\[
 [z^5]\mathscr F(z,t)=407\sum_{b\in\mathbb F_p}\exp(bt),\qquad
 [z^5t^{2k}]\mathscr F(z,t)=0\quad(1\le k\le18).
\]
Indeed, $X^{2k}-1$ has at most $2k\le36<p-1$ roots, so some
$u\in\mathbb F_p^*$ satisfies
$u^{2k}\ne1$; multiplication by $u$ permutes the field and implies
$(u^{2k}-1)\sum_b b^{2k}=0$, hence $\sum_b b^{2k}=0$.
Put $\ell_k=\sum_{a=1}^{5}V_{5-a}(18)a^{2k-1}\in\mathbb F_p$.
Their values, in increasing order of $k$, are
\[
\begin{array}{rrrrrrrrr}
826&796&309&1306&1307&264&1365&914&1234\\
779&485&533&452&94&570&212&953&335.
\end{array}
\]
To verify the table, start with
\[
 (V_4(18),2V_3(18),3V_2(18),4V_1(18),5)
 \equiv(535,985,604,148,5)\pmod{1451};
\]
at each step multiply its $a$th component by $a^2$ modulo $1451$
and sum the five components. Thus every $\ell_k$ is nonzero.
If $M_1=\cdots=M_{k-1}=0$, the logarithm above has no positive
$t$-degree below $2k$, so nonlinear terms in its exponential have
$t$-degree at least $4k>2k$. Consequently
\[
 0=[z^5t^{2k}]\mathscr F(z,t)=\frac{2\ell_kM_k}{(2k)!},\qquad M_k=0.
\]
Induction gives $M_1=\cdots=M_{18}=0$. For the elementary symmetric
functions $e_j$ of $a_1^2,\ldots,a_{18}^2$, with $e_0=1$, Newton's
identities yield
\[
 je_j=\sum_{k=1}^{j}(-1)^{k-1}e_{j-k}M_k=0
 \quad(1\le j\le18<p),\qquad
 \prod_{i=1}^{18}(X-a_i^2)=X^{18}.
\]
Substitution of each $a_i^2$ gives $a_i^{36}=0$, hence every $a_i=0$,
contrary to surjectivity.

For the other ten pairs we use the scalar-image comparison.
For seven of these pairs, the volume formula gives
\begin{align*}
V_4(15)&=39041=13\cdot3003+2,\\
V_5(24)&=2383409=17\cdot140200+9,\\
V_6(28)&=48442297=17\cdot2849546+15,\\
V_7(32)&=993696769=17\cdot58452751+2,\\
V_7(36)&=2229491905=19\cdot117341679+4,\\
V_9(50)&=3074466656021=21\cdot146403174096+5,\\
V_9(55)&=7166445945583=23\cdot311584606329+16.
\end{align*}
Here $13,17,17,17,19,21,23$ are the corresponding values of
$2q+1=4n/(r+1)+1$. Each displayed remainder is coprime to its
respective $2q+1$, so $\gcd(2q+1,V_r)=1$ and $(2q+1)A=A$
for these seven pairs.
For the other three pairs,
\begin{align*}
V_5(21)&=9\cdot138331,&
138331&\equiv1\pmod3,\quad138331\equiv1\pmod5,\\
V_7(40)&=243\cdot18938939,&
18938939&\equiv2\pmod3,\quad18938939\equiv5\pmod7,\\
V_8(45)&=49\cdot2420592481,&
2420592481&\equiv1\pmod3,\quad2420592481\equiv6\pmod7.
\end{align*}
The factors $138331,18938939,2420592481$ are coprime to the
respective scalars $2q+1=15,21,21$. Multiplication by $2q+1$ is
bijective on their primary components, and hence
\[
\begin{aligned}
 |(2q+1)A|&\ge138331>1001&&(r,n)=(5,21),\\
 |(2q+1)A|&\ge18938939>5&&(r,n)=(7,40),\\
 |(2q+1)A|&\ge2420592481>5&&(r,n)=(8,45).
\end{aligned}
\]

The table below defines $\beta_{r,n}$ and gives the spectral bound
$\Lambda<\beta_{r,n}$. Lemma~\ref{lem:positive-hessenberg} gives this bound
because $\beta_{r,n}^2n_r\ge \beta_r^2n$. Equality holds when $n=n_r$; the
four other differences are
\[
\begin{array}{c|rrrr}
(r,n)&(5,24)&(7,36)&(7,40)&(9,55)\\ \hline
 \beta_{r,n}^2n_r-\beta_r^2n&309&2783/4&5251/2&6380.
\end{array}
\]
Put $e=(r-1)/(2n+2-r)$ and $t=r/(2n+1)$.
The volume ratios give $\frac{V_{r-2}}{\sigma_{r-1}}\le e$, $V_{r-1}/\sigma_{r-1}\le1+e$,
and $V_{r-1}/V_r\le t<1$. Since $n\ge3r$, we have $e<1/2$;
the positive fourth column below gives $\beta_{r,n}+1<3n/2$.
For $1\le a\le1+e$ and $0\le v\le t$, define
\[
 \mathscr P(a,v)=2nv+\frac{\beta_{r,n}+1}{2}\{1+a-v(a+a^{-1})\}.
\]
Then
\[
 \partial_a\mathscr P=\frac{\beta_{r,n}+1}{2}\{1-v(1-a^{-2})\}>0,\qquad
 \partial_v\mathscr P=2n-\frac{\beta_{r,n}+1}{2}(a+a^{-1})
 >2n-\frac{3n}{4}\cdot\frac{13}{6}=\frac{3n}{8}>0,
\]
where $a+a^{-1}\le13/6$ follows from $1\le a<3/2$.
Thus, when $(2q+1)A=A$, the right side of
\eqref{eq:full-image-necessary} is
$\mathscr P(V_{r-1}/\sigma_{r-1},V_{r-1}/V_r)\le \mathscr P(1+e,t)$.
For the other pairs, $F_{|(2q+1)A|}$ decreases with $|(2q+1)A|$ by
Lemma~\ref{lem:scalar-image-comparison}, and
\[
 \begin{aligned}
 \frac{\partial F_{|(2q+1)A|}(n,\beta_{r,n}+1,e)}{\partial e}
 &=\frac{\beta_{r,n}+1}{2}\\
 &\quad +(2n-\beta_{r,n}-1)\left(2-\frac{3}{2|(2q+1)A|}\right)>0
 \quad(|(2q+1)A|\ge1).
 \end{aligned}
\]
We may therefore replace $\frac{V_{r-2}}{\sigma_{r-1}}$ by $e$ and $|(2q+1)A|$ by the indicated
lower bound. The last two columns give the resulting upper bound and
the difference between $2n-\frac{(3r+2)(r-1)}{r+1}$ and that bound:
\[
\begin{array}{cc|c|c|c|c}
 r&n&\beta_{r,n}&3n-2(\beta_{r,n}+1)&\text{upper bound}&\text{margin}\\ \hline
4&15&17&9&\mathscr P(1+e,t)&147909/134540\\
5&21&25&11&F_{1001}(n,\beta_{r,n}+1,e)&134/3549\\
5&24&27&16&\mathscr P(1+e,t)&83222/15435\\
6&28&34&14&\mathscr P(1+e,t)&2236447/788424\\
7&32&165/4&23/2&\mathscr P(1+e,t)&1179/38350\\
7&36&44&18&\mathscr P(1+e,t)&7372101/1428172\\
7&40&47&24&F_5(n,\beta_{r,n}+1,e)&1039/500\\
8&45&53&27&F_5(n,\beta_{r,n}+1,e)&221/180\\
9&50&62&24&\mathscr P(1+e,t)&12352644/1581155\\
9&55&66&31&\mathscr P(1+e,t)&9677858/705035.
\end{array}
\]
All margins are positive, contradicting the scalar-image comparison.
\end{proof}

The preceding exclusions now cover the entire zero-remainder branch
at radii at least four. We verify the injectivity hypotheses and the
initial dimensions explicitly.

\begin{proposition}\label{prop:zero-complete}
Let $r\ge4$ and $n>r(r+1)/2+2$. If $c=0$, then $B_n(r)$
admits no lattice tiling of $\Z^n$.
\end{proposition}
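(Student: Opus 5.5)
The proof assembles the zero-remainder exclusions of this subsection. Throughout, $2n=(r+1)q$ with $r\ge4$ and $n>r(r+1)/2+2$.

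\textbf{Injectivity of multiplication by $q+1$.} The first step is to check that Lemma~\ref{lem:zero-kernel} applies, which needs $n\ge\max\{r^2/3,4(r-1)\}$. Since $r(r+1)/2+2\ge r^2/3$, the first condition is automatic. For the second, note that $r(r+1)/2+2\ge4(r-1)$ is equivalent to $r^2-7r+12\ge0$, that is, $(r-3)(r-4)\ge0$, which holds for $r\ge4$. So multiplication by $q+1$ is an automorphism of $A$, and this is the injectivity hypothesis required by the later lemmas.

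\textbf{Splitting on the image $(2q+1)A$.} If $(2q+1)A=\{0\}$, Proposition~\ref{thm:zero-image-all-radii} excludes the case directly. Otherwise $(2q+1)A\ne\{0\}$, and together with injectivity, Lemma~\ref{lem:nonzero-image-9r} gives $n<9r$. For $r\ge17$ this already yields a contradiction, because $r(r+1)/2+2\ge9r$ is equivalent to $r^2-17r+4\ge0$, which holds for $r\ge17$. So no admissible $n$ remains for $r\ge17$.

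\textbf{The radii $4\le r\le16$.} Here Lemma~\ref{lem:zero-dimension-continuation} excludes every $n\ge n_r^*$. It remains to list the $n$ with
\[
 r(r+1)/2+2<n<n_r^*,\qquad (r+1)\mid 2n,
\]
and to check that each is among the eleven pairs of Lemma~\ref{lem:zero-initial-dimensions}. I expect this check to work as follows.
\begin{itemize}
\item For $r=4$, the divisibility condition forces $5\mid n$ with $12<n<20$, so $n=15$.
\item For $r=5$, it forces $3\mid n$ with $17<n<27$, so $n=18,21,24$.
\item For $r=6$, it forces $7\mid n$ with $23<n<35$, so $n=28$.
\item For $r=7$, it forces $4\mid n$ with $30<n<44$, so $n=32,36,40$.
\item For $r=8$, it forces $9\mid n$ with $38<n<54$, so $n=45$.
\item For $r=9$, it forces $5\mid n$ with $47<n<60$, so $n=50,55$.
\item For $r=10$, we need $11\mid 2n$, hence $11\mid n$, with $57<n<66$; no such $n$ exists.
\item For $r=11$, it forces $6\mid n$ with $68<n<72$; none.
\item For $r\ge12$, the threshold $r(r+1)/2+2$ already exceeds or is very close to $n_r^*$. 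For example, $r=12$ gives $80<n<91$ with $13\mid n$, and $n=91$ itself is excluded since $n<91$ is required. I would still verify each of $r=13,\dots,16$ individually: $r=13$ gives $n>93>91$, and the cases $r=14,15,16$ are similarly empty.
\end{itemize}
Every remaining pair is covered by Lemma~\ref{lem:zero-initial-dimensions}, which completes the exclusion.

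\textbf{Main obstacle.} All the real work lies in the cited lemmas, so the main risk is bookkeeping. One must confirm that the strict cutoff $n>r(r+1)/2+2$ excludes $n=r(r+1)/2$, which corresponds to $q=r$ and is handled inside Proposition~\ref{thm:zero-image-all-radii}. One must also confirm that the divisibility enumeration exactly matches the list in Lemma~\ref{lem:zero-initial-dimensions}; the enumeration above indicates that it does.
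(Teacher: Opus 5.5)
Your proposal is correct and follows the paper's proof essentially step for step: you obtain injectivity of $q+1$ from Lemma~\ref{lem:zero-kernel}, rule out $(2q+1)A=\{0\}$ by Proposition~\ref{thm:zero-image-all-radii}, use the bound $n<9r$ for $r\ge17$, and run the same finite enumeration against Lemmas~\ref{lem:zero-dimension-continuation} and~\ref{lem:zero-initial-dimensions}. Your uniform check $(r-3)(r-4)\ge0$ for the hypothesis $n\ge4(r-1)$ is a slightly cleaner replacement for the paper's separate verification at $r=4,5,6$ and $r\ge7$.
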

\begin{proof}
Suppose that such a lattice tiling exists, and let $A$ be its finite
abelian quotient. Since $c=0$, the division identity gives
$2n=(r+1)q$. This divisibility condition is equivalent to
\begin{equation}\label{eq:zero-progression}
 n\equiv0\pmod{\frac{r+1}{\gcd(2,r+1)}}.
\end{equation}
For $r=4,5,6$, the first dimensions in these progressions satisfying
$n>r(r+1)/2+2$ are $15,18,28$, respectively. They satisfy
\[
 15\ge\max\{16/3,12\},\qquad
 18\ge\max\{25/3,16\},\qquad
 28\ge\max\{12,20\}.
\]
For $r\ge7$,
\[
 n>\frac{r(r+1)}2>\frac{r^2}{3},\qquad
 \frac{r(r+1)}2-4(r-1)
 =\frac{r(r-7)+8}{2}>0.
\]
Thus every dimension under consideration satisfies
$n\ge\max\{r^2/3,4(r-1)\}$, and Lemma~\ref{lem:zero-kernel}
makes multiplication by $q+1$ injective on $A$.
Proposition~\ref{thm:zero-image-all-radii} applies because $r\ge4$
and $2n=(r+1)q$; it excludes $(2q+1)A=\{0\}$.
Consequently $(2q+1)A\ne\{0\}$, so both scalar-image hypotheses
required below hold.

If $r\ge17$, then
$n>r(r+1)/2\ge9r$, contradicting
Lemma~\ref{lem:nonzero-image-9r}.
For $4\le r\le16$, the progressions in
\eqref{eq:zero-progression} give the following complete division.
The column $n_{\min}$ records the first dimension satisfying
$n>r(r+1)/2+2$, and $n_r^*$ is the continuation endpoint in
Lemma~\ref{lem:zero-dimension-continuation}.
\[
\begin{array}{c|r|r|r|l}
 r&\text{step}&n_{\min}&n_r^*&n<n_r^*\\ \hline
4&5&15&20&15\\
5&3&18&27&18,21,24\\
6&7&28&35&28\\
7&4&32&44&32,36,40\\
8&9&45&54&45\\
9&5&50&60&50,55\\
10&11&66&66&\varnothing\\
11&6&72&72&\varnothing\\
12&13&91&91&\varnothing\\
13&7&98&91&\varnothing\\
14&15&120&105&\varnothing\\
15&8&128&120&\varnothing\\
16&17&153&136&\varnothing.
\end{array}
\]
For $n\ge n_r^*$, Lemma~\ref{lem:zero-dimension-continuation}
applies because multiplication by $q+1$ is injective and
$(2q+1)A\ne\{0\}$. Every pair in the last column is excluded
by Lemma~\ref{lem:zero-initial-dimensions}, including $(r,n)=(5,18)$.
These alternatives exhaust every radius $r\ge4$ and every admissible
dimension, a contradiction.
\end{proof}

\subsection{The terminal remainder}
We first exclude a terminal scalar that annihilates the quotient.
The subsequent coefficient estimates show when every terminal scalar
must have this property.
\begin{lemma}\label{lem:annihilated-terminal}
Let $r\ge4$ be even and $2n\ge r^2$. If
$2n+1=(r+1)(q+1)$ and $(q+1)A=\{0\}$, no lattice tiling by $B_n(r)$ exists.
\end{lemma}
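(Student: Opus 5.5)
The plan is to reduce to the second assertion of Lemma~\ref{lem:two-shell-defect} and then repeat the endpoint comparison from the proof of Lemma~\ref{lem:tail-two}. Here $c=r$, since $2n=(r+1)q+r$. The annihilation hypothesis pins down the scalar identity of Lemma~\ref{lem:exact-fiber}: for a fiber of size $f_{\mathbf b}$ we have $(f_{\mathbf b}+1)\pi(\mathbf b)=\pi(\mathbf b-\boldsymbol\varepsilon(\mathbf b)+\boldsymbol\delta_{\mathbf b})$, and $(q+1)\pi(\mathbf b)=0$. So the first step is to examine fibers with $f_{\mathbf b}=q+1$. For these $D_{\mathbf b}=s(\mathbf b)-r+(r+1)=s(\mathbf b)+1$, and the identity becomes $\pi(\mathbf b)=\pi(\mathbf b-\boldsymbol\varepsilon(\mathbf b)+\boldsymbol\delta_{\mathbf b})$. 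The lattice vector $\boldsymbol\delta_{\mathbf b}-\boldsymbol\varepsilon(\mathbf b)$ has norm at most $s(\mathbf b)+D_{\mathbf b}\le2r+1$. Hence it is either zero, or it has norm exactly $2r+1$.

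I would rule out both alternatives. If it is zero, then $\boldsymbol\delta_{\mathbf b}=\boldsymbol\varepsilon(\mathbf b)$. But each summand $\mathbf x-\boldsymbol\varepsilon(\mathbf x)$ of $\boldsymbol\delta_{\mathbf b}$ has its signed support inside $\ssupp(\mathbf x)$, and these supports are disjoint from $\ssupp(\mathbf b)$ by Lemma~\ref{lem:exact-fiber}. So $\boldsymbol\delta_{\mathbf b}=\boldsymbol\varepsilon(\mathbf b)$ forces $\mathbf b=\mathbf 0$, which is absurd. The second alternative needs equality throughout:
\begin{itemize}
\item $s(\mathbf b)=r$, so $\mathbf b$ is a $\pm1$ vector;
\item $\|\boldsymbol\delta_{\mathbf b}\|_1=D_{\mathbf b}=r+1$, with no cancellation against $\boldsymbol\varepsilon(\mathbf b)$.
\end{itemize}
I expect to kill this case by parity. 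Let $\mathbf x\in F_{\mathbf b}$ with $x_{\mathbf a}\ge2$. Then $\mathbf x-2\mathbf a$ and $\mathbf b$ must be compared. Here $(q+1)$-torsion makes $(q+1)$-fold multiples vanish, and $r$ even makes $r+1$ odd. Using this, one should find a nonzero lattice vector of norm at most $2r$, contradicting the minimum norm. If the parity argument does not close, the fallback is to bound the number of such fibers by a shadow count in the style of Lemma~\ref{lem:sign-restricted-heavy-pairs}, and then keep a correction term in the gap inequality.

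Once $f_{\mathbf b}\ne q+1$ holds for every $\mathbf b$, the second inequality of Lemma~\ref{lem:two-shell-defect} applies:
\[
 \MomentU+2\MomentT+2(r+1)(r-1)\sigma_{r-1}+(r-1)(r-2)\sigma_{r-2}\ge2r(r+1)\sigma_r .
\]
I would insert the bounds for $\MomentU$ and $\MomentT$ from Lemma~\ref{lem:sign-restricted-heavy-pairs}, and the shell-ratio bounds \eqref{eq:successive-ratios}. Exactly as in the proof of Lemma~\ref{lem:tail-two}, the right-hand side then becomes $r^2+O(r^3/2n)$ times $\sigma_r$, after dividing by $\sigma_r$. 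The left side requires about $2r(r+1)-r$, which is roughly $2r^2$. With $2n\ge r^2$, every correction term is at most a bounded multiple of $r^2$. The concrete target is therefore a polynomial inequality in $2n$ and $r$, and I would verify it at the endpoint $2n=r^2$ using monotonicity in $n$.

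The main obstacle is this quantitative step in the smallest case, $r=4$ with $2n$ near $16$. There the crude bounds on $\Phi_k$ and $\Psi_k$ are least effective. In fact the geometric range already forces $2n>r(r+1)+4$, and it may be necessary to restrict to that range explicitly. If the margin fails at $r=4$ or $r=6$, I would treat those radii separately by volume congruences. The idea is that every prime dividing $V_r(n)$ must divide $q+1=(2n+1)/(r+1)$, and Lucas factorization of $V_r(n)$ modulo such primes should give a contradiction, as in Proposition~\ref{thm:zero-image-all-radii}.
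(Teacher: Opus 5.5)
Your plan has the same architecture as the paper's proof. You use $(q+1)A=\{0\}$ to exclude fibers with $f_{\mathbf b}=q+1$, then insert the bounds of Lemma~\ref{lem:sign-restricted-heavy-pairs} into the terminal inequality of Lemma~\ref{lem:two-shell-defect}. Your dichotomy for the lattice vector $\boldsymbol\delta_{\mathbf b}-\boldsymbol\varepsilon(\mathbf b)$ is correct. Your exclusion of the zero alternative is also valid: in every direction of $\ssupp(\mathbf b)$, each summand of $\boldsymbol\delta_{\mathbf b}$ has nonpositive coefficient, while $\boldsymbol\varepsilon(\mathbf b)$ has coefficient $1$. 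The paper uses the parity of the number of signed units instead.

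\textbf{The gap.} The other alternative, $s(\mathbf b)=r$ with $\|\boldsymbol\delta_{\mathbf b}\|_1=r+1$, is the substantive case, and you do not prove it. You only say that comparing $\mathbf x-2\mathbf a$ with $\mathbf b$ and using torsion ``should'' give a short lattice vector, and you offer an undeveloped shadow-count fallback. No such vector is exhibited: $\mathbf b-(\mathbf x-2\mathbf a)$ has image $2\pi(\mathbf a)\ne0$. The step cannot be bypassed either. Without $f_{\mathbf b}\ne q+1$, only the first inequality of Lemma~\ref{lem:two-shell-defect} is available. Its right-hand side at $c=r$ is $r(r+1)\sigma_r$, whereas the moment bounds control its left-hand side only by roughly $(r^2+5r)\sigma_r$.

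\textbf{The missing idea.} In this equality case, $\pi(\boldsymbol\delta_{\mathbf b})=\pi(\boldsymbol\varepsilon(\mathbf b))=\pi(\mathbf b)$ and $\|\boldsymbol\delta_{\mathbf b}\|_1=r+1$. So $\boldsymbol\delta_{\mathbf b}$ is itself a member of $F_{\mathbf b}$, and the partition of Lemma~\ref{lem:exact-fiber} finishes the argument. Take $\mathbf u\in\ssupp(\boldsymbol\delta_{\mathbf b})$. Then $\boldsymbol\delta_{\mathbf b}$ is the only member of $F_{\mathbf b}$ containing $\mathbf u$. In the defining sum of $\boldsymbol\delta_{\mathbf b}$, its own summand contributes $(\boldsymbol\delta_{\mathbf b})_{\mathbf u}-1$ in direction $\mathbf u$, and every other summand contributes at most $0$. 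This gives $(\boldsymbol\delta_{\mathbf b})_{\mathbf u}\le(\boldsymbol\delta_{\mathbf b})_{\mathbf u}-1$, a contradiction.

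If you prefer the short-vector form you were looking for: some $\mathbf x\in F_{\mathbf b}$ has a heavy coordinate, since $\boldsymbol\delta_{\mathbf b}\ne\mathbf 0$. The absence of cancellation then gives $\|\boldsymbol\delta_{\mathbf b}-\mathbf x\|_1=2s(\mathbf x)\in[2,2r]$, while $\pi(\boldsymbol\delta_{\mathbf b})=\pi(\mathbf x)$. That is a nonzero lattice vector of norm at most $2r$.

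\textbf{The quantitative step.} Once this is in place, your worry here is unfounded, because the corrections beyond $r^2$ are $O(r)$ for all $2n\ge r^2$. Concretely, $\MomentT/\sigma_r\le r-1/r$ and $\MomentU/\sigma_r\le r^2+2r-3+2/r-(r-1)\sigma_{r-1}/\sigma_r$. The terminal inequality then reduces to $2r^2+2r<r^2+6r-3-1/r$, which is false for every $r\ge4$. So you need neither a restriction to $2n>r(r+1)+4$, which would weaken the lemma, nor the volume-congruence fallback. The latter could not have worked anyway, since at fixed $r$ it cannot dispose of infinitely many $n$.
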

\begin{proof}
Fix a boundary point $\mathbf b$ and suppose $f_{\mathbf b}=q+1$.
Then $D_{\mathbf b}=s(\mathbf b)+1$, so
\[
 \pi(\boldsymbol\delta_{\mathbf b})
 =(q+1)\pi(\mathbf b)+\pi(\boldsymbol\varepsilon(\mathbf b))
 =\pi(\boldsymbol\varepsilon(\mathbf b)),
 \qquad\|\boldsymbol\delta_{\mathbf b}\|_1\le s(\mathbf b)+1.
\]
If $s(\mathbf b)<r$, the lattice vector
$\boldsymbol\delta_{\mathbf b}-\boldsymbol\varepsilon(\mathbf b)$ has norm
at most $2s(\mathbf b)+1\le2r-1$ and hence is zero. This is impossible:
$\boldsymbol\delta_{\mathbf b}$ is a sum of $s(\mathbf b)+1$ signed units, whereas
$\boldsymbol\varepsilon(\mathbf b)$ is a sum of $s(\mathbf b)$ signed units,
so their coordinate sums have opposite parities.

If $s(\mathbf b)=r$, then $\boldsymbol\varepsilon(\mathbf b)=\mathbf b$.
The vector $\boldsymbol\delta_{\mathbf b}$ has odd parity and norm at most $r+1$.
If its norm were at most $r$, injectivity on $B_r$ would give
$\boldsymbol\delta_{\mathbf b}=\mathbf b$, again contradicting parity.
Thus $\boldsymbol\delta_{\mathbf b}\in S_{r+1}$ and
$\pi(\boldsymbol\delta_{\mathbf b})=\pi(\mathbf b)$, so
$\boldsymbol\delta_{\mathbf b}\in F_{\mathbf b}$.
Choose $\mathbf u\in\ssupp(\boldsymbol\delta_{\mathbf b})$.
The signed-support partition makes $\boldsymbol\delta_{\mathbf b}$ the unique
member of $F_{\mathbf b}$ containing direction $\mathbf u$.
Its defining sum therefore gives
\begin{align*}
 (\boldsymbol\delta_{\mathbf b})_{\mathbf u}
 &=(\boldsymbol\delta_{\mathbf b})_{\mathbf u}-1
 +\sum_{\mathbf y\in F_{\mathbf b}\setminus\{\boldsymbol\delta_{\mathbf b}\}}
       (\mathbf y-\boldsymbol\varepsilon(\mathbf y))_{\mathbf u}\\
 &\le(\boldsymbol\delta_{\mathbf b})_{\mathbf u}-1,
\end{align*}
a contradiction. Hence every fiber has $f_{\mathbf b}\ne q+1$.
The terminal case of Lemma~\ref{lem:two-shell-defect} now gives
\begin{equation*}
 \frac{\MomentU+2\MomentT}{\sigma_r}
 +2(r+1)(r-1)\frac{\sigma_{r-1}}{\sigma_r}
 +(r-1)(r-2)\frac{\sigma_{r-2}}{\sigma_r}\ge2r(r+1).
\end{equation*}

Lemma~\ref{lem:sign-restricted-heavy-pairs} gives
\begin{gather*}
 \frac{\MomentT}{\sigma_r}\le 2n(2n-1)
 \left(\frac{r!}{(2n)^r}+\frac{r}{2n}\sum_{j=2}^{r-1}\frac{(r)_j}{(2n)^j}\right),\\
 \sum_{j=2}^{r-1}\frac{(r)_j}{(2n)^j}
 \le \frac{r(r-1)}{(2n)^2}\left(1+\frac{r-2}{2n-r}\right).
\end{gather*}
Also $\frac{r!}{(2n)^r}\le\frac{2r^2(r-1)}{(2n)^4}$: for $r=4$ this follows from $2!\le8$;
for $r\ge5$, it follows from
$(r-2)!\le r^{r-2}\le r^{2r-7}\le2r (2n)^{r-4}$.
Consequently, using $2n\ge r^2$,
\begin{align*}
 \frac{\MomentT}{\sigma_r}&\le\frac{r^2(r-1)(2n-1)}{(2n)^2}
       \left(1+\frac{r-2}{2n-r}+\frac{2}{2n}\right)\\
 &\le\frac{r^2}{2n}(r-1)\frac{2n-1}{2n-r}\le r-\frac{1}{r}.
\end{align*}
Here we used
$1+(r-2)/(2n-r)+2/(2n)\le2n/(2n-r)$ and
$(r-1)/(2n-r)\le1/r$.

The bounds $\Phi_k,\Psi_k\le (2n)^2\sigma_k$ and
$\sigma_{r-j}/\sigma_r\le \frac{(r)_j}{(2n)^j}$, together with the support identities, give
\begin{align*}
 \frac{\MomentU}{\sigma_r}
 &\le r-\frac{(r-1)\sigma_{r-1}}{\sigma_r}
       +\frac{r(r-1)(r-2)}{2n}+r^2\\
 &\le r^2+2r-3+\frac{2}{r}-\frac{(r-1)\sigma_{r-1}}{\sigma_r},\\
 &\frac{(r-1)\sigma_{r-1}}{\sigma_r}+(r-1)(r-2)\frac{\sigma_{r-2}}{\sigma_r}
 \le \frac{(r-1)\sigma_{r-1}}{\sigma_r}+\frac{(r-1)^2(r-2)}{r^3}<\frac{(r-1)\sigma_{r-1}}{\sigma_r}+1,\\
 &(2r+1)\frac{(r-1)\sigma_{r-1}}{\sigma_r}
 \le\frac{(2r+1)(r-1)}{r}=2r-1-\frac{1}{r}.
\end{align*}
Substitution into the lower bound yields
$2r^2+2r<r^2+6r-3-1/r$, contrary to
\[
 2r^2+2r-\left(r^2+6r-3-\frac{1}{r}\right)
 =(r-1)(r-3)+\frac{1}{r}>0.
\]
\end{proof}
It remains to force the scalar image to vanish. The next comparison
propagates a coefficient inequality from one dimension to all larger
ones. Fix an even integer $r\ge4$, $n_0\ge r$, and $B>0$.
Put $M=2n_0+B+1$ and define
\[
 H_0=1,\qquad jH_j=BH_{j-1}+M\sum_{k=0}^{j-2}H_k\quad(j\ge1),
\]
\[
 \mathscr U_0=(B+1)H_{r-1}+M\sum_{k=0}^{r-2}H_k,
 \qquad \mathscr L(n)=\sigma_r(n)-(r-1)\sigma_{r-1}(n).
\]
\begin{lemma}\label{lem:terminal-scaling}
Suppose every hypothetical lattice tiling in dimension $n\ge n_0$
satisfies $\Lambda\le B\sqrt{n/n_0}$. If $\mathscr U_0<\mathscr L(n_0)$, then
$2n+1=(r+1)(q+1)$ forces $(q+1)A=\{0\}$ for every $n\ge n_0$.
\end{lemma}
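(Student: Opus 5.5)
The plan is to use the terminal case of Lemma~\ref{lem:minimal-fibers} to place many boundary images inside $A[q+1]$, and then test this against a character that is trivial on $A[q+1]$. Assume $n\ge n_0$, $2n+1=(r+1)(q+1)$ (so $c=r$), and, for contradiction, $(q+1)A\ne\{0\}$. Lemma~\ref{lem:minimal-fibers} with $c=r$ gives
\[
(r+1)E\ge\sigma_r-(r-1)\sigma_{r-1}=\mathscr L(n).
\]
For each minimum fiber, \eqref{eq:minimum-scalar-representative} gives $\boldsymbol\eta_{\mathbf b}\in B_0$, so $(q+1)\pi(\mathbf b)=0$. Thus the $E$ minimum boundary images are distinct nonzero elements of $A[q+1]$. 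Since $(q+1)A\ne\{0\}$, some character $\psi$ has $\chi=\psi^{q+1}$ nonprincipal. This $\chi$ is trivial on $A[q+1]$, so $\chi(\pi(\mathbf b))=1$ on all $E$ minimum boundary points.

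Next I would turn the tiling identity into a lower bound on a character value. The tiling gives $\chi(B_r)=0$, and I write the recurrence from the proof of Lemma~\ref{lem:positive-hessenberg} at index $r$:
\[
r\chi(B_r)=\sum_{j=1}^{r}c_j\chi(B_{r-j}),\qquad c_j=\lambda_{\chi^j}-2n\mathbf 1_{2\mid j}.
\]
I would split off $\chi(S_r)=\chi(B_r)-\chi(B_{r-1})=-\chi(B_{r-1})$. The $E$ points with $\chi=1$ force
\[
\operatorname{Re}\chi(S_r)\ge E-(\sigma_r-E),
\]
and finer versions are available using the separation of $S_r$ into minimum and nonminimum fibers. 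The shape of $\mathscr U_0=(B+1)H_{r-1}+M\sum_{k\le r-2}H_k$ suggests the intended inequality is
\[
\mathscr L(n)\le(r+1)E\le(\Lambda+1)|\chi(B_{r-1})|+(2n+\Lambda+1)\sum_{k\le r-2}|\chi(B_k)|.
\]
Here the factor $r+1$ collects the coefficient $r$ from $r\chi(B_r)$ together with the one extra term coming from $\chi(S_r)$. I expect to obtain this by writing $(r+1)E$ as a combination of $\chi(B_r)$, $\chi(B_{r-1})$ and lower balls, and bounding each $|c_j|$ by $\Lambda$ or $2n+\Lambda$ as in that proof.

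Then I would bound the $|\chi(B_k)|$ for $k<r$ by the majorant sequence. Put $a=\sqrt{n/n_0}\ge1$. Using $|c_1|\le\Lambda\le aB$, $|c_j|\le 2n+\Lambda+1\le a^2M$, and the recurrence $k\chi(B_k)=\sum_j c_j\chi(B_{k-j})$, induction gives $|\chi(B_k)|\le a^kH_k$. The diagonal scaling is the same as in the proof of Lemma~\ref{lem:positive-hessenberg}, using $a^{2-i}\le a$. Substituting into the inequality above yields $\mathscr L(n)\le a^{r}\,\mathscr U_0$, or possibly $a^{r-1}$ times it, with the $M$-terms absorbed via $a^2$.

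The final step, which I expect to be the main obstacle, is to propagate the hypothesis from $n_0$ to every $n\ge n_0$. This needs $\mathscr L(n)\ge (n/n_0)^{r/2}\mathscr L(n_0)$. I would try to prove that $\mathscr L(n)/n^{r/2}$ is nondecreasing for $n\ge n_0\ge r$. To do so, write $\mathscr L(n)=\sum_a 2^a\binom na\bigl(\binom{r-1}{a-1}-(r-1)\binom{r-2}{a-1}\bigr)$ as a polynomial of degree $r$ whose leading part dominates. Alternatively, differentiate $\log\mathscr L$ and use the shell recurrence, which gives $\sigma_{r-1}/\sigma_r\le r/(2n)$, to show $n\,\mathscr L'(n)/\mathscr L(n)\ge r/2$. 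Given that, $\mathscr L(n)\ge a^r\mathscr L(n_0)>a^r\mathscr U_0\ge\mathscr L(n)$, a contradiction. Hence $(q+1)A=\{0\}$.
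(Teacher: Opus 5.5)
Your first and third steps are sound, and your fourth is a workable plan. The second step is the heart of the lemma, and it has a genuine gap: you read the inequality $\mathscr L(n)\le(\Lambda+1)|\chi(B_{r-1})|+(2n+\Lambda+1)\sum_{k\le r-2}|\chi(B_k)|$ off the shape of $\mathscr U_0$, and the mechanism you suggest cannot produce it.

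The recurrence at index $r$ with $\chi(B_r)=0$ is a relation among the $a_j=\chi(B_n(j))$ alone and contains no fiber information. The only place $E$ enters is $\operatorname{Re}\chi(S_r)\ge2E-\sigma_r$. All you know is $E\ge\mathscr L(n)/(r+1)$, which is below $\sigma_r/2$, so the right side may be negative and nothing follows. Nor is $(r+1)E$ the right quantity to bound: any expression for it through character sums over $S_r$ also carries the nonminimum boundary points with uncontrolled phases. The factor $r+1$ in the correct inequality is not ``$r$ plus one extra term''; it is the index of the recurrence one step outside the ball.

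The missing idea is a fiber-weighted sum over the exterior shell. Since $\pi$ is constant on each $F_{\mathbf b}$ and the fibers partition $S_{r+1}$,
\[
 \chi(S_{r+1})-(q+1)\chi(S_r)=\sum_{\mathbf b\in S_r}(f_{\mathbf b}-q-1)\,\chi(\pi(\mathbf b)).
\]
The coefficient is $-1$ exactly on minimum fibers, where $\chi(\pi(\mathbf b))=1$, and nonnegative on all others. Hence the real part is at most the value at the trivial character. By the shell recurrence and $(r+1)(q+1)=2n+1$, that value is $\sigma_{r+1}-(q+1)\sigma_r=-\mathscr L(n)/(r+1)$.

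On the other side, $\chi(S_{r+1})=a_{r+1}$ and $\chi(S_r)=-a_{r-1}$. The recurrence at index $r+1$, with $a_r=0$, gives
\[
 (r+1)\bigl(a_{r+1}+(q+1)a_{r-1}\bigr)=(\lambda_{\chi^2}+1)a_{r-1}+\sum_{k=3}^{r+1}c_ka_{r+1-k},
\]
because the $-2n$ in $c_2$ cancels against $(2n+1)a_{r-1}$. Since $|A|$ is odd, $\chi^2$ is nonprincipal and $|\lambda_{\chi^2}+1|\le\Lambda+1$. This cancellation is why $H_{r-1}$ carries the coefficient $B+1$ rather than $M$ in $\mathscr U_0$. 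Combined with your majorant induction, it gives $\mathscr L(n)\le(n/n_0)^{r/2}\mathscr U_0$.

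The step you flagged as the main obstacle is routine. Write
\[
 \mathscr L(n)=2^r\binom nr-\sum_{j=1}^{r-2}(r-j-1)2^j\binom nj\binom{r-1}{j-1}.
\]
Each ratio $\binom nj/\binom nr$ with $j<r$ decreases in $n$, so $\mathscr L(n)/\binom nr$ is nondecreasing. Also $\binom nr/\binom{n_0}r=\prod_{k<r}(n-k)/(n_0-k)\ge(n/n_0)^r$ for $n\ge n_0\ge r$. Together these give $\mathscr L(n)\ge(n/n_0)^r\mathscr L(n_0)$, stronger than the exponent $r/2$ you need. This closes the contradiction
\[
 \mathscr L(n)\le(n/n_0)^{r/2}\mathscr U_0<(n/n_0)^{r/2}\mathscr L(n_0)\le\mathscr L(n).
\]
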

\begin{proof}
Suppose $(q+1)A\ne\{0\}$. For $y\in (q+1)A$, let
\[
 s_j(y)=|\{\mathbf x\in S_j:(q+1)\pi(\mathbf x)=y\}|,
 \qquad F(y)=s_{r+1}(y)-(q+1) s_r(y).
\]
Every exterior point decodes to a boundary point. A minimum boundary
fiber has size $q$ and scalar image zero. Thus every boundary
fiber contributing at $y\ne0$ has size at least $q+1$, and $F(y)\ge0$.
The symmetry $F(-y)=F(y)$ and the shell recurrence give, for any
nonprincipal character $\chi$ of $(q+1)A$,
\begin{align*}
 \widehat F(\chi)=F(0)+\sum_{y\ne0}F(y)\operatorname{Re}\chi(y)
 \le\sum_yF(y)
 =-\frac{\sigma_r-(r-1)\sigma_{r-1}}{r+1}.
\end{align*}
The support formulas give
\begin{align*}
 \mathscr L(n)
 &=\sum_{j=1}^r2^j\binom{n}{j}
       \left(\binom{r-1}{j-1}-(r-1)\binom{r-2}{j-1}\right)\\
 &=2^r\binom{n}{r}-
   \sum_{j=1}^{r-2}(r-j-1)2^j\binom{n}{j}\binom{r-1}{j-1}.
\end{align*}
For $j<r$,
\[
 \frac{\binom nj}{\binom nr}
 =\frac{r!}{j!}\prod_{k=j}^{r-1}\frac{1}{n-k}
 \le\frac{\binom{n_0}j}{\binom{n_0}r}.
\]
Consequently, since $n/n_0\ge1$,
\begin{gather*}
 \frac{\mathscr L(n)}{\binom nr}\ge\frac{\mathscr L(n_0)}{\binom{n_0}r}>0,
 \qquad
 \frac{\binom nr}{\binom{n_0}r}
 =\prod_{k=0}^{r-1}\frac{n-k}{n_0-k}\ge \left(\frac{n}{n_0}\right)^r,\\
 \mathscr L(n)\ge \left(\frac{n}{n_0}\right)^r\mathscr L(n_0).
\end{gather*}

Pull $\chi$ back by multiplication by $q+1$ to a nonprincipal
character $\psi$ of $A$, and put $a_j=\psi(B_n(j))$.
The ball-character recurrence is
\[
 ja_j=\sum_{k=1}^jc_ka_{j-k},\qquad
 c_k=\lambda_{\psi^k}-2n\mathbf1_{2\mid k},\qquad a_r=0.
\]
Since the group order is odd, $\psi^2$ is nonprincipal.
The identities $\widehat F(\chi)=a_{r+1}+(q+1) a_{r-1}$ and
$(r+1)(q+1)=2n+1$ therefore give the exact cancellation
\[
 (r+1)\widehat F(\chi)
 =(\lambda_{\psi^2}+1)a_{r-1}
                       +\sum_{k=3}^{r+1}c_ka_{r+1-k}.
\]
Moreover $|c_1|\le B\sqrt{n/n_0}$ and, for $k\ge2$, $|c_k|\le2n+\Lambda+1\le2n+B\sqrt{n/n_0}+1\le M\frac{n}{n_0}$.
Induction on $j$ now gives $|a_j|\le \left(\frac{n}{n_0}\right)^{j/2}H_j$: it is true at
$j=0$, and the induction step is
\begin{align*}
 j|a_j|&\le B\sqrt{n/n_0}\,\left(\frac{n}{n_0}\right)^{(j-1)/2}H_{j-1}
       +M\frac{n}{n_0}\sum_{k=0}^{j-2}\left(\frac{n}{n_0}\right)^{k/2}H_k\\
 &\le \left(\frac{n}{n_0}\right)^{j/2}\left(BH_{j-1}+M\sum_{k=0}^{j-2}H_k\right)
 =j\left(\frac{n}{n_0}\right)^{j/2}H_j.
\end{align*}
It follows that
\begin{align*}
 (r+1)|\widehat F(\chi)|&\le(B+1)\left(\frac{n}{n_0}\right)^{r/2}H_{r-1}
                     +M\frac{n}{n_0}\sum_{k=0}^{r-2}\left(\frac{n}{n_0}\right)^{k/2}H_k\\
 &\le \left(\frac{n}{n_0}\right)^{r/2}\mathscr U_0<\mathscr L(n).
\end{align*}
This contradicts the preceding negative Fourier bound.
\end{proof}

The required base inequalities follow from positive recurrences
and exact integer comparisons. We use the following parameters:
\[
\begin{array}{c|rrrrrrrr}
 r&4&6&8&10&12&14&16&18\\\hline
 n_0&17&31&49&60&84&112&144&180\\
 B&351/20&35&55&361/5&98&123&184&215.
\end{array}
\]
\begin{lemma}\label{lem:terminal-certificates}
For each displayed choice of $(r,n_0,B)$, the hypotheses of
Lemma~\ref{lem:terminal-scaling} hold.
\end{lemma}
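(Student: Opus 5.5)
For each of the eight parameter triples $(r,n_0,B)$ there are two hypotheses of Lemma~\ref{lem:terminal-scaling} to check. The first is the spectral hypothesis: every hypothetical lattice tiling in dimension $n\ge n_0$ satisfies $\Lambda\le B\sqrt{n/n_0}$. The second is the numerical inequality $\mathscr U_0<\mathscr L(n_0)$. I will handle them separately. Both reduce to finitely many exact rational computations, and I will organize these so that every check is a positive recurrence or a single integer comparison.

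For the spectral hypothesis, I will invoke Lemma~\ref{lem:positive-hessenberg} with the displayed $(n_0,B)$. That lemma gives the strict bound $\Lambda<B\sqrt{n/n_0}$, which is stronger than needed. It requires only that the sequence $h_0=1$, $kh_k=Bh_{k-1}-(2n_0+B)\sum_{j\le k-2}h_j$ be positive for $1\le k\le r$. I will verify this positivity through Lemma~\ref{lem:spectral-ratio-certificate}, exactly as in Lemma~\ref{lem:explicit-spectral-certificates}. For each triple I choose a scaling $D$ that makes $DB$ an integer, for example $D=20$ at $r=4$ and $D=5$ at $r=10$, and $D=1$ elsewhere. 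Next I take the ratio data $a_k$ to be $D$ times lower roundings of the true ratios $\widetilde h_k/\widetilde h_{k-1}$. I then list these $a_k$ together with the resulting $\delta_k\ge0$. For $r\le16$ an alternative is available: the pairs $(n_0,B)$ dominate the earlier certificates $(n_r,\beta_r)$ in the sense $B^2n_r\ge\beta_r^2n_0$. Indeed, $\Lambda<\beta_r\sqrt{n/n_r}\le B\sqrt{n/n_0}$ for $n\ge n_0\ge n_r$. This could spare new certificates, so I would first check that inequality (e.g.\ $r=4$: $(351/20)^2\cdot15$ versus $17^2\cdot17$). For $r=18$, and wherever domination fails, a fresh ratio certificate is needed.

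For the inequality $\mathscr U_0<\mathscr L(n_0)$, I will compute $H_0,\dots,H_{r-1}$ exactly from the recurrence $jH_j=BH_{j-1}+M\sum_{k\le j-2}H_k$, where $M=2n_0+B+1$. All terms are positive rationals, so no cancellation can occur. I then form $\mathscr U_0=(B+1)H_{r-1}+M\sum_{k\le r-2}H_k$. Separately, I compute $\mathscr L(n_0)=\sigma_r(n_0)-(r-1)\sigma_{r-1}(n_0)$ from the closed form $2^r\binom{n_0}{r}-\sum_{j=1}^{r-2}(r-j-1)2^j\binom{n_0}{j}\binom{r-1}{j-1}$ established in the proof of Lemma~\ref{lem:terminal-scaling}. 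I will record both numbers, or their ratio, in a table and observe a positive margin. A heuristic check supports this: $\mathscr L(n_0)\approx(2n_0)^r/r!$, while $H_j$ grows roughly like $(\sqrt{M})^j$ divided by factorial-type terms. The parameter $n_0$ was presumably chosen as the smallest convenient value at which $\mathscr U_0$ falls below $\mathscr L(n_0)$.

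The main obstacle is tightness. The values of $B$ are the minimal values consistent with the spectral certificate, and $n_0$ is minimal for the $\mathscr U_0$ inequality. Consequently the margins in both checks may be thin, especially at $r=4$ and $r=10$, where $B$ is non-integral. Here I would choose $D$ large and take the $a_k$ close to the exact ratios to keep every $\delta_k\ge0$. If a single $\delta_k$ turned negative, I would first try adjusting the preceding $a_{k-1}$ downward, since this relaxes $\delta_k$ at the cost of tightening $\delta_{k-1}$. The remaining work is routine exact arithmetic, which I would present as tables in the style of Lemma~\ref{lem:explicit-spectral-certificates}.
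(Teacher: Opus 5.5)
Your plan matches the paper's. The spectral hypothesis comes from Lemma~\ref{lem:positive-hessenberg}, certified either by Lemma~\ref{lem:spectral-ratio-certificate} or by domination of an earlier pair $(n_r,\beta_r)$. The inequality $\mathscr U_0<\mathscr L(n_0)$ is an exact evaluation. The paper only streamlines that evaluation, using $\mathscr U_0=rH_r+H_{r-1}$ and an integral two-term recurrence for $j!\,d^jH_j$, where $d$ is the denominator of $B$.

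The one structural difference is at $r=16,18$. There the paper bypasses the Hessenberg recurrence and applies Corollary~\ref{thm:uspec}, so each case becomes the single comparison $(B-3(r-1))^2>4(r-1)(2n_0+2(r-1))$. Your fresh certificate at $r=18$ does exist. That comparison is equivalent to $17+2\sqrt{17\cdot575}<215$. By the diagonal-scaling argument in the proof of Corollary~\ref{thm:uspec}, this puts the Perron root of $\mathsf H_{18}(575)$ below $B$. Each $k!\,h_k=\det(BI-\mathsf H_k(575))$ is then a leading principal minor of a nonsingular M-matrix, hence positive. At $r=16$ your domination route also works, since $184^2\cdot136>145^2\cdot144$ and $144\ge136$.

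Several specifics in your proposal are wrong, though your fallbacks keep the plan alive.

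\begin{itemize}
\item Domination is not available ``for $r\le16$''. It fails at $r=4,6,8$:
\begin{itemize}
\item $(351/20)^2\cdot15\approx4620<17^2\cdot17=4913$;
\item $35^2\cdot28=34300<34^2\cdot31=35836$;
\item $55^2\cdot45=136125<53^2\cdot49=137641$.
\end{itemize}
At $r=10$ it cannot apply, because $n_0=60<n_{10}=66$. Fresh certificates are therefore needed at $r=4,6,8,10$, exactly as in the paper.

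\item Rounding each true ratio $\rho_k=\widetilde h_k/\widetilde h_{k-1}$ down independently does not guarantee $\delta_k\ge0$. Writing $a_k=D\rho_k-\epsilon_k$, one finds
$\delta_k=D\rho_{k-1}\epsilon_k-\epsilon_{k-1}\bigl(D(k-1)(2n_0+2B)/\rho_{k-1}+\epsilon_k\bigr)$.
This is negative whenever $\epsilon_{k-1}$ is not small relative to $\epsilon_k$. Concretely, at $r=4$ with the paper's $D=400$, the floors $(7020,5845,4036,3)$ give $\delta_4=4036\cdot8217-400^2\cdot3\cdot69.1=-4188$; the paper takes $a_4=1$ instead.

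\item Your proposed repair is backwards. Lowering $a_{k-1}$ lowers $\delta_k$; what it relaxes is $\delta_{k-1}$. The reliable procedure is sequential: take $a_k\le D(B+k-1)-D^2(k-1)(2n_0+2B)/a_{k-1}$, and enlarge $D$ if some $a_k$ would have to be nonpositive.

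\item $D$ controls granularity, not just the integrality of $a_1$. Every valid certificate satisfies $a_k\le D\rho_k$, and $\rho_4\approx0.0099$ at $r=4$. So your choice $D=20$ admits no positive integer $a_4$.
\end{itemize}

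Finally, for a lemma of this type the tables themselves are the proof. Your proposal describes how to produce them but does not yet contain them.
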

\begin{proof}
For the first four spectral bounds, apply
Lemma~\ref{lem:spectral-ratio-certificate} with the following data.
For each radius, the two lines give $(a_1,\ldots,a_r)$ and
$(\delta_2,\ldots,\delta_r)$, respectively.
\begingroup
\small
\renewcommand{\arraystretch}{1}
\[
\begin{array}{@{}rr@{\quad}l@{}}
 \toprule
 r&D&(a_1,\ldots,a_r)\text{ and }(\delta_2,\ldots,\delta_r)\\
 \midrule
 4&400&(7020,5845,4036,1)\\
 &&(500,5480,3884)\\[2pt]
 6&2&(70,64,57,48,34,2)\\
 &&(32,32,12,0,12)\\[2pt]
 8&1&(55,52,49,45,40,34,24,1)\\
 &&(12,0,13,23,0,10,8)\\[2pt]
 10&20&(1444,1390,1331,1265,1189,1099,986,833,588,5)\\
 &&(1096,1150,829,735,305,662,166,248,132)\\
 \bottomrule
\end{array}
\]
\endgroup
In every case, $a_1=DB$, all $a_k$ are positive, and all $\delta_k$
are nonnegative, as required.
For $r=12,14$, the certificates $(n_r,\beta_r)$ in Lemma~\ref{lem:explicit-spectral-certificates}
are $(78,94)$ and $(105,119)$, respectively. The comparisons
\[
 94^2\cdot84=742224<749112=98^2\cdot78,
 \qquad119^2\cdot112=1586032<1588545=123^2\cdot105
\]
prove $\Lambda<98\sqrt{n/84}$ and
$\Lambda<123\sqrt{n/112}$.
For $r=16,18$, Corollary~\ref{thm:uspec} gives
\begin{align*}
 \Lambda&\le3(r-1)+2\sqrt{(r-1)(2n+2(r-1))}\\
 &\le\sqrt{n/n_0}\left(3(r-1)+2\sqrt{(r-1)(2n_0+2(r-1))}\right).
\end{align*}
The required rational comparisons are $(184-45)^2=19321>19080=4\cdot15(288+30)$,
and $(215-51)^2=26896>26792=4\cdot17(360+34)$.

For the coefficient inequality, write $B=b/D_B$ in lowest terms and put
$\widetilde H_j=j!D_B^jH_j$. The generating function $\sum_{j\ge0}H_jz^j=e^{-(2n_0+1)z}(1-z)^{-(2n_0+B+1)}$
gives the positive two-term recurrence
\[
 \widetilde H_0=1,\quad \widetilde H_1=b,\qquad
 \widetilde H_j=(b+(j-1)D_B)\widetilde H_{j-1}+(j-1)D_B^2(2n_0+1)\widetilde H_{j-2}.
\]
The defining recurrence also gives $\mathscr U_0=rH_r+H_{r-1}$, hence $c_B \mathscr U_0=\widetilde H_r+D_B\widetilde H_{r-1}$, where $c_B=(r-1)!D_B^r$.
The following exact values satisfy
\[
 0<c_B\lceil\mathscr U_0\rceil-\widetilde H_r-D_B\widetilde H_{r-1}<c_B.
\]
Thus the third column is the ceiling of $\mathscr U_0$, and
$\mathscr U_0\le\lceil\mathscr U_0\rceil<\mathscr L(n_0)$ in every row.
\begingroup
\small
\setlength{\arraycolsep}{4pt}
\[
\begin{array}{@{}rrrr@{}}
 \toprule
 r&\mathscr L(n_0)&\lceil\mathscr U_0\rceil&
 \substack{c_B\lceil\mathscr U_0\rceil-\widetilde H_r\\-D_B\widetilde H_{r-1}}\\
 \midrule
4&36380&36039&22859\\
6&41340236&40924734&18\\
8&92016218196&53855608389&4158\\
10&44015676259368&42438483471697&1699680267444\\
12&232935285302796440&91646339777643353&8118891\\
14&1770633918622790495520&168539155358233413385&4402188000\\
16&18314638017576693832691424&5780754631413943435589880&1219967426225\\
18&247610342854834858169341551800&14436124416839419669126707421&219533982832728
\\
 \bottomrule
\end{array}
\]
\endgroup
Each $\mathscr L(n_0)$ is given explicitly by
\[
 \mathscr L(n_0)=\sum_{j=1}^r2^j\binom{n_0}{j}
 \left(\binom{r-1}{j-1}-(r-1)\binom{r-2}{j-1}\right),
\]
so the displayed recurrences and identities involve only integers.
\end{proof}
For the larger even radii, the same coefficient criterion follows from a uniform generating-function estimate.

\begin{proposition}\label{thm:terminal-r20}
Let $r\ge20$ be even. If $2n+1=(r+1)(q+1)$ and $2n>r(r+1)$,
then $B_n(r)$ admits no lattice tiling.
\end{proposition}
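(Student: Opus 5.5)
The plan is to combine Lemma~\ref{lem:annihilated-terminal} with a uniform version of Lemma~\ref{lem:terminal-scaling}. Since $r\ge20$ is even and $2n>r(r+1)\ge r^2$, Lemma~\ref{lem:annihilated-terminal} already excludes the case $(q+1)A=\{0\}$. It therefore suffices to show that $(q+1)A\ne\{0\}$ is impossible. For this I will rerun the Fourier argument in the proof of Lemma~\ref{lem:terminal-scaling}, but replace the fixed-radius numerical certificates by an analytic bound.

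In that argument the scalar-image function $F\ge0$ has Fourier transform bounded above by $-\mathscr L(n)/(r+1)$. The cancellation identity gives $(r+1)|\widehat F(\chi)|\le(\Lambda+2)|a_{r-1}|+\sum_{k\ge3}|c_k||a_{r+1-k}|$, where $a_j=\psi(B_n(j))$. It is enough to show that this right side is smaller than $\mathscr L(n)$.

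For the spectral input I will use Corollary~\ref{thm:uspec} directly in dimension $n$: $\Lambda\le3(r-1)+2\sqrt{2(r-1)(n+r-1)}$. Since $2n>r(r+1)$, this is at most a constant multiple of $\sqrt{rn}$, say $\Lambda\le\theta\sqrt{2n}\cdot\sqrt r$ with an explicit $\theta$ near $2$ (it is roughly $\sqrt{2}\cdot\sqrt2$ at $n\approx r^2/2$). The same majorant recurrence as in Lemma~\ref{lem:terminal-scaling} then applies with $B=\Lambda$ and $M=2n+\Lambda+1$. This gives $|a_j|\le H_j$, where $\sum_j H_jz^j=e^{-(2n+1)z}(1-z)^{-(2n+\Lambda+1)}$.

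I will bound $H_j$ by a saddle-point or Cauchy estimate. Writing $H_j\le\rho^{-j}e^{-(2n+1)\rho}(1-\rho)^{-(2n+\Lambda+1)}$ and choosing $\rho\approx\sqrt{j/(2n)}$ should give $H_j\lesssim e^{\Lambda\rho+O(j)}(2n)^{j/2}/\sqrt{j!}$, which is of order $(2n)^{j/2}$ times a factor $e^{O(\sqrt{jr})}$. The right side is then $O\bigl((2n)^{r/2+1/2}e^{O(r)}/\sqrt{(r-1)!}\bigr)$.

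For the lower side, the expansion in the proof of Lemma~\ref{lem:terminal-scaling} gives $\mathscr L(n)\ge2^r\binom nr(1-O(r^2/n))$. Since $n\ge r^2/2$, the correction term is controlled. So the main term is of order $(2n)^r/r!$.

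The comparison therefore reduces to $(2n)^{r/2-1/2}\gg e^{O(r)}\sqrt{r!}$. With $2n>r^2$, the left side is at least $r^{r-1}$ while the right side is about $r^{r/2}e^{O(r)}$, so the inequality holds for large $r$. The main obstacle is making the constants explicit enough that the threshold comes out as $r\ge20$. I expect the delicate case to be $2n$ near $r(r+1)$, where $\Lambda/\sqrt{2n}$ is largest. To keep constants tight, I would first try using the nonnegative coefficients of $(1-z)^{-(2n+\Lambda+1)}e^{-(2n+1)z}$ together with the explicit two-term recurrence from Lemma~\ref{lem:terminal-certificates}, bounding $\widetilde H_j$ inductively by $C^j(2n)^{j/2}\sqrt{j!}$. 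After that I would verify monotonicity in $n$ by scaling exactly as in Lemma~\ref{lem:terminal-scaling}: take $n_0=\lceil(r(r+1)+1)/2\rceil$ and propagate to larger $n$ via $\mathscr L(n)\ge(n/n_0)^r\mathscr L(n_0)$ against growth $(n/n_0)^{r/2}$. This reduces everything to a single inequality in $r$ at the endpoint $n_0$, which I would check by comparing logarithms for $r\ge20$.
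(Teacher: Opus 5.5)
Your architecture matches the paper's: Lemma~\ref{lem:annihilated-terminal} disposes of $(q+1)A=\{0\}$, and otherwise Lemma~\ref{lem:terminal-scaling}, fed by Corollary~\ref{thm:uspec}, reduces everything to a single coefficient inequality $\mathscr U_0<\mathscr L(n_0)$. But that inequality for \emph{every} even $r\ge20$ is the whole content of the proposition, since the smaller even radii are handled by Lemma~\ref{lem:terminal-certificates}. You establish it only ``for large $r$''.

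Moreover, two of your quantitative inputs fail in exactly the range that matters. The first is the lower bound for $\mathscr L$. At $2n\approx r^2$ the subtracted term $(r-1)\sigma_{r-1}$ almost cancels $\sigma_r$, so the ``correction'' in $\mathscr L(n)\ge2^r\binom nr(1-O(r^2/n))$ is close to $1$, not small. (Also, $2^r\binom nr$ is there only a fixed fraction of $\sigma_r$, asymptotically $1/e$.) What is true is
$\mathscr L(n_0)\ge\bigl(1-\frac{r}{2n_0+1}\bigr)\bigl(1-\frac{r(r-1)}{2n_0}\bigr)V_r(n_0)$. This is only about $\frac{3}{r+2}V_r(n_0)$ at $2n_0=r(r+2)$, and about $\frac2rV_r(n_0)$ at your endpoint $2n_0=r(r+1)+2$. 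The endpoint $2n_0=r(r+2)$ is available because $(r+1)(q+1)=2n+1>r(r+1)+1$ forces $q+1\ge r+1$. So $\mathscr L$ is of order $(2n)^r/(r\cdot r!)$, a factor $\asymp r$ below your ``main term''. That is harmless asymptotically, but it consumes most of the margin at $r=20$.

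The second problem is the Cauchy radius $\rho\approx\sqrt{j/(2n)}$, which is far from optimal. Since $B\approx2\sqrt{2nr}$, at $\rho=\tau\sqrt{r/(2n)}$ one has $\log\bigl(\mathscr H(\rho)\rho^{-r}\bigr)\approx r\bigl(2\tau+\tfrac{\tau^2}{2}-\log\tau\bigr)+\tfrac r2\log\tfrac{2n}{r}$. The linear term $B\rho$ dominates, the optimum is asymptotically near $\tau=\sqrt2-1$, and $\tau=1$ loses about $0.7r$ in the exponent, plus lower-order losses. Concretely, at $r=20$ and $2n_0=440$:
\begin{itemize}
\item your radius $\rho\approx0.213$ gives $\log\bigl(\mathscr H(\rho)\rho^{-20}\bigr)\approx102$;
\item $\log\mathscr L(n_0)\approx77$;
\item the paper's $t=1/(4\sqrt r)\approx0.056$ gives about $72.6$.
\end{itemize}
With your radius the comparison only starts to close for $r$ around $80$. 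Your fallback induction $\widetilde H_j\le C^j(2n)^{j/2}\sqrt{j!}$ with a fixed $C$ also fails immediately. Already $\widetilde H_1=B\approx\theta\sqrt{2nr}$ forces $C\gtrsim\theta\sqrt r$, which costs an extra factor $r^{r/2}$ and destroys the comparison.

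The paper closes the gap as follows. It fixes $2n_0=r(r+2)$ and $t=1/(4\sqrt r)$. It bounds $\mathscr U_0\le\mathscr H(t)t^{-r}\bigl((B+1)t+Mt^2/(1-t)\bigr)$ with explicit estimates for $Bt$ and $Mt^2$. It uses $V_r(n_0)\ge(2n_0)^r/r!$ and $r!\le e\sqrt r(r/e)^r$. The resulting bound for $r\mathscr U_0/V_r(n_0)$ is decreasing in $r$ and below $2$ at $r=20$, while $r\mathscr L(n_0)/V_r(n_0)>27/11$.
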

\begin{proof}
The integer $q+1$ is odd and
$q+1>r+1/(r+1)$; hence $q+1\ge r+1$ and $n\ge n_0:=\frac{r(r+2)}{2}$ and $2n_0=r(r+2)$.
Put $B=3(r-1)+2\sqrt{(r-1)(2n_0+2(r-1))}$ and $M=2n_0+B+1$.
Corollary~\ref{thm:uspec} gives, for $n\ge n_0$,
\begin{align*}
 \Lambda&\le3(r-1)+2\sqrt{(r-1)(2n+2(r-1))}\\
 &\le\sqrt{n/n_0}\left(3(r-1)+2\sqrt{(r-1)(2n_0+2(r-1))}\right)
 =B\sqrt{n/n_0}.
\end{align*}
We verify the same coefficient criterion used at the smaller radii.
Let $H_j,\mathscr U_0,\mathscr L(n_0)$ be those of
Lemma~\ref{lem:terminal-scaling}, and set $t=1/(4\sqrt r)$. Since
$H_j\ge0$,
\[
 H_jt^j\le\mathscr H(t):=\sum_{k\ge0}H_kt^k
 =\exp\left(Bt+M\sum_{k\ge2}\frac{t^k}{k}\right).
\]
Therefore
\begin{align*}
 \mathscr U_0
 \le\mathscr H(t)\left((B+1)t^{1-r}
                           +M\sum_{j=0}^{r-2}t^{-j}\right)
 \le\frac{\mathscr H(t)}{t^r}
             \left((B+1)t+\frac{Mt^2}{1-t}\right).
\end{align*}
To bound these two factors, use
\[
 B\le3r+2r\sqrt{r+4}\le3r+2r\sqrt r+4\sqrt r,
 \qquad
 \left(\sqrt r+\frac{2}{\sqrt r}\right)^2=r+4+\frac{4}{r}\ge r+4.
\]
As $r\ge20$, we have $t<1/16$, $ Bt\le\frac{r}{2}+\frac{3}{4}\sqrt r+1$ and 
\begin{align*}
 Mt^2\le\frac{r}{16}+\frac{1}{8}+\frac{3}{16}
                  +\frac{\sqrt r}{8}+\frac{1}{4\sqrt r}+\frac{1}{16r}
 <\frac{r}{16}+\frac{\sqrt r}{8}+\frac{2}{5}.
\end{align*}
The geometric sum consequently gives
\begin{align*}
 \log\mathscr H(t)
 &\le Bt+\frac{Mt^2}{2(1-t)}\\
 &<\frac{r}{2}+\frac{3}{4}\sqrt r+1+
       \frac{8}{15}\left(\frac{r}{16}+\frac{\sqrt r}{8}+\frac{2}{5}\right)
 =\frac{8r}{15}+\frac{49\sqrt r}{60}+\frac{91}{75},\\
 (B+1)t+\frac{Mt^2}{1-t}
 &<\frac{17r}{30}+\frac{53\sqrt r}{60}+\frac{1787}{1200}\\
 &<\left(\frac{17}{30}+\frac{53}{240}+\frac{3}{40}\right)r<r.
\end{align*}
Here $\sqrt r\le r/4$ and $1787/1200<3/2\le3r/40$.
The ball generating function satisfies coefficientwise
\[
 \sum_{j\ge0}V_j(n_0)z^j
 =\exp\left((2n_0+1)z+
       \sum_{k\ge2}\frac{2n_0\mathbf1_{2\nmid k}+1}{k}z^k\right)
 \succeq e^{2n_0z}.
\]
Taking the coefficient of $z^r$ and using the factorial estimate
$r!\le e\sqrt r(r/e)^r$ ($r\ge1$) gives
\[
 V_r(n_0)\ge\frac{(2n_0)^r}{r!},\qquad
 \frac{1}{V_r(n_0)t^r}
 \le e\sqrt r\left(\frac{4\sqrt r}{e(r+2)}\right)^r.
\]
Combining these inequalities yields
\[
 \frac{r\mathscr U_0}{V_r(n_0)}<r^{5/2}
 \exp\left(\frac{166}{75}-\frac{7r}{15}+\frac{49\sqrt r}{60}\right)
 \left(\frac{4\sqrt r}{r+2}\right)^r.
\]
The right-hand side decreases on $r\ge20$.
Indeed, $(r+2)^2-16r=r(r-12)+4>0$, and its logarithmic derivative is
\begin{align*}
 &\frac{d}{dr}\left[\frac52\log r+\frac{166}{75}-\frac{7r}{15}
       +\frac{49\sqrt r}{60}+r\log\frac{4\sqrt r}{r+2}\right]\\
 &=\frac{5}{2r}-\frac{7}{15}+\frac{49}{120\sqrt r}
       +\log\frac{4\sqrt r}{r+2}+\frac{1}{2}-\frac{r}{r+2}\\
 &<\frac{1}{8}-\frac{7}{15}+\frac{49}{480}-\frac{1}{6}=-\frac{13}{32}<0.
\end{align*}
At $r=20$, $\sqrt{20}<9/2$ gives
\[
 \frac{166}{75}-\frac{28}{3}+\frac{49\sqrt{20}}{60}
 <\frac{166}{75}-\frac{28}{3}+\frac{147}{40}
 =-\frac{689}{200}<-3.
\]
Moreover
\[
 e^3>\sum_{j=0}^5\frac{3^j}{j!}=\frac{92}{5}>18,
 \qquad
 \left(\frac{11}{9}\right)^5=\frac{161051}{59049}>\frac{8}{3}
 \quad(483153>472392),
\]
\[
 \left(\frac{11}{9}\right)^{20}>\frac{4096}{81}>50.
\]
Hence the right-hand side at $r=20$ is less than
\[
 \frac{1800}{e^3}\left(\frac{9}{11}\right)^{20}
 <\frac{1800}{18\cdot50}=2.
\]
Finally, the shell and volume ratios give
\begin{align*}
 \frac{\mathscr L(n_0)}{V_r(n_0)}
 &\ge\left(1-\frac{r}{2n_0+1}\right)
          \left(1-\frac{r(r-1)}{2n_0}\right)
 =\left(1-\frac{r}{(r+1)^2}\right)\frac{3}{r+2},\\
 \frac{r\mathscr L(n_0)}{V_r(n_0)}
 &>\frac{9}{10}\cdot\frac{30}{11}=\frac{27}{11}>2.
\end{align*}
The last line uses
$1-r/(r+1)^2>1-1/r\ge19/20>9/10$ and $r/(r+2)\ge10/11$.
Thus $\mathscr U_0/V_r(n_0)<2/r<\mathscr L(n_0)/V_r(n_0)$.
Lemma~\ref{lem:terminal-scaling} gives $(q+1)A=\{0\}$, and
Lemma~\ref{lem:annihilated-terminal} supplies the contradiction.
\end{proof}

Two terminal parameters fall below the coefficient thresholds.
The first is also covered by \cite[Theorem~5.1]{XZ}; scalar capacity
gives a direct proof for both.

\begin{lemma}\label{lem:terminal-two-points}
Neither $B_{24}(6)$ nor $B_{40}(8)$ admits a lattice tiling.
\end{lemma}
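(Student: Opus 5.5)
Both parameter pairs lie in the terminal remainder $c=r$. For $(r,n)=(6,24)$ we have $2n=48=7\cdot6+6$, so $q=6$ and $q+1=7$. For $(r,n)=(8,40)$ we have $2n=80=9\cdot8+8$, so $q=8$ and $q+1=9$. The plan is to play the lower bound on $E$ from Lemma~\ref{lem:minimal-fibers} against its terminal capacity bound $E\le|A[q+1]|-1$. With $c=r$ that lemma gives
\[
 (r+1)E\ge\sigma_r-(r-1)\sigma_{r-1}=\mathscr L(n),
\]
so a tiling would force $|A[q+1]|\ge1+\mathscr L(n)/(r+1)$. It therefore suffices to show that $A[q+1]$ is much smaller than this.

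The steps are as follows.
\begin{enumerate}
\item Compute the exact integers $V_6(24)$, $\sigma_6(24)$, $\sigma_5(24)$ and $V_8(40)$, $\sigma_8(40)$, $\sigma_7(40)$ from the support formulas in Section~\ref{sec:preliminaries}. This gives $\mathscr L(24)$ for $r=6$ and $\mathscr L(40)$ for $r=8$.
\item A rough estimate already shows the size of the target. From the shell and volume ratios, $\mathscr L(n)/V_r$ is about $(1-r/(2n+1))(1-r(r-1)/(2n))$, which is roughly $1/3$ for $(6,24)$ and roughly $1/4$ for $(8,40)$. So the inequality would need $|A[q+1]|$ to be at least a few percent of $V_r$, i.e.\ $|(q+1)A|$ at most a few dozen.
\item Bound $|A[q+1]|$ arithmetically. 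This subgroup is the Sylow $7$-part (respectively the $3$-primary part) restricted to the kernel, so $|A[q+1]|\le7^{v_7(V_6(24))}$, respectively $|A[q+1]|\le3^{v_3(V_8(40))}$. I would determine these valuations by reducing $V_r(n)$ modulo $7,49,\dots$ (resp.\ $3,9,27,\dots$), using Lucas's factorization $V_v(u)\equiv\prod_iV_{v_i}(u_i)$ from Section~\ref{sec:preliminaries} together with direct residue computation, as in Lemma~\ref{lem:positive-finite-capacity}. I expect the valuation to be $0$ or very small, so that $|A[q+1]|$ is a tiny power of the prime, far below $1+\mathscr L(n)/(r+1)$. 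That is the contradiction.
\end{enumerate}

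If a valuation turns out to be large, I would fall back on a dichotomy. If $(q+1)A=\{0\}$, Lemma~\ref{lem:annihilated-terminal} applies directly, since $r$ is even and $2n\ge r^2$ ($48\ge36$, $80\ge64$). Otherwise $(q+1)A$ is a nontrivial group of odd order, and I would combine the precise valuation with the capacity inequality, which requires $|(q+1)A|\le(r+1)V_r/\mathscr L(n)$.

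The main obstacle is the exact modular arithmetic: computing $V_r(n)$ modulo prime powers and certifying the valuation. Everything else is a direct substitution into Lemma~\ref{lem:minimal-fibers}.
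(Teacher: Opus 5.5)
Your proposal is correct and matches the paper's proof. The paper sets $\bigl(\sigma_r-(r-1)\sigma_{r-1}\bigr)/(r+1)\le E\le|A[q+1]|-1$ from Lemma~\ref{lem:minimal-fibers} against the residues $V_6(24)\equiv6\pmod7$ and $V_8(40)\equiv1\pmod3$, so the valuations you planned to compute are zero, $|A[q+1]|=1$, and the positive lower bound gives $0<E\le0$. Your Lucas route confirms these residues quickly, $V_6(24)\equiv V_6(3)V_0(3)=377\equiv6\pmod7$ and $V_8(40)\equiv V_2(1)^2V_0(1)^2=25\equiv1\pmod3$, so your fallback dichotomy is never needed.
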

\begin{proof}
Both pairs satisfy $c=r$, so $2n+1=(r+1)(q+1)$.
For the number $E$ of minimum fibers,
Lemma~\ref{lem:minimal-fibers} gives
\[
 \frac{\sigma_r-(r-1)\sigma_{r-1}}{r+1}
 \le E\le |A[q+1]|-1.
\]
The shell recurrence and $V_r=\sum_{j=0}^r\sigma_j$ give
\[
\begin{array}{c|r|r|r}
 (r,n)&q+1&V_r&\bigl(\sigma_r-(r-1)\sigma_{r-1}\bigr)/(r+1)\\\hline
 (6,24)&7&19665841&939056\\
 (8,40)&9&46943003137&1432732720.
\end{array}
\]
Here $19665841\equiv6\pmod7$ and
$46943003137\equiv1\pmod3$. Hence $\gcd(q+1,V_r)=1$ in both
cases, so $|A[q+1]|=1$. The displayed inequalities then give
$0<E\le0$, a contradiction.
\end{proof}

The terminal congruence now combines the coefficient bounds and the two
exceptional parameters into one exclusion in the remaining geometric range.

\begin{proposition}\label{prop:terminal-complete}
Suppose $r\ge4$, $n>r(r+1)/2+2$, and $c=r$ in
$2n=(r+1)q+c$, where $0\le c\le r$.
Then $B_n(r)$ admits no lattice tiling of $\Z^n$.
\end{proposition}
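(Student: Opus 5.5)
The proof will assemble the terminal-remainder exclusions already established. When $c=r$ we have $2n+1=(r+1)(q+1)$. Since $2n+1$ is odd, $r+1$ must be odd, so $r$ is even. Thus only even $r\ge4$ need to be treated.

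The first step handles even $r\ge20$. Here the hypothesis $n>r(r+1)/2+2$ gives $2n>r(r+1)$, and Proposition~\ref{thm:terminal-r20} applies directly.

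The second step handles $r\in\{4,6,\ldots,18\}$. For these radii, Lemma~\ref{lem:terminal-certificates} verifies the hypotheses of Lemma~\ref{lem:terminal-scaling} with the tabulated $n_0$. So for every $n\ge n_0$ with $c=r$ we get $(q+1)A=\{0\}$. We also have $2n>r(r+1)\ge r^2$, so Lemma~\ref{lem:annihilated-terminal} excludes the tiling.

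The third step deals with the finitely many dimensions $r(r+1)/2+2<n<n_0$ satisfying the terminal congruence $2n+1\equiv0\pmod{r+1}$, which is equivalent to $n\equiv r/2\pmod{r+1}$. I will list them radius by radius.
\begin{itemize}
\item For $r=4$ we need $n\equiv2\pmod5$ and $12<n<17$. No such $n$ exists.
\item For $r=6$ we need $n\equiv3\pmod7$ and $23<n<31$. This leaves only $n=24$.
\item For $r=8$ we need $n\equiv4\pmod9$ and $38<n<49$. This leaves only $n=40$.
\item For $r=10$ we need $n\equiv5\pmod{11}$ and $57<n<60$. No such $n$ exists.
\item For $r=12$ we need $n\equiv6\pmod{13}$ and $80<n<84$. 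No such $n$ exists.
\item For $r=14$ we need $n\equiv7\pmod{15}$ and $107<n<112$. No such $n$ exists.
\item For $r=16$ we need $n\equiv8\pmod{17}$ and $138<n<144$. No such $n$ exists.
\item For $r=18$ we need $n\equiv9\pmod{19}$ and $173<n<180$. No such $n$ exists.
\end{itemize}
The two surviving pairs $(6,24)$ and $(8,40)$ are exactly those excluded by Lemma~\ref{lem:terminal-two-points}.

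The only real work is this finite residue check, which is the step most prone to error. Each case amounts to confirming that no member of the arithmetic progression falls strictly inside the stated open interval, so I will state the progressions explicitly. Everything else is direct citation of the earlier results.
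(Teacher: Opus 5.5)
Your proposal is correct and follows the paper's proof essentially step for step. Parity forces $r$ even; Proposition~\ref{thm:terminal-r20} handles $r\ge20$; Lemmas~\ref{lem:terminal-certificates}, \ref{lem:terminal-scaling} and \ref{lem:annihilated-terminal} handle even $4\le r\le18$ with $n\ge n_0$; and Lemma~\ref{lem:terminal-two-points} removes the only gap dimensions, $(6,24)$ and $(8,40)$. The only difference is cosmetic: you check each gap interval against $n\equiv r/2\pmod{r+1}$ directly, and those checks are all correct, whereas the paper derives $n\ge r(r+2)/2$ for $r\ge8$ from $k\ge r/2$.
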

\begin{proof}
Suppose that such a tiling exists, and let $A$ be its finite abelian
quotient. Since $c=r$, we have
\[
 2n+1=(r+1)(q+1).
\]
The left-hand side is odd, so both $r+1$ and $q+1$ are odd.
In particular, $r$ is even. Moreover, the dimension hypothesis gives
$2n>r(r+1)+4>r(r+1)$.
If $r\ge20$, Proposition~\ref{thm:terminal-r20} already gives a
contradiction. It remains to consider even $4\le r\le18$.

Because $r+1$ is odd, multiplication by two is invertible modulo
$r+1$. Hence
\[
 2n\equiv r\pmod{r+1}
 \quad\Longleftrightarrow\quad
 n\equiv r/2\pmod{r+1}.
\]
Thus $n=r/2+(r+1)k$ for an integer $k$.
For $r=4$, the dimension hypothesis is $n>12$, so
$2+5k>12$ gives $k\ge3$ and $n\ge17$.
For $r=6$, the hypothesis is $n>23$, so
$3+7k>23$ gives $k\ge3$ and $n\ge24$; if $n\ne24$, then
$k\ge4$ and $n\ge31$.
For even $r\ge8$, the weaker inequality $n>r(r+1)/2$ gives
\[
 k>\frac r2-\frac{r}{2(r+1)}>\frac r2-1.
\]
Since both $k$ and $r/2$ are integers, $k\ge r/2$, and therefore
\[
 n\ge\frac r2+(r+1)\frac r2=\frac{r(r+2)}2.
\]
In particular, for $r=8$ one has $n\ge40$; if $n\ne40$, the
congruence $n\equiv4\pmod9$ gives $n\ge49$.
For $r=10,12,14,16,18$, the lower bounds $r(r+2)/2$ are
$60,84,112,144,180$, respectively.

Lemma~\ref{lem:terminal-two-points} excludes $(r,n)=(6,24)$ and
$(8,40)$. For all remaining pairs with even $4\le r\le18$, the
bounds just established give $n\ge n_0$ for the corresponding row
of Lemma~\ref{lem:terminal-certificates}: the thresholds are
$17,31,49$ for $r=4,6,8$, and $r(r+2)/2$ for
$r=10,12,14,16,18$.
That lemma verifies the hypotheses of Lemma~\ref{lem:terminal-scaling},
which, together with $2n+1=(r+1)(q+1)$, implies
$(q+1)A=\{0\}$.
Finally,
\[
 2n>r(r+1)+4>r^2,
\]
so Lemma~\ref{lem:annihilated-terminal} applies and gives a
contradiction. This excludes every terminal remainder in the stated
range.
\end{proof}

\section{Radius three and completion of the proof}\label{sec:small-radii}
Only the zero remainder at radius three remains. We exclude even dimensions at least ten by a cubic character estimate and dimension eight by a known criterion, then complete the main proof.

\subsection{The zero remainder at radius three}
We begin with the character estimate.
\begin{lemma}\label{lem:r3-zero-even}
For $r=3$ and every even $n\ge10$, no lattice tiling exists.
\end{lemma}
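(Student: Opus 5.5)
The plan is to split on the image of the second scalar, as in the zero-remainder branch for $r\ge4$, and to supply the radius-three inputs that were not yet available. Since $r=3$ and $c=0$, we have $2n=4q$, so $n=2q$ and $2q+1=n+1$. Lemma~\ref{lem:zero-kernel} already makes multiplication by $q+1$ injective on $A$ for every even $n\ge8$. It remains to exclude the two alternatives $(2q+1)A=\{0\}$ and $(2q+1)A\ne\{0\}$.

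\emph{Annihilated image.} Suppose $(2q+1)A=\{0\}$. For $n\ge32=16(r-1)$, I will combine the bound \eqref{eq:rootgraph-sixteenth} of Lemma~\ref{lem:rootgraph}, which holds for $r\ge3$, with the last assertion of Lemma~\ref{lem:all-pairs}. Together they give
\[
 \frac{2n-2}{16}\le\frac{(2n-2)J}{\sigma_2}\le\frac{7\cdot2}{4}=\frac72 .
\]
This forces $n\le29$, a contradiction. For even $10\le n\le30$, I will use coordinate orders as in Proposition~\ref{thm:zero-image-all-radii}. Each $g_i$ has order $d_i\ge2r+1=7$, and $d_i$ divides $\gcd(V_3(n),n+1)$. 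I will tabulate $V_3(n)\bmod(n+1)$ for these eleven values of $n$ and check that the gcd is always below $7$. If some residue vanishes, I will fall back on a Lucas-congruence or valuation argument as in that proposition: find a prime dividing $V_3(n)$ but not $n+1$, or show that $V_3(n)$ would have to equal $n+1$.

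\emph{Nonzero image.} Suppose $(2q+1)A\ne\{0\}$. Lemma~\ref{lem:nonzero-image-9r} applies at $r=3$ and gives $n<27$, so only the even dimensions $10\le n\le26$ remain. Here I need a sharper bound on $\Lambda$ than Corollary~\ref{thm:uspec}. At radius three the character recurrence is explicit. The conditions $a_0=1$, $a_1=\lambda_\chi$ and $a_3=\chi(B_3)=0$ give a cubic identity in $\lambda=\lambda_\chi$ whose coefficients involve $\lambda_{\chi^2}$ and $\lambda_{\chi^3}$. Bounding $|\lambda_{\chi^j}|\le\Lambda$ in this cubic should give $\Lambda\le\beta$ with $\beta$ of order $\sqrt{6n}$. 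I will do this in Hessenberg form, via Lemma~\ref{lem:positive-hessenberg} with a short ratio certificate from Lemma~\ref{lem:spectral-ratio-certificate}, taking $r=3$ and one base dimension $n_0=10$. The three recurrence values are $h_1=B$, $2h_2=B^2-(2n_0+B)$ and $3h_3=Bh_2-(2n_0+B)(1+B)$, so positivity reduces to an explicit inequality in $B$. I will then feed $\mathcal L=\beta+1$ into Lemma~\ref{lem:scalar-image-comparison}, using the normalized form $G(e,z)$ with $e=2/(2n-1)$ and $z=\mathcal L/n$. Because $G$ is monotone, a positive margin at the base dimension $n=10$ propagates to all larger $n$, exactly as in Lemma~\ref{lem:zero-dimension-continuation}. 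If the generic bound $|(2q+1)A|\ge3$ is too weak near $n=10$, I will compute $V_3(n)\bmod(n+1)$ to show that $(2q+1)A=A$, or at least that $|(2q+1)A|$ is large. I will then use \eqref{eq:full-image-necessary} or $F_m$ with that larger $m$.

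I expect the main obstacle to be the nonzero-image case at the smallest dimensions $n=10,12,14$. There the margin between $2n-\frac{(3r+2)(r-1)}{r+1}=2n-\frac{11}{2}$ and the spectral upper bound is thin. The cubic character estimate must therefore be essentially sharp, using the exact cubic rather than the crude matrix norm. If the margin still fails at a specific $n$, the fallback is a prime-quotient moment argument like the one for $(r,n)=(5,18)$ in Lemma~\ref{lem:zero-initial-dimensions}, applied to a large prime factor of $V_3(n)$.
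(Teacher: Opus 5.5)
Your route is correct, but it differs from the paper's. The paper never splits on the image of $2q+1$. It first proves $3\nmid V_3(n)$ by evaluating the radius-three cubic $\lambda_\chi^3+3\lambda_\chi(\lambda_{\chi^2}-2n)+2\lambda_{\chi^3}=0$ at a character of order three, which forces $t=3\lambda_\chi-2$ to divide $98$. It then uses $3V_3(n)=4(n+1)^3-6(n+1)^2+8(n+1)-3$ to get $\gcd(n+1,V_3(n))=1$. Hence $(2q+1)A=A$ outright, and your annihilated case never occurs. Since $\chi^2$ and $\chi^3$ are then nonprincipal, the same cubic gives $\Lambda\le(3+\sqrt{24n+17})/2$. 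One application of \eqref{eq:full-image-necessary} plus a derivative comparison then covers every even $n\ge10$.

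Your version replaces this arithmetic with the general machinery. It does close, but you should make three adjustments explicit.

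\begin{itemize}
\item The identity above shows $\gcd(V_3(n),n+1)\mid 3<7$ for every $n$. So the coordinate-order argument disposes of $(2q+1)A=\{0\}$ uniformly, and the root-graph bound is superfluous.
\item Your $G$-continuation cannot start at $n=10$. The Hessenberg certificate at $n_0=10$ requires $B^3-3B^2-62B-40>0$, so $B=9.8$ is essentially optimal. Then $G(2/19,1.08)\approx1.589$ exceeds the required $1.45$; even the paper's sharper $\Lambda$ would not rescue $G$ there. You must take $n=10$ through \eqref{eq:full-image-necessary}. This is legitimate because $V_3(10)=1561\equiv10\pmod{11}$, and it gives about $12.66<14.5$.
\item Start the $G$-continuation at $n=12$ instead. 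There, with $\Lambda<9.8\sqrt{n/10}$, the margin is about $0.047$. This also makes Lemma~\ref{lem:nonzero-image-9r} unnecessary. The only dimension with $(2q+1)A\ne A$ is $n=14$, where $V_3(14)=3\cdot29\cdot47$, and $G$ handles it. So no prime-quotient fallback is needed.
\end{itemize}

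Your route buys independence from the order-three Diophantine step and from the nonprincipality of $\chi^2,\chi^3$. The Hessenberg lemma absorbs principal powers, and $G$ needs only $|(2q+1)A|\ge3$. The paper's route buys a single uniform inequality, with no case split and no numerical certificate.
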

\begin{proof}
Suppose such a tiling exists. Write
\[
 V_3(n)=\frac{4n^3+6n^2+8n+3}{3}.
\]
For every nonprincipal character, the radius-three identity
of Zhang and Zhou~\cite[Lemma~2.4]{ZZ} gives
$\lambda_\chi^3+3\lambda_\chi(\lambda_{\chi^2}-2n)+2\lambda_{\chi^3}=0$.
If $3\mid V_3(n)$, choose a character $\chi$ of order three and define
$k_\chi=|\{i:\chi(g_i)\ne1\}|$. For a coordinate with
$\chi(g_i)=1$, the sum $\chi(g_i)+\chi(-g_i)$ is $2$; otherwise it
is $-1$. Consequently
$\lambda_\chi=\lambda_{\chi^2}=2n+1-3k_\chi\in\Z$ and
$\lambda_{\chi^3}=2n+1$.
Putting $t=3\lambda_\chi-2$ yields
\[
 (6\lambda_\chi-4)n=\lambda_\chi^3+3\lambda_\chi^2+2,
 \qquad54nt=t^3+15t^2+48t+98,
 \qquad t\mid98,\quad t\equiv1\pmod3.
\]
All possible divisors and resulting values of $n$ are
\[
\begin{array}{c|rrrrrr}
t&1&7&49&-2&-14&-98\\\hline
n&3&4&59&-1/2&1/2&303/2.
\end{array}
\]
None is an even integer at least ten. Hence $3\nmid V_3(n)$.
For even $n\ge8$, Lemma~\ref{lem:zero-kernel} makes
$q+1=(n+2)/2$ invertible. Also $2q+1=n+1$ is invertible;  since
$
 3V_3(n)=4(2q+1)^3-6(2q+1)^2+8(2q+1)-3
$, we have $\gcd(2q+1,V_3(n))\mid3
$
and $3\nmid V_3(n)$. Choose a nonprincipal character $\chi$ with
$|\lambda_\chi|=\Lambda$. Since $\gcd(6,V_3(n))=1$, both $\chi^2$ and
$\chi^3$ are nonprincipal. The oddness of $X^3-6nX$ gives
$|\lambda_\chi^3-6n\lambda_\chi|=|\Lambda^3-6n\Lambda|$.
The cubic identity and $|\lambda_{\chi^2}|,|\lambda_{\chi^3}|\le\Lambda$ give
$
 |\Lambda^3-6n\Lambda|\le3\Lambda^2+2\Lambda.
$
If $\Lambda=0$, the desired upper bound is immediate. If $\Lambda>0$,
then $\Lambda^3-6n\Lambda\le3\Lambda^2+2\Lambda$, so division by
$\Lambda$ gives
\[
 \Lambda^2-3\Lambda-6n-2\le0,
 \qquad \Lambda\le\frac{3+\sqrt{24n+17}}{2}.
\]
Since $\sigma_2=2n^2$ and $V_2=2n^2+2n+1$, for $n\ge10$ we have
\begin{gather*}
 (4n-5)^2-(24n+17)=8(2n^2-8n+1)>0,\\
 1\le\Lambda+1\le \frac{5+\sqrt{24n+17}}2<2n,
 \qquad\frac{V_2-\sigma_2}{\sigma_2}=\frac{2n+1}{2n^2}<\frac{1}{2}.
\end{gather*}
Since $(2q+1)A=A$, the general scalar-image comparison
\eqref{eq:full-image-necessary} gives
\begin{align*}
 2n-2
 &\le\frac{7}{2}+\frac{2nV_2}{V_3}
   +\frac{5+\sqrt{24n+17}}2\left(\frac{\sigma_2+V_2}{2\sigma_2}
                  -\frac{\sigma_2^2+V_2^2}{2V_3\sigma_2}\right)
 <\frac{13}{2}+\frac{53}{100}(5+\sqrt{24n+17}).
\end{align*}
Indeed,
\[
 3V_3-2nV_2=2n^2+6n+3>0,
 \qquad
 \frac{\sigma_2+V_2}{2\sigma_2}
 =1+\frac{1}{2n}+\frac{1}{4n^2}
 \le\frac{421}{400}<\frac{53}{50}.
\]
At $n=10$ the last upper bound is less than
\[
 \frac{13}{2}+\frac{53}{100}\left(5+\frac{81}{5}\right)
 =\frac{2217}{125}<18=2\cdot10-2.
\]
Its derivative is
$159/(25\sqrt{24n+17})<159/400<1$, whereas
$(2n-2)'=2$. Thus the contradiction holds for every even $n\ge10$.
This proves the assertion.
\end{proof}

The missing dimension is covered by Zhang and Zhou's
criterion~\cite[Corollary~4.5]{ZZ}: here $8\equiv1\pmod7$ and
$24\cdot8+1=193$ is not a square.
\begin{proposition}[{\cite[Corollary~4.5]{ZZ}}]
\label{thm:r3-eight-theoretical}
The Lee ball $B_8(3)$ admits no lattice tiling of $\Z^8$.
\end{proposition}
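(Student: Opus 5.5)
The plan is to obtain the statement directly from the radius-three criterion of Zhang and Zhou~\cite[Corollary~4.5]{ZZ}, after checking that $(r,n)=(3,8)$ satisfies its hypotheses. Here $|B_8(3)|=V_3(8)=833=7^2\cdot17$ (computed in the proof of Lemma~\ref{lem:zero-kernel}). In the form used above, the criterion excludes lattice tilings by $B_n(3)$ whenever $n\equiv1\pmod7$ and $24n+1$ is not a perfect square. Both conditions hold: $8=7+1$, and $193$ lies strictly between $13^2=169$ and $14^2=196$. So the first step is these two arithmetic checks, together with confirming that the criterion is stated for lattice tilings of $\Z^n$. By the homomorphism criterion~\cite[Theorem~6]{HA}, that is the same as a transversal $B_n(3)$ of a finite abelian quotient, which is our setting.

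The reason an external criterion is needed is that the paper's own radius-three machinery falls just short at $n=8$. Lemma~\ref{lem:zero-kernel} still shows that multiplication by $q+1=5$ is injective. Also $2q+1=9$ is coprime to $833$, so $(2q+1)A=A$ and $\gcd(6,V_3(8))=1$. The cubic identity from the proof of Lemma~\ref{lem:r3-zero-even} therefore gives $\Lambda\le(3+\sqrt{209})/2<8.73$. Substituting $\sigma_2=128$, $V_2=145$, $V_3=833$ and $\mathcal L=\Lambda+1<9.73$ into~\eqref{eq:full-image-necessary} gives a right-hand side of about $11.45$. This exceeds the left-hand side $2n-(3r+2)(r-1)/(r+1)=10.5$, so there is no contradiction. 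For this reason I would not try to sharpen the continuous estimate.

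If a self-contained argument is wanted instead, the route I would try first is a prime-quotient moment computation like the one for $(5,18)$ in Lemma~\ref{lem:zero-initial-dimensions}. Take the prime $p=17$ dividing $833$, a surjection $\varphi:A\to\mathbb F_{17}$, and $a_i=\varphi(g_i)$. Then each field element occurs $49$ times on $B_8(3)$. Comparing coefficients $[z^3t^{2k}]$ in $\mathbb F_{17}[t]/(t^{17})$ kills $\ell_kM_k$ for $1\le k\le7$, once one checks that $\ell_k=\sum_{a=1}^3V_{3-a}(8)a^{2k-1}\not\equiv0\pmod{17}$. By Newton's identities, $\prod_i(X-a_i^2)$ would then reduce to $X^8-e$ for some constant $e$.

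The main obstacle is the top moment $M_8$. At $2k=16=p-1$ we have $\sum_bb^{16}=-1\ne0$, so the vanishing argument no longer applies and $e$ is not forced to be zero. Moreover, $X^8=1$ holds on all nonzero squares of $\mathbb F_{17}$, so the field-level argument alone may not reach a contradiction. At that point I would pass to the $7$-primary part and use characters of order $7$, in the spirit of the order-three argument in Lemma~\ref{lem:r3-zero-even}. This mirrors where the congruence $n\equiv1\pmod7$ in the cited criterion comes from. Failing that, the proposition rests on~\cite[Corollary~4.5]{ZZ}, with the two verifications above.
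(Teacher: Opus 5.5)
Your proposal is correct and takes the same approach as the paper: it invokes \cite[Corollary~4.5]{ZZ} after checking that $8\equiv1\pmod 7$ and that $24\cdot8+1=193$ is not a perfect square. Your side remarks are also accurate. At $n=8$ the right-hand side of \eqref{eq:full-image-necessary} is about $11.45$, above the left-hand side $10.5$, so it gives no contradiction. The $\mathbb F_{17}$ moment route does stall at $M_8$: the choice $a_i=i$ satisfies every moment condition modulo $17$. This explains why the paper falls back on the external criterion.
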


\subsection{Completion of the proof}
The three remainder exclusions now give the main theorem directly.

\begin{proof}[Proof of Theorem~\ref{thm:main}]
Suppose that $B_n(r)$ tiles $\Z^n$ by a lattice, with $n\ge3$ and
$r\ge2$. Lemma~\ref{lem:known-radius-two} excludes $r=2$.
For $r\ge3$, Proposition~\ref{thm:uniform-triangular-orbits} leaves
exactly the range~\eqref{eq:geometric-range}. Recall that
$2n=(r+1)q+c$, where $0\le c\le r$.

If $1\le c\le r-1$, Proposition~\ref{prop:positive-complete}
gives a contradiction. If $c=r$, then
$2n+1=(r+1)(q+1)$ is odd, so $r$ is even and hence $r\ge4$;
Proposition~\ref{prop:terminal-complete} excludes this case.
If $c=0$ and $r\ge4$, Proposition~\ref{prop:zero-complete} applies.
Finally, if $c=0$ and $r=3$, then $2n=4q$, so $n\ge8$ is even.
Proposition~\ref{thm:r3-eight-theoretical} excludes $n=8$, and
Lemma~\ref{lem:r3-zero-even} excludes every even $n\ge10$.
These cases exhaust $0\le c\le r$, completing the proof.
\end{proof}

\begin{proof}[Proof of Corollary~\ref{cor:abelian-moore}]
Let $A$ be the vertex group of a simple undirected abelian Cayley
graph of degree $2n$ and diameter at most $r$. Its generating set
consists of $m$ inverse pairs and $2n-2m$ involutions. Every vertex
has a word of length at most $r$, so
\[
 |A|\le[z^r]\frac{(1+z)^{2n-2m}}{1-z}
                  \left(\frac{1+z}{1-z}\right)^m.
\]
Indeed, an inverse pair contributes an arbitrary integral exponent,
with length generating function $(1+z)/(1-z)$, whereas an involution
need only occur zero or one time and contributes $1+z$.
If $m<n$, replace each pair of involution factors $(1+z)^2$ by
$(1+z)/(1-z)$. This increases coefficients, since
\[
 \frac{1+z}{1-z}-(1+z)^2
 =z^2\frac{1+z}{1-z}.
\]
The common factor $(1-z)^{-1}$ makes the increase strict in every
degree $r\ge2$. Hence $|A|<V_r(n)$ whenever an involution occurs.

If $m=n$, choose one generator from each inverse pair. The resulting
homomorphism $\Z^n\to A$ maps $B_n(r)$ onto $A$. Equality
$|A|=V_r(n)$ would make this restriction bijective, and its kernel
would give a lattice tiling by $B_n(r)$, contrary to
Theorem~\ref{thm:main} for $n\ge3$ and $r\ge2$.
Thus $|A|\le V_r(n)-1$ in that range.
For $n\le2$ or $r=1$, the classical lattice tilings~\cite[Theorems~1 and~3]{GW}
give quotient Cayley graphs attaining $V_r(n)$ vertices.
Their signed coordinate images are distinct and nonzero,
since a tiling kernel has no nonzero vector of norm at most $2r$;
thus these graphs are simple and have degree $2n$.
\end{proof}

\section*{Acknowledgments}
The authors used ChatGPT to assist with exploring proof strategies,
checking intermediate calculations, and editing the exposition.
All results and proofs were written by the authors, who
take full responsibility for the
content of this work.

\end{document}